\newcommand{\onlyFull}[1]{#1}
\newcommand{\refApp}[1]{#1}
\newenvironment{DIFnomarkup}{}{} %
\newcommand{\cmark}{\ding{51}}%
\newcommand{\xmark}{\ding{55}}%
\tikzstyle{branch}=[fill,shape=circle,minimum size=3pt,inner sep=0pt]
		\gdef\node@@on@layer{%
			\setbox\tikz@tempbox=\hbox\bgroup\pgfonlayer{#1}\unhbox\tikz@tempbox\endpgfonlayer\egroup}
\def\node@on@layer{\aftergroup\node@@on@layer}
\newcommand{\tuple}[1]{\left(#1\right)}
\newcommand{\mcfg}{MCFG}
\newcommand{\kmcfg}{\ensuremath{k}\textsf{-}\mcfg}
\newcommand{\grammar}{\mathcal{G}}
\newcommand{\initgrammar}{\mathcal{G}}
\newcommand{\Nonterminals}{N}
\newcommand{\Terminals}{\Sigma}
\newcommand{\nt}[2]{#1^{(#2)}}
\newcommand{\Productions}{P}
\newcommand{\StartNonterminal}{A_\text{start}}
\newcommand{\prodrule}{\pi}
\newcommand{\twoicfg}{\ensuremath{2}\textsf{-\mcfg}}
\newcommand{\head}{\textsf{head}}
\newcommand{\body}{\textsf{body}}
\newcommand{\Variables}{X}
\newcommand{\mcfgrule}[2]{#1 \leftarrow #2}
\newcommand{\lograph}{G}
\newcommand{\LOG}{\text{lo-}\text{graph}}
\newcommand{\logset}[2]{\textsc{lo-GraphSet}(#1,#2)}
\newcommand{\sedge}{\textsc{Succ}}
\newcommand{\node}[2]{\textsc{Node}({#1, #2})}
\newcommand{\addsedge}[2]{\sedge({#1, #2})}
\newcommand{\addedge}[3]{\textsc{Edge}({#1, #2, #3})}
\newcommand{\fcolor}[1]{\textsc{Free}({#1})}
\newcommand{\join}{\textsc{Join}}
\newcommand{\ecprg}{\bot}
\newcommand{\vlabel}{\lambda}
\newcommand{\Elabels}{\Gamma}
\newcommand{\MSO}{\textsf{MSO}}
\newcommand{\glangof}[1]{\mathcal{L}_{\mathsf{G}}(#1)}
\newcommand{\wlangof}[1]{\mathcal{L}_{\mathsf{W}}(#1)}
\newcommand{\bglangof}[2]{\mathcal{L}^{#1}_{\mathsf{G}}(#2)}
\newcommand{\bwlangof}[2]{\mathcal{L}^{#1}_{\mathsf{W}}(#2)}
\newcommand{\pglangof}[1]{\mathcal{L}_{\hat{\mathsf{G}}}(#1)}
\newcommand{\phigrid}{\phi_\text{grid}}
\newcommand{\phicopy}{\phi_\text{copy}}
\newcommand{\tup}{\textsf{tup}}
\newcommand{\TB}{\mathcal{B}}
	\newcommand{\matching}{\curvearrowright}
\newcommand{\phiwellnesting}{\phi_\text{well-nesting}}
\newcommand{\maxarity}{d}
\newcommand{\mpath}{\textsf{mpath}}
\newcommand{\maxmpath}{\textsf{maxmpath}}
\newcommand{\Ampath}{\textsf{Ampath}}
\newcommand{\maxAmpath}{\textsf{maxAmpath}}
\newcommand{\eps}{\textsf{eps}}
\newcommand{\prods}{\textsf{prod}}
\newcommand{\scan}{\textsf{scan}}
\newcommand{\nextscan}{\textsf{nextscan}}
\newcommand{\parsetreetikz}{
	\begin{tikzpicture}[node distance=3mm, initial text=,>=latex, thick,	runnode/.style ={fill,circle,inner sep=0pt, outer sep = 0pt, minimum size=5pt}]
		
		\node[runnode,color=highlight-1, label=right:$\StartNonterminal(baaabb)$] (n1) {};
		\node[runnode, color = highlight-2,  label=right:\ensuremath{A(aabb,ba)}, below = of n1] (n2) {};
		\node[runnode, color=highlight-3,label=left:\ensuremath{A(a,aa)}, below left= 3mm and 3mm of n2] (nL) {};
		\node[runnode,color=highlight-4, label=right:\ensuremath{B(b,b,b)}, below right= 3mm and 3mm of n2] (nR) {};
		
		\draw[-{latex}] (n2) edge(n1);
		\draw[-{latex}] (nL) edge(n2);
		\draw[-{latex}] (nR) edge(n2);
		
	\end{tikzpicture}
}
\newcommand{\nestedwordtikz}{
	\begin{tikzpicture}[node distance=3mm, initial text=,>=latex, thick,	runnode/.style ={draw ,circle,inner sep=0pt, outer sep = 0pt, minimum size=10pt}]
		
		\node[runnode,] (n1) {1};

		\foreach \x [remember=\x as \lastx (initially 1)] in {2,...,22}
		{\node[runnode, right = of n\lastx] (n\x) {\x};}

		{\node[, minimum size=0pt, below = 0.05mm of n4] (b4) {   $b$ };}
		{\node[,  below = 0.05mm of n7] (b7) { \vphantom{$b$} $a$};}
		{\node[,  below = 0.05mm of n12] (b12) { \vphantom{$b$} $a$};}
		{\node[, , below = 0.05mm of n13] (b13) { \vphantom{$b$} $a$};}
		{\node[, below = 0.05mm of n16] (b16) { $b$};}
		{\node[,  below = 0.05mm of n19] (b19) {  $b$};}
		
			\draw[ -{Classical TikZ Rightarrow[length=3pt]}, ] 	\foreach \x [remember=\x as \lastx (initially 1)] in {2,...,22}
		{(n\lastx) edge (n\x)} ;

		\draw[->, thin,  highlight-1 ] (n1) edge[out=60, in =120, looseness=.3] (n22);
		
		\draw[->, thin, highlight-2 ] (n10) edge[out=60, in =120, looseness=.4]  (n21)
		(n21)edge[out=240, in=300, looseness= .3]  (n2)
		(n2)edge[out=60, in =120, looseness=.5] (n9);

		\draw[-> ,thin, highlight-3 ] (n6) edge[ out=60, in =120, looseness=1.2]  (n8)
		(n8)edge[ out=300, in=240, looseness= 1 ]  (n11)
		(n11)edge[bend left = 12mm, out=60, in =120, looseness= 1] (n14);
		
		\draw[->,thin, highlight-4 ] (n3) edge[out=60, in =120, looseness=1.2]  (n5)
		(n5)edge[out=300, in=240, looseness= .3 ]  (n18)
		(n18)edge[out=60, in =120, looseness=1.2]  (n20)
		(n20)edge[out=240, in=300, looseness= .6]  (n15)
		(n15)edge[out=60, in =120, looseness=1.2]  (n17);

	\end{tikzpicture}
}
\newcommand{\rungraphletter}{
	\begin{tikzpicture}[node distance=2.5mm, initial text=,>=latex, thick,	runnode/.style ={draw ,circle,inner sep=0pt, outer sep = 0pt, minimum size=10pt}]
		
		\node[runnode, label = below:{\phantom{\ensuremath{\closetwo}}\opentwo}] (n1) {1};
		\foreach \x / \y[remember=\x as \lastx (initially 1)] in { 2/\openone, 3/\openone, 4/\opentwo, 5/\closeone, 6/\closetwo, 
			7/\opentwo,  8/\openone,  9/\closetwo,  10/\closeone, 
			11/\openone,  12/\opentwo,  13/\closeone,  14/\closetwo,  15/\closetwo,  16/\closeone
		}
		{\node[runnode, right = of n\lastx, label = below:{\phantom{\ensuremath{\closetwo}}\y}] (n\x) {\x};}
		
		\draw[-{Classical TikZ Rightarrow[length=3pt]}, ] 	\foreach \x [remember=\x as \lastx (initially 1)] in {2,...,16}
		{(n\lastx) edge (n\x)} ; 
	   \begin{pgfonlayer}{back}
					\draw[->, bend left, color=highlight-1, out=60, in =120] 
			\foreach \x / \y  in {3/5, 8/10, 11/13}
			{(n\x) edge (n\y)} ; 
			\draw[->, bend left, color=highlight-1, out=60, in =120, looseness=.3]  (n2) edge (n16);
			\draw[->, bend right, color=highlight-2, out=300, in=240] 
			\foreach \x / \y  in { 4/6, 7/9, 12/14}
			{(n\x) edge (n\y)} ; 
			\draw[->, bend left, color=highlight-2, out=300, in =240, looseness=.3]  (n1) edge (n15);
		    \end{pgfonlayer}

	\end{tikzpicture}
}
\newcommand{\cO}{\mathcal{O}}
\newcommand{\Img}{\mathrm{Img}}
\newcommand{\poly}{\mathrm{poly}}
\newcommand{\summ}{\sigma}
\newcommand{\image}[1]{\textsf{Img}(#1)}
\newcommand{\sizeof}[1]{|#1|}
\newcommand{\set}[1]{\{1, 2,\dots, #1\}}
\newcommand{\compabstr}{\textsf{piece-summ}}
\newcommand{\summarycomp}{\textsf{piece-smry}}
\newcommand{\Smrycomp}[1]{\textsf{pieceSmrySet}(#1)}
\newcommand{\deltacomp}{\delta^\textsf{piece}}
\newcommand{\TArun}[1]{\llbracket#1 \rrbracket}
\newcommand{\TA}{\mathcal{A}}
\newcommand{\Alphabet} {\Sigma}
\newcommand{\letter}{\textsf{letter}}
\newcommand{\word}{\textsf{word}}
\newcommand{\Nat} {\mathbb{N}}
\newcommand{\lang} {{L}}
\newcommand{\subword} {\preceq}
\newcommand{\downc}[1]{#1\mathord{\downarrow}}
\newcommand{\MPDS} {\mathcal{M}}
\newcommand{\mpds} {MPDA}
\newcommand{\mpdsops} [1]{{\sf ops_{#1}}}
\newcommand{\op}{{\sf op}}
\newcommand{\tgtstate}{\texttt{tgt}}
\newcommand{\srcstate}{\texttt{src}}
\newcommand{\Succ} {{\tt Succ}}
\newcommand{\Match} {{\tt Match}}
\newcommand{\tmap}{\textsf{trans}}
\newcommand{\states} {Q}
\newcommand{\initstate}{q_0}
\newcommand{\trans}{\Delta}
\newcommand{\mpdspush}[2]{{\sf push}(#1,#2)}
\newcommand{\mpdspop}[2]{{\sf pop}(#1,#2)}
\newcommand{\noop}{{\sf nop}}
\newcommand{\numstack}{s}
\newcommand{\alp}{\Sigma}
\newcommand{\stackalp}{\Gamma}
\newcommand{\finstates} {f}
\newcommand{\MPDStuple} {\tuple{\states,\stackalp,\initstate,\trans, \finstates}}
\newcommand{\langof}[1] {\mathcal{L}(#1)}
\newcommand{\vertices} {{V}}
\newcommand{\edges} {{E}}
\newcommand{\colorlab} {\chi}
\newcommand{\colors} {\mathsf{C}}
\newcommand{\expr}{{\tt expr}}
\newcommand{\src}{\textsf{src}}
\newcommand{\tgt}{\textsf{tgt}}
\newcommand{\setof}[1]{\{#1\}}
\DeclarePairedDelimiter{\card}{\lvert}{\rvert}
\newcommand{\embed}{\hookrightarrow}
\newcommand{\id}{\mathrm{id}}
\newcommand{\cG}{\ensuremath{\mathcal{G}}}
\newcommand{\cS}{\ensuremath{\mathcal{S}}}
\newcommand{\yield}{\mathrm{yield}}
\newcommand{\subtree}[2]{#1/#2}
\newcommand{\height}{\mathrm{height}}
\newcommand{\CP}{\mathsf{CP}}
\newcommand{\cC}{\mathcal{C}}
\newcommand{\ap}{\mathfrak{P}}
\definecolor{wong-green}{HTML}{009E73}
\definecolor{wong-blue}{HTML}{0072B2}
\definecolor{wong-orange}{HTML}{D55E00}
\definecolor{wong-pink}{HTML}{CC79A7}
\colorlet{highlight-1}{wong-orange}
\colorlet{highlight-2}{wong-pink!80!black}
\colorlet{highlight-3}{wong-green}
\colorlet{highlight-4}{wong-blue}
\colorlet{highlight}{highlight-1}
\newcommand{\hlA}[1]{\textcolor{highlight-1}{#1}}
\newcommand{\hlB}[1]{\textcolor{highlight-2}{#1}}
\newcommand{\hlC}[1]{\textcolor{highlight-3}{#1}}
\newcommand{\coNP}{\mathsf{coNP}}
\newcommand{\EXPSPACE}{\mathsf{EXPSPACE}}
\newcommand{\TWOEXPSPACE}{\mathsf{2EXPSPACE}}
\newcommand{\THREEEXPSPACE}{\mathsf{3EXPSPACE}}
\newcommand{\FIVEEXPSPACE}{\mathsf{5EXPSPACE}}
	\newtheorem{fact}[theorem]{Fact}
	\newtheorem{remark}[theorem]{Remark}
	\newtheorem{claim}[theorem]{Claim}
	\newcounter{matchitemctr}
	\newcounter{matchsubitemctr}
	\newcommand\matchitem{\newline\stepcounter{matchitemctr}{\color{violet}\textsc{m}\arabic{matchitemctr})\,\,\,\,}}
	\newcommand\matchsubitem{\newline\indent\stepcounter{matchsubitemctr}{\color{violet}\textsc{m}\arabic{matchitemctr}\alph{matchsubitemctr})\,\,\,\,}}
\newcommand{\longcomment}[1]{}
\newcommand{\Lang}{\mathcal{L}}
\newcommand{\statepush}[1]{\langle #1, \textsf{push} \rangle}
\newcommand{\stateswitch}[1]{\langle #1, \textsf{switch}\rangle}
\newcommand{\statepop}[1]{\langle #1, \textsf{pop}\rangle}
\newcommand{\stackset}{[1..\numstack]}
\newcommand{\oper}{\texttt{oper}}
\newcommand{\rungraph}{\rho}
\newcommand{\streewidth}{\textsf{stw}}
\newcommand{\PTIME}{\mathsf{P}}
\newcommand{\NP}{\mathsf{NP}}
\newcommand{\EXPTIME}{\mathsf{EXPTIME}}
\crefname{appsec}{Appendix}{Appendix}
\Crefname{appsec}{Appendix}{Appendix}
\keywords{%
Treewidth,
Verification,
Monadic second-order logic,
Multiple context-free grammar,
Languages,
Downward closures,
Abstractions
}
\begin{document}

\title{Bounded Treewidth, Multiple Context-Free Grammars, and Downward Closures}

\author{C. Aiswarya}
\affiliation{%
  \institution{Chennai Mathematical Institute and CNRS, ReLaX, IRL 2000}
  \city{Chennai}
  \country{India}
} %
\email{aiswarya@cmi.ac.in} %
\orcid{0000-0002-4878-7581} %

\author{Pascal Baumann}
\affiliation{
  \institution{Max Planck Institute for Software Systems (MPI-SWS)}
  \city{Kaiserslautern}
  \country{Germany}
}
\email{pbaumann@mpi-sws.org}
\orcid{0000-0002-9371-0807}

\author{Prakash Saivasan}
\affiliation{
  \institution{The Institute of Mathematical Sciences, HBNI, and CNRS, ReLaX, IRL 2000}
  \city{Chennai}
  \country{India}
} %
\email{prakashs@imsc.res.in} %
\orcid{0000-0001-5060-0117} %

\author{Lia Sch\"{u}tze}
\affiliation{
  \institution{Max Planck Institute for Software Systems (MPI-SWS)}
  \city{Kaiserslautern}
  \country{Germany}
}
\email{lschuetze@mpi-sws.org} %
\orcid{0000-0003-4002-5491} %

\author{Georg Zetzsche}
\affiliation{
  \institution{Max Planck Institute for Software Systems (MPI-SWS)}
  \city{Kaiserslautern}
  \country{Germany}
}
\email{georg@mpi-sws.org} %
\orcid{0000-0002-6421-4388} %

\begin{abstract}
	The reachability problem in multi-pushdown automata (MPDA), or equivalently, interleaved Dyck reachability, has many applications in static analysis of recursive programs.
	An example is safety verification of multi-threaded recursive programs with shared memory.
	Since these problems are undecidable, the literature contains many decidable (and efficient) underapproximations of MPDA. 

	A uniform framework that captures many of these underapproximations is
	that of \emph{bounded treewidth}: To each execution of the MPDA, we
	associate a graph; then we consider the subset of all graphs that have
	a treewidth at most $k$, for some constant $k$. In fact, bounding treewidth is a generic approach to obtain classes of systems with decidable reachability, even beyond MPDA underapproximations. The resulting systems are also called \emph{MSO-definable bounded-treewidth systems}.

	While bounded treewidth is a powerful tool for reachability and similar types
	of analysis, the word languages (i.e.\ action sequences corresponding
	to executions) of these systems remain far from understood.
	For the slight restriction of bounded \emph{special treewidth}, or
	``bounded-stw'' (which is equivalent to bounded treewidth on MPDA, and even
	includes all bounded-treewidth systems studied in the literature), this work
	reveals a connection with multiple context-free languages (MCFL), a concept
	from computational linguistics. We show that the word languages of MSO-definable bounded-stw
	systems are exactly the MCFL.  

	We exploit this connection to provide an optimal algorithm for
	computing \emph{downward closures} for MSO-definable bounded-stw systems.
	Computing downward closures is a notoriously difficult task that has
	many applications in the verification of complex systems: As an example
	application, we show that in programs with dynamic spawning of
	MSO-definable bounded-stw processes, safety verification has the same
	complexity as in the case of processes with sequential recursive
	processes.
\end{abstract}

\maketitle

\section{Introduction}\label{sec:introduction}
\newcommand{\myparagraph}[1]{\vspace{0.2cm}\noindent\textbf{#1}. }
\myparagraph{Reachability in multi-pushdown automata}
We study the complexity of static analysis tasks for multi-threaded shared-memory (recursive) programs and related systems. Such programs consist of concurrent threads, each of which performs a recursive sequential computation. The threads have access to shared variables. Moreover, we assume that all program variables range over finite domains. Safety verification of such programs can be modeled as reachability in multi-pushdown automata (MPDA). These feature a finite set of control states (which model the shared memory) and have access to multiple pushdown stores (which model the call stacks of the individual threads).

For this reason, the reachability problem in MPDA has received a vast amount of attention in recent decades (see below). Another reason for this strong interest is that MPDA reachability is also equivalent to the problem of \emph{interleaved Dyck reachability}, which can model many important verification tasks beyond safety, but in recursive sequential programs~\cite{DBLP:journals/pacmpl/LiZR21}, such as 
structure-transmitted data-dependence~\cite{DBLP:journals/toplas/Reps00},
typestate analysis~\cite{DBLP:journals/pacmpl/SpathAB19},
alias analysis~\cite{DBLP:conf/pldi/SridharanB06,DBLP:conf/pldi/ChengH00}, %
and taint analysis~\cite{DBLP:conf/pldi/ArztRFBBKTOM14,DBLP:conf/issta/HuangDMD15}. %

\myparagraph{Parameterized underapproximations} Since in full generality, reachability in MPDA is undecidable~\cite{ramalingam2000context,DBLP:journals/toplas/Reps00}, a popular approach to the above tasks is to consider \emph{parameterized underapproximations}. Here, for some parameter, one considers only executions of the MPDA (resp.\ multi-threaded program) where that parameter is bounded by some $k\ge 1$. Imposing such bounds then often makes reachability decidable. An example is \emph{context-bounding}, where one considers only those executions that switch between stacks (i.e.\ threads) at most $k$ times~\cite{QadeerR05}. This turned out to be a remarkably useful approach for finding bugs in practice, since empirically, buggy behavior often manifests in executions where such parameters are small~\cite{QadeerR05,MusuvathiQadeer,InversoTFTP22}.
This general approach inspired numerous underapproximations of MPDA. Each comes with its own tailor-made algorithms and results on expressiveness (i.e.~which behaviors can be captured by the underapproximation). Examples include context-bounded~\cite{QadeerR05}, ordered~\cite{DBLP:journals/ijfcs/BreveglieriCCC96}, phase-bounded~\cite{TorreMP07}, and scope-bounded~\cite{DBLP:conf/atva/AtigBKS12} MPDA.

\myparagraph{Bounded treewidth} Given this plethora of underapproximations, an important contribution was the discovery of a powerful unifying approach: A seminal paper by Madhusudan and Parlato~\cite{TreeAuxMadhu}, together with a line of follow-up work~\cite{CyriacGK12, AiswaryaGK14,Cyriac14,LaTorreSNM14} revealed that most of the decidable underapproximations of MPDA (including the ones mentioned above) are \emph{bounded-treewidth}~\cite{QadeerR05, TorreMP07, TorreNP20, DBLP:journals/ijfcs/BreveglieriCCC96}. This means, the executions can be represented as graphs that have small tree decompositions. This implies that decidability of these underapproximations can be uniformly explained by Courcelle's Theorem~\cite{Courcelle89}, which says that among structures of bounded treewidth, satisfiability of any property definable in monadic second-order logic (MSO) is decidable in polynomial time. 

In fact, the bounded-treewidth framework applies not only to MPDA, but can be viewed as a general framework that yields analysis algorithms in infinite-state systems: It also applies to systems with FIFO queues~\cite{TreeAuxMadhu, AiswaryaGK14} and even timed multi-pushdown systems~\cite{AkshayGK18, AkshayGKS17, AkshayGK16}. For this reason, for infinite-state systems (whether they are MPDA or not) whose valid runs can be represented as bounded-treewidth graphs, we informally use the term \emph{bounded-treewidth systems}.

\myparagraph{Analyzing traces: Downward closures}
However, while Courcelle's theorem yields algorithms for yes-or-no queries (such as safety), it is of limited use for closer analysis of its set of program traces (i.e.\ sequences of actions), also called its \emph{(word) language}.
An example task for which the above decision procedures fall short is the
computation of downward closures.  The \emph{downward closure} of a language
$L\subseteq\Sigma^*$ is the set of all (not necessarily contiguous)
subsequences of members of $L$. By a fundamental result of
Higman~\cite{Higman1952OrderingBD}, the downward closure of \emph{every}
language is regular. For example, the non-regular language $L=\{a^nb^n \mid
n\ge 0\}$ has the regular downward closure $\{a^nb^m\mid m,n\ge 0\}$.
Unfortunately, computing an NFA for the downward closure of a language $L$
(when given some representation of $L$) is a notoriously challenging task:
Algorithms to compute downward closures (especially optimal ones) are usually
non-trivial~\cite{DBLP:conf/lics/ClementePSW16,DBLP:conf/icalp/Zetzsche16,DBLP:conf/icalp/BarozziniCCP20,DBLP:conf/icalp/HabermehlMW10,DBLP:conf/icalp/Zetzsche15}
and require deep insights into the structure of the set of runs of a
system.

Downward closure computations are useful for analyzing systems that consist
of several components, which are infinite-state systems themselves. In such
situations, one can often replace each component by its (finite-state) downward
closure, without affecting important properties such as safety. For example,
\emph{asynchronous programs}~\cite{DBLP:journals/toplas/GantyM12} or, more
generally, \emph{concurrent programs}~\cite{DBLP:conf/icalp/0001GMTZ23},
consist of threads that are modeled by infinite-state systems. Moreover, in these models,
threads can be spawned dynamically during the run. It was observed recently
that any algorithm for computing downward closures for threads, can be used to
decide safety, boundedness, and termination of the entire asynchronous
program~\cite[Section 6]{DBLP:journals/lmcs/MajumdarTZ22} or concurrent
programs~\cite{DBLP:conf/icalp/0001GMTZ23}. Another example is paramerized
verification of shared memory systems with non-atomic reads and
writes~\cite{DBLP:conf/fsttcs/Hague11}: As shown by LaTorre, Muscholl, and
Walukiewicz~\cite{DBLP:conf/concur/TorreMW15}, any algorithm for computing downward closures of the leader process yields a procedure for safety verification.

\myparagraph{Downward closures and bounded treewidth} Therefore, an algorithm for computing downward closures of bounded treewidth
systems would directly yield procedures for verifying various complex systems with
bounded-treewidth components.

Computing downward closures of bounded-treewidth systems is far from
understood. There is an abstract result implying the existence of an
algorithm\footnote{One can deduce this from \cite{DBLP:conf/icalp/Zetzsche15} because the languages of bounded-treewidth systems have semilinear Parikh
images~\cite{TreeAuxMadhu}.}~\cite{DBLP:conf/icalp/Zetzsche15}, but that
algorithm comes with no complexity upper bounds (not even beyond
primitive-recursive): It enumerates potential finite automata and then checks
which one is the downward closure.

\myparagraph{Special treewidth (\streewidth)} A slight restriction of ``bounded treewidth'' is to bound the \emph{special treewidth}~\cite{Cou10} (in short \emph{\streewidth}).
Consider graphs of bounded degree, as in particular, systems with stacks or queues lead to graphs of degree $\le 3$. For such graphs, bounded treewidth is equivalent to bounded \streewidth\ (for degree $\le d$, treewidth $\le t$ implies \streewidth\ $\le 20dt$~\cite{Cou10}). In fact, to our knowledge, all bounded-treewidth infinite-state systems studied in the literature are bounded-\streewidth: Even the  underapproximations of the (unbounded-degree) timed multi-pushdown systems~\cite{AkshayGK18, AkshayGKS17, AkshayGK16} have in fact bounded \streewidth. See \refApp{\cref{app:stw}} for a detailed comparison.%

\myparagraph{Contributions} Our \emph{first main result} reveals a connection between bounded-\streewidth\ systems and a concept from computational linguistics: We show that the languages of bounded-\streewidth\ systems are precisely those generated by \emph{multiple context-free grammars} (MCFG)~\cite{SekiMFK91}.  MCFG are an established concept in computational linguistics. Their main motivation is to extend the expressive power of context-free grammars while retaining their algorithmic properties. In this way, they are an example model of \emph{mildly context-sensitive languages}. As we show, the \streewidth\ is reflected in what we call the \emph{width} of the MCFG. In an MCFG, a nonterminal generates a $k$-tuple of words rather than a single word (as for CFG).
 Here, $k$ is the \emph{arity} of that nonterminal, and the maximal arity occurring in an MCFG is its \emph{dimension}, denoted $d$. The \emph{rank}, denoted $r$, of an MCFG is the maximal number of nonterminal occurrences in a production. Then the width of the MCFG is $k=d\cdot r$. Starting from graphs with \streewidth\ $k$, the resulting MCFG has width $2k$. Conversely, from an MCFG of width $k$, we obtain graphs of \streewidth\ $6k$. 

In our \emph{second main result}, we exploit this new connection for the computation of downard closures. We show that for a given MCFG (of fixed dimension), one can construct a downward closure NFA of at most doubly exponential size. We also show that this is optimal.
In particular, this yields optimal downward constructions for bounded-\streewidth\ underapproximations for MPDA.

\myparagraph{Application I: technology transfer} Coincidentally, a recent paper by Conrado, Kjelstr{\o}m, van de Pol, and Pavlogiannis~\cite{MCFGPOPL25} construct a direct, bespoke underapproximation of MPDA (there phrased as interleaved Dyck reachability) in the form of an MCFG. This appears to be a promising approach: The paper provides efficient (and almost conditionally optimal) bounds for language-theoretic problems for MCFG. Moreover, the authors show that empirically, their underapproximation has remarkably high coverage (of executions) in practice. Our results show that the two types of underapproximation allow transferring algorithmic techniques: (i)~the algorithms by Conrado et.\ al.\ (in fact, any algorithms for MCFG) apply to any bounded-\streewidth\ underapproximation
(and thus all bounded-treewidth systems in the literature), but also (ii)~all algorithms for bounded-\streewidth\ systems based on Courcelle's Theorem apply to the languages of Conrado et.\ al. In particular, the underapproximation of Conrado et.\ al.\ can be viewed as part of the bounded-treewidth framework.

\myparagraph{Application II: Concurrent Programs} As mentioned above, downward
closures can be used to verify systems with dynamic spawning.  A \emph{dynamic network of concurrent
pushdown systems} (DCPS)~\cite{AtigBQ11} consists of recursive
processes, modeled as pushdown systems. These processes can access the shared memory of finite domain, but also spawn new
processes. At each moment, there is one active process, which can be
interrupted up to $K$ times. Thus, DCPS are multi-threaded
shared-memory programs with dynamic spawning and $K$-bounded
context-switching. 
Deciding safety of DCPS is complete for
$\TWOEXPSPACE$~\cite{AtigBQ11,DBLP:conf/icalp/BaumannMTZ20} (i.e.\ two-fold
exponential space). 

Our final application concerns DCPS where the individual processes can not only be
recursive programs, but \emph{any bounded-\streewidth\ system}, such as many
underapproximations of MPDA, but also queue systems, etc. The resulting systems
are then \emph{Concurrent Programs over MCFG} (CPM), since by our first
contribution, the individual processes can be represented by MCFG.  Here, our
result is that for CPM, safety verification is still $\TWOEXPSPACE$-complete.
Hence, although the model of CPM permits much more involved individual
processes than DCPS, our result shows that the complexity of safety
verification remains the same. 

To this end, we rely on the idea that for safety in DCPS, one computes downward
closures of context-free languages, which results in exponentially large finite
automata. These automata are then used to construct an exponential-sized vector
addition system with states (VASS), which is checked for coverability. Since
coverability in VASS is $\EXPSPACE$-complete, this results in a $\TWOEXPSPACE$
procedure overall. Applying our aforementioned downward closure result directly 
would thus only lead to a $\THREEEXPSPACE$ upper bound. Therefore, we require
another trick: We introduce the notion of the \emph{partially permuting
downward closure}. For this, we show that (i)~using an additional
\emph{compression step}, one can compute a \emph{singly-exponential} 
automaton, and that (ii)~these new closures can be used in place of the
classical downward closure in safety of concurrent programs.

\myparagraph{Structure of the paper} In \cref{sec:preliminaries}, we will
introduce and explain basic concepts. In \cref{sec:main-results}, we outline
the main results of this paper. Then, \cref{sec:ta2mcfg,sec:mcfg2mso} present
the translations between MCFG and bounded-\streewidth\ systems. In
\cref{sec:downward-closures}, we present the downward closure construction for
MCFG. Finally, \cref{sec:applications} is about the application to concurrent
programs over MCFG. Due to space constraints, many proof details are only available 
in the \refApp{appendix}.

\myparagraph{Related work}
While this work is about \emph{underapproximations} of MPDA reachability, there has also been a significant amount of work on \emph{overapproximations}. Note underapproximating the set of behaviors of a system can be used to find bugs (any undesirable behavior in an underapproximation also exists in the original system), whereas overapproximations can certify the absence of bugs (a bug-free overapproximation means the original system is correct). Prominent examples include bidirected interleaved Dyck reachability~\cite{DBLP:journals/pacmpl/LiSZ22,DBLP:journals/pacmpl/LiZR21,DBLP:journals/pacmpl/KrishnaLPT24,DBLP:conf/icalp/GanardiMPSZ22,DBLP:journals/pacmpl/KjelstromP22}, LCL-reachability~\cite{DBLP:conf/popl/ZhangS17}, overapproximations based on integer linear programming~\cite{DBLP:journals/pacmpl/LiZR23}, synchronized pushdown reachability~\cite{DBLP:journals/pacmpl/SpathAB19}, and techniques for refining overapproximations~\cite{DBLP:journals/sttt/ConradoP25,DBLP:conf/sas/DingZ23}. See \cite{DBLP:journals/siglog/Pavlogiannis22} for a survey on algorithmic aspects.

Another application of treewidth in static analysis of recursive programs is to exploit the empirical observation that many programs in practice have control-flow graphs of small treewidth~\cite{DBLP:journals/iandc/Thorup98,DBLP:conf/atva/ChatterjeeGP17}. This motivates the study of parameterized complexity of static analysis problems under bounded treewidth~\cite{DBLP:conf/esop/ChatterjeeGIP20,DBLP:conf/esa/ChatterjeeIP16,DBLP:conf/popl/ChatterjeeIPG15,DBLP:journals/toplas/ChatterjeeIGP18,DBLP:conf/vmcai/GoharshadyZ23}. However, this approach is fundamentally different from bounded treewidth underapproximations:
The latter achieve decidability by restricting the \emph{search space}, i.e.\ the set of \emph{program executions}, to those with bounded treewidth. 
 In the parameterized complexity setting, however, one studies how much more efficient the (already decidable) analysis becomes if the \emph{input program}, specifically its control-flow graph, has small treewidth.

\section{Preliminaries}\label{sec:preliminaries}

\newcommand{\arity}{\textsf{arity}}
\newcommand{\rel}{R}
\newcommand{\Relof}[2]{\mathcal{R}_{#2}(#1)}
\newcommand{\atomicTermsof}[2]{#1(#2)}
\newcommand{\rfsubset}{\subseteq_{\textsf{rf}}}
\newcommand{\tname}[1]{{\color{red}\small \ensuremath{\tau_{#1}}}}
\newcommand{\openone}{\ensuremath{a}}
\newcommand{\closeone}{\ensuremath{\bar{a}}}
\newcommand{\opentwo}{\ensuremath{b}}
\newcommand{\closetwo}{\ensuremath{\bar{b}}}
\newcommand{\transopenone}{\tname{1}}
\newcommand{\transcloseone}{\tname{2}}
\newcommand{\transopentwo}{\tname{3}}
\newcommand{\transclosetwo}{\tname{4}}
\myparagraph{Multi-pushdown automata and languages}\label{sec:mpds}
A \emph{multi-pushdown automaton} (\mpds) with $\numstack$ stacks, over a finite alphabet $\Sigma$ is a tuple of the form %
$\MPDS = \MPDStuple $, where %
$\states$ is a finite set of control states, $\stackalp$ is a finite stack alphabet, $\initstate \in \states$ is the initial state, $\finstates \in \states$ is the final state and $\trans \subseteq \states \times \alp \cup \{ \varepsilon \} \times \mpdsops\numstack \times \states$, where $\mpdsops\numstack = \{\mpdspush{i}{a},\mpdspop{i}{a} \mid i \in \stackset, a \in \stackalp\} \cup \{ \noop\}$. It accepts a word $w \in \Sigma^\ast$  if there is a run on $w$ that ends in the final state with all stacks empty. 

\begin{example}\label{example:mpds-dyck}
Recall that the Dyck-1 language $D_1\subseteq\{a,\bar{a}\}^*$ consists of all ``well-bracketed'' words, i.e.\ words $w$ with $|w|_a=|w|_{\bar{a}}$ and every prefix $u$ of $w$ admits $|u|_a\ge |u|_{\bar{a}}$.
Consider the \mpds\ with $2$ stacks depicted in Figure~\ref{fig:mpds-dyck} (left). It accepts all interleavings of two Dyck-1 languages, one over the alphabet $\{\openone, \closeone\}$ and the other over $\{\opentwo, \closetwo\}$.
For instance, it accepts the word $\opentwo \openone \openone \opentwo \closeone \closetwo
\opentwo \openone \closetwo \closeone
\openone \opentwo \closeone \closetwo \closetwo \closeone
$. An accepting run on this word is graphically represented in Fig~\ref{fig:mpds-dyck} (right),  where %
the matching push-pop relations are depicted by curved arrows. The matching relation of stack 1 (resp. stack 2) is shown by an upward (resp. downward) curve. 
\end{example}

\begin{figure}
	\scalebox{0.9}{
		\begin{tikzpicture}[node distance=2.5cm, initial text=,>=latex, thick]
			\node[ draw, rounded corners,  initial, accepting ] (q0) {$q$};
			
			\draw[->] 
			(q0) edge [loop right, min distance =1cm, out=30, in = 330] node[right] { $\begin{array}{l}
					\transopenone: \openone, \mpdspush{1}{\openone}\\
					\transcloseone: \closeone, \mpdspop{1}{ \openone}\\   
					\transopentwo: \opentwo, \mpdspush{2}{\opentwo }\\
					\transclosetwo: \closetwo, \mpdspop{2}{ \opentwo}\\   
				\end{array}$ }
				(q0)
			;
		\end{tikzpicture}
	}
	\hfil
	\scalebox{.9}{	\rungraphletter
	}
	\caption{A 2-stack MPDA $\MPDS_\textsf{int-Dyck}$ on the left and an accepting rungraph of  $\opentwo \openone \openone \opentwo \closeone \closetwo
		\opentwo \openone \closetwo \closeone
		\openone \opentwo \closeone \closetwo \closetwo \closeone
		$ on the right.  }\label{fig:mpds-dyck}
\end{figure}

\noindent	

A \emph{rungraph} is a structure $\rungraph = \tuple{\vertices, \Succ,\Match,\vlabel}$  where   $\vertices = \{v_1, \cdots, v_n\}$ is a finite set of vertices, $\Succ$ is the successor relation of a total order on $\vertices$,  $\Match \subseteq \vertices \times \vertices$ is a binary matching relation and the  labelling $\vlabel: \vertices \mapsto \Sigma$  maps  vertices  to letters. %
A rungraph is accepted by an MPDA if there is a labelling  $\tmap: \vertices \mapsto \trans$ that is consistent with $\vlabel$, $\Succ$ and $\Match$.  Consistency of $\tmap$ with $\Match$ also requires that the last-in-first-out access policy of stacks is respected (well-nesting). See \refApp{\cref{app:MPDS-examples}} for details.  Further,  the first (resp. last) event must be labelled by an initial (resp. final) transition. 

The \emph{(run)graph language} of an \mpds\ $\MPDS$, denoted $\glangof{\MPDS}$, is  the  set of all  rungraphs that are accepted by $\MPDS$. 
For a rungraph $\rungraph =  \tuple{\vertices, \Succ,\Match,\vlabel}$ let $\word(\rungraph)$ be the word in $\Alphabet^\ast$ given by the $\Succ$ relation  and $\vlabel$. 
The \emph{word language }of an MPDA, denoted $\wlangof{\MPDS}$ or simply $\Lang(\MPDS)$,  is given by $ \Lang(\MPDS) = \{ w  \mid  w = \word(\rungraph) $ for some $\rungraph \in \glangof{\MPDS}\}$. %

\myparagraph{Multiple context-free grammars and languages} Multiple context-free grammars (MCFG) generalize context-free grammars (CFG) by allowing each nonterminal to derive a tuple of words, rather than a single word. A production $A\to BC$ in a CFG tells us that if we take any word $u$ derivable by $B$, and any word $v$ derivable by $C$, then $uv$ is considered derivable by $A$. Thus, we combine words derivable by $B,C$ to obtain words derivable by $A$. In an MCFG, we do the same for tuples. A rule is of the form 
\begin{equation}
	 A(s_1,\ldots,s_n)\leftarrow \{B_1(x_{1,1},\ldots,x_{1,n_1}),\ldots,B_m(x_{m,1},\ldots,x_{m,n_m})\}. \tag{Prod} \label{eqn:production-rule-form}
\end{equation}
 Here, the $x_{i,j}$ are variables, and $s_1,\ldots,s_n$ are words over both variables and terminal letters. Such a rule tells us that if the tuple $(u_{i,1},\ldots,u_{i,n_i})$ is derivable by $B_i$ for each $i$, then the tuple $(s'_1,\ldots,s'_n)$ is derivable by $A$. Here, $s'_j$ is obtained from $s_j$ by replacing each occurrence of $x_{i,\ell}$ by $u_{i,\ell}$. Importantly, every variable can occur at most once on each side; and every variable that occurs on the left also occurs on the right. This means, the words in the tuples $(u_{i,1},\ldots,u_{i,n_i})$ can be rearranged (or forgotten), but not repeated. 
 The number $n$ is the \emph{arity} of $A$, and $m$ is the \emph{rank} of the rule~(\ref{eqn:production-rule-form}). 

Formally, an MCFG is a tuple $\grammar= (\Terminals, \hat\Nonterminals, \Variables, \Productions, \StartNonterminal)$ where $\Terminals$ is a finite alphabet, $\hat\Nonterminals = (\Nonterminals, \arity)$ is a finite set of nonterminals disjoint from $\Terminals$ along with their arities given by $\arity: \Nonterminals \to \Nat$,    $\Variables$ is a finite set of variables disjoint from $\Nonterminals$ and $\Terminals$,  $\Productions$ is a finite set of \emph{productions} of the form (\ref{eqn:production-rule-form}) and ${\StartNonterminal} \in \Nonterminals$ is the start nonterminal with $\arity(\StartNonterminal) =1$. 

The \emph{dimension}, denoted $d$, of an MCFG $\grammar= (\Terminals, \hat\Nonterminals, \Variables, \Productions, \StartNonterminal)$ is the maximum arity of the nonterminals. The \emph{rank}, denoted $r$, of $\grammar$ is the maximum rank of its production rules; but we define it as $1$ if all rules have rank $0$. The \emph{width} of the MCFG is defined as $k=d\cdot r$. A $d$-MCFG is an MCFG of dimension $\le d$; and a $d$-MCFG($r$) is one that also has rank $\le r$.  Note that the $1$-MCFG are precisely the context-free grammars (CFG). 
 
Next we define the language of an MCFG. First, a nonterminal $A$ of arity $n$ generates a set of  $n$-tuples of words, denoted by $\rel(A)$. We also write $A(u_1,\dots, u_n)$ to mean $(u_1,\dots, u_n) \in \rel(A)$. We define $\rel(A) \subseteq (\Sigma^*)^n$  inductively by the following deduction rules. \\
 
\noindent \begin{minipage}{.33\textwidth}
\begin{equation}
	\frac{A(s_1,\ldots,s_n)\leftarrow \emptyset }{A(s_1,\ldots,s_n)} \label{eqn:base-case-deduction-rule}
\end{equation}
 \end{minipage}
 \hfill
\begin{minipage}{.57\textwidth}
\begin{equation}
	\frac{(\text{\ref{eqn:production-rule-form}}) \qquad B_i(u_{i,1},\ldots,u_{i,n_i})\,  \forall i\in \set{m}}{A(\eta(s_1), \dots, \eta(s_n))} \label{eqn: inductive-step-deduction-rule}\\
\end{equation}
\end{minipage}\\

Note that in the deduction rule~(\ref{eqn:base-case-deduction-rule}), each $s_i \in \Terminals^\ast$. In the deduction rule~(\ref{eqn: inductive-step-deduction-rule}) $\eta$ is a homorphism $\eta: (\Variables \cup \Terminals) \to \Terminals^\ast$ given by $\eta(x_{i,j}) = u_{i,j}$ and $\eta(a) = a $ for $a \in \Terminals$. 

Notice also that if $(u_1, \dots u_n) \in \rel(A)$ then the deduction rules naturally give  a \emph{derivation tree}. These notions are more rigorously explained in \refApp{\cref{appendix:mcfg}}.
\smallskip

The \emph{language} of an MCFG $\grammar= (\Terminals, \hat\Nonterminals, \Variables, \Productions, \StartNonterminal)$, denoted $\langof{\grammar}$, is just $\langof{\grammar} = \rel(\StartNonterminal)$.
A language $L \subseteq \Terminals^\ast$ is a \emph{multiple context-free language (MCFL)} if there is an MCFG $\grammar$ with $L = \langof{\grammar}$.

\begin{example} \label{ex:mcfg-parsetree}
	Consider the \mcfg\  $\initgrammar = (\{a,b\}, \hat\Nonterminals, \Variables, \Productions, \StartNonterminal)$ where $\Nonterminals = \{A, B, {\StartNonterminal} \} $ with $\arity(A) =2, \arity(B) =3, \arity(\StartNonterminal) = 1$, $\Variables = \{x_1, \dots, x_5\}$ and  $\Productions$ listed below. 	Figure~\ref{fig:parsetree} shows a parse tree with yield 
	$baaabb$.\\ 
	\noindent	\begin{minipage}{0.60\textwidth}
		\begin{align*}
			\StartNonterminal( x_2x_1 ) &\leftarrow \{A(x_1,x_2)\} \\
			A( x_2x_5x_4, x_3x_1) &\leftarrow \{ A(x_1,x_2), B(x_3,x_4,x_5)\} \\
			A(a, aa) & \leftarrow \emptyset \\
			B(b, b,b) & \leftarrow \emptyset
		\end{align*}
	\end{minipage}\hfill
	\begin{minipage}{0.35\textwidth}
		\scalebox{1}{
			\parsetreetikz
		}
		
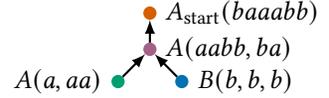
\captionof{figure}{A parse tree}\label{fig:parsetree}
	\end{minipage}\\ 
\end{example}

\myparagraph{About MCFL} MCFL are considered a promising candidate for a class of \emph{mildly context-sensitive languages} as postulated by Joshi~\cite{Joshi1985}. One aspect that makes them appealing is that there are several characterizations of MCFL by different formalisms, such as %
linear context-free rewriting systems~\cite[Lemma 2.2(f3)]{SekiMFK91} (see also the comment on \cite[p.~193]{SekiMFK91}), 
hyperedge replacement grammars and deterministic tree-walking transducers~\cite{DBLP:journals/jcss/EngelfrietH91,DBLP:conf/acl/Weir92}, 
minimalist grammars~\cite{DBLP:conf/lacl/Michaelis01,DBLP:conf/lacl/Harkema01}, and 
multi-component tree-adjoining grammars~\cite{DBLP:conf/acl/Vijay-ShankerWJ87}.
In terms of complexity, the membership problem\label{preliminaries:complexity-membership} for MCFG is $\EXPTIME$-complete in general~\cite{DBLP:journals/ci/KajiNSK94,kaji1992a}, $\NP$-complete for fixed dimension $d\ge 2$~\cite{DBLP:journals/ci/KajiNSK94,kaji1992b}, and $\PTIME$-complete if both dimension and rank are fixed~\cite{DBLP:journals/ci/KajiNSK94,kaji1992b}. The membership problem was (among other problems) also recently studied by Conrado, Kjelstr{\o}m, van de Pol, and Pavlogiannis~\cite{MCFGPOPL25}, who provide close-to-optimal fine-grained complexity bounds.

Regarding their relationship with other language classes studied in verification, the MCFL are incomparable with the \emph{indexed languages}~\cite{aho1968indexed}, which are exactly the languages accepted by second-order pushdown automata~\cite[Theorem~2.2]{parchmann1980deterministic}. First, the indexed languages are not included in the MCFL, since all MCFL have semilinear Parikh images~\cite{DBLP:conf/acl/Vijay-ShankerWJ87} (see also \cite[Theorem 3.7]{SekiMFK91}), whereas there are indexed languages, such as $\{a^{2^n} \mid n\in\Nat\}$, which have a non-semilinear Parikh image. Conversely, the MCFL are not included in the indexed languages: As observed by Kanazawa and Salvati~\cite{DBLP:conf/lata/KanazawaS10}, the ``triple copying theorem'' by Engelfriet and Skyum~\cite[Theorem~2]{DBLP:journals/ipl/EngelfrietS76} implies that for every $K\subseteq\Sigma^*$, if the language $\{w\#w\#w\mid w\in K\}$ is indexed, then $K$ must be an EDT0L language (see \cite{DBLP:journals/jcss/EngelfrietRS80} for a definition).
Now, for example the Dyck-1 language $D_1\subseteq\{a,\bar{a}\}^*$ (see \cref{example:mpds-dyck}) is not EDT0L~\cite[Theorem 5]{ehrenfeucht1977some}, but context-free.
Because of the latter, we can easily construct an \mcfg{} for the language $\{w\#w\#w\mid w\in D_1\}$, which then cannot be indexed.

\myparagraph{Downward closures}
	Given two words $u,v \in \Sigma^*$ for some finite alphabet $\Sigma$, we say $u$ is a \emph{subword} of $v$, denoted as $u \subword v$ if $u$ can be obtained by dropping some letters from $v$, that is, if $u = u_1\cdots u_n$ and $v=v_0u_1v_1\cdots u_nv_n$ for some words $v_0,u_1,\ldots,u_n,v_n\in\Sigma^*$.
For a language $\lang \subseteq \Sigma^*$, its \emph{downward closure} $\downc\lang$ is the set of all subwords of words in $L$, in symbols:
$ \downc\lang := \{u\in\Sigma^* \mid \exists v \in \lang\colon u \subword v \} $.

\myparagraph{Graphs and MSO definable classes of graphs}
Let $\Alphabet$ and $\Elabels$ be alphabets. An \emph{$\LOG$ over $(\Alphabet,\Elabels)$} is a directed graph with an underlying linear order, where (i)~the vertices carry labels in $\Alphabet\cup\{\varepsilon\}$ and (ii)~edge relations have names in $\Elabels$. Formally, it is a tuple $\lograph = \tuple{\vertices, \sedge, (\edges_\gamma)_{\gamma \in \Elabels}, \vlabel}$ where (i)~$\vertices$~is a finite set of vertices, (ii)~$\edges_\gamma \subseteq \vertices \times \vertices$ is an  edge relation (directed) for each $\gamma \in \Elabels$, (iii)~$\sedge \subseteq \vertices \times \vertices$ is the successor relation of a linear order on the vertices, and (iv)~$\vlabel : \vertices \to \Alphabet \cup \{\varepsilon\}$ is a labeling function. The set of all $\LOG$s over $(\Alphabet, \Elabels)$ is denoted $\logset{\Alphabet}{\Elabels}$.

 The labels of the sequence of vertices visited according to the linear order of an \LOG\ $\lograph = \tuple{\vertices, \sedge,  (\edges_\gamma)_{\gamma \in \Elabels}, \vlabel} $ naturally give a word in $\Alphabet^\ast$. We call this word $\word(G)$. That is, if $v_1 , v_2  \dots  v_n$ is the sequence of all vertices in $\vertices$ according to the linear order (i.e., $(v_i, v_{i+1}) \in \sedge$ for all $1 \le i < n $), then $\word(G) = \vlabel(v_1)\vlabel(v_2)\dots\vlabel(v_n)$. 
 
 Let $S $ be a set of $\LOG$s over $ (\Alphabet, \Elabels)$. We define $\word(S) = \{ \word(\lograph) \mid \lograph \in S\}$.
 \medskip
 
 \textit{Monadic second-order logic} (MSO) on directed node-labelled graph with an underlying linear order over a finite alphabet $\Alphabet$, denoted $\MSO(\Alphabet, \Elabels)$, is defined in the standard way. An $\MSO(\Alphabet, \Elabels)$ formula is given by the grammar
 \[\phi = a(x) \mid x \in X \mid \sedge(x, y) \mid  \edges_\gamma(x,y)  \mid \neg \phi \mid \phi \vee \phi \mid \exists x \phi \mid \exists X \phi \]
 where $a \in \Alphabet$, $\gamma \in \Elabels$, $x, y$ are first-order variables ranging over vertices and $X$ is a second-order variable ranging over subsets of vertices. The semantics is as expected. 

 Every $\MSO(\Alphabet, \Elabels)$ sentence $\phi$  defines a set of $\LOG$s over $(\Alphabet, \Elabels)$, denoted by $\glangof{\phi}$. It also defines a set of words over $\Alphabet$, denoted $\wlangof{\phi}$, given by \[\wlangof{\phi} = \word(\glangof{\phi}) =  \{\word(G) \mid G \in \glangof{\phi}\}.\]

 One can show the following by declaring a second-order variable for each transition of an \mpds:
 
 \begin{claim}
 	For every \mpds\ $\MPDS$ over the alphabet $\Sigma$, there is an $\MSO(\Alphabet, \{\Match\})$-formula $\phi_\MPDS$ such that $\glangof{\phi_\MPDS} = \glangof{\MPDS}$. 
 \end{claim}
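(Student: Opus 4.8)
I would use the standard Büchi--Elgot--Trakhtenbrot-style encoding of runs as labelled structures, adapted to run-graphs. The plan is to introduce one monadic second-order variable $X_t$ for each transition $t\in\trans$, with the intended reading that $X_t$ is the set of vertices $v$ carrying $\tmap(v)=t$, and to define
\[
  \phi_\MPDS \;=\; \exists\,(X_t)_{t\in\trans}\;\psi ,
\]
where $\psi$ is a first-order formula over the vertex-label predicates together with $\sedge$, $\Match$ and the $X_t$. I would take $\psi$ to be the conjunction of: (i) the $(X_t)_{t\in\trans}$ partition the vertex set; (ii) \emph{label consistency}: if $v\in X_t$ and $t$ reads $a\in\Sigma$ then $a(v)$ holds, and if $t$ reads $\varepsilon$ then $v$ satisfies no $a(\cdot)$ (this is the only place where $\varepsilon$-transitions enter, and it is harmless since $\LOG$s admit the label $\varepsilon$); (iii) \emph{boundary}: the $\sedge$-minimal vertex lies in some $X_t$ whose source state is $\initstate$ and the $\sedge$-maximal vertex lies in some $X_t$ whose target state is $\finstates$; (iv) \emph{control flow}: whenever $\sedge(v,v')$, $v\in X_t$ and $v'\in X_{t'}$, the target state of $t$ equals the source state of $t'$. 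Since the linear order $<$ underlying $\sedge$ is MSO-definable from $\sedge$ as its transitive closure, after inlining that definition $\phi_\MPDS$ is an $\MSO(\Sigma,\{\Match\})$-sentence.

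The remaining conjunct of $\psi$ constrains $\Match$ to encode a legal stack behaviour. Call $v$ a \emph{push} (resp.\ \emph{pop}) \emph{on stack $i$ with symbol $a$} if the unique $t$ with $v\in X_t$ performs the corresponding operation, and call $v$ \emph{inert} if $t$ performs no stack operation. I would require: every push on stack $i$ with symbol $a$ has exactly one outgoing and no incoming $\Match$-edge, and its $\Match$-successor is a pop on the same stack $i$ with the same symbol $a$ that occurs strictly later in $<$; symmetrically, every pop has exactly one incoming and no outgoing $\Match$-edge; every inert vertex is incident to no $\Match$-edge; and, for each stack $i$, the matching is \emph{non-crossing}, i.e.\ there are no $\Match$-edges $(v,w)$ and $(v',w')$ on stack $i$ with $v<v'<w<w'$. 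These clauses are first-order in the signature above (using the MSO-defined $<$), and together they say exactly that $\Match$, restricted to each stack, is the canonical well-nested perfect matching of the pushes and pops of that stack.

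Correctness then splits into two routine inclusions. For $\glangof{\MPDS}\subseteq\glangof{\phi_\MPDS}$, I would take an accepted run-graph $\rungraph$ with witnessing $\tmap$ and set $X_t:=\tmap^{-1}(t)$; conditions (i)--(iv) are immediate, and the stack conjunct holds because consistency of $\tmap$ with $\Match$ (the LIFO/well-nesting requirement in the definition of acceptance) forces $\Match$ on each stack to be precisely the non-crossing perfect push/pop matching with agreeing symbols. For $\glangof{\phi_\MPDS}\subseteq\glangof{\MPDS}$, I would take a $\LOG$ $G$ with witnesses $(X_t)_t$, define $\tmap(v)$ to be the unique $t$ with $v\in X_t$, and observe that (ii)--(iv) together with the stack conjunct say exactly that the induced transition sequence is a run of $\MPDS$ from $\initstate$ to $\finstates$ respecting $\vlabel$ and $\sedge$; this run is stack-legal (when a pop $w$ is executed, the top of its stack is its $\Match$-predecessor, whose pushed symbol agrees, using non-crossingness) and ends with all stacks empty (every push is matched), so $G$ is an accepted run-graph, witnessed by $\tmap$.

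The only non-mechanical step, and hence the one I expect to be the main obstacle, is establishing the equivalence used in both directions: \emph{$\tmap$ is consistent with a given $\Match$ if and only if $\Match$ is, on each stack, the unique non-crossing perfect matching of the pushes and pops with each push preceding its pop and carrying the matched symbol}. This is the precise formalisation of the informal well-nestedness condition in the acceptance definition; once it is spelled out carefully, everything else is bookkeeping.
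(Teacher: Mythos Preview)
Your proposal is correct and follows exactly the approach the paper indicates: the paper's entire argument is the single sentence ``One can show the following by declaring a second-order variable for each transition of an \mpds,'' and you have faithfully expanded this hint into a complete construction, matching the acceptance conditions spelled out in the appendix (label consistency, control flow at $\sedge$-edges, initial/final states, and the per-stack well-nesting of $\Match$).
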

 
 \begin{example}Consider the class of all  directed grids over the node labels $\{a,b\}$. Linear order is given by rows in the successive order. There is an MSO sentence $\phigrid \in \MSO(\{a,b\}, \{\rightarrow, \downarrow\})$ for this. Consider the set of grids in which the node-label is the same in every column. This set is now definable by the following sentence. $\phicopy = \phigrid \wedge \forall x \forall y (\edges_\downarrow(x,y) \implies ((a(x) \wedge a(y)) \vee (b(x) \wedge b(y)))$. Note that $\wlangof{\phicopy} = \{
u^k \mid k \ge 1, u \in \{a,b\}^\ast\}$.
\end{example}

 \newcommand{\cpclog}{cp-\LOG}
  \newcommand{\cpcgraph}{\hat\lograph}
  \newcommand{\UOP}{\textsf{UOP}}
    \newcommand{\BOP}{\textsf{BOP}}
    \newcommand{\ecpcgraph}{\bot}

\myparagraph{Bounded (special) treewidth languages}
A \emph{tree decomposition} of a (directed) graph is a rooted tree where each tree node is labelled by a bag of  graph vertices such that 1) each graph vertex appears in at least one bag, 2) the bags containing a graph vertex form a connected subtree, and 3) every adjacent pair of vertices of the graph must belong together in at least one bag.

A tree decomposition is \emph{special} if the condition 2 above is strengthened as follows: $2^\prime$) the bags containing a graph vertex form a path along a branch of the rooted tree. In particular, this forbids two sibling bags to share a graph vertex. The width of a special tree decomposition is the maximum size of its bags. The \emph{special treewidth} (\streewidth) of a graph is the minimum width over all special tree decompositions. 

\begin{example}\label{example:tree-decomposition} A tree decomposition of the rungraph in Figure~\ref{fig:mpds-dyck} is given in Figure~\ref{fig:tree-decom}. It is a special tree decomposition, its width is $4$ and hence the special treewidth of this rungraph is at most $4$. 
\end{example}

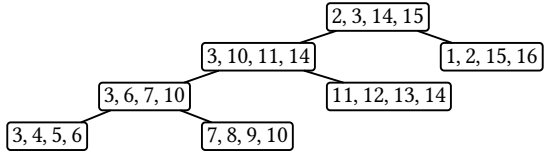
\begin{figure}
	\scalebox{.9}{	\begin{tikzpicture}[node distance=2.5mm, initial text=,>=latex, thick,	bagnode/.style ={draw ,rounded corners=0.5mm,inner sep=2pt, outer sep = 0pt, minimum size=10pt}]
			
			\node[bagnode, ] (root) {$2, 3, 14, 15$};
			\node[bagnode, below right=of root] (r) {$1, 2, 15, 16$};
			\node[bagnode, below left=of root] (l) {$3, 10, 11, 14$};
			\node[bagnode, below right=of l] (lr) {$11, 12, 13, 14$};
			\node[bagnode, below left=of l] (ll) {$3, 6, 7, 10$};		
			\node[bagnode, below right=of ll] (llr) {$7,8, 9, 10$};
			\node[bagnode, below left=of ll] (lll) {$3, 4, 5, 6$};												
			
			\draw (root) edge (l) 
			(root) edge (r)
			(l) edge (ll)
			(l) edge (lr)
			(ll) edge (lll)
			(ll) edge (llr);
			
		\end{tikzpicture}
	}
	\caption{A special tree decomposition of the rungraph in Figure~\ref{fig:mpds-dyck}.  }\label{fig:tree-decom}
\end{figure}

Given an $\MSO(\Alphabet, \Elabels)$ sentence $\phi$ and a bound $k$ on special treewidth, we define the $k$\emph{-bounded language of $\phi$ }as follows:  $\bglangof{k}{\phi}$ the set of all $\LOG$s that satisfy $\phi$ and have \streewidth\ $\le k$. Accordingly, we have  
$\bwlangof{k}{\phi} =  \word(\bglangof{k}{\phi})$. A language $L\subseteq\Sigma^*$ is an \emph{MSO-definable bounded-\streewidth\
	language} if $L=\bwlangof{k}{\phi}$ for some $\MSO(\Sigma,\Gamma)$-formula $\phi$ for some $\Gamma$ and $k\in\Nat$. 
	
	Similarly, we define the $k$-bounded rungraph language and word language of an MPDA as well:  $\bglangof{k}{\MPDS}$ be  the  set of all  rungraphs of the MPDA $\MPDS$ with \streewidth\ at most $k$, $\bwlangof{k}{\MPDS} =  \word(\bglangof{k}{\MPDS})$.

\myparagraph{Tree automata} A width-$k$ tree decomposition of a graph can be
viewed as a tree with finitely many node labels: for each possible bag, we use
a separate node label. It is well known that the sets of trees that correspond
to an MSO-definable bounded-\streewidth\ set of graphs are precisely those
recognized by \emph{finite (bottom-up) tree automata}.  Essentially, such a
tree automaton works like a deterministic finite automaton on words: Initially
all leaves are assigned an initial state, and then the state of an inner node
$v$ is determined by the states of $v$'s children states and by $v$'s label.
Finally, a tree is accepted if the root node is assigned a final state.
For now informally, we will call a deterministic bottom-up tree automaton that
accepts width-$k$ special tree decompositions a \emph{width-$k$ tree automaton}.
This informal definition should be enough to understand our results; a formal
definition will follow in \cref{sec:ta2mcfg}.

	\begin{fact}\cite{Courcelle89,RabinTree} \label{fact:msotoTA}
	Given an  $\MSO(\Alphabet, \Elabels)$ sentence $\phi$ and a bound $k\in \Nat$, there is a width-$k$ tree automaton $\TA$ such that $\glangof{\TA} = \bglangof{k}{\phi}$.
	\end{fact}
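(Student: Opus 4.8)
The statement is essentially Courcelle's theorem in the form needed here, and the plan is to assemble it from two classical ingredients: (i)~a finite encoding of width-$\le k$ special tree decompositions of $\LOG$s as node-labeled trees over an alphabet $\Gamma_k$ depending only on $k$, $\Alphabet$ and $\Elabels$, together with an MSO ``back-translation'' of $\phi$ along this encoding~\cite{Courcelle89}; and (ii)~the classical equivalence between MSO over finite labeled trees and deterministic bottom-up tree automata~\cite{RabinTree}. So the real work is to (a)~set up the encoding so that \emph{every} $\LOG$ of $\streewidth\le k$ is decoded from some encoding tree, and (b)~produce an MSO sentence over $\Gamma_k$-trees that is satisfied exactly by the well-formed encodings whose decoded graph satisfies $\phi$.

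First I would fix $\Gamma_k$. After normalizing special tree decompositions to a ``nice'' form with introduce/forget/join nodes (width $\le k$, and still special), I label each tree node by: a set of \emph{active colors} from $\{1,\dots,k\}$ (one per vertex currently in the bag), the $\Alphabet\cup\{\varepsilon\}$-label attached to each active color, the set of ordered pairs of active colors that are $\sedge$- or $\edges_\gamma$-adjacent in the graph, and markers recording which color is introduced, forgotten, or joined at this node. A single graph vertex then corresponds to a maximal run of nodes carrying a fixed active color along a single root-to-leaf branch; the \emph{special} condition is precisely what forbids such runs from splitting at a join. ``Well-formedness'' of a $\Gamma_k$-tree --- colors reused only after being forgotten, each color-pair's adjacency status declared consistently throughout its life span, and the decoded $\sedge$-relation being the successor relation of a linear order on the decoded vertex set --- is a property checkable by a tree automaton, hence MSO-definable over trees. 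A well-formed tree $t$ then decodes to a $\LOG$ $G_t\in\logset{\Alphabet}{\Elabels}$ with $\streewidth(G_t)\le k$; and conversely, from any special tree decomposition of width $\le k$ witnessing $\streewidth(G)\le k$ one reads off a well-formed $t$ with $G_t=G$, so decoding is \emph{onto} the class of $\streewidth\le k$ $\LOG$s.

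Second I would translate $\phi$ backwards: a first-order quantifier over a vertex of $G_t$ becomes a quantifier over the tree node at which that vertex is forgotten (its canonical representative), a monadic second-order quantifier over a vertex set becomes one over a set of such representatives, and the atoms $a(x)$, $\sedge(x,y)$, $\edges_\gamma(x,y)$, $x\in X$ are MSO-expressible over the tree because the label and all adjacencies of a vertex are recorded in its bags and the bags containing it form a path (so one can navigate from the representatives of $x$ and $y$ to a common bag and read off the declared adjacency). This yields an MSO sentence $\tilde\phi$ over $\Gamma_k$-trees with $t\models\tilde\phi$ iff $t$ is well-formed and $G_t\models\phi$. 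By ingredient~(ii) there is a deterministic bottom-up tree automaton $\TA$ with $\langof{\TA}=\{t\mid t\models\tilde\phi\}$; since decoding is surjective onto the $\streewidth\le k$ $\LOG$s and $\tilde\phi$ selects exactly the well-formed trees decoding to models of $\phi$, the set of graphs decoded from $\langof{\TA}$ equals $\bglangof{k}{\phi}$, i.e.\ $\glangof{\TA}=\bglangof{k}{\phi}$. The main obstacle is entirely the bookkeeping in (a) and (b): pinning down a normal form of special tree decompositions for which decoding is genuinely surjective, and choosing canonical tree representatives of vertices so that the MSO atoms translate faithfully; here the special-treewidth restriction is a help rather than a hindrance, since each vertex lives along a single branch and can be addressed unambiguously.
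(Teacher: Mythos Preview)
The paper does not prove this statement at all: it is stated as a \emph{Fact} with citations to~\cite{Courcelle89,RabinTree} and used as a black box. Your proposal is a correct reconstruction of the classical argument behind Courcelle's theorem specialized to special treewidth.

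One small mismatch worth noting: the paper's formal definition of ``width-$k$ tree automaton'' (given later in \cref{sec:ta2mcfg}) is not over your nice-decomposition encoding with introduce/forget/join labels, but over terms of Courcelle's graph algebra $\expr(\set{k},\Alphabet,\Elabels)$ built from the operations $\node{c}{a}$, $\addsedge{c_1}{c_2}$, $\addedge{\gamma}{c_1}{c_2}$, $\fcolor{c}$, and $\join$. The two encodings are equivalent (your ``active colors'' are exactly the colors of the algebra, your ``forget'' is $\fcolor{c}$, your adjacency declarations correspond to the $\addsedge{\cdot}{\cdot}$ and $\addedge{\gamma}{\cdot}{\cdot}$ operations), so your argument transfers directly; but to match the paper's later development literally you would phrase the back-translation over algebra terms rather than over bag-labeled trees.
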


\section{Main results}\label{sec:main-results}
Our first main result is the characterization of MSO-definable bounded-\streewidth\ languages by MCFL:
\begin{theorem}\label{main-equivalence}
The MSO-definable bounded-\streewidth\ languages are precisely the MCFLs.
\end{theorem}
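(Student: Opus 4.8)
The plan is to prove the two inclusions separately, using the tree-automaton characterisation of \cref{fact:msotoTA} for one direction and a direct encoding of MCFG derivation trees for the other.

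\emph{Every MSO-definable bounded-\streewidth\ language is an MCFL.} Fix $\phi$ and a bound $k$. By \cref{fact:msotoTA} there is a width-$k$ tree automaton $\TA$ whose accepted trees encode exactly the graphs in $\bglangof{k}{\phi}$ via their width-$k$ special tree decompositions. First I would pass to \emph{binary} tree decompositions: a node of out-degree $m$ is replaced by a right comb of $m-1$ auxiliary nodes, where each auxiliary bag keeps only those vertices of the original bag that still occur in a deeper child; one checks this preserves width $\le k$ and the special (path) condition, so we may assume $\TA$ runs on binary trees. Now I build an MCFG $\grammar$ whose nonterminals are pairs consisting of a state of $\TA$ and the current bag (with its labels and its $\sedge$-induced order). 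The tuple derived by such a nonterminal is the list of maximal $\sedge$-runs formed by the vertices \emph{internal} to the current subtree (those not in the current bag), each run read off as a word over $\Sigma$, skipping $\varepsilon$-labels. The key point is that this tuple has bounded arity: since $\sedge$ is an edge relation, the decomposition must keep $\sedge$-adjacent vertices together, so each endpoint of an internal run is $\sedge$-adjacent to a bag vertex or is the global first/last vertex, which bounds the number of runs by $O(k)$. A transition $(q_1,q_2)\xrightarrow{\ell} q$ of $\TA$ becomes a production of $\grammar$ that concatenates and rearranges the $O(k)$ runs supplied by the two children according to how the bag $\ell$ stitches them together, inserting as terminals the non-$\varepsilon$ letters of exactly those vertices that leave the bag at this node, and merging runs that become joined through such a vertex. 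Because the decomposition is \emph{special}, every internal vertex belongs to exactly one child subtree, so this is genuinely a rearrangement without repetition, i.e.\ a legal MCFG rule. Taking the root bag to be empty, the start nonterminal has arity $1$ and generates a single run equal to the whole word, so $\langof{\grammar}=\bwlangof{k}{\phi}$.

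\emph{Every MCFL is an MSO-definable bounded-\streewidth\ language.} Conversely, let $\grammar$ be an MCFG of width $k$. For $w\in\langof{\grammar}$ fix a derivation tree. I associate to it an $\LOG$ $G$ over $(\Sigma,\Gamma)$ for a suitable edge alphabet $\Gamma$: the vertices are the positions of $w$, with $\sedge$ the position order (so $\word(G)=w$), and the edges in $\Gamma$ encode the derivation — for each derivation node producing the tuple $(u_1,\ldots,u_n)$ for a nonterminal $A$, each $u_i$ is a contiguous block of $w$, and $G$ carries edges linking the first and last positions of these blocks to the corresponding blocks of the node's children, plus MSO-quantifiable vertex colours naming the nonterminal and the production used. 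An MSO sentence $\phi$ then asserts that these edges and colours describe a valid $\grammar$-derivation: each node's colours are consistent with some production, the children's blocks are concatenated exactly as that production prescribes, and the edges form a tree-shaped acyclic structure — all MSO-expressible — so that $\word(\glangof{\phi})=\langof{\grammar}$, and in particular every graph satisfying $\phi$ reads off a word in $\langof{\grammar}$. For the \streewidth\ bound I use the derivation tree itself as the skeleton of a special tree decomposition: the bag at a derivation node contains the $\le 2n\le 2d$ endpoints of its own tuple blocks together with the $\le 2d$ endpoints of each of its $\le r$ children's blocks, so bags have size $O(dr)=O(k)$; and since a position stops being a block endpoint as one moves up the derivation tree and never becomes one again, while lying on a single root-to-leaf path of that tree, the bags containing it form a path, so the decomposition is special. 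Thus every $w\in\langof{\grammar}$ is $\word(G)$ for some $G$ of \streewidth\ $O(k)$ satisfying $\phi$, and together with the previous sentence this gives $\bwlangof{O(k)}{\phi}=\langof{\grammar}$.

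\emph{Main obstacle.} I expect the delicate part to be the first direction: proving that the portion of the word already fixed below a tree-decomposition node decomposes into only boundedly many $\sedge$-runs, and that combining two children is a repetition-free rearrangement realisable by a single MCFG production. Both facts rest essentially on working with \emph{special} rather than merely bounded treewidth — it is the path condition that forbids an internal vertex from being shared between sibling subtrees, which would otherwise force an interleaving that no MCFG rule can produce — and on treating $\sedge$ as a graph edge so that the decomposition is forced to respect the linear order. Tightening the bounds from $O(k)$ to the optimal $2k$ and $6k$ is a further, more technical layer on top of this argument.
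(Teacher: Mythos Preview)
Your plan is essentially the paper's in both directions. For the first direction, the paper formalises your ``state plus bag-with-$\sedge$-order'' by a \emph{piece summary} — the set of pairs of colours marking the first and last vertex of each maximal $\sedge$-segment — computed by an auxiliary tree automaton and taken in product with $\TA$; the syntactic translation of the product to an MCFG is then exactly as you sketch, and your identification of the delicate point (bounded number of runs, repetition-freeness at $\join$, both relying on speciality) matches the paper's.

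For the second direction your strategy is also the paper's, but your encoding has a real technical gap. If the vertices of $G$ are only the positions of $w$, then (i)~an entry $u_i=\varepsilon$ has no positions to link; and (ii)~a single position of $w$ is simultaneously the first (or last) position of arbitrarily many nested blocks at different levels of the derivation, so ``colouring it with the nonterminal and production used'' is ill-defined, and your parent-to-child edges degenerate whenever $s_i$ starts or ends with a variable (the parent's block endpoint then equals the child's). The paper resolves this by inserting fresh $\varepsilon$-labelled vertices bracketing every entry at every derivation node: these provide distinct anchors for a single \emph{matching} relation $\matching$, carry the nonterminal labels via existentially quantified sets $Y_A$, and make both the MSO formula (well-nesting plus valid-parse-tree) and the $O(dr)$ \streewidth\ bound go through cleanly. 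The paper also first normalises the grammar to be information-lossless so that every derived entry really is a contiguous infix of $w$, an assumption you use implicitly.
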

\cref{main-equivalence} will follow from \cref{thm:TA2MCFL,thm:MCFL2MSO}, which provide complexity bounds. 
First, for the translation from an MSO-definable bounded-\streewidth\ language to an MCFG, we need to translate an MSO formula to an MCFG. However, in algorithmic settings, MSO-definable bounded-\streewidth\ languages are usually represented by tree automata \textcolor{blue}{(cf. Fact~\ref{fact:msotoTA})}, which have much better algorithmic properties (e.g.\ emptiness is decidable in polynomial time, compared to non-elementary complexity for MSO). 
Hence, the following implies that MSO-definable bounded-\streewidth\ languages are MCFL:
\begin{theorem}\label{thm:TA2MCFL}
	Fix $k\ge 0$. Given a width-$k$ tree automaton $\TA$, we can construct a $k$-\mcfg(2)
for $\wlangof{\TA}$ in time $|\TA|\cdot 2^{O(k\log k)}$. 
\end{theorem}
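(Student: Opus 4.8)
The plan is to make the productions of the MCFG mirror the transitions of the tree automaton $\TA$, so that derivation trees of the grammar correspond to accepting runs of $\TA$ on width-$k$ special tree decompositions. I would work with whatever concrete model of width-$k$ tree automata is fixed later in this section --- equivalently, with terms over the algebra for special treewidth, whose operations have arity at most two (a nullary operation creating one labelled vertex as a source; unary operations that forget or rename a source, or add an edge between two sources; and a binary operation gluing two graphs along their common source set). For a tree node $v$ let $G_v$ be the graph built by the subterm at $v$, let $B_v\subseteq\vertices$ be its current source set (the ``bag'', $|B_v|\le k$), and let $F_v$ be the vertices of $G_v$ that are no longer sources. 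Since the $\sedge$-relation of the full $\LOG$ is a linear path $P$ on $\vertices$, and since (by the tree-decomposition axioms and the semantics of the ``add edge'' operation) every $\sedge$-edge incident to a vertex of $F_v$ is already present in $G_v$, the set $F_v$ splits $P$ into maximal intervals, which I call the \emph{fragments} at $v$; the words spelled by the fragments, listed in $P$-order, are exactly the tuple of words that the nonterminal for $v$ should derive.

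The nonterminal for $v$ would be a pair $(q,\sigma)$, where $q$ is the state of $\TA$ at $v$ and $\sigma$ --- the \emph{arrangement} --- is the sequence, in $P$-order, of the bag vertices $B_v$ together with one slot per fragment. A short combinatorial lemma shows that any $\sedge$-edge leaving $F_v$ has its other endpoint in $B_v$; hence consecutive fragments are separated in $\sigma$ by a bag vertex, and the number of fragments is at most $|B_v|\le k$ (a single boundary situation, occurring only once $P$ is already complete, needs a minor separate argument and does not affect the dimension). Thus the grammar has dimension $\le k$, and since each arrangement is a linear order of $\le 2k$ objects there are $|\TA|\cdot 2^{O(k\log k)}$ nonterminals. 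For each transition of $\TA$ I would emit the matching production: creating a fresh vertex $p$ gives a rank-$0$ production $N_{(q,\sigma)}(\vlabel(p))\leftarrow\emptyset$; forgetting a source $x$ gives a rank-$1$ production deleting $x$ from the child's arrangement, splicing its label onto the adjacent fragment(s) --- merging two fragment slots into one when $x$ lay directly between them --- and concatenating the child's variables with the new terminal accordingly; renaming and adding a non-$\sedge$-edge leave the tuple untouched and at most update $\sigma$; adding an $\sedge$-edge requires, consistently with $\sigma$, that its endpoints be adjacent in the arrangement in the correct direction (which fixes the position of a freshly introduced endpoint); and the binary gluing of children $(q_1,\sigma_1)$ and $(q_2,\sigma_2)$ --- which share the full source set, so $\sigma_1$ and $\sigma_2$ can be aligned along $B_v$, determining the $P$-order of the two children's fragments --- interleaves the two tuples into the tuple of $(q,\sigma)$. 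Since each child fragment survives as a contiguous piece of one fragment of $v$, every variable is used exactly once on each side, as an MCFG rule demands. Finally, for each accepting configuration at the root (after normalizing $\TA$ so that the root bag is empty, $\sigma$ there is a single slot spelling the whole word) I would add a rank-$1$ production deriving $\StartNonterminal$.

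Correctness would be an induction on the term. The delicate point is that the grammar must not overgenerate: a newly introduced vertex is placed into $\sigma$ nondeterministically, but any placement that is not its true $P$-position becomes inconsistent with some later $\sedge$-edge, so only the correct placement survives to the root; hence $\StartNonterminal$ derives exactly $\word(G)$ for the $\LOG$~$G$ accepted by $\TA$. For this it is convenient to assume $\TA$ rejects every term whose graph is not a valid $\LOG$, which we may since ``is a valid $\LOG$'' (a unique source, a unique sink, in/out-degree one elsewhere, connected) is $\MSO$-expressible. For the running time: each of the $\le|\TA|$ transitions yields $2^{O(k\log k)}$ productions (one per choice of arrangement for the $\le 2$ children), each of size $O(k)$, computed in time $2^{O(k\log k)}$; if $\TA$ is given over trees of unbounded branching we first binarize, which preserves special treewidth and costs only a polynomial factor. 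This gives a $k$-MCFG($2$) in time $|\TA|\cdot 2^{O(k\log k)}$.

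I expect the main obstacle to be the bookkeeping of the second paragraph: maintaining the invariant that the arrangement $\sigma$ carries precisely the information needed --- enough that at every operation the $P$-order of the operands' fragments is determined (exactly what makes the binary gluing productions well-defined, which works only because gluing identifies the full source sets of the two operands), yet little enough that the dimension stays at $k$ throughout, including at the auxiliary nodes created by binarization and near the node at which the path $P$ becomes complete.
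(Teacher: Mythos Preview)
Your overall plan --- nonterminals are (state, piece-information) pairs and productions mirror the tree-automaton transitions --- is the same as the paper's. The gap is that the algebra you describe is the one for ordinary treewidth, not special treewidth.

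The width-$k$ tree automata of this section run over terms of a graph algebra whose binary operation requires its two operands to have \emph{disjoint} color (source) sets; the result is their disjoint union, not a fusion along shared sources. (This is exactly what distinguishes special treewidth: the bags containing a vertex must form a root-to-leaf path, so no vertex can lie in two sibling subterms.) Your treatment of the binary rule --- ``which share the full source set, so $\sigma_1$ and $\sigma_2$ can be aligned along $B_v$'' --- therefore does not apply here: there is nothing shared to align along. Vertex creation is also unary in this algebra (add a fresh colored vertex to an existing graph), not nullary.

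With the correct algebra the construction becomes simpler than what you sketch. The paper tracks not a guessed $P$-order but merely the set of \emph{pieces} (maximal $\sedge$-connected components of the current graph), each summarised by the pair of colors at its two endpoints; the tuple is ordered by start color under a fixed total order on the colors, so no guessing of the eventual linear order is needed. Adding an $\sedge$-edge concatenates two tuple entries, adding a vertex introduces a singleton entry carrying its letter, freeing a color only relabels a piece's endpoint summary, and --- thanks to the disjoint-color constraint --- the binary operation just unions the two piece sets and interleaves the two tuples according to the fixed color order. A small product automaton enforces that the $\sedge$-relation remains a piecewise linear order throughout and ends as a single piece; counting piece summaries (injected into permutations of the color set) gives the $2^{O(k\log k)}$ factor.

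Your ``fragments'' (intervals of forgotten vertices, with the letter attached at forget time) are a dual bookkeeping to the paper's pieces (letter attached at creation), and your guess-and-verify of the eventual $P$-order is extra machinery the paper avoids entirely. The idea could plausibly be made to work once adapted to the disjoint-union join, but as written the join step is the one that does not go through.
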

Hence, the resulting MCFG is of size $|\TA|\cdot 2^{O(k\log k)}$ and width $2k$.
For the converse translation:
\begin{theorem}\label{thm:MCFL2MSO}
	Given a $d$-$\mathrm{MCFG}(r)$ $\initgrammar$, we can construct an MSO
	formula $\phi$ with $\bwlangof{6dr}{\phi}=\langof{\initgrammar}$.
\end{theorem}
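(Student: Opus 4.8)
The plan is to realize $\langof{\initgrammar}$ as the word language of an MSO-definable class of graphs, each of which is an explicit encoding of a derivation tree of $\initgrammar$, chosen so that its special treewidth is at most $6dr$. Fix an edge-label set $\Gamma$ containing a symbol $E_p$ for every production $p$ of $\initgrammar$, symbols $\mathsf{child}_1,\dots,\mathsf{child}_r$, and symbols $\mathsf{node}$ and $\mathsf{thread}$. To a derivation tree $T$ of $\initgrammar$ (internal nodes labelled by productions, leaves by rank-$0$ productions) I would associate a graph $G_T\in\logset{\Terminals}{\Gamma}$ as follows. For each node $v$ of $T$ whose production has left-hand nonterminal of arity $n$, and each $j\in\{1,\dots,n\}$, $G_T$ contains two $\varepsilon$-labelled \emph{port} vertices $\mathsf{L}_{v,j}$ and $\mathsf{R}_{v,j}$, marking the left and right ends of the $j$-th component of the word-tuple derived at $v$; for each terminal letter written explicitly in some production instance, $G_T$ contains a vertex carrying that letter. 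The linear order $\sedge$ of $G_T$ lists the letter vertices in the order of the generated word, with the port vertices inserted at the corresponding interval boundaries, so that $\word(G_T)$ equals the word generated by $T$. The edge relations record the derivation structure: a self-loop $E_p$ on the canonical port $\mathsf{L}_{v,1}$ records that production $p$ is used at $v$; $\mathsf{node}$-edges link $\mathsf{L}_{v,1}$ to the remaining ports of $v$; $\mathsf{child}_i$-edges link $\mathsf{L}_{v,1}$ to $\mathsf{L}_{v_i,1}$ for the $i$-th child $v_i$ of $v$; and for each component $s_j$ of $p$ the $\mathsf{thread}$-edges walk from $\mathsf{L}_{v,j}$, visiting in order the letter vertices of the explicit terminals of $s_j$ and, for each variable $x_{i,c}$ occurring in $s_j$, routing into $\mathsf{L}_{v_i,c}$ and resuming out of $\mathsf{R}_{v_i,c}$, finally ending in $\mathsf{R}_{v,j}$.

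Next I would write an MSO sentence $\phi$ over $(\Terminals,\Gamma)$ accepting exactly the graphs $G_T$; this is routine but fiddly bookkeeping. The formula asserts: the $\varepsilon$-vertices partition into \emph{node groups}, each a maximal $\mathsf{node}$-connected set of ports carrying exactly one self-loop $E_p$ and having exactly the number of ports that $p$ prescribes; the $\mathsf{child}_i$-edges form a finite tree, with a unique root carrying a start production for $\StartNonterminal$ (reachability and acyclicity are MSO-expressible); at every node $v$ with production $p$, the $\mathsf{thread}$-edges leaving $v$'s ports spell precisely the components $s_1,\dots,s_n$ of $p$, entering the $i$-th child's ports exactly where $p$ uses a variable of its $i$-th right-hand nonterminal; the crucial local consistency that, whenever $p$'s $i$-th right-hand nonterminal is $B_i$, the production $E_q$ marking the $i$-th child of $v$ has left-hand nonterminal $B_i$; and that $\sedge$ is compatible with the threading, i.e.\ the fully unfolded $\mathsf{thread}$-walk from the root visits all letter vertices in $\sedge$-order, guaranteeing $\word(G)$ is the word generated by the encoded tree rather than a rearrangement. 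Since $\initgrammar$ is fixed, all these are finite Boolean combinations over productions, hence MSO-definable. By construction $G\models\phi$ iff $G=G_T$ for some derivation tree $T$ of $\initgrammar$, so $\word(\glangof{\phi})=\langof{\initgrammar}$, and in particular $\bwlangof{6dr}{\phi}\subseteq\langof{\initgrammar}$ already, with no treewidth hypothesis.

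Finally I would exhibit, for each $G_T$, a special tree decomposition of width at most $6dr$; together with the fact that every $w\in\langof{\initgrammar}$ equals $\word(G_T)$ for some $G_T$, this yields $\bwlangof{6dr}{\phi}=\langof{\initgrammar}$. The decomposition tree mirrors $T$: the main bag at a node $v$ with a rank-$m$ production ($m\le r$) whose left-hand nonterminal has arity $n\le d$ holds the $2n\le 2d$ ports of $v$ together with the $\sum_{i\le m}2n_i\le 2dr$ ports of its children. Below this bag hangs, for each component $s_j$ of the production, a short path of bags of size $O(1)$ that threads successively through the explicit terminal vertices and successive child-interval endpoints appearing in $s_j$. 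These chains absorb the explicit-terminal vertices --- whose number is not controlled by $d$ and $r$, so they must never sit together in a bag --- and also witness every $\sedge$-edge between consecutive word positions whose ``owner'' node is $v$. One checks the result is genuinely \emph{special}: each port of $v$ occurs only in $v$'s main bag, its parent's main bag, and the chain bags in between (a path down one branch), and each letter vertex occurs in at most a few consecutive chain bags. Its width is $2d+2dr$ plus a constant for the chain bookkeeping, which one verifies is at most $6dr$. I expect this last step to be the main obstacle: handling the explicitly written terminals and the linear-order edges without inflating bags, ensuring $\sedge$-compatibility of the decomposition, and tracking the constants tightly enough to reach exactly $6dr$; the routing chains, and using letter vertices rather than extra $\varepsilon$-markers as interval endpoints wherever possible, are the devices that make this go through. (Empty components $u_j=\varepsilon$ are handled uniformly by placing $\mathsf{L}_{v,j},\mathsf{R}_{v,j}$ adjacent with a trivial $\mathsf{thread}$-edge, or by a standard elimination of $\varepsilon$-components preserving both $d$ and $r$.)
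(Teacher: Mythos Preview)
Your overall strategy---encode each derivation tree as a linearly ordered graph, define the valid encodings in MSO, and bound the special treewidth by a decomposition that mirrors the derivation tree---is correct and is what the paper does. The main architectural difference is that you use a rich edge alphabet (one label per production, $\mathsf{child}_i$, $\mathsf{node}$, $\mathsf{thread}$), whereas the paper encodes everything with a \emph{single} binary relation $\matching$: the $2n$ ports of an arity-$n$ node are linked into a $\matching$-path $v_1\matching v_2\matching\cdots\matching v_{2n}$, and the tree structure is recovered from a ``well-nesting'' property of $\matching$ (odd edges enclose components, even edges jump between them, and the collection of such paths is properly nested). The nonterminal labelling is not hard-coded but guessed via existential second-order variables $(Y_A)_{A\in N}$. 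Your encoding is more explicit and makes the MSO formula easier to write; the paper's is more economical and yields a cleaner structural characterisation (well-nesting alone, before any mention of productions, already forces a tree shape). Both buy the same theorem.

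There is, however, a genuine gap in your special tree decomposition. You propose a main bag at $v$ holding the ports of $v$ and its children, and then ``for each component $s_j$ a short path of bags of size $O(1)$'' hanging below it to absorb the explicit terminals and the child-port endpoints occurring in $s_j$. This cannot be \emph{special}. A child $v_i$ may have its variables scattered across several components $s_{j_1},s_{j_2},\ldots$, so ports of $v_i$ would appear in several of your chains; they also all appear in $v_i$'s own subtree. Since your chains are sibling branches below the main bag, and $v_i$'s subtree can sit below at most one of them, the bags containing a given port of $v_i$ do not lie on a single root-to-leaf path. The paper's fix is simply not to try for $O(1)$ chain bags: the left-to-right pass that absorbs the extra vertices keeps \emph{all} ports of $v$ and of all children in every bag along the pass, so each such bag has size at most $2d(r{+}1)+2$, and only then do the children's subtrees branch off. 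This is where the $6dr$ actually comes from (it is tight at $d=r=1$). Your later sentence ``width is $2d+2dr$ plus a constant'' is consistent with this, but your description of the chains contradicts it; you should drop the $O(1)$ claim and carry the ports through the pass. A smaller point: like the paper, you will want to assume (or first achieve) the information-lossless form, since otherwise dropped variables leave child ports with no place in the linear order.
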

In particular, starting from an MCFG of width $k$, we obtain a special treewidth bound of $6k$.
Since \cref{thm:MCFL2MSO} is not used in our applications (and rather shows
that \cref{main-equivalence} is an exact characterization), 
complexity is not crucial, so we present that translation using MSO.

\myparagraph{Downward closures of MCFL}
\Cref{thm:TA2MCFL} implies that a procedure for downward closure computation of MCFL yields such a procedure for MSO-definable bounded-\streewidth\ languages. This motivates:
\begin{restatable}{theorem}{mcfgdownwarddoubleexp}\label{thm:dcimcfg}
	Fix $d\ge 1$. Given a $d$-MCFG $\initgrammar$, we can compute an NFA  for $\downc{\langof{\initgrammar}}$ in time $2^{2^{\poly(|\initgrammar|)}}$.
\end{restatable}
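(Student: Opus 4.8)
The plan is to compute, in doubly-exponential time, a \emph{simple regular expression} for $\downc{\langof{\initgrammar}}$ --- a finite union $\bigcup_i e_i$ of \emph{products} $e_i = B_{i,0}^{\ast} a_{i,1} B_{i,1}^{\ast} a_{i,2} \cdots a_{i,m_i} B_{i,m_i}^{\ast}$ with $a_{i,j} \in \Sigma$ and $B_{i,j} \subseteq \Sigma$ --- and then read off an NFA, whose size is linear in $\sum_i m_i$. Every downward-closed language has such a form, so the whole problem reduces to (a)~a decision procedure for ``$e \subseteq \downc{\langof{\initgrammar}}$'', and (b)~a bound of $2^{2^{\poly(|\initgrammar|)}}$ on the number and the lengths of the products that need to appear, together with a way to generate them \emph{without} naive enumeration.

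For (a) I would use the standard reformulation via the diagonal problem. One checks that $e = B_0^{\ast} a_1 B_1^{\ast} \cdots a_m B_m^{\ast} \subseteq \downc{\langof{\initgrammar}}$ holds iff, for every $N$, $\langof{\initgrammar}$ contains a word in which one can mark $m$ positions carrying $a_1, \ldots, a_m$ in order such that, between consecutive marks, the letters of the corresponding $B_j$ recur in round-robin order at least $N$ times. Applying a rational transduction $T_e$ that performs exactly this marking-and-counting turns the condition into the \emph{simultaneous unboundedness (diagonal) problem} on $T_e(\langof{\initgrammar})$: are the $m+1$ block counts simultaneously unbounded? Since the dimension $d$ is fixed, $d$-MCFL are effectively closed under rational transductions with polynomial blowup of the grammar (dimension preserved), and MCFL have effectively semilinear Parikh images; the diagonal problem then reduces to an LP-feasibility test on the period vectors of such a representation, decidable well within our budget. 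In fact one only needs \emph{localized} instances --- for a single nonterminal $A$, a component $c \le \arity(A)$, and a subset $B \subseteq \Sigma$ with a prescribed cyclic order, ``can $B$ be pumped in that order into component $c$ of $A$?'' --- and there are at most exponentially many such ``pumpability queries'', each on a polynomially blown-up grammar.

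For (b) I would \emph{not} process products one by one, since a single product may already have doubly-exponential length. Instead, from the pumpability data collected above I would build the NFA by a saturation: a state is a \emph{profile} recording which letter-subsets, in which cyclic orders, are currently ``available to be pumped'' in the word being emitted, threaded through the way the grammar concatenates and interleaves the (at most $d$) components of its nonterminals. The state space has size $2^{2^{\poly(|\initgrammar|)}}$; I would prove that the resulting automaton recognizes exactly $\downc{\langof{\initgrammar}}$ by matching its runs against the simple-regular-expression characterization, and note that minimal products then correspond to acyclic accepting paths, hence have length at most the number of states. The whole construction runs in time $2^{2^{\poly(|\initgrammar|)}}$, and the lower bound --- small grammars for languages such as $\{(a_1 \cdots a_s)^n \mid n \le 2^{2^m}\}$, whose downward closure must ``count'' ascending segments up to $2^{2^m}$ --- shows this is optimal.

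The main obstacle is the profile design in step (b). Unlike a context-free nonterminal, which contributes a single word, an MCFG nonterminal contributes a tuple whose components are reshuffled and concatenated by later productions; a sound and complete profile must therefore capture which \emph{subsets of letters can end up adjacent, and in which order, in the first component of an $\StartNonterminal$-derivation}, under all reorderings the grammar permits. Encoding this so that it is finite-state, arguing that double-exponentially many profiles suffice (the first exponential essentially a subset construction over letters and nonterminals, the second from iterating it over the grammar's concatenation structure), and proving completeness --- every product below $\downc{\langof{\initgrammar}}$ arises from some pumpable derivation pattern visible to the abstraction --- is the technical heart of the argument, and also what must be done carefully to keep the bound at two exponentials rather than more.
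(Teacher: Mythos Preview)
Your proposal sketches a route via the diagonal (simultaneous unboundedness) problem and ``pumpability profiles,'' which is genuinely different from the paper's approach. However, it has a real gap: part~(b) is not a proof but a list of desiderata. You say a state of your NFA is a ``profile'' recording which letter-subsets are ``available to be pumped'' and how they thread through the grammar's interleaving of components, but you never define the profile, never show that your saturation is sound and complete, and you explicitly flag this as ``the technical heart of the argument'' and ``the main obstacle.'' That is precisely where the difficulty lies for MCFGs: a repeated nonterminal $A$ along a branch may permit acceleration in only \emph{some} of its entries, the accelerable subset can depend on the embedding map between the two occurrences (which can permute entries), and the remaining entries are constrained by what is derived \emph{alongside} the watershed. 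A finite abstraction that captures all of this correctly is not obvious, and nothing in your sketch indicates how you would build one or why it would stay within two exponentials.

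The paper avoids your profile problem entirely. It generalizes Courcelle's CFG acceleration to MCFGs by (i)~introducing \emph{decorated} nonterminals (projections and ``searching'' variants that certify the existence of a particular embedding below), (ii)~adding accelerating productions of the form $A(\ldots,\Sigma_{i,\mathrm{left}}^{*}\,z_i\,\Sigma_{i,\mathrm{right}}^{*},\ldots)$ keyed to \emph{idempotent} embedding maps $f$ (idempotence kills the shuffling issue and guarantees fixpoints), and (iii)~using Ramsey's theorem to show that in the accelerated grammar every derivation tree is subsumed by one of height $O(|\cG_\Sigma|)$. The NFA is then just the union over all bounded-height trees. This is a direct, constructive argument whose size analysis is straightforward once the accelerated grammar is in hand; your approach would need a comparably concrete object in place of the undefined ``profile.''

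A smaller issue: your lower-bound family $\{(a_1\cdots a_s)^n \mid n\le 2^{2^m}\}$ does not obviously admit a fixed-dimension MCFG of size $O(m)$ --- the doubling trick gives length $2^n$ from $n$ nonterminals, so reaching $2^{2^m}$ would need $2^m$ of them. The paper's lower bound uses the copy language $\{ww \mid w\in\{a,b\}^{2^n}\}$, which is a genuine $2$-MCFL of linear size and supports a clean fooling-set argument.
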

In particular, the resulting NFA will be of size $2^{2^{\poly(|\initgrammar|)}}$.
Note that here, the dimension is fixed, but the rank of $\initgrammar$ is part of the input.
We also show that the doubly exponential upper bound in \cref{thm:dcimcfg} is optimal, by constructing a $\twoicfg(2)$ $\initgrammar_n$ of size linear in $n$ for the language $L_n=\{ww \mid w\in\{a,b\}^{2^n}\}$ (see \refApp{\cref{app:MCFL-lowerbound}} for details):
\begin{theorem}\label{claim:LBmcfl}
	There is a family of $\twoicfg(2)$ $(\initgrammar_n)_{n \in \Nat}$ with $|\initgrammar_n| = \mathcal{O}(n)$ such that any NFA for $\downc{\langof{\initgrammar_n}}$ has at least $2^{2^n}$ states. 
\end{theorem}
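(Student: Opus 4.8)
The plan is to exhibit, for each $n\in\Nat$, a $2$-MCFG($2$) $\initgrammar_n$ of size $O(n)$ whose language is exactly $L_n=\{ww \mid w\in\{a,b\}^{2^n}\}$, and then argue that any NFA recognizing $\downc{L_n}$ needs $2^{2^n}$ states. The language-theoretic lower bound is the easy part and is standard: a word of the form $ww$ with $|w|=2^n$ already lies in $\downc{L_n}$, and for two distinct words $w\ne w'$ of length $2^n$, the word $ww$ is \emph{not} a subword of $w'w'$, since $ww$ has length $2^{n+1}=|w'w'|$ and subword of equal length means equality. Hence $L_n$ itself forms an antichain of $2^{2^n}$ pairwise incomparable maximal elements of $\downc{L_n}$. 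A Myhill--Nerode / fooling-set argument then forces $2^{2^n}$ states: for distinct $w,w'$ of length $2^n$, reading the prefix $w$ must lead to different states, because $w\cdot w\in\downc{L_n}$ but $w'\cdot w\notin\downc{L_n}$ (again by the equal-length argument, $w'w$ is a subword of some $vv$ with $|v|=2^n$ only if $w'w=vv$, i.e.\ $w'=v=w$, contradiction). This gives the claimed $2^{2^n}$ lower bound on the number of states of any NFA (in fact DFA; for NFA one uses the same set of pairs as a fooling set).

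The technical core is the grammar construction. The idea is the classical MCFG trick for generating $\{ww\mid w\in\Sigma^{2^n}\}$: a nonterminal of arity $2$ holds a pair $(u,u)$ of two identical copies, and doubling the length is achieved by a rank-$2$ rule that concatenates two such pairs componentwise, i.e.\ a rule of the shape $A_{i+1}(x_1y_1,\,x_2y_2)\leftarrow\{A_i(x_1,x_2),A_i(y_1,y_2)\}$, starting from $A_0(a,a)\leftarrow\emptyset$ and $A_0(b,b)\leftarrow\emptyset$. After $n$ doublings $A_n$ generates exactly the pairs $(w,w)$ with $w\in\{a,b\}^{2^n}$, and a final rank-$1$ start rule $S(x_1x_2)\leftarrow\{A_n(x_1,x_2)\}$ flattens the pair to the word $ww$. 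Each level contributes a constant number of productions of constant size, so $|\initgrammar_n|=O(n)$; the dimension is $2$ and the rank is $2$, so $\initgrammar_n$ is a $2$-MCFG($2$) as required. One should double-check the MCFG well-formedness side conditions from the definition of~(\ref{eqn:production-rule-form}): in the doubling rule each of $x_1,x_2,y_1,y_2$ occurs exactly once on each side, and every left-hand variable occurs on the right, so the rule is legal (and note the copying is achieved not by repeating a variable but by the invariant that $A_i$ only ever derives pairs with two equal components — this invariant is what we prove by induction on $i$).

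The main obstacle, such as it is, is making the inductive characterization of $\rel(A_i)$ airtight: one has to show both inclusions, namely that $A_i$ derives \emph{only} pairs $(w,w)$ with $|w|=2^i$ (an easy induction: if $A_i$ derives $(u,u')$ and $(v,v')$ with $u=u'$, $v=v'$ of length $2^i$, then the rule yields $(uv,u'v')=(uv,uv)$ of length $2^{i+1}$), and that it derives \emph{all} of them (given any $w\in\{a,b\}^{2^{i+1}}$, split $w=w'w''$ with $|w'|=|w''|=2^i$, apply the inductive hypothesis to $(w',w')$ and $(w'',w'')$). Everything else — the size bound, the dimension/rank bookkeeping, and the final downward-closure counting argument — is routine. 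The full details are deferred to \refApp{\cref{app:MCFL-lowerbound}}.
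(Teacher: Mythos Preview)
Your proposal is correct and follows essentially the same approach as the paper: the grammar $\initgrammar_n$ with nonterminals $A_0,\ldots,A_n$ and the componentwise-doubling rule $A_{i+1}(x_1y_1,x_2y_2)\leftarrow\{A_i(x_1,x_2),A_i(y_1,y_2)\}$ is exactly the paper's construction (up to variable names), and the fooling-set argument via the pairs $(w,w)$ with $|w|=2^n$ matches the paper's \cref{claim:minNFAsizeww}.
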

		
\myparagraph{Multi-pushdown automata}
An important application domain of our results concerns bounded-treewidth underapproximations of MPDA: By imposing a treewidth bound on the rungraphs of MPDA, one obtains an MSO-definable bounded-\streewidth\ language. One can thus use \cref{thm:TA2MCFL} and \cref{thm:dcimcfg} to compute the downward closure of a bounded-treewidth underapproximation of a given MPDA (this is the setting where our later applications will be used). This raises the question of whether the translation MPDA $\leadsto$ width-$k$ tree automaton $\leadsto$ MCFG $\leadsto$ downward closure NFA introduces an unnecessary blow-up.
We will now see that this is not the case, provided that we fix the \streewidth\ (equivalently, the treewidth~\cite{Cou10}) and the number $\numstack$ of stacks. First, as shown in \cite{TreeAuxMadhu, CyriacGK12, Cyriac14, AiswaryaGK14,AkshayGK18}, given an MPDA, we can efficiently construct a width-$k$ tree automaton: 
\begin{fact} \label{fact:MPDS2TA} %
Given an MPDA $\MPDS$ with $\numstack$ stacks and a bound $k$, one can construct a width-$k$ tree automaton $\TA$ for $\bglangof{k}{\MPDS}$ in time $|\MPDS|^k \times 2^{\text{poly}(\numstack, k)}$.
\end{fact}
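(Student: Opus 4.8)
The plan is to build the width-$k$ tree automaton $\TA$ \emph{directly} from $\MPDS$, rather than first extracting the MSO sentence $\phi_{\MPDS}$ (from the claim that MPDA are MSO-definable) and then invoking \cref{fact:msotoTA}: that route is correct but yields an automaton of non-elementary size. Recall that a width-$k$ tree automaton consumes a width-$k$ special tree decomposition of a rungraph $\rungraph = \tuple{\vertices,\Succ,\Match,\vlabel}$; such a decomposition is a ranked tree whose nodes carry \emph{bags} of at most $k$ rungraph vertices, each bag recording the $\vlabel$-letters of its vertices and which $\Succ$- and $\Match$-edges hold among them. The automaton must verify, bottom-up, that there \emph{exists} a transition labelling $\tmap \colon \vertices \to \trans$ consistent with $\Succ$, $\Match$, the LIFO (well-nesting) policy, and the initial/final conditions --- i.e.\ exactly the certificate for $\rungraph \in \glangof{\MPDS}$. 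So $\TA$ guesses $\tmap$ on the fly; we first build a nondeterministic bottom-up automaton, which already suffices for the downstream use in \cref{thm:TA2MCFL}, and (as in the cited works) can be determinised within the claimed size bound.

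The state of $\TA$ at a tree node $v$ is a bounded \emph{interface summary} of the sub-rungraph $\rungraph_v$ induced by all bags in the subtree at $v$, recording exactly what the $\le k$ vertices in $v$'s bag --- the \emph{boundary} --- must remember in order to be completed consistently later: (i)~the linear order on the boundary vertices induced by $\Succ$; (ii)~for each boundary vertex, a constant amount of $\MPDS$-data witnessing a valid partial run along $\rungraph_v$ --- essentially the transition assigned to it by the guessed $\tmap$ together with the control-state reachability data across each gap between consecutive boundary vertices (and the two open ends); (iii)~for every stack $i \in \stackset$, the set of boundary vertices that are pushes on stack $i$ whose matching pop is not yet in $\rungraph_v$ (tagged with the pushed stack symbol), and symmetrically for unmatched pops. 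A run is accepting if the root state has no pending pushes or pops (all stacks empty), the left open end carries $\initstate$ and the right one $\finstates$, and the leftmost (resp.\ rightmost) boundary vertex is labelled by an initial (resp.\ final) transition.

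The transitions of $\TA$ are the bottom-up steps: from the states of $v$'s children and the bag labelling $v$, check compatibility and compute the new state. Since the decomposition is \emph{special}, the bags of siblings are disjoint, so children contribute disjoint boundary sets and the checks are local: the child orders agree with $v$'s bag on shared vertices; control-state data is merged across gaps that become internal at $v$; each $\Succ$- and each $\Match$-edge that becomes internal at $v$ is locally consistent --- for a $\Match$-edge, the push and pop use the same stack $i$ and the same stack symbol, and the edge is non-crossing with respect to the recorded order and the pending-match data, which is how the global LIFO policy is enforced. Pending-push/pop sets are updated by deleting a matched push/pop pair, inserting newly introduced still-unmatched ones, and forgetting vertices that leave the bag. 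Correctness is then a routine structural induction over tree decompositions. For the size: a boundary has $\le k$ vertices, so (i)~gives $\le k!$ options, (ii)~gives $|\MPDS|^{O(k)}$ (a bounded amount of $\MPDS$-data per boundary vertex and per gap), and (iii)~gives $(\numstack+1)^{O(k)} \cdot k!$; multiplying and absorbing constants yields a state set --- hence a construction time --- of $|\MPDS|^{O(k)} \cdot 2^{\poly(\numstack,k)}$, which, with the standard optimisations of the cited works (e.g.\ normalising the decomposition so that partially-processed gaps need not be summarised separately), gives the stated $|\MPDS|^{k} \cdot 2^{\poly(\numstack,k)}$.

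I expect the main obstacle to be component~(iii): getting the bookkeeping of pending matches \emph{exactly} right, so that every accepting run of $\TA$ yields a $\Match$ relation that is genuinely well-nested per stack, and conversely every accepting rungraph with special treewidth $\le k$ admits some run. The delicate point is that a pending pop's matching push, being $\Match$-adjacent, must resurface as a boundary vertex at precisely the right node; this is where one uses the connectivity/path condition of (special) tree decompositions together with the non-crossing structure of $\Match$. By contrast, (i) and (ii) --- tracking the boundary order and propagating control states --- are exactly the pattern of the standard Madhusudan--Parlato-style and Aiswarya--Gastin--Kumar constructions and amount to careful but routine bookkeeping.
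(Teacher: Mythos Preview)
The paper does not give its own proof of this statement: it is presented as a \emph{fact} with citations to \cite{TreeAuxMadhu, CyriacGK12, Cyriac14, AiswaryaGK14, AkshayGK18}, and no construction appears in the body or the appendix. Your proposal is a faithful sketch of the direct construction carried out in those cited works---maintaining, at each bag, the induced linear order on boundary vertices, the guessed transition data and control-state interface, and the per-stack pending match information, and checking consistency bottom-up---so there is nothing to compare against beyond noting that you have correctly identified the standard approach and its size analysis.
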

			
From \cref{fact:MPDS2TA},  \cref{thm:TA2MCFL} and \cref{thm:dcimcfg}, we can conclude a doubly exponential upper bound for downward closures of MSO-definable bounded-treewidth MPDA languages:
\begin{corollary}\label{cor:mpds-down}
Fix $\numstack$ and $k$. Given an  MPDA $\MPDS$ with $\numstack$ stacks, we can compute in doubly exponential time an NFA for $\downc{\bglangof{k}{\MPDS}}$. 
\end{corollary}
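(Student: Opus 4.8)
The plan is to obtain \cref{cor:mpds-down} by composing three results already established: \cref{fact:MPDS2TA} (MPDA to tree automaton), \cref{thm:TA2MCFL} (tree automaton to MCFG), and \cref{thm:dcimcfg} (MCFG to downward-closure NFA). The only nontrivial point is to track the sizes along this chain so that, with $\numstack$ and $k$ fixed, the final complexity stays doubly --- and not triply --- exponential; throughout, $\downc{\bglangof{k}{\MPDS}}$ is understood as $\downc{\bwlangof{k}{\MPDS}}$, the downward closure of the $k$-bounded \emph{word} language.

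First I would apply \cref{fact:MPDS2TA} to $\MPDS$ and the bound $k$, producing a width-$k$ tree automaton $\TA$ with $\glangof{\TA}=\bglangof{k}{\MPDS}$, and hence $\wlangof{\TA}=\word(\bglangof{k}{\MPDS})=\bwlangof{k}{\MPDS}$. Its size is $|\MPDS|^k\cdot 2^{\poly(\numstack,k)}$; since $\numstack$ and $k$ are constants, this is polynomial in $|\MPDS|$ (of degree $k$), and it is computed in polynomial time. Next I would feed $\TA$ into \cref{thm:TA2MCFL}, obtaining a $k$-\mcfg(2) $\initgrammar$ with $\langof{\initgrammar}=\wlangof{\TA}=\bwlangof{k}{\MPDS}$, of size $|\TA|\cdot 2^{O(k\log k)}$, which --- again because $k$ is fixed --- is still $\poly(|\MPDS|)$ and is computed in polynomial time. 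Note that $\initgrammar$ has dimension at most $k$ (a constant) and rank $2$.

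Finally I would invoke \cref{thm:dcimcfg} with the fixed dimension $d=k$: since $|\initgrammar|=\poly(|\MPDS|)$, it produces an NFA for $\downc{\langof{\initgrammar}}=\downc{\bwlangof{k}{\MPDS}}$ in time $2^{2^{\poly(|\initgrammar|)}}=2^{2^{\poly(|\MPDS|)}}$, i.e.\ doubly exponential in $|\MPDS|$, which completes the proof. The ``hard part'' is purely bookkeeping: one must check that keeping $\numstack$ and $k$ fixed makes the intermediate tree automaton and MCFG polynomial in $|\MPDS|$ --- so that the single-exponential overhead in $k$ is absorbed into the constant and the double exponential of \cref{thm:dcimcfg} is not blown up to a triple exponential --- and that the dimension $d=k$ passed to \cref{thm:dcimcfg} is indeed constant, which is exactly where the restriction to fixed treewidth is used.
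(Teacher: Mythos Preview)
Your proposal is correct and follows exactly the approach the paper takes: it derives the corollary by chaining \cref{fact:MPDS2TA}, \cref{thm:TA2MCFL}, and \cref{thm:dcimcfg}, and the paper's own argument is precisely this composition. Your size bookkeeping (showing that with $\numstack$ and $k$ fixed the intermediate tree automaton and MCFG are polynomial in $|\MPDS|$, so that \cref{thm:dcimcfg} yields only a double exponential) is spelled out in more detail than the paper gives, but matches the intended reasoning.
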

The upper bound in \cref{cor:mpds-down} is optimal (see \refApp{\cref{app:mpda-lower-bound}}): 
\begin{theorem}\label{claim:LBMPDS}
There is a bound $k$ and a family of $2$-stack MPDA $(\MPDS_n)_{n \in \Nat}$ with $|\MPDS_n| = \mathcal{O}(n)$ and \streewidth\ at most $k$ such that any NFA for the downward closure $\downc{\Lang(\MPDS_n)}$ has at least $2^{2^n}$ states. 
\end{theorem}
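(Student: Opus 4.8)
The plan is to exhibit, for each $n$, a $2$-stack MPDA $\MPDS_n$ of size $\mathcal{O}(n)$ whose language is exactly $L_n=\{ww\mid w\in\{a,b\}^{2^n}\}$ (the language already used in \cref{claim:LBmcfl}), to show that every rungraph of $\MPDS_n$ has special treewidth at most some absolute constant $k$, and then to reuse the fooling-set argument behind \cref{claim:LBmcfl} to conclude that any NFA for $\downc{L_n}$ has at least $2^{2^n}$ states.

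For the construction, $\MPDS_n$ proceeds in three stages. In the first stage it generates and outputs a word $w=w_1\cdots w_{2^n}\in\{a,b\}^{2^n}$, using stack~$1$ to simulate the size-$\mathcal{O}(n)$ context-free ``doubling grammar'' with nonterminals $S_0,\dots,S_n$, rules $S_i\to S_{i-1}S_{i-1}$ for $i\ge 1$, and $S_0\to a\mid b$; whenever an occurrence of $S_0$ is resolved to a terminal letter $w_j$, the automaton outputs $w_j$ and pushes a copy of $w_j$ onto stack~$2$. After this stage, stack~$1$ is empty and stack~$2$ holds $w_1\cdots w_{2^n}$ from bottom to top. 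In the second stage the automaton transfers the contents of stack~$2$ onto stack~$1$ one symbol at a time, carrying the current symbol in the control state between the pop and the push; this reverses the word, so stack~$1$ then holds $w$ with $w_1$ on top. In the third stage it reads and outputs a second word, popping one symbol off stack~$1$ per output letter and rejecting on any mismatch, and accepts once both stacks are empty and the input is consumed. Hence $\Lang(\MPDS_n)=L_n$, and since the only $n$-dependent component is the stack-$1$ alphabet $\{S_0,\dots,S_n,a,b\}$, we have $|\MPDS_n|=\mathcal{O}(n)$.

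The crucial point is that every accepting run of $\MPDS_n$ pops from stack~$1$ only during stages~$1$ and~$3$ and from stack~$2$ only during stage~$2$; thus each run is $3$-phase-bounded in the sense of La Torre, Madhusudan and Parlato. Since phase-bounded MPDA~\cite{TorreMP07} are among the bounded-treewidth underapproximations of MPDA~\cite{TreeAuxMadhu}, there is a constant $k_0$ bounding the treewidth of all rungraphs of $\MPDS_n$ uniformly in $n$, and since MPDA rungraphs have degree at most~$3$ (a predecessor, a successor and at most one matching partner), bounded treewidth implies bounded special treewidth~\cite{Cou10}. Fixing such a bound $k$ then gives $\bglangof{k}{\MPDS_n}=\glangof{\MPDS_n}$, hence $\bwlangof{k}{\MPDS_n}=\Lang(\MPDS_n)=L_n$. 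I expect this special-treewidth bound to be the main obstacle; the phase-bounding observation reduces it to a known result, but one could also exhibit a special tree decomposition directly, using that the stage-$1$ matching on stack~$1$ is the well-nested (Dyck-like) grammar-derivation structure while the stack-$2$ matching and the stage-$3$ matching on stack~$1$ are each ``rainbow''-nested, so that the rungraph is a linear order carrying only a bounded number of non-crossing matching layers.

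For the lower bound, the $2^{2^n}$ pairs $(w,w)$ with $w\in\{a,b\}^{2^n}$ form a fooling set for $\downc{L_n}$: each $ww$ lies in $L_n\subseteq\downc{L_n}$, whereas for $w\neq w'$ no word of $L_n$ has $ww'$ as a subword, because such a word would have length $2^{n+1}=|ww'|$ and would therefore equal $ww'$, which by comparing its two halves forces $w=w'$. By the standard fooling-set lower bound for NFAs, any NFA for $\downc{\Lang(\MPDS_n)}=\downc{L_n}$ then has at least $2^{2^n}$ states; this last computation is identical to the one in \refApp{\cref{app:MCFL-lowerbound}}.
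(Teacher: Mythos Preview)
Your proof is correct and follows the same strategy as the paper: build a linear-size phase-bounded $2$-stack MPDA for a length-$2^{n+1}$ copy language, invoke the phase-bound $\Rightarrow$ bounded-treewidth result of~\cite{TreeAuxMadhu} together with the degree-$3$ transfer to bounded \streewidth~\cite{Cou10}, and finish with the fooling-set argument of \cref{app:MCFL-lowerbound}. The paper instead targets $\{ww^\rev\mid w\in\{a,b\}^{2^n}\}$ and realises the $2^n$ count via a binary counter on stack~$2$ rather than a grammar simulation on stack~$1$; this avoids your transfer stage and gives a $2$-phase (rather than $3$-phase) automaton, but the argument is otherwise identical.
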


\myparagraph{Concurrent Programs}
As mentioned in \cref{sec:introduction}, an important underapproximation of MPDAs is that of $k$-bounded context switching: A run is allowed to switch between stacks only $k$ times.
In the same section we also mention that this can be lifted to dynamic pushdown systems:
Stacks can dynamically spawn during execution, and each stack may switch $k$ times.
Finally we lift this even further so that each dynamically spawned process is not a mere pushdown,
but itself an MPDA underapproximation, or some other MSO-definable bounded-\streewidth\ system (however, we still require that each process as a whole, is only interrupted at most $k$ times).
With our previous results, we assume all processes have been transformed into $\mcfg$, that
have fixed dimension $d$ and rank $r$.
This gives rise to \emph{concurrent programs} ($\CP$) over $d$-\mcfg$(r)$.
With some modifications to our downward closure computation, we obtain an optimal
complexity upper bound for $k$-context-bounded safety:

\begin{theorem}
  Fix $d,r \geq 0$. The following problem is $\TWOEXPSPACE$-complete: Given a $\CP$ over the $d$-\mcfg$(r)$, one of its global states $q_f$, and a unary encoded $k \geq 1$, is $q_f$ reachable under $k$-bounded context switching?
\end{theorem}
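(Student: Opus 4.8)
The plan is to follow the established recipe for DCPS safety (reduce to VASS coverability) but to control the blowup by using a refined "partially permuting downward closure" instead of the classical one. For the lower bound, $\TWOEXPSPACE$-hardness is inherited: DCPS over pushdown processes are a special case of $\CP$ over $d$-$\mcfg(r)$ (take $d=r=1$, since $1$-MCFG are exactly CFG), and safety for DCPS is already $\TWOEXPSPACE$-hard by~\cite{AtigBQ11,DBLP:conf/icalp/BaumannMTZ20}; one only has to check that the reduction produces an instance that still fits our syntactic format, which is immediate. So the work is all in the upper bound.

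For the upper bound, first I would fix a single spawned process, which by \cref{thm:TA2MCFL}/\cref{main-equivalence} is given by a $d$-$\mcfg(r)$ of size $s$. A direct application of \cref{thm:dcimcfg} yields a doubly-exponential ($2^{2^{\poly(s)}}$) NFA for the downward closure; feeding that into the VASS-coverability pipeline (which itself costs one exponential, then $\EXPSPACE$ for coverability) would only give $\THREEEXPSPACE$. The key move, as hinted in the introduction, is to observe that the VASS/coverability argument for context-bounded concurrent programs does not actually need the full downward closure: since each process is interrupted at most $k$ times, its visible contribution to the run is split into $\le k+1$ blocks, and what the rest of the system can observe is, up to the global control passing in and out of the process, a $k$-fold "partially permuting" abstraction of the process language. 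I would define this \emph{partially permuting downward closure} precisely (roughly: close under taking subwords \emph{and} under permuting the order of the $\le k+1$ maximal factors), show it suffices for preserving reachability in the concurrent-program semantics (a simulation argument in both directions between runs of the $\CP$ and runs where each process is replaced by its partially permuting closure), and then prove that this closure admits a \emph{singly}-exponential NFA. The latter is where an extra \emph{compression step} enters: instead of representing the doubly-exponential downward-closure automaton explicitly, one represents it succinctly, exploiting that the MCFG has fixed dimension $d$ and the relevant words decompose into boundedly many pieces, so that the partially permuting closure automaton can be built of size $2^{\poly(s,k)}$ directly from $\initgrammar$ without ever materializing the full $\downc$ automaton. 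Concretely I would: (1) from $\initgrammar$ extract, for each of the $\le d$ coordinates of each nonterminal, the regular overapproximation of the set of possible factor-downward-closures (here Higman plus semilinearity of Parikh images keeps each such set regular and of singly-exponential descriptive complexity); (2) assemble these into an NFA for the $k$-tuple partially permuting closure by a product over the $\le k+1$ blocks; (3) argue the total size is singly exponential.

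Next I would plug these singly-exponential process abstractions into the standard construction: build a VASS whose states encode the global finite memory and the current active-process context index ($\le k$), and whose counters count the multiset of pending spawned processes, each tagged by a state of its (singly-exponential) partially permuting closure automaton and by how many context switches it still has available. Since each process abstraction has $2^{\poly}$ states and there are $k$ budget values, the VASS has $2^{\poly}$ states and $2^{\poly}$ counters, i.e.\ is singly-exponential in the input. Safety then reduces to coverability in this VASS, which is solvable in $\EXPSPACE$ in the size of the VASS, hence in $\TWOEXPSPACE$ in the original input. Correctness of the reduction — that $q_f$ is reachable in the $\CP$ under $k$-bounded context switching iff the target configuration is coverable in the VASS — follows by the back-and-forth simulation, using the adequacy of the partially permuting downward closure established in the previous step.

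The main obstacle, and the real technical content, is step (2): proving that the \emph{partially permuting} downward closure (a) genuinely suffices — i.e.\ that permuting the $\le k+1$ interruption-delimited factors of a process, rather than arbitrarily reshuffling letters, is exactly the abstraction the concurrent semantics is insensitive to, which requires a careful argument that the global scheduler only ever "sees" these $k+1$ blocks in some order; and (b) is computable in singly-exponential size directly from the $\mcfg$, which needs the compression step and a quantitative version of Higman's lemma / the semilinearity-based downward-closure bounds tailored to fixed dimension $d$. Everything else (the VASS construction, invoking $\EXPSPACE$ coverability, and the hardness transfer from DCPS) is routine assembly on top of the results already stated in the paper.
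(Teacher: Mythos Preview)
Your high-level skeleton matches the paper: replace each process language by a singly-exponential NFA via a relaxed downward closure, then invoke the known $\EXPSPACE$ procedure for concurrent programs over regular languages. The lower-bound argument is also fine. However, the key technical object is misidentified, and your compression mechanism does not work as stated.

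\textbf{The partially permuting closure does not permute the blocks.} You define it as closing under subwords \emph{and permuting the order of the $\le k+1$ interruption-delimited blocks}. This is the wrong abstraction. The context-switch letters in $D\times D$ encode specific global-state transitions; reordering the blocks would scramble these, and the concurrent-program semantics is certainly \emph{not} insensitive to that. The paper's $\parikhalphdownck{L}{\Theta}$ instead preserves the $k$ letters from $\Theta = D \cup (D\times D)$ exactly and in order, but replaces each maximal $\Sigma$-infix \emph{between} them by an arbitrary word with the same Parikh image. The correctness argument is then immediate from the semantics: whenever a transition rule consumes a $\Sigma^*$-infix $w$, it uses only $\parikh(w)$ (the spawned processes are added to a multiset), so reshuffling letters \emph{within} a block is invisible, while the block boundaries and their order are fixed.

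\textbf{The compression is via Parikh-image encoding, not Higman/semilinearity.} Your step~(1), extracting per-coordinate regular overapproximations via ``Higman plus semilinearity,'' does not yield a singly-exponential NFA with the right language. The paper's actual mechanism is: first pass to the \emph{accelerated} grammar $\cG_\Sigma$ of \cref{sec:downward-closures} (polynomial size for fixed $d$), which guarantees that every word is witnessed by a derivation tree of polynomial height (\cref{lem:sigma-grammar-conciseness}). Then build an NFA that explores such a tree by depth-first search, maintaining the current branch together with the partially computed yields---but storing each $\Sigma$-infix as its Parikh image in \emph{binary} rather than as a word. Since branches have polynomial length and each Parikh vector has entries bounded by the tree size, a state fits in polynomially many bits, giving $2^{\poly}$ states. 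Without the Parikh-image compression (or with your block-permutation abstraction), the yields along a branch can be exponentially long and the NFA blows up to doubly-exponential size.
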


The lower bound already follows from the case of $\CP$ over the CFG for a fixed $k \geq 1$ \cite{DBLP:conf/icalp/0001GMTZ23}.

\section{Bounded-special-treewidth languages are MCFL}\label{sec:ta2mcfg}
Let us give an overview of the proof of \cref{thm:TA2MCFL}. Details can be found in \refApp{\cref{app:ta2mcfg}}. We begin with a formal definition of width-$k$ tree automata, which were only sketched in \cref{sec:preliminaries}. This definition relies on graph algebras~\cite{Cou10}, which allow us to express the graphs of bounded special treewidth by terms (or trees) in the algebra.
	
	\myparagraph{A graph algebra for bounded-stw graphs \cite{Cou10}}
	We need some terminology. We first define  \emph{colored piecewise linearly ordered} graphs (\cpclog{}s).
  
  A structure $(V, \sedge)$ is called \textit{piecewise linearly ordered} if $V$ and $\sedge$ can be partitioned into a number of parts $V = \uplus_i V_i$ and $\sedge = \uplus_i S_i$ such that each part $S_i$ is the successor relation of a total order on $V_i$. Each part $(V_i, S_i)$ is called a piece. That is, each piece is totally ordered by $\sedge$, but there is no order between the pieces. 
  
  A \cpclog\ does not require the vertices to be totally ordered by $\sedge$ like in an $\LOG$. Instead they only need to be piecewise linearly ordered.  Further, some of the vertices of a \cpclog\ are uniquely colored. In particular we require the extreme vertices of every piece to be colored. 

  \newcommand{\remove}[1]{}
  \newcommand{\colorless}[1]{{\scalebox{1.2}{$\diamondsuit$}}(#1)}
   Formally, a \emph{\cpclog} is a tuple  $\cpcgraph = \tuple{\vertices, \sedge,(\edges_\gamma)_{\gamma \in\Elabels},\vlabel, \colorlab}$, where  $\sedge \subseteq \sedge'$ for some $\sedge'$ such that  $\tuple{\vertices, \sedge',(\edges_\gamma)_{\gamma \in\Elabels},\vlabel}$ is  a \LOG.   A maximal $\sedge$-connected component is called a \textit{piece}.  The coloring $\colorlab: \vertices \to \colors$ is a partial injective function from vertices to colors, and must keep colored the  vertices  that either have no incoming $\sedge$ edge or no outgoing $\sedge$ edge. 

	Given a  \cpclog\  $\cpcgraph = \tuple{\vertices, \sedge,(\edges_\gamma)_{\gamma \in\Elabels},\vlabel, \colorlab}$, we denote by $\colorless{\cpcgraph}$ the (colorless) piecewise-\LOG\ $\tuple{\vertices, \sedge,(\edges_\gamma)_{\gamma \in\Elabels},\vlabel}$ where the last entry $\colorlab$ of  $\cpcgraph$ is omitted. Note that an \LOG\ is also a  piecewise-\LOG\ with only one piece.
	
	We will define some operations to iteratively construct {\cpclog}s. These operations can produce structures $\tuple{\vertices, \sedge,(\edges_\gamma)_{\gamma \in\Elabels},\vlabel, \colorlab}$ which need not be \cpclog{}s in a strict sense. For instance, $\sedge$ can be an arbitray edge relation instead of the successor relation of a piecewise linear order, or the extreme vertices of a piece may be uncolored. However, even the {relaxed~}{\cpclog}s insist that coloring $\colorlab: \vertices \to \colors$ is a partial injective function from vertices to colors. 
	We call such structures \emph{relaxed \cpclog}.
\remove{	A \emph{relaxed \cpclog} relaxes the requirement that $\sedge$ form a partial linear order and that unconnected vertices are colored.}
	
	  First we have a set of unary operations $\UOP$ and a single binary operation $\join(\cdot)$. The set of unary operators is $ \UOP = \{\node{c}{a} \mid c \in \colors, a \in\Alphabet \cup \{\varepsilon\} \} \cup \{\addsedge{c_1}{c_2} \mid c_1, c_2 \in \colors\} \cup \{\addedge{\gamma}{c_1}{c_2} \mid c_1, c_2 \in \colors, \gamma \in \Elabels\} \cup \{\fcolor{c} \mid c \in \colors\}$. For each $\op \in \UOP$, we define $\op(  \tuple{\vertices, \sedge,(\edges_\gamma)_{\gamma \in\Elabels},\vlabel, \colorlab}) =  \tuple{\vertices', \sedge', (\edges'_\gamma)_{\gamma \in\Elabels},\vlabel',  \colorlab'}$ as follows.
	\begin{itemize}
		\item $\op \equiv \node{c}{a}(\cdot)$ : Requires that $c \not\in \image{\colorlab}$. The resulting graph contains a new vertex $v$ colored $c$ and labelled $a$. Notice that $a$ may also be $\varepsilon$.  Note that $v$ will be a piece by itself. That is, $\vertices' = \vertices \uplus \{v\}$, $\sedge' = \sedge$, $\edges_\gamma' = \edges_\gamma$ for each ${\gamma \in\Elabels}$, $\vlabel' = \vlabel \uplus \{(v, a)\}$, $\colorlab' = \colorlab \uplus \{(v, c)\}$.
		
		\item $\op\equiv \addsedge{c_1}{c_2}(\cdot)$: Adds a $\sedge$ edge from the unique $c_1$-colored vertex $u$ to the unique $c_2$-colored vertex $v$. That is, $\vertices' = \vertices$,   $\edges_\gamma' = \edges_\gamma$ for each ${\gamma \in\Elabels}$,  $\vlabel' = \vlabel $, $\colorlab' = \colorlab $, $\sedge' = \sedge \uplus \{(u,v) \mid \colorlab(u) = c_1, \colorlab(v) = c_2\}$.
		
		\item $\op \equiv \addedge{\gamma}{c_1}{c_2}(\cdot)$: Adds a $\gamma$-labelled edge from the unique $c_1$-colored vertex $u$ to the unique $c_2$-colored vertex $v$. That is, $\vertices' = \vertices$,  $\sedge' = \sedge$,  $\vlabel' = \vlabel $, $\colorlab' = \colorlab $,  $\edges_\gamma' = \edges_\gamma \uplus \{(u,v) \mid \colorlab(u) = c_1, \colorlab(v) = c_2\}$, $\edges'_{\gamma'} = \edges_{\gamma}$ for each ${\gamma' \in \Elabels \setminus \{\gamma\}}$.
		
		\item $\op \equiv \fcolor{c}(\cdot)$: This frees the color $c$. That is, $\vertices' = \vertices$,   $\sedge' = \sedge$, $\edges_\gamma' = \edges_\gamma$ for each ${\gamma \in\Elabels}$, $\vlabel' = \vlabel $,  $\colorlab' = \colorlab \setminus \{(u,c) \mid \colorlab(u) = c\} $. 
	\end{itemize}
	The binary  operation $\join$ takes two relaxed~{\cpclog}s with disjoint colors, and constructs their disjoint union.
	For $i \in \{1,2\}$, let $\cpcgraph_i = \big(\vertices_i, \sedge_i,(\edges^i_\gamma)_{\gamma \in \Elabels},\vlabel_i, \colorlab_i\big)$.
	Their join is $\cpcgraph =$ $\big(\vertices,$ $\sedge,$ $(\edges_\gamma)_{\gamma \in\Elabels},$ $\vlabel,$ $\colorlab\big)$ defined with $\vertices = \vertices_1 \uplus \vertices_2$, $\sedge = \sedge_1 \uplus \sedge_2$, $\edges_{\gamma} = \edges^1_{\gamma} \uplus \edges^2_{\gamma}$ for each $\gamma \in \Elabels$, $\vlabel = \vlabel_1 \uplus \vlabel_2$, and $\colorlab = \colorlab_1 \uplus \colorlab_2$.
	
	An empty \cpclog\ is one with no vertices. It is denoted $\ecpcgraph$. We are interested in structures obtained by  (repeated) application of the above operations  starting from $\ecprg$. This can be described by expressions  given by the  following grammar.
	\begin{align*}
		\psi :=&\,\, \ecpcgraph \mid \node{c}{a}(\psi) \mid \addsedge{c_1}{c_2} (\psi) \,\,	\mid \addedge{\gamma}{c_1}{c_2}(\psi) \mid \fcolor{c}(\psi) \mid \join(\psi, \psi)
	\end{align*}
	where $c, c_1, c_2 \in \colors$, $a \in \Alphabet \cup \{\epsilon\}$ and $\gamma \in \Elabels$. The set of all expressions generated by the above grammar is denoted $\expr(\colors, \Alphabet, \Elabels)$. The structure defined by an expression $\psi$ is denoted $\cpcgraph_\psi$. This is defined inductively as expected. The special treewidth of $\cpcgraph_\psi$ is bounded by $\sizeof{\colors}$:
	\begin{remark}
		Note that the term $\psi$ is a special tree decomposition of $\cpcgraph_\psi$. Indeed, if a node $x$ is the root of subterm $\psi'$ of $\psi$, then the bag at $x$ will be the colored vertices of the  graph $ \cpcgraph_{\psi'}$. 
	\end{remark}
	
	The \emph{special treewidth} (\streewidth) of a \cpclog\ $\cpcgraph$ is defined to be $m$ where  $ m = \inf \big\{\, \sizeof{\colors} ~\big|~ \cpcgraph = \cpcgraph_\psi$ for some $\psi \in \expr(\colors,\Alphabet, \Elabels) \,\big\}$. Note that due to the $\fcolor{\cdot}$-operation, $m$ may be larger than the final set of colors used in $\cpcgraph$.

	\myparagraph{Width-$k$ tree automata} A \emph{width-$k$ tree automaton} is a deterministic bottom-up tree automaton over expressions from  $\expr(\set{k}, \Alphabet, \Elabels)$, i.e., with the color set $\colors = \set{k}$.   It  is given by a tuple $\TA = \tuple{Q, \delta, q_\bot, Q_\text{accept}}$ where $Q$ is a finite set of states, $Q_\text{accept} \subseteq Q$ is the set of accepting states and  $\delta =  \bigcup_{\op \in \UOP}\delta_\op \cup \delta_\join $ is the set of transition functions:  $\delta_\op  : Q  \to Q$ for all $\op \in \UOP$, and $\delta_\join : Q \times Q  \to Q$. 
	
	A deterministic bottom-up tree automaton  $\TA = \tuple{Q, \delta, q_\bot, Q_\text{accept}}$ defines a (partial) function $\TArun{\TA}: \expr(C, \Alphabet, \Elabels) \to Q$, inductively as follows
	\begin{itemize}
	\item $\TArun{\TA}(\bot)=  q_\bot$
	\item if $\psi_1 \equiv \op(\psi_2)$ for some $\op \in \UOP$, then $\TArun{\TA}(\psi_1) = \delta_\op (\TArun{\TA}(\psi_2))$
	\item if $\psi_1 \equiv \join(\psi_2, \psi_3)$, then $\TArun{\TA}(\psi_1) = \delta_\join (\TArun{\TA}(\psi_2), \TArun{\TA}(\psi_3))$
	\end{itemize}
	
	The language of a tree automaton $\TA = \tuple{Q, \delta, q_\bot, Q_\text{accept}}$, denoted $\langof{\TA}$, is defined as $\langof{\TA} = \{\psi  \mid  \TArun{\TA}(\psi)  \in  Q_\text{accept} \}$.
	We also define $\glangof{\TA}$ as the set of structures defined by $\langof{\TA}$ that are {\LOG}s, $\glangof{\TA} = \{\colorless{\cpcgraph_\psi} \mid \psi \in \langof{\TA} \} \cap \logset{\Alphabet}{\Elabels}$, and  $\wlangof{\TA}$ as $\wlangof{\TA} = \{\word(\lograph) \mid \lograph \in \glangof{\TA} \}$.
  Recall that $\colorless{\cdot}$ removes the coloring $\colorlab$.

\myparagraph{Challenges}
Consider a tree automaton $\TA = \tuple{Q, \delta^\TA, q_\bot, Q_\text{accept}}$ over $\expr(C, \Alphabet, \Elabels)$.  We defined $\langof{\TA} = \{\psi  \mid  \TArun{\TA}(\psi)  \in  Q_\text{accept} \}$. Indeed for any state $q \in Q$, we can define  $\langof{\TA, q} = \{\psi  \mid  \TArun{\TA}(\psi)  = q \}$.  Let $\pglangof{\TA} = \{\cpcgraph_\psi \mid \psi \in \langof{\TA} \} $. Note that $\pglangof{\TA}$ is different from $\glangof{\TA}$ as we are not intersecting it with $ \logset{\Alphabet}{\Elabels}$.  In particular, $\cpcgraph_\psi$ may be a relaxed~{\cpclog}, not a strict~{\cpclog}, since the $\sedge$-edges can branch or form cycles. On the other hand, if $\cpcgraph$ is a strict~{\cpclog}, $\sedge$ still need not induce a total order but only a piecewise  linear order. Here we extend the definition of $\word(\cdot)$ to $\word(\cpcgraph)$, but now it defines a multiset of words, with one word corresponding to each \remove{$\sedge$-connected component}piece. 

In our \mcfg, we will have a nonterminal $A_q$ for each state $q$ of the tree automaton. For $\psi \in\langof{\TA, q} $, if $\cpcgraph_\psi$ is a  \cpclog, then  the nonterminal  $A_q$  should be able to generate $\word(\cpcgraph_\psi)$. There are several challenges: 1)~For any $\psi \in\langof{\TA, q}$,  we need to check whether $\cpcgraph_\psi$ is a  \cpclog. 2)~For a $\cpcgraph_\psi$ that is a \cpclog,  $\word(\cpcgraph_\psi)$ is a multiset of words, but $\langof{A_q}$ generates a tuple of words. Hence, we need to order the words in the multiset, so as to correspond to the tuple. 3)~For $\psi_1, \psi_2 \in \langof{\TA, q}$, the number of pieces in $\cpcgraph_{\psi_1}$ may be different from that of $\cpcgraph_{\psi_2}$. Hence fixing an arity for the nonterminal $A_q$ is problematic. One possible solution is to have copies of $A_q$ one of each possible arity. 

\myparagraph{Solution}
Since there are only a fixed number of colors in any expression, we can address these issues by maintaining a finite \emph{summary} for each expression.  The summary contains the start and end colors of each piece (note that they need to be colored, if a $\sedge$-edge is to be added in the future). This already has the information about the number of pieces (cf.\ challenge 3 above). We assume an ordering between the colors, and order the pieces based on their start colors (cf.\ challenge 2).  Storing the start and end colors will also allow us to forbid  cycles or adding multiple successors/predecessors with $\sedge$-edges (cf.\ challenge 1).

These summaries can be computed by a tree automaton. We then consider the Cartesian product $\TB$ of the tree automaton $\TA$ and the summary-computing automaton. The states in the product can be thought of as annotating the states of $\TA$ with the summary information. 

\myparagraph{Syntactic translation} Now $\TB$ translates synctactically into an \mcfg. Indeed, the nonterminals are the states, with the number of pieces in the summary as the arity. The transitions translate to productions. For a $\join$-transition $(B,C) \mapsto A $ of $\TB$, we will have a production of the form $A(s_1, \dots s_\ell) \mapsfrom \{B(x_1,\dots, x_m), C(y_1, \dots y_n)\}$ with $\{s_1, \dots, s_\ell\} = \{x_1, \dots, x_m\} \cup \{y_1, \dots, y_m\}$.   Corresponding to a $\node{c}{a}$-transition $B \mapsto A$, there will be a production of the form $A(s_1, \dots s_\ell) \mapsfrom \{B(x_1,\dots, x_m)\}$  where $\ell = m+1$. Every $s_i$ is some $x_j$, except for one which is the letter $a$. The nonterminal $A$ keeps track of the fact that this $a$ is a piece by itself and is colored $c$.  Corresponding to a $\addsedge{c_1}{c_2}$-transition, there will be a production of the form $A(s_1, \dots s_\ell) \mapsfrom \{B(x_1,\dots, x_m)\}$  where $\ell = m-1$. Every $s_i$ is some $x_j$, except for one which is now a concatenation of the form $x_k x_{k'}$. The indices $k, k'$ are indeed determined by the summary kept at $B$: The end vertex of $x_k$ is colored $c_1$ and the start vertex of $x_{k'}$ is colored $c_2$. The nonterminal $A$ keeps track of the fact that this new piece gets the pair $(c_{3}, c_{4})$ of colors if $c_3$ was coloring the start vertex of $x_k$ and $c_4$ was coloring the end vertex of $x_{k'}$. Corresponding to an $\addedge{\gamma}{c_1}{c_2}$-transition there will be a production of the form $A(x_1, \dots, x_m) \mapsfrom \{B(x_1,\dots, x_m)\}$ as the pieces are not changed, only the states are updated. 

The ordering of the arguments $s_1 , \dots, s_\ell$ follows the ordering we fixed on the colors: Each piece $s_i$ is summarized by a pair of colors of their start and end vertices. The pieces are ordered according to their start-vertex color. For details, see \refApp{\cref{app:ta2mcfg}}.

\section{MCFL are MSO-definable bounded-\streewidth\  languages}\label{sec:mcfg2mso}

Towards \cref{thm:MCFL2MSO}, we show that every MCFL can be described by an MSO sentence over words with an additional binary relation $\matching$ that we call matching. The matching relation $\matching$ is, in  a sense, a generalisation of the nesting/matching relation for context-free languages\cite{LautemannST94,AlurM06}. We call these \LOG\ over $(\Terminals, \{\matching\})$ \emph{nested words}. 

\myparagraph{Nested words of an MCFG} Consider an \mcfg\ $\initgrammar=(\Terminals, \hat\Nonterminals, \Variables, \Productions, \StartNonterminal)$ and a word $ w \in \langof{\initgrammar}$. For the construction, we assume that $\initgrammar$ has the ``information-lossless'' property, meaning for each production \ref{eqn:production-rule-form} (see p.~\labelcpageref{eqn:production-rule-form}), every variable that occurs on the right also occurs on the left. This can be achieved using a translation due to Seki, Matsumura, Fuji, and Kasami~\cite[Lemma 2.2(f3)]{SekiMFK91}.

Suppose the parse tree for $w$ has a subtree for $A (u_1, u_2,\dots, u_k)$ where $u_i \in \Terminals^\ast$ for $i \in \set{k}$. Then we can write $ w$ as $ w = w_0 u_{i_1} w_1 u_{i_2} \dots u_{i_k} w_k$ such that $\{i_1, i_2 , \dots i_k \} = \{1, \dots, k\}$. That is, the entries $u_1, \dots , u_k$ in the tuple generated by $A$ eventually appear as non-overlapping infixes in the final word. We would like to mark the endpoints of these infixes and connect them, indicating they were related together as one tuple in the derivation tree. 
Towards this, we will introduce extra vertices in $w$, labelled $\epsilon$ as they should not change the word $w$, bordering each 
entry $u_i$. We then connect these extra vertices using the additional $\matching$ edge. This would give us an $\LOG$ $\lograph_w$ with $\word(\lograph_w) = w$ and  $\lograph_w$ would correspond to a parsetree-structure of $w$. See \refApp{\cref{app:NestedWord}} for details. We call such $\LOG$s \emph{nested words} of $\initgrammar$.

	\begin{figure}
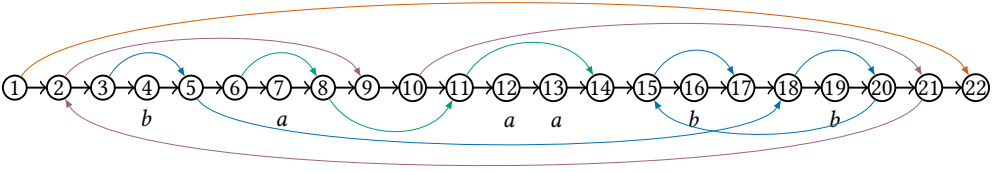

	\scalebox{.9}{
		\nestedwordtikz
	}
	\caption{The nested-word corresponding to the parse tree in Fig~\ref{fig:parsetree} generating  $baaabb$.}\label{fig:nestedword}
\end{figure}
	
	\begin{example} 
	Consider the parse tree  in Figure~\ref{fig:parsetree}, generating the word 
	$baaabb$ (cf. Example~\ref{ex:mcfg-parsetree}).  The graph with $\matching$ relation corresponding to this parsetree is depicted in Figure~\ref{fig:nestedword}. 
	The label of a node, if not specified, is $\epsilon$. The $\epsilon$-labelled nodes participate in the $\matching$ relation, and the $\Alphabet$ labelled nodes do not. There are $4$ different maximal $\matching$-connected paths in this graph, each colored differently. Each of these paths correspond to one node of the parse-tree.  We can notice the well-nesting: The  open interval $(1,22)$ contains $\tuple{(10,21), (2,9)}$ which in turn contains $\tuple{(6,8), (11,14)}$ and $\tuple{(3,5), (18,20), (15,17)}$.  Note that every maximal $\matching$-path corresponds to a node in the parsetree. 
	The number of nodes in a $\matching$-connected path is exactly twice the arity of the respective nonterminal. 
	\end{example}
	
	Note that,  every $\LOG$ over $(\Terminals, \{\matching\})$ is not necessarily a nested word of some \mcfg\ $\initgrammar$. The purpose of this section is to show that, for every $d$-\mcfg(r) $\initgrammar$  we can construct an $\MSO(\Terminals, \{\matching\})$ sentence $\phi$ such that $\langof{\phi} = \langof{\initgrammar}$.  The sentence $\phi$ has two parts. The first part (called the well-nesting property) is independent of $\initgrammar$, but depends on the dimension $d$. The second part assumes well-nesting and states that it corresponds to a valid parsetree of $\initgrammar$.

	We will use $\le$ to mean the reflexive transitive closure of $\sedge$. Note that this is expressible in $\MSO(\sedge)$. Once we assume the definition of $\le$, the rest of the formula can be written in \textit{existential} $\MSO$ (\textsf{EMSO}).	 We will first declare some more shorthands. We will also write $pEq$ instead of $(p,q) \in E$, and $p {\rightarrow} q$ instead of $\Succ(p,q)$. Further we write $p_1 E_1 p_2 E_2 p_3$ to mean $p_1 E_1 p_2 \wedge p_2 E_2 p_3$.
	
\myparagraph{Well-nesting matching relation $\matching$} 
 Recall that  $\maxarity$ is the dimension, i.e.\ the maximum arity among nonterminals of the \mcfg. The matching relation $\matching$ must satisfy the following properties.
\begin{enumerate}
	\item The length (i.e.\ the number of nodes) of any $\matching$-connected path is at most $2\maxarity$.
	\item The length of any maximal  $\matching$-connected path is even. 
	\item A node can have at most one $\matching$-successor and at most one $\matching$-predecessor.%
	\item %
	A vertex $u$ participates in the matching relation $\matching$ iff it is labelled $\epsilon$.
	\item Let $v_1 \matching v_2 \matching \dots \matching v_{2k}$ be a maximal $\matching$-connected path. Then
	\begin{enumerate}
		\item the matching edges at odd positions (which enclose an entry) must agree with the linear order. That is, for all $i: 1 \le i \le k$, $v_{2i-1}  \le v_{2i}$. The matching edges at even positions have no restrictions, and are used to represent the right-neighbor entry in the tuple.
		\item  entries do not intersect. That is, for  $i,j: 1 \le i, j \le k, i \neq j$, $v_{2j} < v_{2i-1}  $ or $v_{2i} < v_{2j-1}$. 
	\end{enumerate} 
	\item If a nesting path (representing a tuple $t$) intersects an entry (corresponding to a tuple~$t'$) then the tuple $t$ must be completely contained inside tuple $t'$. Formally, if  $v_1 \matching v_2 \matching \dots \matching v_{2k}$ and $u_1 \matching u_2 \matching \dots \matching u_{2k'}$ both are maximal $\matching$-connected paths and if ($u_{2i-1} < v_{2j} < u_{2i}$ or $u_{2i-1} < v_{2j-1} < u_{2i}$) for some $i: 1 \le i \le k'$ and $j: 1 \le j \le k$, then for all $j: 1 \le j \le k$ there exists $i: 1 \le i \le k'$  such that $u_{2i-1} < v_{2j-1} < v_{2j} <  u_{2i}$. 
\end{enumerate}
A matching relation $\matching$ is called \emph{well-nesting} if it satisfies all the above properties. There is sentence $\phiwellnesting$ in $\MSO(\Sigma,\{\matching\})$  (in fact in $\textsf{FO}(\Sigma, <, \matching )$) which asserts that the relation $\matching$ is well-nesting. See \refApp{\cref{app:MSOwellnesting}} for the concrete formula. 

\myparagraph{Ensuring a valid parsetree}
We assume  well-nesting of $\matching$.  %
Given an  \mcfg\ $\initgrammar= (\Terminals, \Nonterminals, \Variables, \Productions, \StartNonterminal)$, our $\MSO(\Alphabet, \{\matching\})$ formula
asserts the following. 
\begin{enumerate}
	\item There is a partition of the $\epsilon$-labelled vertices into $|\Nonterminals|$ many parts $(Y_A)_{A \in \Nonterminals}$. All vertices in a $\matching$-connected path will belong to the same part. 
\end{enumerate}
 Hence it makes sense to talk about $A$-paths, for $A\in \Nonterminals$.
\begin{enumerate}[resume]
	\item The length of a maximal $\matching$-connected path  labeled $X_A$ (called an $A$-path) will be $2\times \arity(A)$.
	\item For each nonterminal $A$ with $\arity(A) = n$, we require the following condition. 
		 For all $A$-paths $y_1^b \matching y_1^e \matching y_2^b \matching y_2^e \dots y_n^b \matching y_n^e$ (i.e.,  we use the first-order variable $y_i^b$ to indicate the left boundary vertex of the $i$th entry, and $y_i^e$ to indicate the right boundary vertex) the following conditions must hold for at least one production rule $\prodrule$.  For a production rule~$\prodrule$ of the form ~(\ref{eqn:production-rule-form}) we require the existence of the boundaries for the arguments on the RHS, 
	$\exists	x_{1,1}^b, x_{1,1}^e, \dots x_{1,n_1}^b,x_{1,n_1}^e, \dots, x_{m,1}^b, x_{m,1}^e, \dots x_{m,n_m}^b,x_{m,n_m}^e $, such that 
		\begin{enumerate}
			\item for all $i: 1 \le i \le n$, the the path from $y_i^b$ to $y_i^e$ that takes the nesting edge $x_{j,j'}^b \matching x_{j,j'}^e$ as a shortcut whenever possible (that is, this path corresponds to the topmost level  within the nesting edge  $y_i^b \matching y_i^e$ in our pictorial depictions), should be identical to   $s_i$ after the following slight modification:  for all $j: 1 \le j \le m$, for all $j': 1 \le j' \le n_j$, if $x_{j,j'}$ occurs in $s_i$ it  is replaced with $x_{j,j'}^b \matching x_{j,j'}^e$. For example, if $s_4 = bx_{2,1}abx_{1,3}$ in the production rule ~(\ref{eqn:production-rule-form})  then we will have the following as one of the conjuncts:
			$ \exists z_1 \exists z_2 \exists z_3 (
			b(z_1) \wedge a(z_2) \wedge b(z_3) \wedge			
			y_4^b  {\rightarrow} z_1 {\rightarrow} x_{2,1}^b \matching x_{2,1}^e {\rightarrow} z_2 {\rightarrow} z_3 {\rightarrow} x_{1,3}^b \matching x_{1,3}^e {\rightarrow} y_4^e		
\remove{			\Succ(y_4^b, z_1)  \wedge \Succ(z_1, x_{2,1}^b) \wedge \Succ(x_{2,1}^e, z_2) \wedge \Succ(z_2, z_3) \wedge \Succ(z_3, x_{1,3}^b) \wedge \Succ(x_{1,3}^e, y_4^e)}
			)
			 $.

			\item for all $j: 1 \le j \le m$, the path $x_{j,1}^b\matching x_{j,1}^e \matching x_{j,2}^b \dots \matching x_{j,n_j}^e$ is a maximal $B_j$-path. 
		\end{enumerate}
	\item The entire nested-word is wrapped within an $\StartNonterminal$-path.
	\end{enumerate}

All of the above conditions are expressible in the existential fragment of $MSO(\Sigma,\{\matching\})$, see \refApp{\cref{app:MSOparsetree}} for details. Call that sentence $\phi_\initgrammar $. 
Our final formula would then be $$\phi \equiv \phiwellnesting \wedge\phi_\initgrammar. $$

The correctness of the above formula, as well as the bound on special treewidth stated in \cref{thm:MCFL2MSO} are proved in \refApp{\cref{appendix:MSOcorreectness}}.

\section{Downward closures of MCFGs}\label{sec:downward-closures}
\newcommand{\accelL}[1]{\Sigma_{#1,\mathrm{left}}}
\newcommand{\accelR}[1]{\Sigma_{#1,\mathrm{right}}}
\newcommand{\tikzscale}{0.8}
In this section, we prove \cref{thm:dcimcfg}. For an intuition, 
we first give a rough description of Courcelle's downward closure construction for context-free grammars~\cite{DBLP:journals/eatcs/Courcelle91} (to be precise, its formulation in \cite{DBLP:conf/concur/AnandZ23}).
Imagine we are provided with a grammar recognizing the language $\setof{(ab)^n \# (ba)^n \mid n \in \Nat}$:
\begin{align*}
	S &\to A ~|~ \# & A &\to aBa &	B &\to bSb
\end{align*}
We can see that by repeating the derivation sequence $S \Rightarrow A \Rightarrow aBa \Rightarrow abSba$, 
we can produce an arbitrary number of $ab$ to the left of an $S$, and an arbitrary 
number of $ba$ to the right. Since the downward closure can drop arbitrary letters, 
we can furthermore ignore the ordering of letters and the equal number of 
repetitions on either side. In other words, a sufficient number of repetitions 
allow us to produce arbitrary words from $\{a,b\}^*$ to the left and right of $S$.

Courcelle's construction can be viewed as introducing \emph{accelerating productions} that 
capture this behavior. In our example, the productions $S \to A$, 
$A \to aBa$ and $B \to bSb$ would justify the introduction of the accelerating production $S \to \{a,b\}^* S \{a,b\}^*$ (see also \cref{fig:courcelle}). These are technically not context-free, but their semantics is clear. More generally, for every non-terminal $A$, we introduce a production $A\to\accelL{A}^*A\accelR{A}^*$, where $\accelL{A}$ (resp.\ $\accelR{A}$) is the set of letters that can appear in $u$ (resp.\ $v$) in derivations $A\Rightarrow^* uAv$.

Such acceleration productions are \emph{complete}, in the sense that one can show that  every derivation tree can be cut down (by replacing pieces of the derivation tree by accelerating productions) to one of bounded height. 
Conversely, they are also \emph{sound}: 
a production of the form $A\to\accelL{A}^*A\accelR{A}^*$ will only produce words in the downward closure.
Soundness and completeness allows us to construct an NFA that just simulates bounded-height trees. This NFA is of exponential size and its language has the same downward closure as the input grammar. One can then easily modify the NFA so as to accept exactly the downward closure.

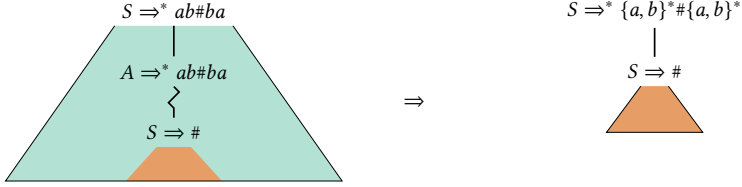
\begin{figure}
	\begin{center}
		\scalebox{\tikzscale}{
		\begin{tikzpicture}
			\begin{scope}
				\node (s) at (0,0) {$S \Rightarrow^* ab\#ba$};
				\node (a) at (0,-1) {$A \Rightarrow^* ab\#ba$};
				\node (a1) at (0,-2) {$S \Rightarrow \#$};
				\draw [-,thick] (s) -- (a);
				\draw [-,thick,decorate,decoration={zigzag}] (a) -- (a1);
				\draw [-] (s.south west) -- (-2.8,-2.8) -- (2.8,-2.8) -- (s.south east);
				\path [on layer=back,fill=highlight-3!30] ($(s.south west)$) -- (-2.8,-2.8) -- (2.8,-2.8) -- ($(s.south east)$);
				\path [on layer=back,fill=highlight-1!60] ($(a1.south)!.5!(a1.south west)$) -- (-.8,-2.8) -- (.8,-2.8) -- ($(a1.south)!.5!(a1.south east)$);
			\end{scope}
			
			\node at (4,-1.5) {$\Rightarrow$};
			\begin{scope}[shift={(8,0)}]
				\node (s) at (0,0) {$S \Rightarrow^* \{a,b\}^*\#\{a,b\}^*$};
				\node (a) at (0,-1) {$S \Rightarrow \#$};
				\draw [-,thick] (s) -- (a);
				\path [on layer=back,draw=black,fill=highlight-1!60] ($(a.south)!.4!(a.south west)$) -- (-.8,-2) -- (.8,-2) -- ($(a.south)!.4!(a.south east)$);
			\end{scope}
		\end{tikzpicture}
		}
	\end{center}
	\caption{Courcelle's construction for CFG: A repeated nonterminal ($S$) gets contracted by replacing the blue subtree with the Kleene closure of a suitable alphabet ($\{a,b\}^*$). The orange subtree is left unchanged.}
	\label{fig:courcelle}
\end{figure}

Ideally, we would like to transfer this approach to the setting of {\mcfg}s. However, there are several reasons that make this a significant challenge.
\begin{enumerate}
	\item \label{it:challenge:subset} First of all, as nonterminals have multiple entries, a repetition of nonterminals may support acceleration only in some entries. As an example, consider the grammar
	\begin{align*}
		A(xy, z) &\leftarrow A(x, y), C(z) &
		A(a, a)  &\leftarrow \emptyset &
		C(c)	 &\leftarrow \emptyset
	\end{align*}
	and the following derivation tree. Note that repetition of $A$s allow an acceleration of $c$s in the first entry, but the second entry is fixed to a single $c$.
	\begin{center}
		\scalebox{\tikzscale}{
		\begin{tikzpicture}
			\node (a1) at (0,0) {$A(aac,c)$};
			\node (a2) at (-1,-1) {$A(aa, c)$};
			\node (c1) at (1, -1) {$C(c)$};
			\node (a3) at (-2,-2) {$A(a, a)$};
			\node (c2) at (0,-2) {$C(c)$};
			\draw [-] (a1) -- (a2);
			\draw [-] (a1) -- (c1);
			\draw [-] (a2) -- (a3);
			\draw [-] (a2) -- (c2);
			
			\node at (2.25,-1) {$\Rightarrow$};
			
			\begin{scope}[shift={(4,-.35)}]
				\node (a1) at (0,0) {$A(aac^*, c)$};
				\node (a2) at (0,-1) {$?$};
				\draw [-] (a1) -- (a2);
			\end{scope}
		\end{tikzpicture}
		}
	\end{center}
	
	\item \label{it:challenge:restriction} Second, even when acceleration is restricted to a subset of the entries, the act of acceleration may restrict possible values in the remaining entries. Consider, for example, if we augment our grammar from the previous point by the productions $\pi : A(x, y) \leftarrow D(x), D(y)$ and $D(d) \leftarrow \emptyset$. In general, this allows the second entry of an $A$ to contain the letter $d$. However, after applying $\pi$, we will not be able to repeat any $A$s. Therefore in any sequence of productions that permits acceleration, the second entry must in the end contain a $c$. We thus must take care to not permit such constructions.
	\begin{center}
		\scalebox{\tikzscale}{
		\begin{tikzpicture}
			\begin{scope}[shift={(0,0)}]
				\node at (0,0.5) {\textcolor{highlight-3}{\cmark}};
				\node (a1) at (0,0) {$A(ddc^*, c)$};
				\node (a2) at (0,-1) {$?$};
				\draw [-] (a1) -- (a2);
				
				\begin{scope}[shift={(2,0)}]
					\node at (0,0.5) {\textcolor{red}{\xmark}};
					\node (a1) at (0,0) {$A(ddc^*, d)$};
					\node (a2) at (0,-1) {$?$};
					\draw [-] (a1) -- (a2);
					
					\begin{scope}[shift={(2,0)}]
						\node at (0,0.5) {\textcolor{red}{\xmark}};
						\node (a1) at (0,0) {$A(aad^*, c)$};
						\node (a2) at (0,-1) {$?$};
						\draw [-] (a1) -- (a2);
						
						\begin{scope}[shift={(2,0)}]
							\node at (0,0.5) {\textcolor{red}{\xmark}};
							\node (a1) at (0,0) {$A(aac^*, d)$};
							\node (a2) at (0,-1) {$?$};
							\draw [-] (a1) -- (a2);
						\end{scope}
					\end{scope}
				\end{scope}
			\end{scope}
		\end{tikzpicture}
		}
	\end{center}
	
	\item \label{it:challenge:shuffle} Lastly, we observe that production rules are allowed to shuffle entries, e.g.\ $\pi : A(x, y) \leftarrow A(y, x)$. Our acceleration procedure needs to account for such productions.
	\begin{center}
		\scalebox{\tikzscale}{
		\begin{tikzpicture}
			\node (a1) at (0,0) {$A(c,aa)$};
			\node (a2) at (0,-1) {$A(aa,c)$};
			\node (a3) at (-1,-2) {$A(a,a)$};
			\node (c1) at (1,-2) {$C(c)$};
			\draw [-] (a1) -- (a2);
			\draw [-] (a2) -- (a3);
			\draw [-] (a2) -- (c1);
			
			\node at (2.25,-1) {$\Rightarrow$};
			
			\begin{scope}[shift={(4,-.35)}]
				\node (a1) at (0,0) {$A(c, aac^*)$};
				\node (a2) at (0,-1) {$?$};
				\draw [-] (a1) -- (a2);
			\end{scope}
		\end{tikzpicture}
		}
	\end{center}
\end{enumerate}

To approach these challenges, we search not only for repeating nonterminals, but also look at their \emph{embedding maps}. These embedding maps track the flow of information through the entries in a derivation tree. For example, in the following derivation tree, the embedding map of $x$ into $y$ is $1 \mapsto 2$ and the embedding map from $x$ into $z$ is $1 \mapsto 1$.

\begin{center}
	\scalebox{\tikzscale}{
	\begin{tikzpicture}
		\node (a1) at (0,0) {$z:A(aac,c)$};
		\node (a2) at (-1,-1) {$y:A(aa, c)$};
		\node (c1) at (1, -1) {$C(c)$};
		\node (a3) at (-2,-2) {$A(a, a)$};
		\node (c2) at (0,-2) {$x:C(c)$};
		\draw [-] (a1) -- (a2);
		\draw [-] (a1) -- (c1);
		\draw [-] (a2) -- (a3);
		\draw [-] (a2) -- (c2);
		\draw [-{latex},bend right,highlight-1] (c2) edge node [right] {$1 \mapsto 2$} (a2);
		\path [-{latex},bend right=70,highlight-1] (c2.east) edge node [right] {$1 \mapsto 1$} (a1.east);
		\path [-{latex},bend left,highlight-2] ($(a2.north west)!.5!(a2.north)$) edge node [left] {$
			\begin{aligned}
				1 &\mapsto 1 \\[-.4em]
				2 &\mapsto 1
			\end{aligned}
		$} (a1.west);
	\end{tikzpicture}
	}
\end{center}

These maps help us identify the acceleratable entries of \cref{it:challenge:subset}: acceleration will happen naturally along the information flow indicated by the map as additional letters join this flow from the sides. Informally, we will call this the \emph{watershed} of the embedding map, and our tree cutting procedure will cut out (part of) the watershed, but keep entries derived \emph{in parallel} to the watershed intact, see \cref{fig:watershed}. We formalize this notion when we give the explicit construction of accelerated grammars in \cref{sec:accelerated-grammars}. 

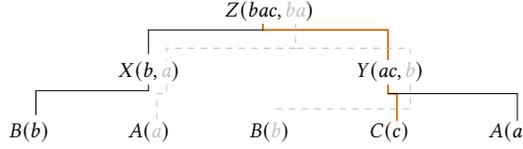
\begin{figure}
	\begin{center}
		\scalebox{\tikzscale}{
		\begin{tikzpicture}[remember picture]
			\node (a1) at (0,0) {$Z(\subnode{a1-1}{bac}, \subnode{a1-2}{\textcolor{lightgray}{ba}})$};
			\node (a2) at (-2,-1) {$X(\subnode{a2-1}{b},\subnode{a2-2}{\textcolor{lightgray}{a}})$};
			\node (a3) at (2,-1) {$Y(\subnode{a3-1}{ac},\subnode{a3-2}{\textcolor{lightgray}{b}})$};
			\node (a4) at (-4,-2) {$B(\subnode{a4-1}{b})$};
			\node (a5) at (-2,-2) {$A(\subnode{a5-1}{\textcolor{lightgray}{a}})$};
			\node (a6) at (0,-2) {$B(\subnode{a6-1}{\textcolor{lightgray}{b}})$};
			\node (a7) at (2,-2) {$C(\subnode{a7-1}{c})$};
			\node (a8) at (4,-2) {$A(\subnode{a8-1}{a})$};
			
			\draw [-,thick,color=highlight-1] (a1-1) -- ++(0,-.35) -| (a3-1);
			\draw [-] (a1-1) ++(0,-.35) -| (a2-1);
			\draw [dashed,lightgray] (a1-2) -- ++(0,-.65) -| (a3-2);
			\draw [dashed,lightgray] (a1-2) ++(0,-.65) -| (a2-2);
			\draw [-] (a2-1) -- ++(0,-.35) -| (a4-1);
			\draw [dashed,lightgray] (a2-2) -- ++(0,-.35) -| (a5-1);
			\draw [-,thick,color=highlight-1] (a3-1) -- ++(0,-.35) -| (a7-1);
			\draw [-] (a3-1) ++(0,-.35) -| (a8-1);
			\draw [dashed,lightgray] (a3-2) -- ++(0,-.65) -| (a6-1);
		\end{tikzpicture}
		}
	\end{center}
	\caption{The embedding map of $C$ into $Z$ (orange, bold), and its watershed (black, solid). The derivation of the second entries of $Z$ (light gray, dashed) is what we consider to occur ``alongside'' or ``in parallel'' to the watershed.}
	\label{fig:watershed}
\end{figure}

As a simplified example of how we use embedding maps, consider \cref{fig:sigma-grammar-visualization}. We identify an embedding map of $\setof{1,2,3} \mapsto 1$ (not shown) between the $A$ nonterminals. Acceleration justifies the removal of parts of the purple and green subtrees where entries are part of the watershed (i.e.\ they themselves embed into the first entry of the root nonterminal). The root nonterminal is also where we introduce our accelerated production, specifically in the first entry. The orange subtree may have non-acceleratable content. To keep it intact, we project the green subtree to the first entry (containing, by the embedding map, all entries of the orange subtree) and lift it below the root $A$.

A particularly interesting instance is the case where the embedding map is \emph{idempotent}, that is $f \circ f = f$. Such maps have the property that there is at least one fixpoint $x$ with $f(x) = x$, and all other points are mapped to one of these fixpoints. This eliminates the shuffling we noted in \cref{it:challenge:shuffle}. We will later see that restricting ourselves to these idempotent embeddings only adds a constant factor to the size of our downward closure automaton.

Our most complicated adjustments come from the dependencies mentioned in \cref{it:challenge:restriction}. We are retaining the parts of the trees that are not part of the watershed. This requires some modification of the underlying grammar, and we need to make sure that this modification only allows us to generate infixes that are generated \emph{alongside} the watershed. For this, we introduce the notion of ``decorated'' nonterminals. A decorated nonterminal behaves almost identical to its plain version from the original grammar, but is augmented slightly. In our case, we are primarily interested in ensuring that a decorated nonterminal produces a derivation tree if and only if the plain version would have produced a derivation tree with a specific embedding.

An additional challenge is that decorated nonterminals will themselves also need decorated variants (so that we can repeatedly cut trees until all paths are short), and so forth. However, since decoration will always strictly decrease arity (decorated variant of $A$ will only simulate non-fixpoint entries of $A$, and an idempotent has at least one fixpoint), and we assume that the dimension of the input \mcfg\ is fixed, we will only need to perform decoration a constant number of times, ensuring a merely polynomial blow-up.

\begin{figure}[t]	
	\begin{center}
		\scalebox{\tikzscale}{
		\begin{tikzpicture}
			\node (root) at (0,-.3) {$A(~u'uwvv'~,~\bullet~,~\bullet~)$};
			\node (b) at (0,-1.5) {$B(~\bullet~,~\bullet~,~\bullet~)$};
			\node (a) at (0,-2.3) {$A(~uwv~,~{\bullet}~,~{\bullet}~)$};
			\node (a2) at (0,-3.5) {$A(~w~,~{\bullet}~,~{\bullet}~)$};
			\draw [-,thick,decorate,decoration={zigzag}] (root) -- (b);
			\draw [-] (b) -- (a);
			\draw [-,thick,decorate,decoration={zigzag}] (a) -- (a2);
			\draw [-] ($(root.south)!.65!(root.south west)$) -- (-2.6,-3.5) -- (a2);
			\draw [-] ($(root.south)!.65!(root.south east)$) -- (2.6,-3.5) -- (a2);
			\path [on layer=back,fill=highlight-2!30] (b.south west) -- (-1.9,-3.5) -- (-1,-3.5) -- ($(a)!.45!(a.west)$) -- ($(a)!.45!(a.east)$) -- (1,-3.5) -- (1.9,-3.5) -- (b.south east);
			\path [on layer=back,fill=highlight-3!30] ($(a)!.45!(a.west)$) -- (-1,-3.5) -- (1, -3.5) -- ($(a)!.45!(a.east)$);
			\draw [-,fill=highlight-1!60] (a2) -- (-0.5,-4.3) -- (0.5, -4.3) -- (a2);
			
			\node at (3,-2) {$\Rightarrow$};
			
			\begin{scope}[shift={(0,-.5)}]
				\node (root) at (8,-.3) {$A(~\Sigma_u^*uwv\Sigma_v^*~,~\bullet~,~\bullet~)$};
				\begin{scope}[shift={(1,0)}]
					\node (b) at (8,-1.5) {$\hat{B}(~\bullet~)$};
					\draw [-,thick,decorate,decoration={zigzag}] (root) -- (b);
					\draw [-] (root.south) -- (6,-2.9) -- (7.5,-2.9) -- ($(b.south) + (-0.2,-0.3)$) -- ($(b.south) + (0.2,-0.3)$) -- (8.5,-2.9) -- (10,-2.9) -- (root.south east);
					\path [on layer=back,fill=highlight-2!30] (b.south west) -- (6.9,-2.9) -- (7.5,-2.9) -- ($(b.south) + (-0.2,-0.3)$) -- ($(b.south) + (0.2,-0.3)$) -- (8.5,-2.9) -- (9.1,-2.9) -- (b.south east);
				\end{scope}
				\node (a3) at (5.4,-.9) {$\hat{A}(~uwv~)$};
				\node (a4) at (5.4,-2.2) {$A(~w~,~{\bullet}~,~{\bullet}~)$};
				\draw [-] (root.south west) -- (a3);
				\draw [-,thick,decorate,decoration={zigzag}] (a3) -- (a4);
				\path [on layer=back, fill=highlight-3!30] ($(a3.south west)!.30!(a3.south)$) -- (4.1,-2.2) -- (6.7,-2.2) -- ($(a3.south east)!.30!(a3.south)$);
				\draw [-] ($(a3.south west)!.30!(a3.south)$) -- (4.1,-2.2) -- (a4);
				\draw [-] ($(a3.south east)!.30!(a3.south)$) -- (6.7,-2.2) -- (a4);
				\draw [-,fill=highlight-1!50] (a4) -- (4.9,-2.9) -- (5.9,-2.9) -- (a4);
			\end{scope}
		\end{tikzpicture}
		}
	\end{center}

	\caption{Downward closure construction for {\mcfg}s. Assume the embedding $f$ between the nonterminals $A$ has a single fixpoint in the first entry (in general, there may be more). The teal subtree rooted at the middle $A$ is moved to the root, but projected to the parts that are relevant for the contents of the fixpoint. In particular, this guarantees that all the (potentially non-acceleratable) entries of the orange subtree are fully contained. The subtree rooted at $B$ is projected to contain only those nodes that do not contribute to the fixpoint. (Our actual construction is slightly more involved, but only to simplify the proof that the tree gets smaller.)}
	\label{fig:sigma-grammar-visualization}
\end{figure}
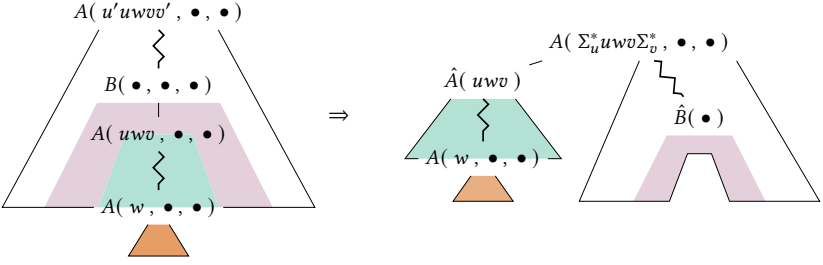

\subsection{Nonterminal Embeddings}
Let us formalize embedding maps now. We will initially define them within individual productions for convenience, though the notion lifts quite naturally to nodes in derivation trees, which is the setting for the maps that we will be primarily concerned with for the remainder of the section.

Let $\pi = A(s_1, \ldots, s_n) \leftarrow S$ be a 
production and $B(x_1, \ldots, x_m) \in S$. We say that $B(\vec x)$ embeds into $A(\vec s)$ with map $f : [1, m] \to [1, n]$, 
written $B(\vec x) \embed_{\pi,f} A(\vec s)$ if $\tgt(x_i) = s_{f(i)}$. Where $\pi$ is clear, we
may omit it. As these embeddings are closed under composition, we can extend 
this notion to derivation trees as expected.

Given two maps $f : [1,m] \to [1,n]$ and $g : [1,\ell] \to [1,n]$ we write $f/g$ for the set of functions $h : [1,m] \to [1,\ell]$ satisfying $f = g \circ h $.

\subsection{Decorating Nonterminals}

As we have mentioned above, we will extend the grammar in order to support the cutting of trees; in order to make sure this extension is not overly broad we ``decorate'' nonterminals in order to derive more restrictive behavior. Decorated nonterminals and productions are syntactically derived from existing ones and largely share the semantics of their original counterparts. For example, one of our decorations will be the projection of nonterminals to a subset of their entries. The other is obtained by annotating an existing nonterminal with a target nonterminal and target embedding, ensuring that the decorated nonterminal is productive if and only if there is a derivation sequence from the original nonterminal leading to the target nonterminal with the specified embedding.

\myparagraph{Projective Decorations} 
We will call the first kind of decoration a \emph{projective decoration}. Given a nonterminal $A$ of arity $n$ and a subset of indices $I = \setof{i_1, \ldots, i_m} \subseteq [1,n]$ we introduce the projected nonterminal $A^I$ of arity $\card I$. The semantics will be such that $A^I \rightarrow^* (w_{i_1}, \ldots, w_{i_m})$ if and only if $A \rightarrow^* (w_1, \ldots, w_n)$. 

For every production $\pi = A(s_1, \ldots, s_n) \leftarrow S$ we introduce a production $\pi^I = A^I(s_{i_1}, \ldots, s_{i_m}) \leftarrow S'$ where $B^J(z_{j_1}, \ldots, z_{j_{m'}}) \in S'$ if and only if $B(z_1, \ldots, z_{n'}) \embed_{\pi,f} A(\vec s)$ and $J = f^{-1}(I) = \setof{j_1, \ldots, j_{m'}}$. We also extend this notion to derivation trees in the expected way: If $t$ is a tree with root $(A, \pi)$ then $t^I$ is the tree with root $(A^I, \pi^I)$ whose children are obtained by the appropriate projection of the children of $t$. As a visual guide, refer to \cref{fig:example-projection}

\begin{figure}
	\begin{center}
		\scalebox{\tikzscale}{
		\begin{tikzpicture}[remember picture]
			\node (tA-1) at (0,0) {$A(\subnode{ta1a1}{a}\subnode{ta1a2}{a}\subnode{ta1b1}{b}, \subnode{ta1b2}{b})$};
			\node (tA-2) at (-1, -2) {$B(\subnode{ta2a1}{a},\subnode{ta2b1}{b})$};
			\node (tA-3) at (1,-2) {$B(\subnode{ta3a1}{a},\subnode{ta3b1}{b})$};
			\draw [-] (ta2a1.north) -- ++(0,1.30) -| (ta1a1.south);
			\draw [-] (ta2b1.north) -- ++(0,.95) -| (ta1b2.south);
			\draw [-] (ta3a1.north) -- ++(0,.25) -| (ta1a2.south);
			\draw [-] (ta3b1.north) -- ++(0,.50) -| (ta1b1.south);

			\node (tB-1) at (5,0) {$A^{\setof{1}}(\subnode{tb1a1}{a}\subnode{tb1a2}{a}\subnode{tb1b1}{b}, \subnode{tb1b2}{\textcolor{lightgray}{b}})$};
			\node (tA-2) at (4, -2) {$B^{\setof{1}}(\subnode{tb2a1}{a},\subnode{tb2b1}{\textcolor{lightgray}{b}})$};
			\node (tA-3) at (6,-2) {$B^{\setof{1,2}}(\subnode{tb3a1}{a},\subnode{tb3b1}{b})$};
			\draw [-] (tb2a1.north) -- ++(0,1.30) -| (tb1a1.south);
			\draw [-,lightgray] (tb2b1.north) -- ++(0,.95) -| (tb1b2.south);
			\draw [-] (tb3a1.north) -- ++(0,.25) -| (tb1a2.south);
			\draw [-] (tb3b1.north) -- ++(0,.50) -| (tb1b1.south);
			
		\end{tikzpicture}
		}
	\end{center}
	\caption{A derivation tree and its projection. Lines between nodes represent the embedding maps. We project the root nonterminal to its first entry. This means that the righthand $B$ is kept intact, as all its entries map to the first entry of $A$. The lefthand $B$ is projected to its first entry, as the second is no longer used. In a larger tree, this would cause the children of the left $B$ to be projected in turn.}
	\label{fig:example-projection}
\end{figure}
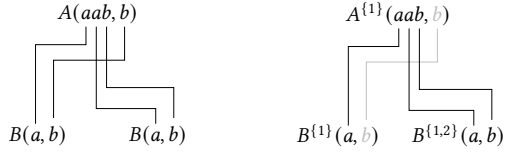

\myparagraph{Searching Decorations}
The second kind of decoration ensures the presence of a particular nonterminal and embedding map inside a derivation tree.
As such, we call these decorations \emph{searching}. 
As we will talk about the relationship between subtrees a lot in this section,
we will find it useful to introduce a notion of addressing subtrees by the path
required to reach them.
We address subtrees using paths $p \in \Nat^*$. 
If $p = \varepsilon$, then the subtree of a tree $t$ at $p$, 
written $\subtree{t}{p}$, is $t$. 
If $p = i \cdot p'$, then $\subtree{t}{p}$ is $\subtree{t_i}{p'}$, 
where $t_i$ is the $i$-th child of $t$.

Let $A$ and $B$ be nonterminals with arity $n_0$ and $m$ respectively, and let $f : [1,m] \to [1,n_0]$. Let $I = \Img(f) = \setof{i_1, \ldots,i_m}$ and $\bar I = [1,n_0]\setminus I = \setof{\bar{i}_1,\ldots,\bar{i}_{n_1}}$. We introduce a new nonterminal $A^{B,f}$ of arity $\card{\bar I} = n_1$. Intuitively, $A^{B,f}$ will faithfully recreate those entries of $A$ that are generated alongside the watershed of a $B$-rooted subtree with embedding map $f$, while ignoring those in the image of the map. In other words, the semantics are such that an $A^{B,f}$-rooted derivation tree $t$ with $\yield(t) = (w_{\bar{i}_1}, \ldots, w_{\bar{i}_{n_1}})$ exists if and only if an $A$-rooted derivation tree $r$ with $\yield(r) = (w_1, \ldots, w_{n_0})$ and a $B$-rooted subtree $\subtree{r}{p}$ with $\subtree{r}{p} \embed_f r$ exists for some path $p$.

To go with our decorated nonterminals, we need decorated productions. Each nonterminal embeds trivially into itself with the identity embedding, so we have $A^{A,\id}() \leftarrow \emptyset$ for all nonterminals~$A$. For more complicated embeddings $f$, we introduce for every production $\pi = A(\vec s) \leftarrow S$ a number of productions that each choose one nonterminal $C_i$ from the body $S$ and a suitable embedding and pursues a search in that branch of the derivation tree. Among all such productions, the choice of $C_i$ is then nondeterministic. More specifically, for every $C_i(\vec x) \in S$ with $C_i(\vec x) \embed_{\pi,g} A(\vec s)$, and every $h \in f/g$, we first introduce the production $\pi^{B,f,i,h} = A^{B,f}(\vec s) \leftarrow S'$. For every $C_j(\vec y) \in S$ with $j \neq i$ we add $C_j(\vec y)$ to $S'$, and we also add $C_i^{B,h}(\vec x)$. However, these productions currently use the wrong arity for $A^{B,f}$, and we are only interested in entries that do not contain information from $B$ anyway. Therefore, we finally replace $A^{B,f}$ and each of its productions by the version projected to $\bar I$.

Let us now explain how to construct a corresponding tree of searching nonterminals, when given a concrete derivation tree $t_0$ where we know that a subtree at some path $p$ embeds with $f$ into the root of $t_0$. As we know the exact position $p$ in this case, we will use it to parameterize the following operation: Let $t_0$ be an $A$-rooted tree with $n$ children and $\subtree{t_0}{j} \embed_{f_j} t_0$ for $j\leq n$. Let $t_1 = \subtree{t_0}{p}$ be a $B$-rooted subtree with $t_1 \embed_f t_0$. 
We construct the tree $t_0^{p,f}$ as follows: 
If $p = \varepsilon$, then $f = \id$ and $t_0^{\varepsilon,\id}$ is just $A^\emptyset() \leftarrow \emptyset$. 
Otherwise, $p = i \cdot p'$. Let $h$ be the embedding such that $t_1 \embed_h \subtree{t_0}{i}$. 
The tree $t_0^{p,f}$ is given by the root node labeled $A^{B,f}$ using the production $\pi^{B,f,i,h}$. 
The children are, for $j \neq i$, $(\subtree{t_0}{j})^{J}$ with $J = f_j^{-1}(\bar I)$. 
For $j = i$ we use $((\subtree{t_0}{i})^{p',h})^J$ as the $i$th child node, where again $J = f_i^{-1}(\bar I)$. 
As a helpful visualization, refer to \cref{fig:example-search}.

\begin{figure}
	\begin{center}
		\scalebox{\tikzscale}{
		\begin{tikzpicture}[remember picture]
			\node (tA-1) at (0,0) {$A(\subnode{t1au}{u}\subnode{t1av}{v}, \subnode{t1ax}{x}\subnode{t1ay}{y}\subnode{t1az}{z})$};
			\node (tA-2) at (-1,-2) {$D(\subnode{t1dz}{z})$};
			\node (tA-3) at (1,-2) {$B(\subnode{t1bu}{u},\subnode{t1bv}{v},\subnode{t1bx}{x})$};
			\node (tA-4) at (0,-4) {$C(\subnode{t1cv}{v})$};
			\draw [-,color=highlight-1] (t1bu) -- ++(0,.5) -| (t1au);
			\draw [-,color=highlight-1] (t1bv) -- ++(0,.75) -| (t1av);
			\draw [-,color=highlight-1] (t1bx) -- ++(0,1) -| (t1ax);
			\draw [-] (t1dz) -- ++(0,1.25) -| (t1az);
			\draw [-,color=highlight-2] (t1cv) -- ++(0,1) -| (t1bv);
			\draw [dashed,color=highlight-3] (tA-4) -- ++(-2,0) |- (tA-1);
			\node at (1.75, -.6) {$\hlA{f_B = \left\{\begin{aligned}
				1 \mapsto 1 \\[-4pt]
				2 \mapsto 1 \\[-4pt]
				3 \mapsto 2 \\
			\end{aligned}\right.}$};
			\node at (-1.1,-3.65) {$\hlC{f = 1 \mapsto 1}$};
			\node at (1.1,-3.35) {$\hlB{f_C = 1 \mapsto 2}$};

			\begin{scope}[shift={(6,0)}]
				\node (tA-1) at (0,0) {$A^{C,f}(\textcolor{lightgray}{\subnode{t2au}{u}\subnode{t2av}{v}}, \subnode{t2ax}{x}\subnode{t2ay}{y}\subnode{t2az}{z})$};
				\node (tA-2) at (-1,-2) {$D(\subnode{t2dz}{z})$};
				\node (tA-3) at (1,-2) {$B^{C,f_C}(\textcolor{lightgray}{\subnode{t2bu}{u},\subnode{t2bv}{v}},\subnode{t2bx}{x})$};
				\node (tA-4) at (0,-4) {$C^{C,\id}(\textcolor{lightgray}{\subnode{t2cv}{v}})$};
				\draw [-,color=lightgray] (t2bu) -- ++(0,.5) -| (t2au);
				\draw [-,color=lightgray] (t2bv) -- ++(0,.75) -| (t2av);
				\draw [-] (t2bx) -- ++(0,1) -| (t2ax);
				\draw [-] (t2dz) -- ++(0,1.25) -| (t2az);
				\draw [-,color=lightgray] (t2cv) -- ++(0,1) -| (t2bv);
			\end{scope}
		\end{tikzpicture}
		}
	\end{center}
	\caption{A visualization of a derivation tree and a search-decoration of it. Lines represent embedding maps. The subtree with root $C$ embeds into the root with the map $f = 1 \mapsto 1$. On the right, we construct the tree $t^{p,f}$, with $p=21$ being the path to $C$. To this end, the root is decorated with the nonterminal, $C$, and the embedding, $f$. As the image of $f$ is $1$, we project away the first entry. We repeat the same procedure with the children. Note that $f_C \in f/f_B$ and therefore the constructed tree is valid according to the decorated production rules we derived.}
	\label{fig:example-search}
\end{figure}
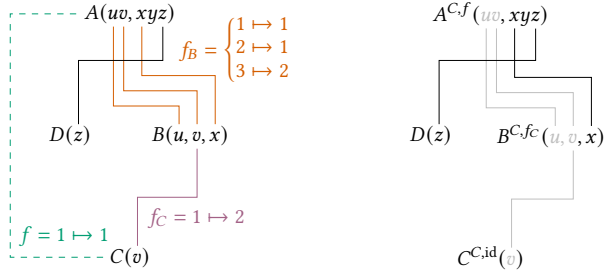

\myparagraph{Termination} 
Given a grammar of size $n$ whose arity is bounded by $k$, our decorations 
introduce $\cO(n^2 \cdot k^k)$ new nonterminals and for each such nonterminal, up to 
$\cO(n^2 \cdot k^k)$ new productions and correspondingly many new variables. For fixed 
$k$, the size of the new grammar is therefore $\cO(n^4)$. However, as we introduced new 
nonterminals, we need to iterate the decoration procedure. 
We may assume that every decorated nonterminal has strictly smaller arity than its corresponding nondecorated version. Therefore we need to repeat the construction at most $k$ times, leading to a final grammar size of $\cO(n^{4^k})$, polynomial in $n$ for fixed $k$.

We conclude with the following technical lemma that shows the correspondence between languages generate by the projected and original nonterminals.

\begin{restatable}{lemma}{projectedntyield}
	\label{lem:projected-nt-yield}
	Let $A$ be a nonterminal of arity $n$ and $I = \setof{i_1, \ldots, i_m} \subseteq [1,n]$ a subset of indices. We have $(w_1, \ldots, w_n) \in L(A)$ if and only if 
	$(w_{i_1}, \ldots, w_{i_m}) \in L(A^I)$.
\end{restatable}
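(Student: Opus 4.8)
The plan is to prove both implications by induction on the height of derivation trees: for the ``only if'' direction I would use the explicit tree-projection $t \mapsto t^I$ from the construction, and for the ``if'' direction a matching ``lifting'' of derivation trees back along the projection. Before either induction I would isolate the one book-keeping fact that everything rests on: if $B(\vec z)$ occurs in the body of a production $\pi = A(s_1,\ldots,s_n) \leftarrow S$ with $B(\vec z) \embed_{\pi,f} A(\vec s)$, then a variable $z_\ell$ occurs in the component $s_j$ if and only if $f(\ell) = j$. This is immediate from $\tgt(z_\ell) = s_{f(\ell)}$ together with the MCFG rule that each variable occurs at most once on the left-hand side. Its consequence, for $I = \setof{i_1,\ldots,i_m}$, is that the surviving components $s_{i_1},\ldots,s_{i_m}$ of $\pi^I$ mention exactly the variables $z_\ell$ of $B$ with $\ell \in f^{-1}(I)$ — precisely the variables retained in the decorated body nonterminal $B^{f^{-1}(I)}$ — so the homomorphism used to evaluate $\pi$ and the one used to evaluate $\pi^I$ agree on all variables relevant to these components.

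For ``only if'', suppose $(w_1,\ldots,w_n) \in L(A)$ via a derivation tree $t$ whose root uses $\pi = A(s_1,\ldots,s_n) \leftarrow \setof{B_1(\vec x_1),\ldots,B_r(\vec x_r)}$ with $B_k(\vec x_k) \embed_{\pi,f_k} A(\vec s)$, and set $J_k = f_k^{-1}(I)$. In the base case $r = 0$ all $s_j \in \Terminals^*$ and the claim is immediate. Otherwise the $k$-th child $t_k$ witnesses a tuple in $L(B_k)$, so by the induction hypothesis the projected subtree $t_k^{J_k}$ witnesses its $J_k$-restriction in $L(B_k^{J_k})$. Taking these as the children of a root labelled $A^I$ with production $\pi^I$ produces exactly $t^I$, which is a valid derivation tree by the definition of $\pi^I$; evaluating its $i_j$-th component and invoking the book-keeping fact gives $\eta(s_{i_j}) = w_{i_j}$, so $\yield(t^I) = (w_{i_1},\ldots,w_{i_m})$ and hence $(w_{i_1},\ldots,w_{i_m}) \in L(A^I)$.

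For ``if'', suppose $(v_1,\ldots,v_m) \in L(A^I)$ via a derivation tree $t'$. Every production with head $A^I$ has the form $\pi^I$ for some production $\pi = A(s_1,\ldots,s_n)\leftarrow S$ of $A$, so fix such a $\pi$ for the root of $t'$. The body $S'$ of $\pi^I$ is in bijection with $S$, containing one $B_k^{J_k}$ (with $J_k = f_k^{-1}(I)$) for each $B_k(\vec x_k) \in S$; hence the children $t'_1,\ldots,t'_r$ of $t'$ are derivation trees for $B_1^{J_1},\ldots,B_r^{J_r}$. The induction hypothesis (``if'' direction applied to $B_k$ and $J_k$) yields a derivation tree $t_k$ for $B_k$ whose $J_k$-restricted yield equals $\yield(t'_k)$. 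Assembling $t$ with root $(A,\pi)$ and children $t_1,\ldots,t_r$ gives a legal $A$-rooted derivation tree, and the same evaluation argument shows that its $i_j$-th yield component equals $v_j$. Thus $\yield(t) = (w_1,\ldots,w_n) \in L(A)$ with $(w_{i_1},\ldots,w_{i_m}) = (v_1,\ldots,v_m)$, which closes the induction.

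I expect the proof to be conceptually routine, with genuine care needed at only two points. First, the induction must be phrased over the height of the derivation tree rather than over any notion of ``decoration depth'', so that it applies verbatim even when $A$ is itself an already-decorated nonterminal — legitimate because the passage from $\pi$ to $\pi^I$ is purely syntactic and imposes no restriction on $A$. Second, the equality between the $i_j$-th components of $\yield(t^I)$ and $\yield(t)$ (and its ``if''-direction counterpart) must be justified carefully; this is exactly the book-keeping fact isolated at the start, and it is the one place where the ``each variable at most once on each side'' property of MCFGs is genuinely used. Everything else — the $S \leftrightarrow S'$ bijection, the base cases, and well-formedness of the assembled trees — is mechanical.
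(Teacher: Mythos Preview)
Your proposal is correct and follows essentially the same approach as the paper: induction on derivation trees, using the syntactic correspondence $\pi \leftrightarrow \pi^I$ to project trees down for the ``only if'' direction and lift them back up for the ``if'' direction. You are considerably more explicit than the paper (which dismisses the ``only if'' direction in one line and leaves the variable book-keeping implicit), but the underlying argument is the same.
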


\begin{proof}
	The only-if direction follows directly from the construction of $A^I$.
	
	For the other direction, if $A^I$ can generate $(w_{i_1}, \ldots, w_{i_m})$ directly via some terminal production~$\pi^I$, then there is a corresponding terminal production $\pi$ for $A$ and the statement immediately holds.
	
	Let therefore $t$ be an $A^I$-rooted derivation tree whose first level corresponds to some production $\pi^I = A^I(s_{i_1}, \ldots, s_{i_m}) \leftarrow S'$. Each nonterminal in $S'$ is of the form $C_i^{f_i^{-1}(I)}(\vec x_i)$. Let $\vec u_i = (u_{i,i_1}, \ldots, u_{i,i_{m'}})$ be the output of the $i$-th subtree of $t$. By induction, we have $\vec u_i$ in $L(C_i)$. 
	
	We construct a tree $\tilde t$ for $A$, using at the first level the production $\pi$. For the $i$-th child, we use the tree that yields the tuple $\vec u_i$. Let $(w'_1, \ldots, w'_n)$ be the output of $\tilde t$. If $i \in I$, then by construction the set of variables from the $j$-th child used in $s_i$ is contained in $f_j^{-1}(I)$, and therefore $w'_i = w_i$.
\end{proof}

Observing in this proof that $\tilde t$ projected to $I$ is exactly $t$, we obtain the following as a corollary.

\begin{corollary}
	Let $t$ be a derivation tree with yield $(u_1, \ldots, u_n)$. Let $I = \setof{i_1, \ldots, i_m}$ be a set of indices. Then $\yield(t^I) = (u_{i_1}, \ldots, u_{i_m})$.
\end{corollary}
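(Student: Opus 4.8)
The plan is to prove the corollary by structural induction on the derivation tree $t$, mirroring the inductive definition of the projected tree $t^I$. Write the production used at the root of $t$ as $\pi = A(s_1, \ldots, s_n) \leftarrow S$, so that by construction $t^I$ is rooted at $A^I$ and uses the production $\pi^I = A^I(s_{i_1}, \ldots, s_{i_m}) \leftarrow S'$. In the base case $S = \emptyset$, each $s_j$ is a word over $\Terminals$, hence $u_j = s_j$, and $\pi^I = A^I(s_{i_1}, \ldots, s_{i_m}) \leftarrow \emptyset$, so that $\yield(t^I) = (s_{i_1}, \ldots, s_{i_m}) = (u_{i_1}, \ldots, u_{i_m})$.

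For the inductive step, let the children $t_1, \ldots, t_k$ of $t$ correspond to the nonterminal occurrences $B_\ell(\vec x_\ell) \in S$ with $B_\ell(\vec x_\ell) \embed_{\pi, f_\ell} A(\vec s)$, and let $\vec v_\ell$ be the yield of $t_\ell$. Then $u_j = \eta(s_j)$, where $\eta$ is the substitution sending each variable $x_{\ell, j'}$ to $v_{\ell, j'}$ and fixing terminals. By definition of $\pi^I$, the $\ell$-th child of $t^I$ is $t_\ell^{J_\ell}$ with $J_\ell = f_\ell^{-1}(I)$, and by the induction hypothesis its yield is the projection of $\vec v_\ell$ to the coordinates in $J_\ell$. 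Now, since every variable occurs at most once among $s_1, \ldots, s_n$, a variable $x_{\ell, j'}$ occurs in $s_j$ exactly when $f_\ell(j') = j$; in particular, if $x_{\ell, j'}$ occurs in some $s_{i_\alpha}$ with $i_\alpha \in I$, then $j' \in f_\ell^{-1}(I) = J_\ell$, so $x_{\ell, j'}$ is among the variables retained in $t_\ell^{J_\ell}$ and is assigned $v_{\ell, j'}$ by the substitution $\eta'$ induced by the children of $t^I$. Hence $\eta'$ and $\eta$ agree on every variable appearing in $s_{i_1}, \ldots, s_{i_m}$, and therefore $\yield(t^I) = (\eta'(s_{i_1}), \ldots, \eta'(s_{i_m})) = (\eta(s_{i_1}), \ldots, \eta(s_{i_m})) = (u_{i_1}, \ldots, u_{i_m})$.

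I do not anticipate a genuine obstacle here, as the argument is essentially bookkeeping; the one point that requires care is the correspondence between the entries of $\vec v_\ell$ that survive in $t_\ell^{J_\ell}$ and the variables that survive in the projected entries $s_{i_1}, \ldots, s_{i_m}$, which is precisely what the identity $J_\ell = f_\ell^{-1}(I)$ records. An alternative, shorter route is to reread the proof of \cref{lem:projected-nt-yield}: the tree $\tilde t$ built there from an $A^I$-rooted tree $t$ satisfies $(\tilde t)^I = t$, and the yield computation carried out in that proof already shows that projecting a derivation tree to $I$ projects its yield accordingly, which is exactly the tree-level content of the only-if direction and hence the corollary.
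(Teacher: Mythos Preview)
Your proof is correct. The paper's own justification is just the one-line observation that in the proof of \cref{lem:projected-nt-yield} the constructed tree $\tilde t$ satisfies $\tilde t^I = t$, which is precisely the alternative route you sketch at the end; your direct structural induction simply spells out what that observation amounts to, so the two approaches coincide.
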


\subsection{Accelerated Grammars}
\label{sec:accelerated-grammars}

As the acceleration of entries introduces languages of the shape~$\Gamma^*$ for $\Gamma \subseteq \Sigma$, we introduce here a generalization of {\mcfg}s, called accelerated {\mcfg}s: In an accelerated \mcfg~$\cG$, construction rules for entries may not only use symbols $a \in \Sigma$ and variables 
$x \in X$, but also languages of the shape~$\Gamma^*$ for $\Gamma \subseteq \Sigma$. The yield of a tree of $\cG$ is therefore a tuple of \emph{languages}, 
and we say that a tuple of words $(w_1, \ldots, w_n)$ is in the language of a 
nonterminal~$A$ if $A \rightarrow^* L_1, \ldots, L_n$ and $w_i \in L_i$.

We can construct an accelerated {\mcfg}~$\cG_\Sigma$ from a base \mcfg~$\cG$. Intuitively, for any nonterminal~$A$ we would like to be able to guess an idempotent map $f$ and a path $p$ such that there exists an $A$-rooted tree $t$ such that the subtree $\subtree{t}{p}$ is also $A$-rooted and embeds with $f$. As we have discussed above, we then introduce productions that 

\begin{itemize}
	\item project away the watershed,
	\item introduce a searching decoration along the first step of $p$, and
	\item introduce a copy of $A$ to provide the possibly non-acceleratable parts of the fixpoints.
\end{itemize}

Formally, for every arity-$n$ nonterminal $A$, every idempotent $f : [1, n] \to [1,n]$, every production $\pi = A(s_1, \ldots, s_n) \leftarrow S$, every nonterminal $C_i(\vec x)$ with $C_i(\vec x) \embed_{\pi,f_i} A(\vec s)$ and every $g \in f/f_i$, we introduce a production $\pi^{f,i,g} = A(s'_1, \ldots, s'_n) \leftarrow S'$ (for fixed dimension, this adds only polynomially many productions). Recall that $I = \Img(f)$ and $\bar I = [1,n] \setminus I$.
The set $S'$ will contain the following:

\begin{itemize}
	\item \emph{project away the watershed:} for every nonterminal $C_j(\vec y) \in S$ with $j \neq i$, we introduce a projected version $C_j^{I_j}(\vec y)$, where $I_j = f_j^{-1}(\bar I)$;
	\item \emph{introduce a searching decoration:} for the case of $j = i$ specifically, we introduce the projected searching nonterminal $(C_i^{A,g})^{I_i}(\vec x)$, which searches for an $A$-rooted subtree that embeds into $C_i$ with $g$, and is projected to $I_i = f_i^{-1}(\bar I)$;
	\item \emph{lift $A$:} to derive the non-acceleratable parts of the fixpoints, we introduce $A^I(z_{i_1}, \ldots, z_{i_m}) \in S$. 
\end{itemize}

We also need to modify the construction rules for the individual entries. Since the entries $i \in \bar I$ are those that are not accelerated, we set $s'_i = s_i$. For the accelerated entries, that is for $i \in I$, we set $s'_i = \Sigma_{i,\mathrm{left}}^* z_i \Sigma_{i,\mathrm{right}}^*$.  Intuitively, these alphabets should contain exactly those letters that can be produced on the watershed of an idempotent embedding and hence we call them \emph{watershed alphabets}.

\myparagraph{Watershed Alphabets}
\begin{figure}
	\begin{center}
		\begin{subfigure}[t]{.30\textwidth}
			\vskip 0pt
			\centering
			\scalebox{\tikzscale}{
			\begin{tikzpicture}[remember picture,tikzmark prefix=watershed-alph-1]
				\node (a1) at (0,0) {$A(\subnode{a1-1}{dyzx}, \subnode{a1-2}{c})$};
				\node (a2) at (0,-1) {$B(\subnode{a2-1}{dyz}, \subnode{a2-2}{x})$};
				\node (a3) at (2,-1) {$C(\subnode{a3-1}{c})$};
				\node (a4) at (0,-2) {$A(\subnode{a4-1}{y},\subnode{a4-2}{z})$};
				\node (a5) at (-2,-2) {$D(\subnode{a5-1}{d})$};
				\node (a6) at (2,-2) {$X(\subnode{a6-1}{x})$};
				\node (a7) at (-1,-3) {$Y(\subnode{a7-1}{y})$};
				\node (a8) at (1,-3) {$Z(\subnode{a8-1}{z})$};
				
				\draw [-,thick,color=highlight] (a1-1) -- ++(0,-.5) -| (a2-1);
				\draw [-,thick,color=highlight] (a1-1) ++(0,-.5) -| (a2-2);
				\draw [-] (a1-2) -- ++(0,-.5) -| (a3-1);
				
				\draw [-] (a2-1) ++(0,-.5) -| (a4-2);
				\draw [-] (a2-1) ++(0,-.5) -| (a5-1);
				\draw [-,thick,color=highlight] (a2-2) -- ++(0,-.5) -| (a6-1);
				\draw [-,thick,color=highlight] (a2-1) -- ++(0,-.5) -| (a4-1);
				
				\draw [-] (a4-2) -- ++(0,-.5) -| (a8-1);
				\draw [-] (a4-1) -- ++(0,-.5) -| (a7-1);
			\end{tikzpicture}
			}
			\caption{The first entry of $\subtree{t}{12}$ and the first entry of $\subtree{t}{13}$ meet in the first entry of $t$.}
		\end{subfigure}%
		\quad%
		\begin{subfigure}[t]{.65\textwidth}
			\vskip 0pt
			\centering
			\scalebox{\tikzscale}{
			\begin{tikzpicture}[remember picture,tikzmark prefix=watershed-alph-2]
				\node (a1) at (0,0) {$A(\subnode{a1-1-}{dyzx}, \subnode{a1-2-}{c})$};
				\node (a2) at (0,-1) {$B(\subnode{a2-1-}{dyz}, \subnode{a2-2-}{x})$};
				\node [fill=highlight-2!20](a3) at (2,-1) {$C(\subnode{a3-1-}{c})$};
				\node (a4) at (0,-2) {$A(\subnode{a4-1-}{y},\subnode{a4-2-}{z})$};
				\node [fill=highlight-3!20] (a5) at (-2,-2) {$D(\subnode{a5-1-}{d})$};
				\node [fill=highlight-2!20] (a6) at (2,-2) {$X(\subnode{a6-1-}{x})$};
				\node (a7) at (-1,-3) {$Y(\subnode{a7-1-}y)$};
				\node (a8) at (1,-3) {$Z(\subnode{a8-1-}z)$};
				
				\draw [-] (a1-1-) -- ++(0,-.5) -| (a2-1-);
				\draw [-] (a1-1-) ++(0,-.5) -| (a2-2-);
				\draw [-] (a1-2-) -- ++(0,-.5) -| (a3-1-);
				
				\draw [-] (a2-1-) ++(0,-.5) -| (a4-2-);
				\draw [-] (a2-1-) ++(0,-.5) -| (a5-1-);
				\draw [-] (a2-2-) -- ++(0,-.5) -| (a6-1-);
				\draw [-] (a2-1-) -- ++(0,-.5) -| (a4-1-);
				
				\draw [-] (a4-2-) -- ++(0,-.5) -| (a8-1-);
				\draw [-] (a4-1-) -- ++(0,-.5) -| (a7-1-);
			\end{tikzpicture}
			}
			\caption{The first entry of $A$ is the fixpoint of the embedding. $d$ will be in $\Sigma_{1,\mathrm{left}}$ (it joins from the left in $\subtree{t}{1}$). $x$ will be in $\Sigma_{1,\mathrm{right}}$ (it joins from the right in $t$). $c$, despite not joining in the displayed tree, will be in $\Sigma_{1,\mathrm{right}}$ as we can easily construct a larger tree where it does. Note that $z$ is not in $\Sigma_{1,\mathrm{right}}$ despite joining from the right, because it can only ever occur rooted below the lowest $A$.}
		\end{subfigure}
	\end{center}
	\caption{The construction of watershed alphabets.}
	\label{fig:watershed-alphabet}
\end{figure}
To formalize watershed alphabets, we introduce the notion of a \emph{join point} of two entries in a derivation tree, that is, the point in the tree where they first occur in the same entry of the same node. If a derivation tree with an idempotent embedding exists and some letter joins the the fixpoint somewhere inbetween the two nonterminals, we add that letter to the watershed alphabet of that fixpoint. However, since we cannot guarantee that anything rooted below the bottom nonterminal can be accelerated, we need to ignore those letters (see \cref{fig:watershed-alphabet}). All these conditions can be checked with a simple graph search in the connectivity graph of the grammar.

Formally, let $t$ be a tree, $\subtree{t}{p_1}$ an $A_1$-rooted and $\subtree{t}{p_2}$ an $A_2$-rooted subtree of arity $n_1$ and $n_2$ respectively and $\ell_1 \in [1, n_1]$, $\ell_2 \in [1, n_2]$. We say that the $\ell_1$-th entry of $\subtree{t}{p_1}$ and the $\ell_2$-th entry of $\subtree{t}{p_2}$ \emph{join at $\subtree{t}{q}$ in entry $\ell$} if $\subtree{t}{p_1} \embed_{f_1} \subtree{t}{q} \hookleftarrow_{f_2} \subtree{t}{p_2}$ with $f_1(\ell_1) = \ell = f_2(\ell_2)$ and $q$ is the longest path for which this condition holds.

To decide whether a letter should be a part of the left or the right watershed alphabet, $\Sigma_{i,\mathrm{left}}$ or $\Sigma_{i,\mathrm{right}}$, we also need a notion of order. Let $B(\vec s) \leftarrow S$ be the root production of to $\subtree{t}{q}$. Let $C_i(\vec x)$ and $C_j(\vec y) \in S$ be the nonterminals of the (not necessarily distinct) $i$-th and $j$-th child of $\subtree{t}{q}$ such that $\subtree{t}{p_1} \embed_{g_1} \subtree{t}{(q \cdot i)}$ and $\subtree{t}{p_2} \embed_{g_2} \subtree{t}{(q \cdot j)}$. We say that the $\ell_1$-th entry of $\subtree{t}{p_1}$ joins the $\ell_2$-th entry of $\subtree{t}{p_2}$ \emph{from the left} if $x_{g_1(\ell_1)}y_{g_2(\ell_2)} \subword s_\ell$, or joins \emph{from the right} if $y_{g_2(\ell_2)}x_{g_1(\ell_1)} \subword s_\ell$.

We may assume that terminal letters are only introduced individually and by leaf nodes. We denote by $L_a(a)$ a nonterminal introducing the letter $a$ as such. We say that a letter $a$ is in $\Sigma_{i,\mathrm{left}}$ if there exists an $A$-rooted tree $t$ with an $A$-rooted subtree $\subtree{t}{p}$ and an $L_a$-rooted subtree $\subtree{t}{q}$ such that there is an idempotent embedding $\subtree{t}{p} \embed_f t$ and (the only entry of) $\subtree{t}{q}$ joins the $i$-th entry of $\subtree{t}{p}$ from the left. Additionally, we require that $\subtree{t}{q}$ is not a child of $\subtree{t}{p}$.

\subsection{Concise Overapproximations}

Having introduced the notion of accelerated grammars and described how to construct an accelerated grammar $\cG_\Sigma$ from a standard \mcfg~$\cG$, we show that $\cG_\Sigma$ does not only overapproximate $\cG$, but that the two are in fact equivalent with respect to their downward closure.
Let $\preceq^n$ denote the subword ordering lifted to tuples of words of dimension $n$.

\begin{restatable}{lemma}{sigmagrammartight}
	
	$L(\cG) \subseteq L(\cG_\Sigma)$, and for every tree $t'$ of $\cG_\Sigma$ and every word $(w_1, \ldots, w_n) \in L(t')$ there is a tree $t$ of $\cG$ such that $(w_1, \ldots, w_n) \preceq^n \yield(t)$.
\end{restatable}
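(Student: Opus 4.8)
The plan is to prove the two directions separately. The inclusion $L(\cG)\subseteq L(\cG_\Sigma)$ is immediate: every production $\pi$ of $\cG$ is also present in $\cG_\Sigma$, so every derivation tree of $\cG$ is a derivation tree of $\cG_\Sigma$, and the entries $\Gamma^*$ never occur, so the yield is literally the same tuple of singleton languages. The real content is the second claim: any tuple $(w_1,\dots,w_n)$ produced by a $\cG_\Sigma$-tree $t'$ is dominated in the subword ordering $\preceq^n$ by the yield of some $\cG$-tree $t$.

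First I would argue by induction on the structure of $t'$, distinguishing which production was applied at the root. If the root production is an ordinary production $\pi$ of $\cG$, then by induction each child subtree is dominated by a $\cG$-tree, and since the construction rule $s_1,\dots,s_n$ of $\pi$ is monotone in the subword ordering on each variable (no variable repeats, and concatenation and fixed terminal insertions preserve $\preceq$), we can plug the dominating children into $\pi$ to get the dominating $\cG$-tree. If the root production is an accelerated production $\pi^{f,i,g}$, then by construction the entries split into the non-accelerated ones $i\in\bar I$, where $s'_i=s_i$, and the accelerated ones $i\in I$, where $s'_i=\Sigma_{i,\mathrm{left}}^*\,z_i\,\Sigma_{i,\mathrm{right}}^*$ with $z_i$ coming from the lifted copy $A^I$. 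The children of this production are: the lifted $A^I(z_{i_1},\dots,z_{i_m})$, the searching-decorated $(C_i^{A,g})^{I_i}$, and the projected $C_j^{I_j}$ for $j\neq i$. For the $C_j^{I_j}$ children and the $A^I$ child I would invoke \cref{lem:projected-nt-yield} (and the induction hypothesis on the underlying plain nonterminals) to obtain plain $\cG$-derivation trees whose projections dominate the corresponding child yields; for the searching-decorated child I would use the defining semantics of $A^{B,f}$ stated in the excerpt: an $A^{C_i\text{-type}}$... more precisely, a derivation tree of $(C_i^{A,g})^{I_i}$ with a given yield exists iff there is a plain $C_i$-rooted tree containing an $A$-rooted subtree embedding via $g$, with the entries in $\Img(g)$ dropped. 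I would then re-assemble: take the plain $C_i$-tree provided by the searching decoration, graft into its distinguished $A$-subtree the plain $\cG$-tree reconstructed from the lifted $A^I$ child (this is where the "fixpoint" entries get filled with possibly-non-acceleratable content), and plug the resulting plain $C_i$-tree and the plain $C_j$-trees back into the original production $\pi$ of $\cG$.

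It remains to see why the resulting $\cG$-tree's yield dominates $(w_1,\dots,w_n)$ on the accelerated entries. Here the key is the correctness of the watershed alphabets: every letter $a\in\Sigma_{i,\mathrm{left}}$ (resp.\ $\Sigma_{i,\mathrm{right}}$) witnesses, by definition, a $\cG$-tree with an $A$-rooted subtree embedding idempotently so that an extra $L_a$-leaf joins the $i$-th (fixpoint) entry from the left (resp.\ right), and that join point is not an immediate child of the lower $A$. Since $f$ is idempotent we may iterate the pumping segment between the outer $A$ and the inner $A$ arbitrarily often, each iteration contributing the corresponding watershed letters on the correct side of whatever is carried through the fixpoint entry; because the $\preceq$ ordering only asks for $w_i$ to be embeddable, a finite number of pump iterations (one per letter occurrence needed in $w_i$, ordered so that left-letters precede the carried infix which precedes right-letters) suffices. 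I would make this precise by an auxiliary claim: for each $i\in I$ and each word $u\in\Sigma_{i,\mathrm{left}}^*\,v\,\Sigma_{i,\mathrm{right}}^*$ with $v$ in the language of the $i$-th entry of the lifted/searching reconstruction, there is a plain $\cG$-tree whose $i$-th output has $u$ as a subword; the non-accelerated entries $i\in\bar I$ are handled verbatim since $s'_i=s_i$ and the relevant children are faithful up to $\preceq$.

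The main obstacle I anticipate is the bookkeeping in the re-assembly step for the searching-decorated child: one must track the embedding maps $f_i$, $g\in f/f_i$, and the projections $I_i=f_i^{-1}(\bar I)$ carefully to be sure that grafting the lifted $A^I$-tree into the $A$-subtree inside the reconstructed $C_i$-tree produces a well-formed plain $\cG$-tree whose projection to $I_i$ really matches the yield that $(C_i^{A,g})^{I_i}$ produced in $t'$ — in other words, that the decorated semantics ("productive iff such a subtree with that embedding exists") is tight enough to invert. A secondary subtlety is ensuring that the pumping argument for watershed letters does not accidentally disturb the entries in $\bar I$; this is exactly why the construction projects those children and lifts $A^I$ rather than re-using the whole pumped segment, and the proof should highlight that the pumped segment's effect on $\bar I$-entries is confined to the projected-away parts.
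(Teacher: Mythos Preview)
Your plan is essentially the paper's own proof: the first inclusion is immediate because $\cG$'s productions survive into $\cG_\Sigma$; for the second you induct on $t'$, and at an accelerated root production $\pi^{f,i,g}$ you (i) invoke \cref{lem:projected-nt-yield} to recover plain $\cG$-trees from the projected children $C_j^{I_j}$ and from the lifted $A^I$ child, (ii) use the semantics of the searching decoration $(C_i^{A,g})^{I_i}$ to obtain a plain $C_i$-tree containing an $A$-subtree with the right embedding, (iii) assemble these under $\pi$, and (iv) pump in one witness segment per watershed letter to realise the $\Sigma_{i,\mathrm{left}}^*$ and $\Sigma_{i,\mathrm{right}}^*$ factors. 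The paper carries out exactly these steps, in the same order, with the pumping done by iteratively wrapping the $A^I$-derived tree $\hat r_0$ with the per-letter witness trees $\tau$ before substituting the result at the inner $A$-position; your description of ``iterating the pumping segment between the outer $A$ and the inner $A$'' is the same manoeuvre phrased slightly differently.
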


\begin{proof}
	\label{proof:sigma-grammar-tight}
	The first statement follows from the fact that every tree of $\cG$ is also a valid tree of $\cG_\Sigma$.
	
	For the second statement, observe that if $t'$ contains no accelerated productions, then it will also not contain any decorated nonterminals. In that case $t'$ is also a tree of $\cG$ and the statement holds.
	
	We therefore show how to eliminate accelerated productions. WLOG assume that the root production is an accelerated rule $\pi^{f,i,g} = A(s_1, \ldots, s_k) \leftarrow S'$, derived from a production $\pi = A(s_1, \ldots, s_k) \leftarrow S \in \cG$. Let $(w_1, \ldots, w_n) \in \yield(t')$.
	
	From \cref{lem:projected-nt-yield} we know that if the $i$-th child of $t'$ is a $B_i^I$-rooted tree with yield $(u_i)_{i \in I}$ then there exists a $B_i$-rooted tree with yield $(u_1, \ldots, u_m)$. Similarly, if $(v_i)_{i \in J}$ is the yield of the $C^{A,g}$-rooted child, then there exists a $C$-rooted tree whose yield is $(v_1, \ldots, v_n)$. We can therefore construct $\hat t = A[t_1, \ldots, t_{n-1}, t_n]$ with $t_i$ being the $B_i$-rooted tree and $t_n$ being the $C$-rooted tree.
	
	Observe that for all $i \not\in \Img(f)$, this construction already satisfies the requirement by having $\yield(\hat t)[i] = w_i$. Furthermore, there is a path $p$ in $t_n$ such that the subtree $r = \subtree{t_n}{p}$ is $A$-rooted and embeds with $g$ into $t_n$ and hence with $f$ into $\hat t$. We will therefore aim to substitute $r$ with an $A$-rooted subtree with yield $(z_1, \ldots, z_n)$ such that for $i \in \Img(f)$, $w_i \preceq z_i$.
	
	To this end, let $\hat r$ be the $A^I$-rooted child of $t'$ that corresponds to the nonterminal $A^I((y_i)_{i \in I})$ that was newly added to the body of $\pi^{f,i,g}$ during its construction. We have $\yield(\hat r) = (o_{i_1}, \ldots, o_{i_k})$. Additionally, we know that $w_i = u o_i v$ for $u \in \Sigma_{i,\mathrm{left}}^*, v \in \Sigma_{i,\mathrm{right}}^*$. 
	
	We first construct an $A$-rooted tree $\hat r_0$ using \cref{lem:projected-nt-yield}. This tree satisfies $o_i \preceq \yield(\hat r_0)[i]$. After that, we will iteratively construct trees $\hat r_1, \hat r_2$ and so on that will incorporate successive letters from $u$ or $v$ into the $i$-th entry.
	
	Let $u = u_1 \cdots u_n$. By the construction of $\Sigma_{\mathrm{left}}$, this means that there exists an $A$-rooted tree $\tau$ and an $A$-rooted subtree $\tau'$ of $\tau$ with $\tau' \embed_f \tau$ such that the $i$-th entry is of the form $u' u_n u'' x_i v'$, where $x_i$ is traced back to the $i$-th entry of $\tau'$. Therefore substituting $\hat r_0$ for $\tau'$ in $\tau$ yields a new tree, $\hat r_1$, whose output in the $i$-th entry is $z$ with $u_n o_i \preceq z$. We iterate this process until we have eliminated all letters from $u$, and repeat a similar process with the letters of $v$ (note that for $v$ we must proceed from the first letter in order to obtain the correct ordering). The resulting subtree will be $r$. Finally, we substitute $r$ in $\hat t$ at position $n \cdot p$ to obtain $t$.
\end{proof}

All that remains to show is that accelerated grammars even allow for \emph{concise} representations of downward closures. We show this by showing that every ``large'' tree of $\cG_\Sigma$ is already subsumed by a smaller tree. Some of the proof details can be found in \refApp{\cref{app:conciseness-step}}.

\begin{restatable}{lemma}{sigmagrammarconcise}
	\label{lem:sigma-grammar-conciseness}
  Fix a bound $d$ on the dimension of $\cG_\Sigma$.
	There exists a constant $c$ such that for every tree $t$ of $\cG_\Sigma$ there exists a tree $t'$ of $\cG_\Sigma$ such that $\downc{L(t)} \subseteq \downc{L(t')}$ and $\height(t') \leq c\card{\cG_\Sigma}$.
\end{restatable}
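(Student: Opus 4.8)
The plan is to isolate a single ``shortening step'' and iterate it along a well-founded measure. Concretely, I would prove: there is a constant $c=c(d)$ such that whenever $t$ is a tree of $\cG_\Sigma$ with $\height(t)>c\,\card{\cG_\Sigma}$, one can construct a tree $t'$ of $\cG_\Sigma$ with $\downc{L(t)}\subseteq\downc{L(t')}$ and $\mu(t')<\mu(t)$ for a well-founded quantity $\mu$; since $\mu$ is well-founded, iterating the step while the tree is too tall terminates and leaves a tree of height $\le c\,\card{\cG_\Sigma}$, and the inclusion of downward closures is preserved throughout. A workable choice is $\mu(t)=\bigl(\sum_{x\in t}\arity(x),\ \card t\bigr)$ ordered lexicographically (first the sum over all nodes of the arity of the labelling nonterminal, then the number of nodes). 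As a harmless preprocessing I first replace every maximal subtree whose root has arity $0$ by a minimal derivation tree for the same nonterminal: an arity-$0$ production can carry no variable on its right, so arity-$0$ nonterminals only ever stand above arity-$0$ nonterminals, such subtrees contribute nothing to any yield, the replacement does not change $\downc{L(\cdot)}$, and afterwards every arity-$0$ subtree has height $\le\card{\cG_\Sigma}$.

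To locate the cut, assume $\height(t)>c\,\card{\cG_\Sigma}$ and fix a longest root-to-leaf path $P$. Because arity propagates downward, the positive-arity nodes of $P$ form a prefix, and by the preprocessing the arity-$0$ suffix has length $\le\card{\cG_\Sigma}$, so the positive-arity prefix has length $>(c-1)\card{\cG_\Sigma}$. Among its top $R\cdot\card{\cG_\Sigma}$ nodes, where $R=R_{d^d}(3)$ is a Ramsey number bounding monochromatic triangles under $d^d$-colourings, some nonterminal $A$ of arity $n\le d$ occurs at least $R$ times, at nodes $o_1,\dots,o_R$ from top to bottom. Colour each pair $\{a,b\}$, $a<b$, by the embedding $e_{a,b}\colon[1,n]\to[1,n]$ of $o_b$'s subtree into $o_a$'s subtree (an element of the full transformation monoid on $[1,n]$); a monochromatic triple $a<b<c$ gives $e_{a,b}=e_{b,c}=e_{a,c}=:f$ with $f=e_{a,c}=e_{a,b}\circ e_{b,c}=f\circ f$, so $f$ is idempotent. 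Put $u=o_a$, $v=o_c$: then $u$ is an ancestor of $v$, both labelled $A$, the induced embedding of $v$'s subtree into $u$'s subtree is the idempotent $f$, the path from $u$ to $v$ has at least two nodes (it passes through $o_b$), and both sit at depth $O(\card{\cG_\Sigma})$. Take $c:=R_{d^d}(3)+1$ (anything larger works too).

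The shortening step applies at $u$ the accelerated production $\pi^{f,i,g}$, where $i$ is the index of the child of $u$ on $P$, $f_i$ the embedding of that child into $u$, and $g\in f/f_i$ the embedding of $v$'s subtree into that child (so $f=f_i\circ g$). Its body receives, for each other child $j\ne i$, the projection $(\subtree{t}{j})^{I_j}$ with $I_j=f_j^{-1}(\bar I)$; as the $i$-th child, the searching-decorated tree obtained by the $t^{p,f}$-construction along the path from $u$'s $i$-th child down to $v$, projected to $I_i$; and, for the lift, the tree $(\subtree{t}{(\text{path to }v)})^I$. Correctness is the reasoning behind \cref{lem:sigma-grammar-tight} read in this direction: the entries $i\in\bar I$ are reproduced verbatim ($(\subtree{t}{j})^{I_j}$ keeps exactly the entries feeding them by \cref{lem:projected-nt-yield}, and the searching nonterminal reproduces what is derived alongside the watershed), while for a fixpoint entry $i\in I$ the only part of $w_i$ not already supplied by the lift is what the collapsed $u$-to-$v$ context (together with the non-fixpoint entries of $\subtree{t}{(\text{path to }v)}$ that flow into $i$) contributes to the left, resp.\ right, of the lifted $i$-th entry; by the definition of the watershed alphabets all these letters lie in $\accelL{i}$ resp.\ $\accelR{i}$, so $s'_i=\accelL{i}^*z_i\accelR{i}^*$ can reproduce a word $\succeq w_i$. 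Hence $L(t)\subseteq\downc{L(t')}$, and since MCFG productions are monotone for the subword order the local inclusion at $u$ lifts to the whole tree, giving $\downc{L(t)}\subseteq\downc{L(t')}$.

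It remains to see that the step lowers $\mu$, and this is where the argument has to be done carefully. Height and node count need not drop at all: the lift re-uses a projection of $\subtree{t}{(\text{path to }v)}$ and the searching subtree re-traces the $u$-to-$v$ segment, so in fact $\card{t'}$ is slightly larger than $\card t$ (this is why the arity sum, not the size, is the primary component of $\mu$, and is also ``the point'' of the decorated construction the paper alludes to). What does drop is $\sum_x\arity(x)$: since $A$ has positive arity, every searching-decorated nonterminal $X^{A,h}$ has arity $\arity(X)-\card{\Img h}<\arity(X)$ (its image is nonempty), so each of the $\ge 2$ nodes on the $u$-to-$v$ segment is replaced by a strictly lower-arity (further projected) nonterminal, while every other node is replaced by a projection, which never raises arity, and $u$ keeps its arity; thus $\sum_x\arity(x)$ strictly decreases. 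After the step we re-apply the arity-$0$ normalisation (which leaves $\sum_x\arity(x)$ and $\downc{L(\cdot)}$ unchanged and cannot increase $\card{\cdot}$), and when $\sum_x\arity(x)$ reaches $0$ the tree is entirely over arity-$0$ nonterminals and is replaced by a minimal derivation tree of its root. I expect the main obstacle to be precisely this double bookkeeping: verifying that the searching decorations genuinely shed at least one entry at \emph{every} node of the collapsed segment (an induction on the $t^{p,f}$-construction), and verifying in the correctness step that no letter discarded by the collapse — in particular none generated inside $\subtree{t}{(\text{path to }v)}$ in an entry that flows into a fixpoint — escapes the watershed alphabets; the non-child proviso in the definition of $\accelL{i}$, $\accelR{i}$ and the exact shape of the decorated productions are what must be leveraged to make this go through (cf.\ \cref{fig:sigma-grammar-visualization,fig:watershed-alphabet}).
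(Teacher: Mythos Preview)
Your correctness argument has a genuine gap, and it is exactly the point you flag at the end but do not resolve. With only three subtrees $o_a,o_b,o_c$ and the lift taken at the \emph{bottom} one, the non-fixpoint entries of $\subtree{t}{o_c}$ flow into the fixpoint entries at $o_a$ but are not supplied by $(\subtree{t}{o_c})^I$; every letter in those entries is produced at a node \emph{below} $o_c$, so for each available $A$--$A$ pair among $o_a,o_b,o_c$ the letter sits below the lower $A$ and the ``not-a-descendant'' proviso in the watershed-alphabet definition is violated---the proviso does not help you here, it is precisely what blocks the argument. A concrete failure: take $A(xy,z)\leftarrow\{A(x,y),C(z)\}$, $A(a,a)\leftarrow\emptyset$, $C(c)\leftarrow\emptyset$. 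The embedding is $f(1)=f(2)=1$, idempotent with $I=\{1\}$. With $o_a,o_b,o_c$ three consecutive $A$'s and $o_c$ the leaf, $o_a$'s first entry is $aac$, but your lift yields only $a$; the second $a$ comes from $o_c$'s entry~$2$ and is not in $\Sigma_{1,\mathrm{left}}\cup\Sigma_{1,\mathrm{right}}$ (every $L_a$-node in this grammar is a child of a leaf-$A$, hence below every $A$ in every tree). So $aac\notin\downc(\Sigma_{1,\mathrm{left}}^*\,a\,\Sigma_{1,\mathrm{right}}^*)$ and $\downc{L(t)}\not\subseteq\downc{L(t')}$.

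The paper fixes this by taking a monochromatic \emph{four}-clique, giving $t_0,t_1,t_2,t_3$ with pairwise embedding $f$; the searching decoration goes down to $t_1$ and the lift is $(t_2)^I$, with $t_3$ still sitting below. Since $\Img(f)=I$, every entry of $t_3$ lands in a fixpoint entry of $t_2$, so the non-fixpoint entries of $t_2$ receive letters only from nodes between $t_2$ and $t_3$; those letters are not below $t_3$ and are witnessed by the pair $(t_0,t_3)$. This placement also makes the termination measure trivial: the searching decoration cuts out the whole of $t_1$'s subtree (replacing it by a single leaf), the lift adds back a projection of $t_2$'s subtree, and $\card{t_2}<\card{t_1}$, so $\card{\tilde t}<\card{t}$ and one can simply induct on node count---no arity-sum bookkeeping or arity-$0$ normalisation is needed. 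Your arity-sum measure would in fact work once the lift is moved up to the middle subtree, but it is more machinery than the situation requires.
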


\begin{proof}[Proof]
	Assume first that $c$ exists and we know its value. We proceed by induction on $\card{t}$. For the empty tree, the statement trivially holds. Assume therefore that the statement holds for all trees with less than $m$ nodes and let $t$ have $m + 1$ nodes. If $\height(t) \leq c\card{\cG_\Sigma}$, the statement holds. Otherwise, our choice of $c$ will be such that we are able to find a path $p$ in $t$ such that along $p$ we find four distinct $A$-rooted subtrees $t_0, t_1$, $t_2$, and $t_3$ such that $t_i \embed_f t_{i-1}$ for some idempotent $f$. In that case, we can use the accelerated rules to cut out part of the tree. The resulting tree may not be shorter, as $p$ may not be the unique longest path in the tree, but it will contain less nodes. For the details of the tree cutting construction, please refer to \refApp{\cref{app:conciseness-step}}. 
	
	Our goal therefore is to chose $c$ in a way that guarantees the existence of such subtrees with idempotent embeddings. Let $t$ be a tree, $p$ be a path on $t$ and $\cS_A \subseteq \mathrm{prefixes}(p)$ the set of all occurrences of $A$-labeled nodes along $p$. We consider the complete graph obtained by using $\cS_A$ as a vertex set and color each edge $\{p',p'q\}$ with the embedding $f$ from $\subtree{t}{p'q}$ into $\subtree{t}{p'}$. If we find a monochromatic 4-clique $\setof{p_0, p_0p_1, p_0p_1p_2, p_0p_1p_2p_3}$, this implies $f \circ f = f$ and therefore $f$ is idempotent. Ramsey's theorem \cite{Ramsey:1930:ramsey-theorem} tells us that there exists some value $c(d)$ such that every $d$-colored complete graph of at least $c(d)$ vertices contains such a monocolored $4$-clique. In particular, since we assume the dimension $k$ and therefore the number of embeddings between nonterminals is fixed, $c(d)$ becomes a constant. We conclude by observing that on any path of length $c(d) \card{\cG_\Sigma}$, at least one nonterminal occurs $c(d)$ many times.
\end{proof}

Having established the conciseness of accelerated grammars, all that remains is to show how to construct a small NFA for the downward closure.

\begin{proof}[Proof of \cref{thm:dcimcfg}]
	 Assume that $\card{\initgrammar} = n$ and the dimension is some fixed
	 $k$. We construct the corresponding accelerated \mcfg~$\initgrammar'$. The size 
	 of $\initgrammar'$ is $\card{\initgrammar'} = n^{2^{\cO(k)}} = \poly(n)$. 
	 
	 It follows that to 
	 compute the downward closure of $\initgrammar$, it suffices to consider trees 
	 of $\initgrammar'$ of height at most $\poly(n)$. Such trees have a size of 
	 $\poly(n)^{\poly(n)} = 2^{\poly(n)}$. 
	 Therefore the yield of such a tree is also of length $2^{\poly(n)}$. 
	 As the language of an accelerated grammar is a simple concatenation of 
	 individual letters and languages of the shape $\Gamma^*$ for 
	 $\Gamma \subseteq \Sigma$, for a given tree of size $m$ we can compute an $\cO(m)$-sized NFA accepting its downward closure. 
	 Therefore, our final NFA for the downward closure is the union of the NFAs for the trees of $\initgrammar'$ of height at most $\poly(n)$. There are $\poly(n)^{\poly(n)^{\poly(n)}} = 2^{2^{\poly(n)}}$ many of those, each of size $2^{\poly(n)}$, yielding a final size of $2^{2^{\poly(n)}}$.
\end{proof}

Note that our polynomial bound on derivation tree heights only yields
a doubly exponential NFA upper bound, compared to a singly exponential
bound for CFG (because for CFG, all polynomial-height trees can be
simulated by a polynomial-stack-height pushdown automaton). As our
lower bound shows (\cref{claim:LBmcfl}), the additional blowup for MCFG is
unavoidable.

\section{Application: Context-Bounded Safety in Concurrent Programs}\label{sec:applications}

\newcommand{\mmap}{\mathbf{m}}
\newcommand{\nmap}{\mathbf{n}}
\newcommand{\multiset}[1]{{\mathbb{M}[ #1 ]}}
\newcommand{\multi}[1]{{\llbracket #1 \rrbracket}}
\newcommand{\parikh}{{\Psi}}

\newcommand{\pref}{\mathsf{pref}}

\newcommand{\Runs}[1]{\mathsf{Runs}(#1)}
\newcommand{\Runsk}[1]{\mathsf{Runs}_k(#1)}
\newcommand{\Preruns}[1]{\mathsf{Preruns}(#1)}
\newtheorem{problem}[definition]{Problem}

\newcommand{\spdownc}[1]{{#1}\mathord{\Downarrow}}
\newcommand{\spdownck}[2][k]{{#2}\mathord{\Downarrow_{#1}}}
\newcommand{\parikhspdownck}[2][k]{{#2}\mathord{\Downarrow^{\parikh}_{#1}}}
\newcommand{\alphdownck}[3][k]{{#2}\mathord{\downarrow_{#1,#3}}}
\newcommand{\parikhalphdownck}[3][k]{{#2}\mathord{\downarrow^{\parikh}_{#1,#3}}}
\newcommand{\cA}{\mathcal{A}}
\newcommand{\Natomega}{\Nat_\omega}

Let us provide an application of our results beyond the immediate settings of
bounded-\streewidth\ systems and \mcfg{}s.
We consider the model of concurrent programs,
where each individual concurrent process already computes an MCFL,
and more processes can be created dynamically during execution.
Here we can adapt our results on downward closures to reason about safety verification.
To be precise, we consider safety for concurrent programs over MCFL under the restriction
of bounded context switching.
This restriction is important, because without it safety becomes undecidable
(even if each process only computes a CFL \cite{QadeerR05,ramalingam2000context}).

To properly talk about this setting, we first need to recall some auxiliary notions.

\myparagraph{Multisets and the Parikh image}
Since we assume no relations between our concurrent processes, it makes sense to model the collection
of their individual configurations as some unordered structure.
However, a mere set does not suffice, since that would not properly track duplicate configurations.
As such we use the common generalization to multisets, where each potential element is
assigned a natural number that specifies how often it occurs.
A set can then be seen as a multiset, where every assignment is either to $0$ or to $1$.

Formally, a \emph{multiset} $\mmap\colon X \rightarrow\Nat$ over a set $X$ maps each
element of $X$ to a natural number.
The size $|\mmap|$ of a multiset $\mmap$ is defined as $|\mmap|=\sum_{x \in X}\mmap(x)$.
We write $\multiset{X}$ to denote set of all multisets over $X$.
For a subset $Y \subseteq X$, we identify it with the multiset in $\multiset{X}$
mapping each $x \in Y$ to $1$ and everything else to $0$.

In our model, when several processes are spawned in a row without interruptions in-between,
the precise order of these spawns does not matter.
As such, let us recall how to obtain an unordered multiset from a sequence of elements:
The Parikh image $\parikh(w) \in \multiset{\Sigma}$ of a word $w \in \Sigma^*$
is the multiset such that for each letter $a \in \Sigma$ we have $\parikh(w)(a) = |w|_a$,
where $|w|_a$ denotes the number of occurrences of $a$ in $w$.
When denoting Parikh images, we also identify multisets in $\multiset{\Sigma}$
with vectors in $\Nat^{\sizeof{\Sigma}}$ (for some arbitrary, but fixed, total order on $\Sigma$).

\myparagraph{Concurrent Programs}
We consider the model of concurrent programs, which are defined over an arbitrary class of languages $\cC$
(with some mild restrictions on $\cC$)~\cite{DBLP:conf/icalp/0001GMTZ23},
though here we look at the case where $\cC$ is the class of MCFLs.
The complete definition of the model and its related notions can be found in \refApp{\cref{app:programs-formal-def}}; we only give the relevant intuition here.

A \emph{concurrent program} ($\CP$) $\ap$ over MCFLs consists
of a set of \emph{global states} $D$, an alphabet of \emph{process names} $\Sigma$, 
and for each $a \in \Sigma$ a \emph{process language} $L_a$ that is an MCFL
(given e.g.\ as an \mcfg).
For each process name $a \in \Sigma$, its corresponding language is of the form
$L_a \subseteq aD \big(\Sigma \cup (D \times D)\big)^* (D \cup D \times D)$, where the letters in $D$
denote the \emph{initial} and a possible \emph{final state} of a particular process' execution,
the letters in $\Sigma$ denote its \emph{spawns},
and the letters in $D \times D$ denote its \emph{context switches}.

The program $\ap$ starts in some initial global state $d_0$ with some initial multiset
of processes.
To run a process from current state $d$, $\ap$ nondeterministically schedules a process
whose execution either has not yet started and can start in $d$,
or whose execution when it was last scheduled ended in a context switch
$(e,d)$ for some global state $e \in D$.
The execution is then extended until either a context switch $(d',d'')$ or a final state
$d'$ occurs, upon which the current global state transitions to~$d'$.
All spawns encountered along the way are added to the multiset of processes,
and whenever a process reaches the final state in its execution, it is removed
from said multiset.
Note that any spawned process with name $a \in \Sigma$ is restricted to executions from the
process language $L_a$.

\myparagraph{Bounded context switching}
Most decision problems are undecidable for concurrent programs,
if we impose no restrictions on their behavior \cite{QadeerR05,ramalingam2000context}.
As such we consider the popular restriction of bounded context switching.
Like before, we do not give the full formal definitions from~\cite{DBLP:conf/icalp/0001GMTZ23},
which can also be found in \refApp{\Cref{app:programs-formal-def}}.
Rather, we only give the relevant intuition.

For a number $k \in \Nat$, also called a \emph{context bound},
a run of a $\CP$ is $k$-context-bounded, if each individual process is scheduled
at most $k+1$ times.
Alternatively, one says that each process \emph{experiences} $k$ context switches.
Note that one considers only the part of a process' execution that is actually scheduled as experienced,
and if the symbol reached after $k+1$ schedulings is in $D \times D$, it is not counted
as an experienced $(k+1)$th context switch. 

Now we can define the decision problem of interest.

\begin{problem}[Context-bounded safety for concurrent programs]\ \label{problem:cb-reach}
\begin{description}
  \item[Given] A unary-encoded context-bound $k \in \Nat$, a concurrent program $\ap$ over the MCFLs, and one of its global states $d_f\in D$, where each process language $L_a$ of $\ap$ is given as an \mcfg\ of constant dimension and rank.
  \item[Question] Is there a $k$-context-bounded run of $\ap$ that reaches the global state $d_f$?
\end{description}
\end{problem}

\begin{theorem} \label{thm:CBSafety2EXPSPACE}
  Problem~\ref{problem:cb-reach} is $\TWOEXPSPACE$-complete.
\end{theorem}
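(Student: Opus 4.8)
The lower bound is inherited: already for $\CP$ over context-free grammars with the fixed context bound $k=1$, reaching a prescribed global state is $\TWOEXPSPACE$-hard~\cite{DBLP:conf/icalp/0001GMTZ23}, and every CFG is a $d$-\mcfg$(r)$ with $d=r=1$. So the work is the $\TWOEXPSPACE$ upper bound, and the plan is to reuse the DCPS route~\cite{AtigBQ11,DBLP:conf/icalp/BaumannMTZ20,DBLP:conf/icalp/0001GMTZ23}: reduce context-bounded safety to control-state reachability (equivalently, coverability) in a VASS $\cA$. Here $\cA$ simulates a $k$-context-bounded run of $\ap$; its control states carry the current global state of $\ap$ (plus, transiently, a position in a finite automaton for the currently scheduled process), and its counters form a multiset recording, for each pending process, its name, how many context switches it has already experienced, and the state from which it will resume. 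Scheduling a process for one \emph{segment} (the stretch of its execution between two consecutive context switches, or up to termination) is a sub-run of $\cA$ that consumes one matching token, adds a token for each process spawned in the segment, and emits one token for the next stage. As a mild preprocessing, we may intersect the \mcfg\ for each $L_a$ with the regular set of words having at most $k+1$ context-switch markers (a polynomial blow-up for unary $k$, preserving the dimension and rank), so that these markers occur at definitely many positions.

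The crucial point is that the only information about $L_a$ that this construction consumes is, for each segment, the pair of boundary states together with the \emph{Parikh image} $\parikh(\cdot)$ of the spawns produced in between; the order of spawns within a segment is irrelevant, since within a segment the global state does not change and spawns merely enter a multiset. Consequently the VASS built from $L_a$ coincides with the one built from its \emph{partially permuting downward closure} $L_a^{\Downarrow}$ — the downward closure computed at the level of these per-segment Parikh images (so spawn letters may be permuted freely within a segment) — and this VASS already applies downward closure to the per-segment spawn multisets, so its correctness (that reachability of $d_f$ is unchanged) is inherited from~\cite{DBLP:conf/icalp/0001GMTZ23} together with a monotonicity argument (pending processes that are never fully scheduled cannot obstruct reaching $d_f$). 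Instantiating $\cA$ with an NFA for $\downc{\langof{\initgrammar_a}}$ from \cref{thm:dcimcfg} costs a doubly exponential blow-up — which by \cref{claim:LBmcfl} is unavoidable for the word-level downward closure — and would yield only $\THREEEXPSPACE$; the point of $L_a^{\Downarrow}$ is that it is only \emph{singly} exponential.

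The \textbf{compression step} computes, for each $a\in\Sigma$, a singly exponential NFA for $L_a^{\Downarrow}$ directly from the accelerated \mcfg~$\initgrammar_a'$ of the proof of \cref{thm:dcimcfg} (\cref{sec:accelerated-grammars}), rather than from the word-level downward-closure automaton. For fixed dimension and rank $\initgrammar_a'$ has size $\poly(n)$, and its derivation trees of height $\poly(n)$ already generate a language with the same downward closure. Such a tree has a yield of length $2^{\poly(n)}$ that is a concatenation of single letters and blocks $\Gamma^*$ with $\Gamma\subseteq\Sigma$; after the intersection above, no context-switch marker lies in any $\Gamma$, so splitting the yield at the marker letters, each of its $\le k+1$ segments produces a spawn Parikh vector lying in $[0,M]^{\Sigma}$ with $M=2^{\poly(n)}$, whose downward closure is a single ``box'' determined by the letters occurring in some $\Gamma^*$ of that segment (which become unbounded) together with a vector of counts, each at most $M$, for the remaining letters. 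Modulo permutation within segments, a height-$\poly(n)$ tree is hence summarized by a tuple of $\le k+1$ such boxes plus the boundary states, and — since $k$ is unary and each per-segment box ranges over $2^{\poly(n)}$ possibilities — there are only $2^{\poly(n)}$ such tuples. Using the semilinearity of the Parikh images of MCFL together with a reachability/pumping analysis of $\initgrammar_a'$ (of the same flavour as the watershed-alphabet computation of \cref{sec:accelerated-grammars}), one picks out exactly the realizable tuples, without ever materializing the doubly-exponentially many height-$\poly(n)$ trees. The union of the corresponding $2^{\poly(n)}$ box-NFAs, each of size $2^{\poly(n)}$ (a bounded count $c\le M$ costs $c+1$ states), is the desired singly exponential NFA for $L_a^{\Downarrow}$. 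Feeding these NFAs into the VASS construction above yields a $\cA$ of size $2^{\poly(n)}$ with unit updates; since coverability in VASS is in $\EXPSPACE$, i.e.\ solvable in space $2^{\poly(|\cA|)}$, the overall procedure runs in space $2^{2^{\poly(n)}}$, proving membership in $\TWOEXPSPACE$.

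The main obstacle is the compression step, on two counts. First, one must pin down $L_a^{\Downarrow}$ precisely and verify rigorously that substituting it for $L_a$ leaves the set of reachable global states unchanged; this is a run-level argument that has to handle the dropping of context-switch markers (the merging of segments) exactly as in the DCPS treatment of word-level downward closures, and additionally check that permuting spawns within a segment is innocuous. Second, the realizable box-summaries must be enumerated in singly exponential time by reasoning on the structure of $\initgrammar_a'$ — reusing the accelerated-grammar and watershed machinery of \cref{sec:accelerated-grammars} together with the semilinearity of MCFL Parikh images, and in particular the fact that the downward closure of such a semilinear set of bounded magnitude over a (possibly large) alphabet still has a singly exponential description — rather than via the doubly exponential automaton of \cref{thm:dcimcfg}. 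The remaining pieces — the precise shape of the scheduling gadgets of $\cA$ and the bookkeeping that unscheduled processes are harmless — are routine and follow the established DCPS constructions.
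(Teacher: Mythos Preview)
Your overall architecture matches the paper's: replace each process language by a singly-exponential NFA for its \emph{partially permuting} downward closure (keeping the context-switch markers exact, taking Parikh images within each segment), then invoke the $\EXPSPACE$ procedure for $\CP$ over regular languages. You also correctly identify that the word-level downward closure of \cref{thm:dcimcfg} is one exponential too large and that Parikh compression is the way out.

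The gap is the compression step itself. You propose to ``enumerate the realizable box-summaries'' via ``semilinearity of MCFL Parikh images together with a reachability/pumping analysis,'' and you yourself flag this as the main obstacle. This route is not obviously tractable: semilinearity of the Parikh image of an MCFL is a statement about the whole word, not about the tuple of per-segment Parikh vectors you need, and extracting a singly-exponential semilinear description of that refined data is a separate problem you have not addressed. The paper's solution is more direct and avoids semilinearity entirely: it builds an NFA that explores derivation trees of the accelerated grammar via depth-first search, storing at each moment only the current root-to-leaf branch together with, at each node, the already-computed partial yield of the left siblings --- but with every maximal $\Sigma$-infix replaced by its Parikh vector (in $\Natomega^{|\Sigma|}$, using $\omega$ for the $\Gamma^*$ pieces). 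Since the branch has length $\poly(n)$ and each stored Parikh vector has entries bounded by $2^{\poly(n)}$, every state has a polynomial-size description, so the NFA is singly exponential. This is exactly the constructive realization of your ``enumerate realizable tuples'' step.

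A smaller point: your claim that ``after the intersection no context-switch marker lies in any $\Gamma$'' is not correct. Bounding the \emph{number} of markers in derivable words does not remove them from the watershed alphabets --- a single occurrence in the right position suffices to put a marker into some $\Gamma$. The paper handles this inside the NFA: whenever a $\tilde\Gamma^*$ block is encountered, the automaton nondeterministically guesses a word of length $\le k$ in $(\tilde\Gamma\cap\Theta)^*$ to stand for the markers, interleaving it with $\omega$-vectors for the remaining $\tilde\Gamma\cap\Sigma$ letters.
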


Note that $\TWOEXPSPACE$-hardness already holds already when $k=1$
and $\ap$ is over the CFLs \cite{DBLP:conf/icalp/BaumannMTZ20}.

The rest of this section is dedicated to proving \Cref{thm:CBSafety2EXPSPACE}.
To give a brief idea, we want to replace each \mcfg{} process with a finite-state process,
where the latter overapproximates the former in a way that preserves context-bounded safety,
by computing a partially permuting variant of its downward closure (see below).
Then, in the finite-state setting, we can use a known algorithm
to decide context-bounded safety in $\EXPSPACE$.
Since the aforementioned replacement of processes constitutes an exponential blow-up,
we get a $\TWOEXPSPACE$-procedure in total.

We require constant dimension and rank for the \mcfg{}s in the input,
so that we can perform constructions for certain closure properties in polynomial time
(see \Cref{lem:mcfg-closure-poly} below).
However, these requirements are quite natural:
For example, when we invoke \Cref{thm:TA2MCFL} to turn a constant-width tree automaton into an \mcfg,
the latter will also have constant dimension, and even a constant rank of $2$.

\myparagraph{Special downward closures}
To prove the above theorem, we need to define (and later compute)
some special downward closure variants.
The first variant essentially allows us to retain information about all $k$ context switches,
but potentially lose any spawns occurring between these switches.
For a language $L \subseteq \Gamma^*$, a subalphabet $\Theta \subseteq \Gamma$, and a number $k$
let $\alphdownck{L}{\Theta}$ denote its \emph{alphabet-preserving downward closure},
where subwords are not allowed to drop letters from $\Theta$,
and have to contain exactly $k$ of them.
This is slightly different from the definition in \cite{DBLP:journals/pacmpl/BaumannMTZ22},
where words with fewer than $k$ letters are also included,
but our version fits our purpose slightly better.
Formally, $\alphdownck{L}{\Theta} :=
\{u \in \Gamma* \mid \exists v \in L \colon u \subword v \wedge |u|_\Theta = |v|_\Theta = k\}$.
Here, $|w|_\Theta$ denotes the number of occurrences of letters from the alphabet $\Theta$
in the word $w$.

The next downward closure variant allows us to also lose information about the exact order
of spawns in each segment of a process.
Formally, we define the \emph{partially permuting downward closure}:
Let $\parikhalphdownck{L}{\Theta}$ denote the variant of $\alphdownck{L}{\Theta}$,
where every maximal infix in $(\Gamma\setminus\Theta)^*$
only needs to be preserved up to its Parikh-image~$\parikh(-)$.
Formally $u = u_0 \theta_1 u_1 \cdots \theta_k u_k \in \parikhalphdownck{L}{\Theta}$
if and only if $u_0,\ldots,u_k \in (\Gamma\setminus\Theta)^*$,
$\theta_1,\ldots,\theta_k \in \Theta$, and
there is $v = v_0 \theta_1 v_1 \cdots \theta_k v_k \in \alphdownck{L}{\Theta}$
such that $\parikh(u_i) = \parikh(v_i)$ for every $0 \leq i \leq k$.

\myparagraph{Deciding context-bounded safety}
We need two more auxiliary lemmas before we can prove \cref{thm:CBSafety2EXPSPACE}.
The first allows us to compute simple closure properties for \mcfg{}s in polynomial time,
provided they have constant dimension and rank.
The second computes an NFA for the partially permuting downward closure. In particular, with the relaxation of preservation up to the Parikh-image, we can compute downward closures of {\mcfg}s in $\EXPSPACE$ instead of $\TWOEXPSPACE$.

\begin{lemma} \label{lem:mcfg-closure-poly}
  Given an \mcfg\ $\grammar$ of constant dimension and rank, and an NFA $\cA$,
  we can compute the following in polynomial time:
  \begin{enumerate}[label={(\alph*)}]
    \item an \mcfg\ for the intersection $\langof{\grammar} \cap \langof{\cA}$, and \label{item:mcfg-nfa-intersect}
    \item an \mcfg\ for the language $\pref(\langof{\grammar})$
    containing all prefixes of words in $\langof{\grammar}$. \label{item:mcfg-prefix-lang}
  \end{enumerate}
\end{lemma}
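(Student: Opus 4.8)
The plan is to prove both parts by direct grammar transformations that store only a bounded amount of extra information in the nonterminals; constant dimension and rank then guarantee that the blow-up is polynomial.

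\textbf{Part (a).} For the intersection with an NFA $\cA$ with state set $Q$, I would first make $\cA$ $\varepsilon$-free and precompute, for all $p,q\in Q$ and every terminal block $t$ appearing in a production of $\grammar$, whether $\cA$ can go from $p$ to $q$ while reading $t$. Then I would apply a Bar-Hillel-style annotation: for every nonterminal $A$ of arity $n$, introduce the nonterminals $A^{(p_1,q_1),\ldots,(p_n,q_n)}$, with the intended meaning that $A^{(p_i,q_i)_i}\to^*(w_1,\ldots,w_n)$ iff $A\to^*(w_1,\ldots,w_n)$ and each $w_i$ drives $\cA$ from $p_i$ to $q_i$ (an $\varepsilon$-component forces $p_i=q_i$). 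For each production $\pi=A(s_1,\ldots,s_n)\leftarrow\{B_1(\vec x_1),\ldots,B_m(\vec x_m)\}$, and each choice of a state pair for $A$'s output tuple and a state pair for every variable occurrence, I would keep $\pi$ (with the same right-hand side, now annotated consistently) precisely when, for each $j$, ``reading'' $s_j$ — terminals consumed normally, each variable acting as a jump from its left to its right state — takes $\cA$ from the left state of $A$'s $j$-th component to its right state. Since there are at most $d$ components and at most $d\cdot r$ variable occurrences per production, there are $|Q|^{O(1)}$ annotated productions per original production, each checkable in time polynomial in $|\pi|$ and $|Q|$; the construction is polynomial and visibly preserves the dimension $d$ and rank $r$. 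Finally I would add a fresh start symbol $S'$ with productions $S'(x)\leftarrow\{S^{(q_0,q_f)}(x)\}$ for the initial state $q_0$ and each accepting state $q_f$ of $\cA$.

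\textbf{Part (b).} For the prefix language I would again transform the grammar directly. The key observation is that cutting the final word $w$ at one point induces, at each node of a derivation tree, a \emph{mode}: each of its (at most $d$) components is entirely to the left of the cut (\textsc{keep}), entirely to the right (\textsc{drop}), or contains the cut (\textsc{split}), with at most one \textsc{split}. Crucially, the left/right placement of a node's own components within $w$ is decided by its ancestors, so \emph{any} assignment of these three values with at most one \textsc{split} is an admissible mode of a nonterminal; there are only $O(d\,2^d)=O(1)$ of them per nonterminal. I would therefore introduce one nonterminal $A^\mu$ for each $A$ and each admissible mode $\mu$ (its arity being the number of \textsc{keep} components plus one if a \textsc{split} component is present), with $A^\mu$ generating the projection to the \textsc{keep} components together with the designated prefix of the \textsc{split} component. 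Productions of $A^\mu$ come from each production $\pi=A(s_1,\ldots,s_n)\leftarrow S$ of $\grammar$ by: leaving each \textsc{keep} component $s_j$ unchanged and forcing the component of each of its variables' children into \textsc{keep}; deleting each \textsc{drop} component and forcing its variables' children-components into \textsc{drop}; and, for the unique \textsc{split} component $s_{j^*}=t_0x_1t_1\cdots x_kt_k$, nondeterministically choosing the cut location — inside a terminal block $t_p$ (output $t_0x_1\cdots x_p$ followed by a prefix of $t_p$; variables before it \textsc{keep}, after it \textsc{drop}) or inside some variable $x_p$'s contribution (output $t_0x_1\cdots t_{p-1}x_p$; $x_p$'s child-component \textsc{split}, earlier variables \textsc{keep}, later ones \textsc{drop}). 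Deleted variables are removed and the corresponding right-hand-side nonterminal is discarded if all of its components were removed; otherwise it is replaced by the appropriate $B_i^{\mu_i}$, where $\mu_i$ is the mode assembled from the per-component decisions just made. With $O(|s_{j^*}|)$ choices of cut location this is polynomial (for fixed $d$). Unioning the modes of the original start symbol (which has arity $1$, so its modes yield ``whole word'', ``proper prefix'', and ``$\varepsilon$'') gives a grammar for $\pref(\langof{\grammar})$.

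\textbf{Main obstacle.} In (a) the only delicate points are bookkeeping: stating the chaining condition on each $s_j$ correctly when variables may be bound to $\varepsilon$, and observing that the bound on the number of annotated productions uses constant dimension \emph{and} rank. The genuinely subtle point is in (b): one must argue that the purely local ``mode'' data suffices to enforce that all the \textsc{split}/\textsc{keep}/\textsc{drop} choices scattered across the interleaved components of a derivation tree are globally consistent with a single cut point — i.e.\ that a node may freely designate any subset of its components as \textsc{keep} because the ordering is resolved by its ancestors, and that the mode of each child is then uniquely forced by the parent's production and mode. Once that is set up, correctness of both constructions follows by a routine induction on derivation trees, and the polynomial size bounds follow from the dimension (and, for (a), the rank) being constant.
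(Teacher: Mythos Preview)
Your argument for part (a) is essentially the paper's: both carry out the Bar-Hillel/triple construction, annotating each nonterminal with a tuple of state pairs and keeping a production exactly when the chaining condition along each $s_j$ is satisfiable; the size bound $|Q|^{O(d\cdot r)}$ is the same.

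For part (b) you take a genuinely different route. The paper does \emph{not} build the prefix grammar directly; instead it proves the more general closure of MCFL under rational transductions (in polynomial time for fixed $d,r$) by writing $\cT(L)=h\big((L\shuffle\Delta^*)\cap R\big)$ in Nivat style, implementing the shuffle by inserting fresh unary nonterminals, reusing part (a) for the intersection with $R$, and applying the homomorphism $h$ by rewriting terminals. The prefix language is then the image under a two-state transducer. Your construction is more elementary and self-contained: it encodes a single cut point via a finite ``mode'' (\textsc{keep}/\textsc{drop}/\textsc{split}) on each component and propagates it down the production. The insight you isolate---that the placement of a node's components in the final word is fixed by its ancestors, so locally any mode with at most one \textsc{split} is admissible and the children's modes are then forced---is exactly what makes this work, and it keeps both dimension and rank unchanged. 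The paper's route buys generality (full closure under rational transductions, which it in fact needs elsewhere) and modularity; yours buys directness and avoids the detour through shuffling and re-intersecting. One small point worth making explicit: when you discard a child $B_i$ whose mode is all-\textsc{drop}, you are silently relying on $B_i$ being productive (otherwise you would accept prefixes of non-words); a one-line remark that unproductive nonterminals are removed in a preprocessing pass would close this.
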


We describe the full proofs in \refApp{\Cref{app:mcfg-closure}
(see \Cref{cor:mfcg-nfa-intersection,cor:mcfg-prefix-lang})}.
To give an idea, the proof of \Cref{lem:mcfg-closure-poly}\ref{item:mcfg-nfa-intersect}
is a relatively simple adaptation of the
triple construction for intersecting a CFG and an NFA.

For \Cref{lem:mcfg-closure-poly}\ref{item:mcfg-prefix-lang} the idea is not difficult either,
but requires some additional tools from language theory.
In fact, we show something more general,
namely that given an \mcfg{} $\cG$ and a finite-state transducer $\mathcal{T}$, we can compute the image
of $\langof{\cG}$ under $\mathcal{T}$ in polynomial time.
Here a \emph{(finite-state) transducer} is an NFA that in addition to reading a symbol
may also write a symbol on each transition,
thus basically generating a pair of input word and output word for each accepting run.
It is not hard to see how to construct a transducer that maps any word to each of its prefixes.

To compute the image of $\langof{\cG}$ under $\mathcal{T}$ we construct an intermediary grammar $\cG'$
that arbitrarily shuffles transitions of $\mathcal{T}$ between words generated by $\cG$.
Then we use \Cref{lem:mcfg-closure-poly}\ref{item:mcfg-nfa-intersect} to intersect with an NFA
that checks whether the transitions form a valid run over the input word given by the letters of $\cG$.
Finally, we project away the input letters and replace transitions with their output letters to obtain
the desired \mcfg{}. None of this is hard to do in polynomial time.

As a remark, it was previously known that the MCFL are closed under taking the image under
a transducer (also called a \emph{rational transduction}) \cite{SekiMFK91},
but we could not find a source regarding effective polynomial complexity for our case.
Also note that to break up the image under a transducer $\mathcal{T}$ into several steps,
we use an adaptation of Nivat's theorem
(see e.g. \cite[Chapter III, Theorem~3.2]{berstel2013transductions}).

\begin{restatable}{lemma}{parikhdcl}\label{lem:parikh-dcl}
  Let $\grammar$ be an accelerated \mcfg\ with $\Theta \subseteq \Gamma$,
  where for every tree $t$~of $\grammar$ there is a tree~$t'$ of $\grammar$,
  such that $\downc{L(t)} \subseteq \downc{L(t')}$ and $\height(t') \leq c\sizeof{\grammar}$ for a constant $c$ as in \cref{lem:sigma-grammar-conciseness}. We can compute an NFA for $\parikhalphdownck{\langof{\grammar}}{\Theta}$
  in time $2^{\poly(k\cdot \sizeof{\grammar})}$.
\end{restatable}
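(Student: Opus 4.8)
The plan is to turn the computation into a finite, singly‑exponential search over bounded‑height derivation trees, exploiting that for the \emph{partially permuting} closure the only data retained about the infixes between context switches is their Parikh image. Fix a sequence $\sigma = (\theta_1,\ldots,\theta_k) \in \Theta^k$; there are at most $\sizeof{\grammar}^k \le 2^{\poly(k\sizeof{\grammar})}$ of them. Unfolding the definitions, a word $u = u_0 \theta_1 u_1 \cdots \theta_k u_k$ with $u_i \in (\Gamma\setminus\Theta)^*$ lies in $\parikhalphdownck{\langof{\grammar}}{\Theta}$ iff the tuple $(\parikh(u_0),\ldots,\parikh(u_k))$ lies in the downward closure — with respect to the componentwise product order on $(\Nat^{\Gamma\setminus\Theta})^{k+1}$ — of $P_\sigma := \setof{(\parikh(v_0),\ldots,\parikh(v_k)) \mid v_0 \theta_1 v_1 \cdots \theta_k v_k \in \langof{\grammar}}$. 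Hence it suffices to compute, for each $\sigma$, a set $R_\sigma$ of at most $2^{\poly(k\sizeof{\grammar})}$ dominating vectors in $(\Natomega^{\Gamma\setminus\Theta})^{k+1}$ with $\downc{R_\sigma} = \downc{P_\sigma}$: from these the NFA is assembled directly, since for a fixed $\sigma$ and a fixed dominating vector $(\vec d_0,\ldots,\vec d_k)$ the corresponding words form $E_0\,\theta_1\,E_1 \cdots \theta_k\,E_k$, where $E_i$ is the set of words over $\Gamma\setminus\Theta$ whose Parikh image is $\le \vec d_i$ (recognised by a ``Parikh‑box'' automaton with $2^{\poly(\sizeof{\grammar})}$ states when the finite entries of $\vec d_i$ are bounded by $2^{\poly(\sizeof{\grammar})}$, with a self‑loop on $a$ wherever $\vec d_i(a)=\omega$); one then takes the union over all $\sigma$ and all dominating vectors.

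To compute $R_\sigma$ we invoke the conciseness hypothesis. Every tree of $\grammar$ reduces, for downward‑closure purposes, to a tree $t'$ of height at most $c\sizeof{\grammar}$, hence with at most $\sizeof{\grammar}^{c\sizeof{\grammar}} = 2^{\poly(\sizeof{\grammar})}$ nodes; its yield $L(t')$ is a concatenation of at most $2^{\poly(\sizeof{\grammar})}$ factors, each a single terminal or a language $(\Gamma')^*$ with $\Gamma'\subseteq\Gamma$. The \emph{compression step} is the observation that for such a bounded tree, together with a choice of which $k$ of its (structural and accelerated) $\Theta$‑occurrences realise $\sigma$, the Parikh image over $\Gamma\setminus\Theta$ of the $i$‑th segment is $\omega$ in every coordinate $a$ touched by some accelerated factor $(\Gamma')^*$ lying in that segment, and equals the number of non‑accelerated occurrences of $a$ otherwise — a \emph{single} vector in $(\Natomega^{\Gamma\setminus\Theta})^{k+1}$. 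Although there are doubly‑exponentially many such bounded annotated trees, such a vector is determined by its $\omega$‑pattern and its finite entries, each an integer in $[0,2^{\poly(\sizeof{\grammar})}]$; hence only $2^{\poly(k\sizeof{\grammar})}$ distinct vectors arise, and their collection is $R_\sigma$. The set $R_\sigma$ itself is computed without enumerating the trees, by a bottom‑up fixpoint over the finitely many annotated states a node can carry: the current nonterminal together with, for each of its constantly many entries (the dimension of $\grammar$ is fixed), the ordered sequence of contained $\Theta$‑letters (as indices into $\sigma$, tagged structural or accelerated) and, for each maximal $\Theta$‑free infix of the entry, a compressed Parikh vector over $\Gamma\setminus\Theta$ (finite entries capped at $2^{\poly(\sizeof{\grammar})}$, an $\omega$‑flag once an accelerated factor contributes). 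Iterating the ``one production level'' transition $c\sizeof{\grammar}$ times from the terminal productions stabilises at a set of states of size $2^{\poly(k\sizeof{\grammar})}$; reading off the states reachable at the start nonterminal yields $R_\sigma$, and assembling the Parikh‑box automata over all $\sigma$ gives the NFA in time $2^{\poly(k\sizeof{\grammar})}$.

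The main obstacle is the soundness‑and‑completeness argument connecting the bounded annotated trees back to $\parikhalphdownck{\langof{\grammar}}{\Theta}$. Soundness is easy, since an accelerated factor $(\Gamma')^*$ genuinely produces every word over $\Gamma'$, so every compressed vector is witnessed by an actual word of $\langof{\grammar}$ with the accelerated $\Theta$‑contributions instantiated to hit the exact count $k$. Completeness requires transferring the \emph{plain} downward‑closure guarantee of \cref{lem:sigma-grammar-conciseness} to the \emph{alphabet‑preserving}, segment‑wise setting: given a witness $v\in\langof{\grammar}$ with exactly $k$ context switches forming $\sigma$, one must show that the cut‑down tree $t'$, suitably annotated, still yields a word that preserves the $\Theta$‑structure $\sigma$ \emph{and} Parikh‑dominates $v$ segment by segment. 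This hinges on the fact that cutting only removes subtrees (so $t'$ has no more structural $\Theta$‑letters than $v$) while the deleted mass reappears as accelerated factors, so one can re‑apportion the $k$ context switches between the surviving structural occurrences and a bounded choice of accelerated occurrences, and inflate the surrounding accelerated factors to cover the segments of $v$; the delicate point is that this alignment must be consistent with the order of concatenations in the productions along $t'$, which is precisely why the annotation tracks, per entry, the \emph{ordered} sequence of $\Theta$‑indices rather than merely their number. The compression counting and the fixpoint computation, by contrast, are routine once this reduction is established.
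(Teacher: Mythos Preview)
Your proposal is correct and shares the paper's key idea---compressing $(\Gamma\setminus\Theta)$-infixes to Parikh images with $\omega$-entries for letters contributed by accelerated factors---but organizes the construction differently. The paper builds the NFA directly: each state encodes the current branch of a depth-first traversal of a bounded-height derivation tree together with Parikh-compressed partial yields of the left siblings; the NFA extends and backtracks on the fly, rejects if more than $k$ letters from $\Theta$ accrue, and outputs the word at the end. You instead fix the $\Theta$-sequence $\sigma$ up front, precompute a set $R_\sigma$ of dominating vectors by bottom-up saturation over annotated nonterminals, and assemble the NFA from Parikh-box gadgets. Both yield the same exponential bound; your version makes the vector count more explicit, while the paper's avoids the outer iteration over all $\sigma$.

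One point to sharpen: your completeness argument says ``cutting only removes subtrees,'' which is not an accurate description of the construction (it projects, introduces searching decorations, and grafts in a copy of a deeper subtree). What you actually need---and what the paper also relies on implicitly---is the stronger inclusion $L(t)\subseteq L(t')$ rather than merely $\downc{L(t)}\subseteq\downc{L(t')}$. This holds because each cutting step satisfies $L(t)\subseteq L(\tilde t)$ (\cref{lem:alphabet-grammar-rule-enlarges}), and the induction in the proof of \cref{lem:sigma-grammar-conciseness} preserves it. From $v\in L(t')$ you then get that $t'$ has at most $k$ structural $\Theta$-letters, so your annotation is always realizable. The paper's own proof is equally terse at this step (``by our requirements, we only need to consider derivation trees of height $c\sizeof{\grammar}$''), so you are not omitting anything the paper supplies; you just need to cite the stronger inclusion rather than reason about subtree removal.
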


Before we argue how to prove \Cref{lem:parikh-dcl}, let us see how it implies
\cref{thm:CBSafety2EXPSPACE}.
The proof idea is rather simple.
We use our downward-closure construction and \mcfg{} closure properties to turn each
process language $L_a$ into one that might drop process names, but preserves up to the $k$
first context switches.
This replaces each \mcfg\ with an NFA that is at most exponentially larger,
meaning it yields a concurrent program over the regular languages.
For these, we can decide $k$-context-bounded safety in $\EXPSPACE$ \cite{AtigBQ11},
and so we obtain a $\TWOEXPSPACE$ algorithm for $k$-context-bounded safety over MCFLs.
It is known that context-bounded safety is preserved under dropping process names,
but preserving context-switches \cite{DBLP:conf/icalp/0001GMTZ23}. The additional Parikh-relaxation does not change this.
The latter is easy to see from just the semantics of concurrent programs.

\begin{proof}[Proof of \Cref{thm:CBSafety2EXPSPACE}]
  The lower bound follows from the special case of $\CP$s over context-free languages,
  where $k$-context-bounded reachability was shown to be $\TWOEXPSPACE$-hard in \cite{DBLP:conf/icalp/BaumannMTZ20} for unary-encoded $k$ and even fixed $k \geq 1$.
  
  For the upper bound, consider a $\CP$ $\ap$ over the MCFLs with a context bound $k \in \Nat$,
  where for each $a \in \Sigma$, $L_a$ is given as an $\mcfg$ $\grammar_a$
  of constant dimension and rank.
  Let $\Gamma = \Sigma \cup D \cup (D \times D)$, where $D$ are the global states of $\ap$.
  Our goal is to compute for each $L_a \subseteq \Gamma^*$ an NFA for its downward closure,
  that preserves the first up to $k$ context-switches.
  We accomplish this in the following five steps.
  First, we construct an \mcfg\ $\grammar_{a,\pref}$ for $\pref(L_a)$ using \Cref{lem:mcfg-closure-poly}\ref{item:mcfg-prefix-lang}.
  Second, for each $i = 0$ to $k$ we intersect $\grammar_{a,\pref}$ with the regular language
  $aD\big(\Sigma^* (D \times D) \big)^i \Sigma^* (D \cup D \times D)$, which ensures that
  there are exactly $i$ context switches.
  Using \Cref{lem:mcfg-closure-poly}\ref{item:mcfg-nfa-intersect}, we obtain $k+1$ grammars
  $\grammar_{a,\pref,0}$ to $\grammar_{a,\pref,k}$.
  Third, we turn each \mcfg~$\grammar_{a,\pref,i}$ into a downward-closure-faithful accelerated \mcfg~$\grammar'_{a,\pref,i}$,
  as guaranteed by \Cref{lem:sigma-grammar-conciseness}.
  Fourth, for each $\grammar'_{a,\pref,i}$ we compute an NFA $\cA_i$ for the downward closure
  $\parikhalphdownck[i+2]{\langof{\grammar_{a,\pref,i}}}{D \cup D \times D}$ using
  \Cref{lem:parikh-dcl}.
  This preserves the first and last letter in $D \cup D \times D$,
  as well as the $i$ context-switches in between.
  Finally, we construct an NFA $\cA_a$ for the union $\bigcup_{i=0}^{k} \langof{\cA_{a,i}}$
  intersected with the regular language
  $aD \big(\Sigma \cup (D \times D)\big)^* (D \cup D \times D)$.
  The last intersection is to ensure that words are of the correct form
  for a process language of a $\CP$.
  
  Almost every step here takes polynomial time; regarding the construction of
  accelerated grammars $\grammar'_{a,\pref,i}$,
	\refApp{\cref{lem:accel-mcfg-size} in \cref{app:conciseness-step}} provides a more detailed time analysis.
  The exception is invoking \Cref{lem:parikh-dcl}, which introduces an exponential blow-up.
  However, no two applications of said lemma are in sequence,
  so a single exponential blow-up suffices.

  Replacing $\grammar_a$ with $\cA_a$ in $\ap$ for each $a \in \Sigma$,
  we obtain a concurrent program $\ap'$ over the regular languages.
  Using the algorithm from \cite{AtigBQ11}, we can decide $k$-context-bounded safety
  in $\EXPSPACE$ in this setting (simple reduction to coverability in vector addition systems).
  Since $\ap'$ may be exponentially larger than $\ap$ and the unary-encoded $k$
  due to the aforementioned blow-up, we get an overall complexity of $\TWOEXPSPACE$.
  
  Let us now argue correctness of the construction.
  It is known that $k$-context-bounded reachability for $\CP$ is preserved under taking the
  downward closure that considers only prefixes with up to $k$ context switches,
  and is not allowed to lose those context switches \cite{AtigBQ11,DBLP:conf/icalp/0001GMTZ23}.
  Thus we only need to argue that $k$-context-bounded reachability is still preserved
  when we arbitrarily swap the infixes over process names $\Sigma$ for Parikh-equivalent infixes.
  This, however, directly follows from the semantics of $\CP$:
  any time a $\Sigma^*$-infix from a language $L_a$ is considered for the step-relation,
	we first take its Parikh image (see \refApp{\Cref{app:programs-formal-def}}),
  meaning the actual order of the letters within such an infix never matters.
\end{proof}

\myparagraph{Compression through Parikh images} It remains to prove
\cref{lem:parikh-dcl}.
The idea is to construct an NFA that constructs a derivation tree of the
accelerated \mcfg~$\grammar$ on the fly.  To this end, the automaton picks
productions nondeterministically, and explores the derivation tree via
depth-first search, storing only the current branch.  During this, variables of
nonterminals are updated as they become assigned with terminal words.  The key
trick to obtaining the exponential (rather than doubly-exponential) size bound
is that, instead of storing these terminal words directly, it suffices to
\emph{store their Parikh images}.  Note that for a word $w$ over an alphabet
$\Gamma$, the Parikh image of $w$ can be stored in $O(|\Gamma|\cdot \log |w|)$
bits, by using binary encodings, whereas storing $w$ completely would require
at least $|w|$ bits. Through this compression, the automaton only needs to
maintain polynomially many (instead of exponentially many) bits in its state.
This is used when storing infixes consisting of letters in $\Sigma$ or
subalphabets of $\Sigma$.

Upon completely exploring the derivation tree, the NFA outputs any word that matches
the letters and Parikh images stored for the starting nonterminal and then accepts.
Due to our requirement of effectively $(c \sizeof{\grammar})$-bounded tree height,
we only need to consider branches of this polynomial length,
which also keeps the Parikh images small enough that this NFA only needs exponentially many states.
The complete proof can be found in \refApp{\cref{app:parikh-dcl}}.

\section{Conclusion}\label{sec:conclusion}
Our results establish MCFG as a conceptually simple model that captures a wide
range of decidable underapproximations of MPDA and other models. This calls for a broader investigation into algorithmic properties of MCFG. The MCFL reachability problem and the membership problem have been studied in~\cite{MCFGPOPL25,DBLP:journals/ci/KajiNSK94,kaji1992a,kaji1992b} (see \cpageref{preliminaries:complexity-membership}). Another interesting direction is \emph{refinement verification}: Can we decide whether the language of a given MCFG is included in a Dyck language? Inclusion in Dyck languages can be used to verify implementations of reference counting or programs that generate code~\cite{DBLP:journals/pacmpl/BaumannGMTZ23}. In the case of MPDA with bounded context-switching (i.e.~an example of a bounded-\streewidth\ underapproximation), this problem was recently shown to be $\coNP$-complete (if the context bound is part of the input)~\cite{DBLP:journals/pacmpl/BaumannGMTZ23}. By our results, an algorithm for MCFG would apply to \emph{all} bounded-\streewidth\ underapproximations.

Our downward closure construction yields various algorithms for
verifying complex systems whose components are specified by
MSO-definable bounded-\streewidth\ languages: one example is our application in
\cref{sec:applications}. As mentioned in the introduction, another example is
parameterized verification of shared memory systems with non-atomic reads and
writes~\cite{DBLP:conf/fsttcs/Hague11,DBLP:conf/concur/TorreMW15}. Yet another
example concerns certain networks of infinite-state systems communicating via
lossy channels~\cite{DBLP:conf/concur/AtigBT08}. Specifically, our results
yield elementary upper bounds for safety verification where so far, only
non-primitive recursive upper bounds were known. This calls for further
investigation into the exact complexity of these problems.

We leave open whether there is a doubly-exponential downward closure construction for MCFG when the dimension is part of the input. In that setting, our construction is five-fold exponential when applying Jecker's recent Ramsey bounds for $n$-point transformation monoids~\cite{DBLP:conf/stacs/Jecker21}. 

\begin{acks}
	We are grateful to the anonymous reviewers for their helpful suggestions on the presentation.

\raisebox{-8pt}[0pt][0pt]{\includegraphics[height=0.8cm]{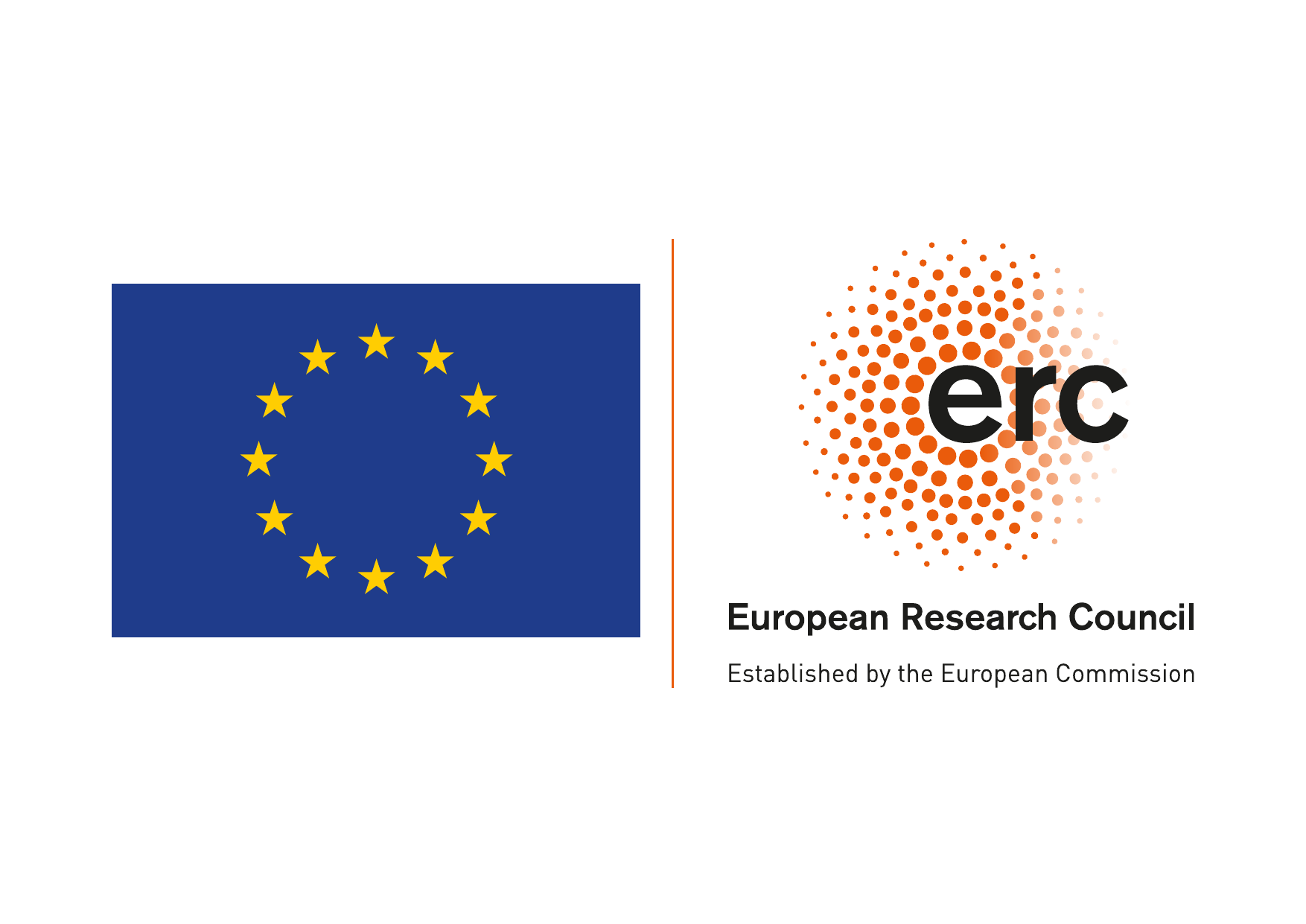}}
Funded by the European Union (\grantsponsor{501100000781}{ERC}{http://dx.doi.org/10.13039/501100000781}, FINABIS, \grantnum[https://doi.org/10.3030/101077902]{501100000781}{101077902}).
Views and opinions expressed are however those of the authors only and do not necessarily reflect those of the European Union
or the European Research Council Executive Agency.
Neither the European Union nor the granting authority can be held responsible for them.
\end{acks}

\label{beforebibliography}
\newoutputstream{pages}
\openoutputfile{main.pages.ctr}{pages}
\addtostream{pages}{\getpagerefnumber{beforebibliography}}
\closeoutputstream{pages}
\bibliographystyle{ACM-Reference-Format}
\bibliography{treewidth-downward.bib}
\label{endbibliography}
\newoutputstream{pageendbibliography}
\openoutputfile{main.pageendbibliography.ctr}{pageendbibliography}
\addtostream{pageendbibliography}{\getpagerefnumber{endbibliography}}
\closeoutputstream{pageendbibliography}

\onlyFull{%

\newpage
\label{startappendix}
\newoutputstream{pagestartappendix}
\openoutputfile{main.pagestartappendix.ctr}{pagestartappendix}
\addtostream{pagestartappendix}{\getpagerefnumber{startappendix}}
\closeoutputstream{pagestartappendix}

\appendix

\crefalias{section}{appsec}
\crefalias{subsection}{appsec}

\section{Special treewidth versus Treewidth}\label[appsec]{app:stw}
For readers familiar with the bag decomposition based definition of treewidth: Tree decompositions for treewidth require the bags containing a vertex to be a connected subtree. In the case of special treewidth, the bags containing a vertex need to form a connected path. For those familiar with treewidth algebra: The join operation can have vertices with the same color in both operands, and the operation ``fuses'' the respective vertices in the case of treewidth. In the case of special treewidth, join is allowed only if the active colors of the operands are disjoint, and hence it avoids the need for ``fusion''.

For bounded degree graphs with degree bound $d$, the special treewidth is at most $20dt$ where $t$ is the bound on treewidth \cite{Cou10}. The behaviors of multi-pushdown systems, queue systems, message sequence charts etc. have degree at most $3$ \cite{Cyriac14}. The behaviors of timed multi-pushdown systems were already analysed for bounded special treewidth \cite{AkshayGK18}. 

\section{MPDA rungraphs and examples}\label[appsec]{app:MPDS-examples}

We define the behavior of MPDA by means of  \emph{rungraphs}. Rungraphs consists of a sequence of vertices (also called \emph{events}) labelled by transitions, linearly ordered to capture the execution order. In addition there is a binary matching relation that matches push events to the corresponding pop events. Indeed, the labelling by transitions must be locally compatible, and the matching relation must represent the LIFO policy of stacks.

Formally, it is given by the structure $\rungraph = \tuple{\vertices, \Succ,\Match,\vlabel}$  where   $\vertices = \{v_1, \cdots, v_n\}$ is a finite set of vertices, $\Succ$ is the successor relation of a total order on $\vertices$,  $\Match \subseteq \vertices \times \vertices$ is a binary relation  and the  labelling $\vlabel: \vertices \mapsto \Sigma$ labels vertices by letters. The rungraph $\rungraph$ is accepted by the MPDA $\MPDS$ if there is a map $\tmap: \vertices \mapsto \trans$  that maps  vertices  to transitions satisfying the following conditions. We first set some notations and shorthands that use in these conditions: 
The transitive closure of $\Succ$ is denoted by $<$.  For a transition $\tau = \tuple{q_1,a,\op,q_2}$, we will use $\tgtstate(\tau) =q_2$ and $\srcstate(\tau) = q_1$ to refer to the target and the source state of the transition. Similarly, we will use $\letter(\tau) = a$ and $\oper(\tau) = \op$.
\begin{itemize}
  \item The map $\tmap$ must be consistent with the labelling $\lambda$: $\letter(\tmap(v)) = \lambda(v)$ for all $v \in \vertices$.
	\item   The stack access policy must be respected by $\Match$: If $(u,v) \in \Match$ then 
	\matchitem $u <v$, 
	\matchitem there is an $i \in \stackset$ and some $a \in \stackalp$ such that \matchsubitem $\oper(\tmap(u)) = \mpdspush{i}{a}$ and $\oper(\tmap(v)) = \mpdspop{i}{a}$
	\matchsubitem there is no $(u',v') \in \Match$ such that  for some $b \in \stackalp$, $\oper(\tmap(u')) = \mpdspush{i}{b}$, $\oper(\tmap(v')) = \mpdspop{i}{b}$, and $u < u' < v < v'$ or $ u' < u < v' < v$, 
	\matchitem \label{condition:M3} any event labelled by a push (resp. pop) transition must be the source (resp. target) of a $\Match$ edge. 
	\item The map $\tmap$ must be consistent between consecutive events: if $(u,v) \in \Succ$ then $\tgtstate(\tmap(u)) = \srcstate(\tmap(v))$.
	\item  For the first event $u$ (i.e. there is no $v$ such that $(u,v) \in \Succ$), $\srcstate(\tmap(u)) = \initstate$, and
	\item  For the last event $v$, $\tgtstate(\tmap(v)) = \finstates$.
\end{itemize}

\begin{example}\label{ex:mpds-wstar}
	Consider the non context-free language $L_\text{wstar}=\{(\#w)^n\mid w \in \{a,b\}^\ast, n \ge 1\}$. We give a 2-stack MPDA $\MPDS_\textsf{wstar}$  in Figure~\ref{fig:mpds-wwstar} for  $L_\text{wstar}$. It uses $\{\#, a, b\}$ as the stack alphabet for both the stacks, and $\# $ is used as a stack-bottom indicator. In state $q_1$ it reads the word $w$ for the first time and stores it in stack 1 in reverse.  In state $q_2$ it copies stack 1 to stack 2 on $\epsilon$-moves to reverse it again. In state $q_7$ it reads $w$ again from stack 2, and copies it to stack 1 in reverse. From $q_1$ or $q_6$ it can choose to stop producing more copies of $w$ and just empty stack 1 in state $q_{10}$ and accept in $q_{11}$.

	A rungraph is depicted in Figure~\ref{fig:rungraph}. The map $\tmap$ is explicitly given by the node labeling. The $\Match$ relation is given by the curved edges, where the upward (blue) curves correspond to stack~1 and the downward (brown) curves correspond to stack~2. This rungraph generates the word $\#abaa\#abaa$. The pattern repeats for every additional $\#abaa$. 
\end{example}

\begin{figure}
	\scalebox{0.9}{
		\begin{tikzpicture}[node distance=2.5cm, initial text=,>=latex, thick]
			\node[ draw, rounded corners,  initial ] (q0) {$q_0$};
			\node[ draw, rounded corners, right=of q0 ] (q1) {$q_1$};
			\node[ draw, rounded corners, right=of q1 ] (q2) {$q_2$};
			\node[ draw, rounded corners, above=1.5cm of q2 ] (q3) {$q_3$};
			\node[ draw, rounded corners, below=1.5cm of q2 ] (q4) {$q_4$};
			\node[ draw, rounded corners, below right =0.3cm and 2.5cm of q3] (q5) {$q_5$};
			\node[ draw, rounded corners, above right= 0.3cm and 2.5cm of q4 ] (q6) {$q_6$};
			\node[ draw, rounded corners, above right= 0.3cm and 2.5cm of q5 ] (q8) {$q_8$};						
			\node[ draw, rounded corners, below right=0.3cm and 2.5 cm of q6 ] (q9) {$q_9$};			
			\node[ draw, rounded corners, above=1.5cm of q9 ] (q7) {$q_7$};
			\node[  draw, rounded corners, below=1.5cm of q0 ] (q10) {$q_{10}$};
			\node[  accepting, draw, rounded corners, below=1.5cm of q1 ] (q11) {$q_{11}$};
			
			\draw[->] 
			(q0) edge node[above] {$\#, \mpdspush{1}{\#}$}  node[below, pos=0.5] {\tname{1}}(q1)
			(q1) edge [loop above, min distance =1.5cm, out=120, in = 60] node[above] {$a, \mpdspush{1}{a}$} node[below] {\tname{2}} (q1)
			(q1) edge [loop below, min distance =1.5cm, out = -60, in = -120] node[below] {$b, \mpdspush{1}{b}$}  node[above] {\tname{3}} (q1)		
			(q1) edge node[above] {$ \mpdspush{2}{\#}$}  node[below] {\tname{4}} (q2)	
			(q2) edge[bend right] node[right, pos = 0.7, xshift =0mm] {$ \mpdspop{1}{a}$}   node[right, pos = 0.9] {\tname{5}}(q3)		
			(q3) edge[bend right] node[left] {$ \mpdspush{2}{a}$}  node[left, pos = 0.1] {\tname{6}} (q2)						
			(q2) edge[bend left] node[right, pos =0.7, xshift = 0mm] {$ \mpdspop{1}{b}$}   node[right, pos = 0.9] {\tname{7}} (q4)		
			(q4) edge[bend left] node[left] {$ \mpdspush{2}{b}$}   node[left, pos = 0.1] {\tname{8}} (q2)		
			(q2) edge node[below, sloped, pos =0.7] {$ \mpdspop{1}{\#}$}  node[above] {\tname{9}} (q5)	
			(q5) edge node[below, sloped, pos=0.4] {$ \#, \mpdspush{1}{\#}$}  node[above] {\tname{10}} (q7)
			(q7) edge node[above, sloped, pos=0.7] {$ \mpdspop{2}{\#}$}  node[below] {\tname{15}}(q6)	
			(q6) edge node[above, sloped, pos=0.4] {$ \mpdspush{2}{\#}$}  node[below] {\tname{16}} (q2)				
			(q7) edge[bend left] node[left, pos = 0.7,  xshift=0mm] {$ \mpdspop{2}{a}$}   node[left, pos = 0.9] {\tname{11}} (q8)		
			(q8) edge[bend left] node[right] {$a, \mpdspush{1}{a}$}   node[right, pos = 0.1] {\tname{12}} (q7)						
			(q7) edge[bend right] node[left, pos = 0.7, xshift=0mm] {$ \mpdspop{2}{b}$}  node[left, pos = 0.9] {\tname{13}}(q9)		
			(q9) edge[bend right] node[right] {$b, \mpdspush{1}{b}$}  node[right, pos=0.1] {\tname{14}} (q7)		
			(q1) edge node[above, sloped] {$ \noop$}  node[below, sloped] {\tname{17}} (q10)	
			(q6) edge[bend left, ] node[above] {$ \noop$}   node[above, pos = 0.6] {\tname{18}} (q10)	
			(q10) edge [loop above] node[above] {$\mpdspop{1}{a}$}  node[left, pos = 0.2] {\tname{19}} (q10)
			(q10) edge [loop below] node[below] {$\mpdspop{1}{b}$} node[left, pos =0.8 ] {\tname{20}}  (q10)
			(q10) edge  node[below, pos = 0.7] {$\mpdspop{1}{\#}$} node[above, pos =0.5 ] {\tname{21}}  (q11)
			;
		\end{tikzpicture}
	}
	\caption{A 2-stack MPDA $\MPDS_\textsf{wstar}$ accepting the language  $L_\text{wstar}=\{(\#w)^n\mid w \in \{a,b\}^\ast, n \ge 1\}$. We omit $\epsilon$ from a transition label for readability. }\label{fig:mpds-wwstar}
\end{figure}
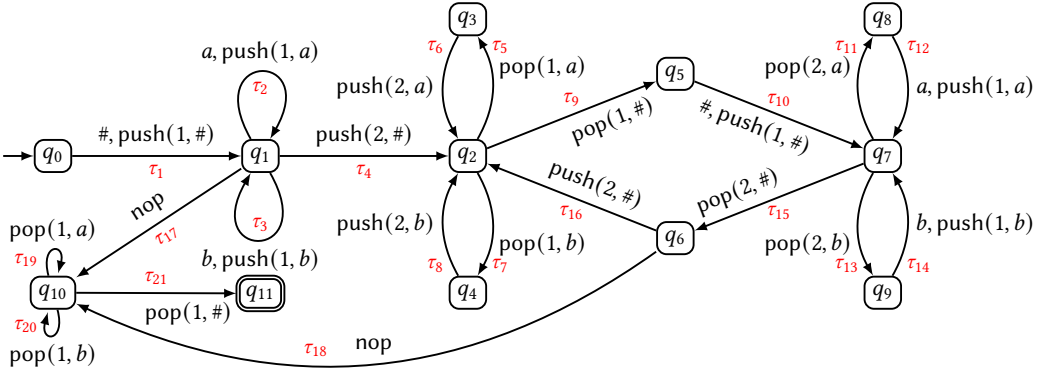

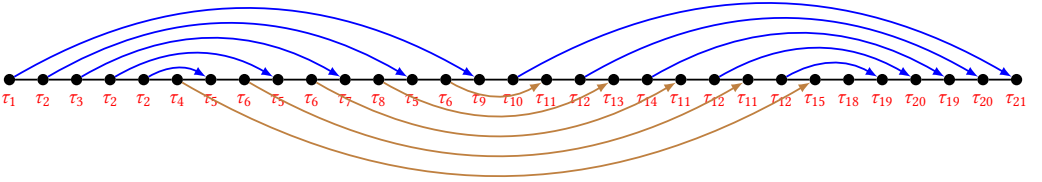
\begin{figure}
	\scalebox{0.85}{
		\begin{tikzpicture}[node distance=3.5mm, initial text=,>=latex, thick,	runnode/.style ={fill,circle,inner sep=0pt, outer sep = 0pt, minimum size=5pt}]
			
			\node[runnode, label = below:\tname{1}] (n1) {};

			\foreach \x / \y[remember=\x as \lastx (initially 1)] in {2/2, 3/3, 4/2, 5/2, 6/4, 7/5, 8/6, 9/5, 10/6, 11/7, 12/8, 13/5, 14/6, 15/9, 16/10, 17/11, 18/12, 19/13, 20/14 , 21/11, 22/12, 23/11, 24/12, 25/15, 26/18, 27/19, 28/20, 29/19, 30/20, 31/21}
			{\node[runnode, right = of n\lastx, label = below:\tname{\y}] (n\x) {};}
			
			\draw[-] 	\foreach \x [remember=\x as \lastx (initially 1)] in {2,...,31}
			{(n\lastx) edge (n\x)} ; 
			
			\draw[->, bend left, color=blue] 
			\foreach \x / \y  in {5/7, 4/9, 3/11, 2/13, 1/15}
			{(n\x) edge (n\y)} ; 
			\draw[->, bend right, color=brown] 
			\foreach \x / \y  in {14/17, 12/19, 10/21, 8/23, 6/25}
			{(n\x) edge (n\y)} ; 
			\draw[->, bend left, color=blue] 
			\foreach \x / \y  in {24/27, 22/28, 20/29, 18/30, 16/31}
			{(n\x) edge (n\y)} ; 
			
		\end{tikzpicture}
	}
	\caption{A rungraph  }\label{fig:rungraph}
\end{figure}

\section{Multiple context-free grammars and languages}
\label[appsec]{appendix:mcfg}

To make the definition precise, we need some notation.

\myparagraph{Notation}
A \emph{ranked set} $\hat A$ is a set $A$ together with a function $\arity{}: A \to \Nat$ denoting the rank/arity of each element in $A$. We may write a ranked set as a pair $\hat A = (A, \arity)$. If $A= \{a_1, \dots, a_k\}$ is a finite set, we may also write  the ranked set as $\hat A = \{a_1^{\arity(a_1)}, \dots, a_k^{\arity(a_k)}\}$, where each element is tagged with its arity. 
Given a set $B$ disjoint from $A$, the set of \emph{atomic terms} of $\hat A$ over $B$, denoted $\atomicTermsof{\hat A}{B}$, is the set $\{a(b_1, \dots, b_k ) \mid a \in A, \arity(a) = k, b_i \in B, \text{ for all } i \in \set{k}\}$. A subset $S$ of $\atomicTermsof{\hat A}{B}$ is called \emph{repetition-free} if 
\begin{enumerate}
	\item for all terms $a(b_1, \dots, b_k) \in S$, $b_i \neq b_j$ if $i \neq j$, and
	\item for every pair of distinct terms $a(b_1, \dots, b_k)$  and $a'(b'_1, \dots, b'_{k'} )$ in $S$, $b_i \neq b'_j$ for all $i \in \set{k}, j\in \set{k'}$.
\end{enumerate}
That is, every argument is unique in a repetition-free set. We denote that $S$ is a repetition-free subset of $\atomicTermsof{\hat A}{B}$ by $S \rfsubset \atomicTermsof{\hat A}{B}$.

\myparagraph{Multiple Context-Free Grammars}
A  \emph{multiple context free grammar} (\mcfg) is a tuple of the form $\grammar= (\Terminals, \hat\Nonterminals, \Variables, \Productions, \StartNonterminal)$ where:  
\begin{itemize}
	\item $\Terminals$ is a finite  alphabet. 
	\item $\hat\Nonterminals = (\Nonterminals, \arity)$ is a finite ranked set of nonterminals with $\arity(A) \ge 1$ for all $A \in \Nonterminals$. $\Nonterminals$ is disjoint from $\Terminals$. %
	\item $\nt{\StartNonterminal}{1} \in \hat\Nonterminals$ is the start nonterminal. 
	\item $\Variables$ is a finite set of variables disjoint from $\Nonterminals$ and $\Terminals$.
	\item $\Productions$ is a finite set of \emph{productions} $\pi$ of the form $\mcfgrule{A(s_1, \dots s_k)}{ S}$ where 
	\begin{itemize}
		\item $A \in \Nonterminals$ is a nonterminal with $\arity(A) = k$. $A$ is called the \emph{head} of $\pi$, denoted $\head(\prodrule)$. 
		\item $S$ is a repetition-free subset of $ \atomicTermsof{\hat \Nonterminals}{\Variables}$.  That is, $S  \rfsubset \atomicTermsof{\hat{\Nonterminals}}{\Variables}$. $S$ is called the \emph{body} of $\pi$, denoted $\body(\prodrule)$.   
		\item $s_j \in (\Terminals \cup \Variables)^\ast$ for $1 \le j \le  k$.
		\item  Let $s = s_1s_2 \dots s_k$, be the concatenation of the arguments of $A$.  Then, for every term $t = B(y_1, y_2, \dots, y_n) \in S$, each argument $y_i$ of $t$ should appear at most once in $s$. That is, the productions are copyless, but may permute the arguments (and may concatenate additional symbols from $\Sigma$) and may also forgot some arguments. %
	\end{itemize}
\end{itemize}
If $\pi = A(s_1, \ldots, s_n) \leftarrow S$ is a production rule and $S = \setof{B_1(x_{1,1}, \ldots, x_{1,n_1}), \ldots}$ we define $\tgt_\pi(x_{\ell,i}) = j$ if $x_{\ell,i}$ occurs in $s_j$. We define $\src_{\pi,\ell}(s_j) = \setof{i \mid \tgt_\pi(x_{\ell,i}) = j}$. Where the production is clear from context, we omit it.

\smallskip
A nonterminal $A$  of an \mcfg\ $\grammar$ with $\arity(A) = k$ generates a $k$-tuple of words $(u_1, u_2, \dots u_k) \in (\Sigma^*)^k$ if there is a parse tree/derivation tree for it (to be defined below). We write $A \rightarrow^h (u_1, u_2, \dots u_k) $  to say that there is \textit{an $A$-rooted}  parse tree $t$ of $A$  of height $h$ with yield  $(u_1, u_2, \dots u_k)$ (denoted $\yield(t) = (u_1, u_2, \dots u_k)$).    We define this inductively: 
\begin{itemize}
	\item $A \rightarrow^0 (u_1, u_2, \dots u_k)$ if $\mcfgrule{A(u_1, u_2, \dots u_k)}{ \emptyset}$ is a production rule. $A (u_1, u_2, \dots u_k)$ is an $A$-rooted derivation tree with yield $ (u_1, u_2, \dots u_k)$.
	\item $A \rightarrow^{h} (u_1, u_2, \dots u_k)$ if there is production rule $\prodrule =  \mcfgrule{A(s_1, \dots s_k)}{ S}$ and  a homomorphism $\eta: \Variables \to \Sigma^\ast$ 
	such that
	\begin{enumerate}
		\item For all terms $ B(y_1,  \dots, y_n) \in S$, $B \rightarrow^{\ell}(w_1,\dots w_{n})$ for some $\ell < h$ where $w_j = \eta(y_j)$.
		\item $u_i =  \bar \eta(s_i)$ where $\bar \eta$ extends $\eta$ to $\Sigma \cup \Variables$ in the obvious way ($\bar \eta (a) = a$ for all $a \in \Sigma$).
	\end{enumerate}
	Let $S = \{B_i(x_{i,1}, \ldots, x_{i,m}) \mid i \in \set{|S|}\}$, and  suppose $t_i$  is a $B_i$-rooted tree. Then the tree constructed with  root node $(A, \pi)$ and children $t_i$ (written $(A, \pi)[t_1, \ldots, t_{|S|}]$) is an ($A$-rooted) derivation tree.
\end{itemize}

We address subtrees using paths $p \in \Nat^*$. If $p = \varepsilon$, then the subtree of a tree $t$ at $p$, written $\subtree{t}{p}$, is~$t$. If $p = i \cdot p'$, then $\subtree{t}{p}$ is $\subtree{t_i}{p'}$, where $t_i$ is the $i$-th child of $t$.

\medskip

We write $A \rightarrow^\ast (u_1, u_2, \dots u_k) $  if $A \rightarrow^h (u_1, u_2, \dots u_k) $ for some $h \in \Nat$. A nonterminal $A$  with $\arity(A)=k$ generates a $k$-ary relation, denoted $\Relof{A}{\grammar} \subseteq (\Sigma^*)^k$. That is, 
$\Relof{A}{\grammar} = \{(u_1, u_2, \dots u_k)  \mid A \rightarrow^\ast (u_1, u_2, \dots u_k)\}$.
When the \mcfg \ $\grammar$ is clear from the context, we  simply write $\Relof{A}{}$ instead of $\Relof{A}{\grammar}$.
Note that the start nonterminal  $\StartNonterminal$ generates a unary relation, i.e.\ a subset of $\Sigma^*$, which we also call the language of the \mcfg\  $\initgrammar$. That is, $\langof{\initgrammar} = \Relof{\StartNonterminal}{}$.  A language $L$ is called a\textit{ multiple context-free language} (MCFL) if there is an \mcfg\ $\initgrammar$, such that $L = \langof{\initgrammar}$. 
We say that an \mcfg\  $\grammar= (\Terminals, \hat\Nonterminals, \Variables , \Productions, \StartNonterminal)$ is a \kmcfg\ if for all $A \in \Nonterminals$, $\arity(A) \le k$. 

\section{MCFL examples}\label{app:MCFL-examples}
\begin{example}
	Consider the \twoicfg\ over the nonterminals $\{	\nt{B}{1}, 	\nt{A}{2}\}$ with the following productions. 
	\begin{align*}
		\mcfgrule{B( xby )&}{ \{A(x, y)\}}\tag{$\prodrule_1$}\\
		\mcfgrule{A(a, xy)&}{  \{A(x, y)\}}\tag{$\prodrule_2$}\\
		\mcfgrule{A( xy, a)&}{\{A(x, y)\}}\tag{$\prodrule_3$}\\
		\mcfgrule{A( a , a)&}{  \emptyset} \tag{$\prodrule_4$}
	\end{align*}
	Here, $\Relof{A}{} =(a^+,a) \cup (a, a^+)$, and $\Relof{B}{} = a^+ba +aba^+$. We have  $\downc{\Relof{B}{}} = a^\ast(b+\epsilon)(a+\epsilon) + (a+\epsilon)(b+\epsilon)a^\ast$.  We have $\downc{\Relof{A}{}} = (a^\ast, a+\epsilon) + (a+\epsilon, a^\ast)$.
	\qed
\end{example}

\begin{example}
	Consider the  $\twoicfg$ $\initgrammar_{ww} = (\{a,b\}, \hat\Nonterminals, \Variables, \Productions, \StartNonterminal)$ where $\hat\Nonterminals = \{ \nt{A}{2}, 	\nt{\StartNonterminal}{1} \} $, $\Variables = \{x_1, \dots, x_4\}$ and  $\Productions$ listed below 
	\begin{align}
		\StartNonterminal( x_1x_2 ) &\leftarrow \{A(x_1,x_2)\} \tag{$\prodrule_1$}\\
		A( x_1x_3, x_2x_4) &\leftarrow \{ A(x_1,x_2), A(x_3,x_4)\}  \tag{$\prodrule_2$}\\
		A(a, a) & \leftarrow \emptyset \tag{$\prodrule_3$}\\
		A(b, b) & \leftarrow \emptyset \tag{$\prodrule_4$}
	\end{align}

	Here, $\Relof{A}{} =\{(w,w) \mid w\in \Sigma^+\}$, and $\langof{\initgrammar_{ww}} =\Relof{\StartNonterminal}{} = \{ww\mid w\in \Sigma^+\} $. We have  $\downc{\Relof{\StartNonterminal}{}} = (a+b)^\ast$, and $\downc{\Relof{A}{}} = ((a+b)^\ast, (a+b)^\ast)$.
	\qed
\end{example}

\section{Tree automata to MCFG}\label[appsec]{app:ta2mcfg}
\subsection{Summaries and their computation}

Given a \cpclog{} $\cpcgraph = \tuple{\vertices, \sedge ,(\edges_\gamma)_{\gamma \in\Elabels},\vlabel, \colorlab}$, we are  interested in the summary wrt $\sedge$  and hence it suffices to consider the graph $\cpcgraph_{\sedge} = \tuple{\vertices, \sedge, \colorlab}$.

\textbf{Piece Summary} The graph $\cpcgraph_{\sedge}$ is a collection of pieces. 
We abstract a piece just by the pair of colors of the extreme vertices. That is, if $u_1  u_2  \ldots u_n$ is a piece (i.e., $(u_i, u_{i+1}) \in \sedge$), then its abstraction is $(\colorlab(u_1), \colorlab(u_n))$. While strict \cpclog{}s technically require all extreme vertices to be colored, removing the coloring, i.e.\ via $\colorless{\cdot}$, could still lead to an \LOG{}, even if the first and last vertex in the $\sedge$-order are not colored. In particular, it is possible that $\colorlab(u_1)$ is undefined, which means that no predecessor can be added to $u_1$. In this case, we let $\colorlab(u_1)$ be $\vdash$. Clearly, we only need to consider at most one such vertex. Similarly, if $\colorlab(u_n)$ is undefined we let it be $\dashv$.
By $\compabstr(\cpcgraph)$ we denote the set of pairs of start and end colors of the   pieces of $\cpcgraph_{\sedge} = \tuple{\vertices, \sedge, \colorlab}$. Note that, $\compabstr(\cpcgraph) \subseteq {(\colors \cup \{\vdash\}) \times (\colors \cup \{\dashv\}) }.$ Note that $(c,c)$ is a possible abstraction in $\compabstr(\cpcgraph)$ which means the  corresponding piece has only one vertex, and that is colored $c$. Also, if $(c_1, c_2)$ and $(c_3, c_4)$ are both in $\compabstr(\cpcgraph)$, then $\{c_1, c_2\} \cap \{c_3, c_4\} = \emptyset$.

For any fixed set of colors $\colors$, %
the set of all possible piece summaries is $\Smrycomp{\colors} = 2^{(\colors \cup \{\vdash\}) \times (\colors \cup \{\dashv\})}$.
Next we show that the we can  design a tree automaton %
$\TA_\summarycomp$ over expressions from  $\expr(C, \Alphabet, \Elabels)$, such that for all $\psi \in \expr(C, \Alphabet, \Elabels)$, $\TArun{\TA_\summarycomp}(\psi) = \compabstr(\cpcgraph_\psi)$.

	\subsection{Tree automata computing summaries}

	The tree automaton   $\TA_\summarycomp = \tuple{\Smrycomp{\colors}, \deltacomp,\emptyset, \{(\vdash, \dashv)\}}$ computes the piece summaries.  We define  $\deltacomp =\bigcup_{\op \in \UOP} \deltacomp_\op \cup \deltacomp_\join$ as follows. Here, $\summ, \summ_1, \summ_2  \in \Smrycomp{\colors}$.
	\begin{itemize}
		\item    $\deltacomp_{\node{c}{a}}(\summ) = \summ \cup \{(c,c)\} $
		\item  $\deltacomp_ {\addsedge{c_1}{c_2}}(\summ) = \begin{cases} 
			\text{undefined } & \text{ if } (c_2, c_1) \in \summ \\
			\summ' \cup \{(x,y)\} & \text{ if } \summ = \summ' \cup \{(x, c_1), (c_2, y)\} \text{ for } x, y \in \colors \cup \{\vdash, \dashv\}\\
			\text{undefined } & \text{otherwise} 
		\end{cases}
		$
		\item$\deltacomp_{\addedge{\gamma}{c_1}{c_2}}(\summ) = \summ$
		\item $\deltacomp_{\fcolor{c}}(X,\summ) = \begin{cases}  		\summ &  \text{ if } \neg \exists y ((c, y) \in \summ \vee (y, c ) \in \summ)\\
			
			\summ' \cup \{(\vdash,y)\}& \text{ if } \summ = \summ' \cup \{ (c, y)\}, y \neq c, \neg \exists x \, (\vdash, x) \in \summ \\
			\summ' \cup \{(x,\dashv)\} & \text{ if } \summ = \summ' \cup \{ (x, c)\}, x \neq c, \neg \exists y \, ( y, \dashv) \in \summ \\
			\summ' \cup \{(\vdash,\dashv)\} & \text{ if } \summ = \summ' \cup \{ (c, c)\}, 
			\neg \exists y \, ((\vdash, y) \in \summ \vee (y, \dashv ) \in \summ)\\
			\text{undefined } & \text{otherwise} 
		\end{cases}
		$
		\item $\deltacomp_\join(\summ_1, \summ_2) = \summ_1 \cup \summ_2$
	\end{itemize}
	Note that the transition $\deltacomp_ {\addsedge{c_1}{c_2}}$ forbids cycles and multiple $\sedge$-successors/predecessors for a vertex. The transition $\deltacomp_{\fcolor{c}}$ ensures that there is at most one $\vdash$ (or $\dashv$) labeled vertex in the graph. The accepting state guarantees that there is a single piece in the end.

\newcommand{\permutation}{g}
\textbf{Size bound} We argue that number of states of $\TA_\summarycomp$ is at most the number of permutations  of the set $\colors \cup \{\vdash, \dashv\}$. We will show an injective map $h$ from $\Smrycomp{\colors}$ to permutations $S_n$ of the set $\colors \cup \{\vdash, \dashv, \bot\}$, where $n = |\colors| + 3$. Let $h(\summ) = \permutation \in S_n$, where $\permutation$ is defined as follows.  Fix an ordering $\prec$ on $\colors \cup \{\vdash, \dashv, \bot\}$. If  $(c_1, c_2) \in \summ$, then $\permutation(c_1) = c_2$ and $\permutation(c_2) = c_1$.  For every color that is not present in $\summ$, let $\permutation$ form a big cycle (cyclically closing the $\prec$) with all those colors. Note that $\bot$ does not appear in $\summ$ and hence it always identifies the missing colors from $\summ$. In conclusion, the number of states is at most $|S_n| = n! \leq n^n = 2^{\mathcal{O}(n\log n)}= 2^{\mathcal{O}(|C|\log |C|) } $.

\subsection{Tree Automata to MCFG}

Given a tree automaton $\TA = \tuple{Q, \delta^\TA, q_\bot, Q_\text{accept}}$ over $\expr(C, \Alphabet, \Elabels)$, we 
construct the tree automaton $\TB$, the Cartesian product of $\TA$ and $\TA_\summarycomp$. That is, the states of $\TB$ are of the form $\tuple{q,\summ}$, where $q \in Q$ and $\summ \in \Smrycomp{\colors}$.
Since also $\langof{\TA_\summarycomp} \subseteq \expr(C, \Alphabet, \Elabels)$, we have 
\begin{claim}\label{claim:LA-LB}
  \begin{align*}
  	\langof{\TB} &= \langof{\TA} \cap \{\psi \mid \cpcgraph_\psi \text{ is a  (colored) \LOG}\}\text{, which implies}
  	\\ \pglangof{\TB} &=  \glangof{\TB} = \glangof{\TA}\text{, which implies}\\ 
  	\wlangof{\TB} &= \wlangof{\TA}\text{.}
  \end{align*}
\end{claim}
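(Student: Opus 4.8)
The plan is to reduce all three equalities to a single structural property of the summary automaton, and then to read off the claim by unfolding the definitions of $\pglangof{\cdot}$, $\glangof{\cdot}$ and $\wlangof{\cdot}$. The property is: for every $\psi \in \expr(\colors,\Alphabet,\Elabels)$, the run $\TArun{\TA_\summarycomp}(\psi)$ is defined and equals $\compabstr(\cpcgraph_\psi)$ exactly when $\sedge$ in $\cpcgraph_\psi$ is the successor relation of a disjoint union of total orders (so $\cpcgraph_\psi$ genuinely splits into pieces) and at most one piece has an uncoloured start and at most one piece has an uncoloured end; in all other cases the run is undefined. I would prove this by structural induction on $\psi$, comparing each clause of $\deltacomp$ with the effect of the corresponding operation on the pieces: $\ecpcgraph$ gives the empty summary; $\node{c}{a}$ adds the singleton piece $(c,c)$; $\addedge{\gamma}{c_1}{c_2}$ changes neither pieces nor summary; $\join$ takes the disjoint union, so its summary is the union of the two summaries; $\addsedge{c_1}{c_2}$ glues the piece ending at the $c_1$-coloured vertex onto the piece starting at the $c_2$-coloured vertex, and is undefined if these are the same piece or one endpoint is not an extreme vertex (a cycle, a branch, or a vertex with two successors/predecessors would arise); and $\fcolor{c}$ turns an extreme coloured vertex into a permanently open end, recorded as $\vdash$ or $\dashv$, and is undefined if an open end on that side already existed. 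I expect this case analysis — in particular the $\addsedge$ and $\fcolor$ clauses together with the invariant ``at most one $\vdash$ and at most one $\dashv$'' — to be the main (if routine) obstacle; everything after it is bookkeeping.

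Granting the property, the first equality is immediate. Since $\TB$ is the Cartesian product of the deterministic automata $\TA$ and $\TA_\summarycomp$, with a state $\tuple{q,M}$ accepting iff $q\in Q_\text{accept}$ and $M = \{(\vdash,\dashv)\}$, we have $\psi\in\langof{\TB}$ iff $\psi\in\langof{\TA}$ and $\TArun{\TA_\summarycomp}(\psi) = \{(\vdash,\dashv)\}$, i.e.\ (by the property) iff $\psi\in\langof{\TA}$ and $\compabstr(\cpcgraph_\psi) = \{(\vdash,\dashv)\}$. But $\compabstr(\cpcgraph_\psi) = \{(\vdash,\dashv)\}$ says exactly that $\cpcgraph_\psi$ has a single piece whose first vertex has no $\sedge$-predecessor and whose last vertex has no $\sedge$-successor — equivalently, $\sedge$ is the successor relation of a total order on all of $\vertices$, i.e.\ $\colorless{\cpcgraph_\psi}\in\logset{\Alphabet}{\Elabels}$, which is what it means for $\cpcgraph_\psi$ to be a (colored) \LOG. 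Hence $\langof{\TB} = \langof{\TA}\cap\{\psi\mid\cpcgraph_\psi\text{ is a (colored) }\LOG\}$.

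The remaining two equalities follow by unfolding definitions; the only place where vertex colourings need care is that I would first normalize $\TA$ so that its accepted expressions free all colours, so that $\cpcgraph_\psi = \colorless{\cpcgraph_\psi}$ for $\psi\in\langof{\TB}$. Since every $\psi\in\langof{\TB}$ makes $\cpcgraph_\psi$ a \LOG, the intersection with $\logset{\Alphabet}{\Elabels}$ in the definition of $\glangof{\TB}$ is vacuous, giving $\pglangof{\TB} = \{\cpcgraph_\psi\mid\psi\in\langof{\TB}\} = \glangof{\TB}$. For $\glangof{\TB} = \glangof{\TA}$: ``$\subseteq$'' is immediate from $\langof{\TB}\subseteq\langof{\TA}$, and for ``$\supseteq$'' any $\lograph = \colorless{\cpcgraph_\psi}\in\glangof{\TA}$ has $\psi\in\langof{\TA}$ with $\colorless{\cpcgraph_\psi}\in\logset{\Alphabet}{\Elabels}$, so $\psi\in\langof{\TB}$ by the characterization above and $\lograph\in\glangof{\TB}$. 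Finally $\wlangof{\TB} = \word(\glangof{\TB}) = \word(\glangof{\TA}) = \wlangof{\TA}$.
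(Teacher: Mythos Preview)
Your proposal is correct and matches the paper's approach: the paper states the claim directly after defining $\TA_\summarycomp$ (with only the one-line justification ``Since also $\langof{\TA_\summarycomp} \subseteq \expr(C, \Alphabet, \Elabels)$''), relying on the earlier assertion that $\TArun{\TA_\summarycomp}(\psi) = \compabstr(\cpcgraph_\psi)$, and your structural induction on $\psi$ is exactly the natural way to substantiate that assertion. Your observation about normalizing $\TA$ so that accepted expressions free all colours is a genuine point the paper leaves implicit---without it, an expression $\psi\in\langof{\TA}$ whose graph is a \LOG\ but whose endpoints remain coloured would have summary $\{(c_1,c_2)\}\neq\{(\vdash,\dashv)\}$ and hence lie outside $\langof{\TB}$, potentially breaking $\glangof{\TA}\subseteq\glangof{\TB}$; since such a normalization does not change $\glangof{\TA}$ or $\wlangof{\TA}$, it is harmless to assume.
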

Given the automaton $\TB$ over $\expr(C, \Alphabet, \Elabels)$ with transition relation $\delta^\TB$, we will construct an \mcfg{} $\grammar_\TB= (\Terminals$, $\hat\Nonterminals_\TB$, $\Variables_\TB$, $\Productions_\TB$, $\StartNonterminal)$ as follows. Let  $\prec$ 
be a total order on $\colors \cup \{\vdash, \dashv\}$. For a summary  $\summ \subseteq (\colors \cup \{\vdash, \dashv\})^2$, we define $\tup(\summ) = (x_{c_{1}, c'_{1}}, x_{c_{2}, c'_{2}} \dots x_{c_{m}, c'_{m}})$ be the unique tuple such that $\summ = \{(c_{1}, c'_{1}), (c_{2}, c'_{2}), \dots ,(c_{m}, c'_{m})\}$ and $c_{j} \prec c_{{j+1}}$ for all $j: 1 \le j \le m-1$. That is, we order the pairs in the set $\summ$ according to the first member of each pair wrt.\ $\prec$, to obtain a tuple of variables indexed by the pairs. The length of the tuple $\tup(\summ)$ is indeed $|\summ|$. For a tuple of variables $\tau$, $\tau[x \mapsfrom y]$ is the unique tuple obtained by replacing every occurrence of $x$ with $y$.
\begin{itemize}
	\item $\hat\Nonterminals_\TB = (Q \times \Smrycomp{\colors}, \arity)$ where $\arity\big(\tuple{q,\summ}\big)= |\summ|$. That is, the arity of a nonterminal is indeed the number of pieces. 
	\item $\Variables_\TB = \{x_{c_1, c_2} \mid c_1, c_2 \in \colors \cup \{\vdash, \dashv\}\}$
	\item $\Productions_\TB = \{\prodrule_t  \mid t \in \delta^\TB\}$. For $t \in \delta^\TB, \prodrule_t$ is defined as follows: 
	\begin{itemize}
		\item if $t \equiv \delta^\TB_{\node{c}{a}}(\tuple{q, \summ}) \mapsto \tuple{q', \summ'}$ then 
		$$\prodrule_t \equiv  \tuple{q', \summ'}(\tup(\summ')[x_{c,c} \mapsfrom a]  ) \leftarrow  \{\tuple{q, \summ}(\tup(\summ))\}$$
		\item if $t \equiv \delta^\TB_{\addsedge{c_1}{c_2}}(\tuple{q, \summ}) \mapsto \tuple{q', \summ'}$ then
		$$\prodrule_t \equiv  \tuple{q', \summ'}(\tup(\summ')[x_{c_3,c_4} \mapsfrom x_{c_3,c_1}x_{c_2,c_4}]  ) \leftarrow  \{\tuple{q, \summ}(\tup(\summ))\}$$
		where $(c_3, c_1) , (c_2, c_4) \in \summ$ . 
		\item if $t \equiv \delta^\TB_{\addedge{\gamma}{c_1}{c_2}}(\tuple{q, \summ}) \mapsto \tuple{q', \summ}$ then 
		$$\prodrule_t \equiv  \tuple{q', \summ}(\tup(\summ )) \leftarrow  \{\tuple{q, \summ}(\tup(\summ))\}$$
		\item if $t \equiv \delta^\TB_{\fcolor{c}}(\tuple{q, \summ}) \mapsto \tuple{q', \summ} $ then 
		$$\prodrule_t \equiv  \tuple{q', \summ}(\tup(\summ )) \leftarrow  \{\tuple{q, \summ}(\tup(\summ))\}$$
		\item if $t \equiv \delta^\TB_{\fcolor{c}}(\tuple{q, \summ}) \mapsto \tuple{q', \summ'}$ with $\summ \neq \summ'$ then 
		$$\prodrule_t \equiv  \tuple{q', \summ'}(\tup(\summ')[x_{c_1,c_2} \mapsfrom x_{c_3,c_4} ]  ) \leftarrow \{\tuple{q, \summ}(\tup(\summ))\}$$
		where $(c_1, c_2)  \in \summ' \setminus \summ$, and $(c_3, c_4)  \in \summ \setminus \summ'$. 
		\item if $t \equiv \delta^\TB_{\join}(\tuple{q_1, \summ_1}, \tuple{q_2, \summ_2}) \mapsto \tuple{q, \summ}$ then 
		$$\prodrule_t \equiv  \tuple{q, \summ}(\tup(\summ )) \leftarrow  \{\tuple{q_1, \summ_1}(\tup(\summ_1)),  \tuple{q_2, \summ_2}(\tup(\summ_2))\}$$
	\end{itemize}
\end{itemize}

Finally, we add the 
  production rule $\prodrule_q$  for each $q \in Q_{\text{accept}}$.
$$\prodrule_q \equiv {A_\text{start}}(x_{\vdash, \dashv}) \leftarrow \tuple{q, \{(\vdash,\dashv)\}}(x_{\vdash,\dashv})$$
\begin{claim}
	$ \wlangof{\TB} = \langof{\initgrammar_\TB}$
\end{claim}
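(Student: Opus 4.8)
The plan is to establish the claim by first proving a stronger \emph{invariant} that links the relation generated by each nonterminal of $\initgrammar_\TB$ with the pieces of the graphs $\TB$ assigns to the corresponding state. Fix the total order $\prec$ on $\colors \cup \{\vdash,\dashv\}$ used in the construction, and for a state $\langle q,\summ\rangle$ write $\tup(\summ) = (x_{c_1,c_1'},\dots,x_{c_m,c_m'})$, so the pairs of $\summ$ are listed with $\prec$-increasing first entries. I will show: for every $\psi\in\expr(C,\Alphabet,\Elabels)$ with $\TArun{\TB}(\psi)=\langle q,\summ\rangle$, the structure $\cpcgraph_\psi$ is a well-formed colored piecewise-$\LOG$ whose pieces are $P_1,\dots,P_m$, where $P_j$ starts at the vertex colored $c_j$ (or the unique uncolored source if $c_j={\vdash}$) and ends at the vertex colored $c_j'$ (or the unique uncolored sink if $c_j'={\dashv}$), and $(\word(P_1),\dots,\word(P_m))\in\rel(\langle q,\summ\rangle)$; and conversely, every tuple in $\rel(\langle q,\summ\rangle)$ arises as $(\word(P_1),\dots,\word(P_m))$ for some such $\psi$.

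I would prove the first half by induction on the structure of $\psi$ and the second half by induction on the height of a derivation tree witnessing membership in $\rel(\langle q,\summ\rangle)$, the base case being the empty expression $\bot$ (handled by the natural treatment of the $q_\bot$-transitions / the base production for $\langle q_\bot,\emptyset\rangle$). The inductive step is a case distinction over the six shapes of transition/production, checking each time that the effect of the operation on pieces mirrors the variable substitution in $\prodrule_t$: $\node{c}{a}$ adds the singleton piece $(c,c)$ with word $a$ and $\prodrule_t$ substitutes $x_{c,c}\mapsto a$; $\addsedge{c_1}{c_2}$ concatenates the piece ending at $c_1$ with the one starting at $c_2$ (necessarily distinct pieces, by the cycle check in $\deltacomp$) and $\prodrule_t$ performs $x_{c_3,c_4}\mapsfrom x_{c_3,c_1}x_{c_2,c_4}$; $\addedge{\gamma}{c_1}{c_2}$ leaves $\sedge$ and hence all pieces untouched, matching the identity production; $\fcolor{c}$ either leaves the pieces unchanged (when $c$ is interior), again an identity production, or merely relabels one extreme vertex of one piece to $\vdash$ or $\dashv$, matching the renaming $x_{c_1,c_2}\mapsfrom x_{c_3,c_4}$; and $\join$ forms the disjoint union of two graphs with disjoint colors, matching the rank-$2$ production that interleaves $\tup(\summ_1)$ and $\tup(\summ_2)$ into $\tup(\summ_1\uplus\summ_2)$ according to $\prec$. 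In each case one also checks that $\prodrule_t$ is a legal production (repetition-free, copyless), which holds since $\tup(\summ)$ lists each variable exactly once and every substitution above preserves that. The key structural point is that $\TArun{\TB}(\psi)$ being \emph{defined} already forces $\cpcgraph_\psi$ to be a genuine colored piecewise-$\LOG$: the ``undefined'' clauses of $\deltacomp$ precisely rule out creating an $\sedge$-cycle, giving a vertex two $\sedge$-successors/predecessors, or producing a second piece with an uncolored extreme vertex, and in all remaining cases the operations preserve piecewise-linearity of $\sedge$ and injectivity of the coloring.

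With the invariant in hand, the claim follows. States of the form $\langle q,\{(\vdash,\dashv)\}\rangle$ are exactly those for which $\cpcgraph_\psi$ is a single totally ordered piece, i.e.\ for which $\colorless{\cpcgraph_\psi}$ is a genuine $\LOG$ and $\word(\colorless{\cpcgraph_\psi})$ is that piece's word; hence the invariant yields $\rel(\langle q,\{(\vdash,\dashv)\}\rangle)=\{(\word(\colorless{\cpcgraph_\psi}))\mid \TArun{\TB}(\psi)=\langle q,\{(\vdash,\dashv)\}\rangle\}$. Since $\TB$ is the product of $\TA$ with $\TA_\summarycomp$, its accepting states are exactly $Q_{\text{accept}}\times\{\{(\vdash,\dashv)\}\}$, so the start productions $\prodrule_q$ give $\langof{\initgrammar_\TB}=\rel(\StartNonterminal)=\bigcup_{q\in Q_{\text{accept}}}\{u\mid (u)\in\rel(\langle q,\{(\vdash,\dashv)\}\rangle)\}$. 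Unfolding the definitions $\glangof{\TB}=\{\colorless{\cpcgraph_\psi}\mid\psi\in\langof{\TB}\}\cap\logset{\Alphabet}{\Elabels}$ and $\wlangof{\TB}=\word(\glangof{\TB})$, the inclusions $\langof{\initgrammar_\TB}\subseteq\wlangof{\TB}$ and $\wlangof{\TB}\subseteq\langof{\initgrammar_\TB}$ then drop out of the two halves of the invariant, respectively.

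The main obstacle I anticipate is not any single case but the accumulated bookkeeping: maintaining, stably across all operations, the exact correspondence between tuple positions (ordered by $\prec$ on start colors) and pieces, and proving cleanly that ``$\TArun{\TB}(\psi)$ is defined'' coincides with ``$\cpcgraph_\psi$ is a valid colored piecewise-$\LOG$''. The $\addsedge$ and $\fcolor$ cases, together with the minor edge case of the arity-$0$ nonterminal $\langle q_\bot,\emptyset\rangle$ for $\bot$, are where the fiddly details concentrate.
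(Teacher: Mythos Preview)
The paper states this claim without proof, so there is no reference argument to compare against. Your plan---an inductive invariant linking $\rel(\langle q,\summ\rangle)$ to the piece-words of every $\psi$ with $\TArun{\TB}(\psi)=\langle q,\summ\rangle$, proved by structural induction on $\psi$ in one direction and on derivation trees in the other---is the natural route, and the per-operator case analysis you sketch is right in outline.

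There is, however, a genuine gap in what you call the ``key structural point''. You assert that $\TArun{\TB}(\psi)$ being defined forces $\cpcgraph_\psi$ to be a valid colored piecewise-$\LOG$, and in particular that the operations preserve injectivity of $\colorlab$. This is false for $\TA_\summarycomp$ as written: the summary $\summ$ records only colors of piece \emph{endpoints}. Once a color $c$ becomes interior to a piece (after some $\addsedge{c'}{c}$ or $\addsedge{c}{c'}$), it vanishes from $\summ$ while remaining in $\image{\colorlab}$; a later $\node{c}{a}$ then has $\deltacomp_{\node{c}{a}}(\summ)=\summ\cup\{(c,c)\}$ perfectly well defined even though the graph now carries two $c$-colored vertices, and a subsequent $\addsedge{}{}$ involving $c$ creates $\sedge$-branching the summary cannot see. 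Concretely: apply $\node{1}{a},\node{2}{b},\addsedge{1}{2},\node{3}{c},\addsedge{2}{3}$ (summary $\{(1,3)\}$, color~$2$ interior), then $\node{2}{d}$ (summary $\{(1,3),(2,2)\}$, coloring now non-injective), then $\addsedge{3}{2},\fcolor{1},\fcolor{2}$. The summary reaches the accepting $\{(\vdash,\dashv)\}$ and the grammar derives $abcd$, yet the $\addsedge{3}{2}$ step gave the color-$3$ vertex two $\sedge$-successors, so $\colorless{\cpcgraph_\psi}$ is not an $\LOG$. For adversarial $\TA$ this places $abcd$ in $\langof{\initgrammar_\TB}\setminus\wlangof{\TB}$. (The stray argument $(X,\summ)$ in the paper's $\deltacomp_{\fcolor{c}}$ clause suggests an earlier draft did track the in-use color set.) The repair is to carry $X\subseteq\colors$, the set of currently used colors, alongside $\summ$, making $\deltacomp_{\node{c}{a}}$ and $\deltacomp_{\join}$ undefined whenever they would re-use a color already in $X$ and updating $X$ in the obvious way; the extra $2^{|\colors|}$ factor is absorbed into the $2^{O(k\log k)}$ bound, after which your invariant holds as stated and the rest of your argument goes through.
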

Hence by Claim~\ref{claim:LA-LB} we get $ \wlangof{\TA} = \langof{\initgrammar_\TB}$. Since $\TB$ is the Cartesian product of $\TA$ and $\TA_\summarycomp$ the  size upper bounds claimed in Theorem~\ref{thm:TA2MCFL} follow.  This completes the proof of Theorem~\ref{thm:TA2MCFL}.

\section{MCFL to MSO}\label[appsec]{app:mso}
\subsection{Nested word of a parsetree}\label[appsec]{app:NestedWord}

\newcommand{\tlen}{\Terminals\text{-}length}
The \LOG\ of a parsetree is defined inductively. Corresponding to a subtree $t$ we will have a piecewise-\LOG\ $\lograph_t$. Recall that a piecewise-\LOG\ is defined like a \cpclog, but without the vertex coloring $\colorlab$. %
The end-points of the pieces in $\lograph_t$  will be connected by a matching edge $\matching$.
More precisely if the root of $t$ is a nonterminal $\nt{A}{n}$, then there will nodes $u_{t,1}^b, u_{t,1}^e,  u_{t,2}^b, u_{t,2}^e, \dots u_{t,n}^b, u_{t,n}^e$ such that for each $i$ 
\begin{enumerate}
	\item $u_{t,i}^b < u_{t,i}^e $ (recall, $<$ is the transitive closure of $\Succ$)
	\item there is no $u$ such that $ \sedge (u,u_{t,i}^b)$ and
	\item there is no $u$ such that $ \Succ( u_{t,i}^e , u)$.
\end{enumerate}
Further, if the the production applied at the root of $t$ is of the form \ref{eqn:production-rule-form} (see p.~\labelcpageref{eqn:production-rule-form}),  then the root of $t$ has $m$ children, each rooting subtrees $t_1, \dots t_m$ respectively.  Then the piecewise-\LOG\ $\lograph_t$ will be $\tuple{\vertices_t, \sedge_t, \matching_t, \vlabel_t}$  where 
\begin{DIFnomarkup}
\begin{itemize}
	\item $\vertices_t = \bigcup_{i \in \set{m}} \vertices_{t_i} \cup \{v_{t,i,j} \mid i \in \set{n}, \text{$j$th letter of $s_i$ is from $\Terminals$ } \}$\\$ \cup \{u_{t,1}^b, u_{t,1}^e,  u_{t,2}^b, u_{t,2}^e, \dots u_{t,n}^b, u_{t,n}^e\}$.%
	\item  $\sedge_t = \bigcup_{i \in \set{m}} \sedge_{t_i}$\\
	$ \cup \{(u_{t,i}^b, v_{t,i,1}) \mid \text{ if $1$st letter of $s_i$ is from $\Terminals$}\} \cup \{(u_{t,i}^b, u_{t_j,k}^b) \mid \text{ if $1$st letter of $s_i$ is $x_{j,k}$}\} $\\
	$\cup \{( v_{t,i,|s_i|}, u_{t,i}^e) \mid \text{ if the last letter of $s_i$ is from $\Terminals$}\} \cup \{(u_{t_j,k}^e, u_{t,i}^e ) \mid \text{ if the last letter of $s_i$ is $x_{j,k}$}\} $ \\
	$\cup \{(v_{t,i,j}, v_{t,i,j+1}) \mid \text{ if $j$th letter and ($j+1$)th letters of $s_i$ are from $\Terminals$}\}$\\
	$\cup \{(v_{t,i,j}, u_{t_j,k}^b) \mid \text{ if $j$th letter  of $s_i$ is from $\Terminals$ and ($j+1$)th letter of $s_i$ is $x_{j,k}$}\}$\\
	$\cup \{( (u_{t_j,k}^e, v_{t,i,j+1}) \mid \text{ if  $j$th letter of $s_i$ is $x_{j,k}$ and $(j+1)$th letter  of $s_i$ is from $\Terminals$}\}$\\
		$\cup \{( (u_{t_j,k}^e, u_{t_{j'},k'}^b) \mid \text{ if  $j$th letter of $s_i$ is $x_{j,k}$ and $(j+1)$th letter  of $s_i$ is $x_{j',k'}$}\}$
		\item  $\matching_t = \bigcup_{i \in \set{m}} \matching_{t_i} \cup \{(u_{t,i}^b, u_{t,i}^e) \mid i \in \set{n}\} \cup \{(u_{t,i}^e, u_{t,i+1}^b) \mid i \in \set{n-1}\}$
		\item $\vlabel_t = \bigcup_{i \in \set{m}} \vlabel_{t_i} \cup \{v_{t,i,j} \mapsto a \mid \text{$j$th letter of $s_i$ is a}\}$
\end{itemize}
\end{DIFnomarkup}
Here, vertices unlabelled by $\vlabel_t$ are treated as having label $\epsilon$.

\subsection{Well-nesting in MSO}\label[appsec]{app:MSOwellnesting}
We need some shorthands. 
\begin{align*}
\mpath(x_1, \dots x_n) &\equiv x_1 \matching x_2 \matching \dots x_n\\
\maxmpath(x_1, \dots, x_n) &\equiv \mpath(x_1, \dots, x_n) \wedge \forall x  (\neg \mpath(x, x_1, \dots, x_n) \wedge \neg \mpath(x_1, \dots, x_n, x ))\\
\eps(x) &\equiv \bigwedge_{a \in \Alphabet}\neg a(x) \\
\end{align*}
The shorthand $\mpath(x_1, \dots x_n)$ (resp. $\maxmpath(x_1, \dots x_n)$) states that its arguments form a $\matching$-connected path (resp.  maximal path). Note that $\mpath(x,y)$ just means $x \matching y$. The shorthand $\eps(x)$ states absence of any label from $\Terminals$, which functionally means $\epsilon$.
Let us state the formulas:
\begin{align*}
\phi_1 &\equiv \neg \exists x_1, x_2 \dots x_{2d+1} \, \, \mpath(x_1, \dots x_{2d+1} )\\
\phi_2&\equiv \bigwedge_{i\in \set{d}} \neg \exists x_1, x_2 \dots x_{2i+1} \, \, \maxmpath(x_1, \dots x_{2i+1} )\\
\phi_3 &\equiv  \forall x_1, x_2, x_3 \, (x_1 \matching x_2 \wedge x_1 \matching x_3 \implies x_2 = x_3) \wedge (x_1 \matching x_3 \wedge x_2 \matching x_3 \implies x_1 = x_2)\\
\phi_4 &\equiv \forall x\, \eps(x) \Longleftrightarrow \exists y \, (x \matching y \vee y \matching x)\\
\phi_5 &\equiv \bigwedge_{k\in \set{d}} \forall x_1, x_2, \dots x_{2k} \, \, \maxmpath(x_1, x_2, \dots x_{2k})  \implies \left( \bigwedge_{i\in \set{k}} x_{2i-1}  \le x_{2i} \right. \\
& \qquad \qquad \qquad \qquad \qquad \qquad \qquad \qquad \qquad \qquad \left. \bigwedge_{i,j\in \set{k}, i \neq j} x_{2i}  < x_{2j-1} \vee x_{2j} < x_{2i-1} \right)\\
\phi_6 &\equiv \bigwedge_{k, k'\in \set{d}} \forall x_1, x_2, \dots x_{2k} \, \forall x'_1, x'_2, \dots x'_{2k'}\\
&\qquad \qquad \qquad \qquad  \left(  \maxmpath(x_1, x_2, \dots x_{2k}) \wedge \maxmpath(x'_1, x'_2, \dots x'_{2k'}) \phantom{ \bigvee_{i'\in \set{k'}}}\right. \\
&\qquad \qquad \qquad \qquad  \left. \wedge \bigvee_{i\in \set{k}} \bigvee_{i'\in \set{k'}} (x_{2i-1} < x'_{2i'} < x_{2i}) \vee (x_{2i-1} < x'_{2i'-1} < x_{2i}) \right)\\
& \qquad \qquad \qquad \qquad \qquad \qquad  \implies \bigwedge_{i'\in \set{k'}}\bigvee_{i\in \set{k}}(x_{2i-1} < x'_{2i'-1} < x'_{2i'} < x_{2i})\\
\end{align*}
Finally
$$ \phiwellnesting \equiv \bigwedge_{i\in \set{6}} \phi_i  $$

\subsection{The formula for valid parsetree}\label[appsec]{app:MSOparsetree}

Let us give the concrete formula for this part now. The formulas will use second-order free variables $Y_A$ for each $A \in \Nonterminals$. We first define the shorthands. 
\begin{align*}
\Ampath(A, x_1, \dots x_n) &\equiv \mpath(x_1, \dots x_n) \wedge \bigwedge_{i \in \set{n}} Y_A(x_i)\\
\maxAmpath(A, x_1, \dots, x_n) &\equiv \maxmpath(x_1, \dots x_n) \wedge \Ampath(A, x_1, \dots x_n)
\end{align*}
For a production rule $\pi$ of the form \ref{eqn:production-rule-form} (see p.~\labelcpageref{eqn:production-rule-form}) we define the following shorthand. Note that $n$ is the arity of the nonterminal in the LHS of $\pi$. 
\begin{align*}
\prods(\pi, y^b_1, y^e_1, y^b_2, y^e_2, \dots, y^b_n, y^e_n) &\equiv \exists x_{1,1}^b, x_{1,1}^e, \dots x_{1,n_1}^b,x_{1,n_1}^e, \dots, x_{m,1}^b, x_{m,1}^e, \dots x_{m,n_m}^b,x_{m,n_m}^e \, \\
& \bigwedge_{i \in \set{m}} \maxAmpath(B_i, x^b_{i,1}, x^e_{i,1}, x^b_{i,2}, x^e_{i,2}, \dots, x^b_{i,n_i}, x^e_{i,n_i} ) \\
& \bigwedge_{i\in \set{n}} \nextscan(y^b_i, y^e_i, s_i)\\
\end{align*}
The macro $\scan(x,y,w)$ is supposed to scan the sequence $w$ along the top level path from $x$ to $y$. The macro $\nextscan(x,y,w)$ will do the same, but starting from the next position of $x$. We define these macros for each possible $s_i$ in the grammar and their suffixes.  
\begin{align*}
	\nextscan(x,y, w) & \equiv \exists z\, \Succ(x,z) \wedge \scan(z,y, w)&\\
	\scan(x,y,\epsilon)&\equiv x = y&\\
	\scan(x,y, aw) & \equiv a(x) \wedge \exists z\, \Succ(x,z) \wedge \scan(z,y, w)& \text{ for } a \in \Terminals\\
	\scan(x,y, \nu w) & \equiv x = \nu^b \wedge \exists x' ( x\matching x' \wedge \exists z ( \Succ(x',z) \wedge \scan(z,y, w)))& \text{ for } \nu \in \Variables\\
\end{align*}

Let us define the formulas now. 
\begin{align*}
	\psi_1 &\equiv \forall x\,  \eps(x) \implies \bigvee_{A \in \Nonterminals} \left( Y_A (x) \wedge \bigwedge_{B \in \Nonterminals\setminus\{A\}} \neg Y_B(x) \right)\\
	\psi_2 &\equiv \bigwedge_{\nt{A}{k} \in \hat\Nonterminals} \Bigg( \neg \exists x_1, \dots, x_{2k+1} \Ampath(x_1, \dots, x_{2k+1} )\\
  & \qquad \qquad \wedge \bigwedge_{m \in \set{2k-1}} \neg \exists x_1, \dots, x_{m} \maxAmpath(A, x_1, \dots, x_{m} ) \Bigg)\\
	\psi_3 &\equiv \bigwedge_{\nt{A}{k} \in \hat\Nonterminals}  \exists y^b_1, y^e_1, y^b_2, y^e_2, \dots, y^b_k, y^e_k,  \maxAmpath(A, y^b_1, y^e_1, y^b_2, y^e_2, \dots, y^b_k, y^e_k ) \implies \\
	& \qquad \qquad \qquad \qquad \qquad \qquad  \bigvee_{\pi \in \Productions, \text{LHS of } \pi \text{ is } A} \prods(\pi, y^b_1, y^e_1, y^b_2, y^e_2, \dots, y^b_k, y^e_k) \\
	\psi_4 &\equiv \exists x_\text{min} \exists x_\text{max} \forall x \,\, x_\text{min} \le x \le x_\text{max} \wedge \maxAmpath(\StartNonterminal, x_\text{min},  x_\text{max})\\
\end{align*}

Finally we have
\[ \phi_\initgrammar = \exists (Y_A)_{A \in \Nonterminals}  \bigwedge_{i \in \set{4}} \psi_i \, . \]

\subsection{Correctness Arguments}\label[appsec]{appendix:MSOcorreectness}
\subsubsection{Correctness of the constructed formula}
\begin{claim}
$\wlangof{\phi} = \langof{\initgrammar}$
\end{claim}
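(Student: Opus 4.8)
The plan is to prove $\wlangof{\phi} = \langof{\initgrammar}$ (for $\phi \equiv \phiwellnesting \wedge \phi_\initgrammar$) by establishing the two inclusions separately, using the nested-word encoding of a parse tree from \cref{app:NestedWord} as the bridge, and relying throughout on the information-lossless normal form (every variable occurring on the right of a production also occurs on the left), which makes that encoding faithful. For $\langof{\initgrammar} \subseteq \wlangof{\phi}$: given $w \in \langof{\initgrammar}$, fix a derivation tree $t$ for $w$ and set $\lograph_w := \lograph_t$, the $\LOG$ over $(\Terminals,\{\matching\})$ built inductively in \cref{app:NestedWord}. I would first check, by induction on $t$, that $\word(\lograph_t) = \yield(t)$: the only vertices carrying a $\Terminals$-label are the $v_{t,i,j}$, and the construction of $\sedge_t$ places them in exactly the order in which the terminal letters of the $s_i$ appear once the variables are expanded, so $\word(\lograph_w) = w$. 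Next I would verify $\lograph_w \models \phiwellnesting$: every maximal $\matching$-path of $\lograph_t$ has the form $u_{t',1}^b \matching u_{t',1}^e \matching \cdots \matching u_{t',n}^b \matching u_{t',n}^e$ for a subtree $t'$ whose root has arity $n \le d$, which yields properties (1)--(4) immediately, while (5)--(6) follow because the entries of $t'$ occur as pairwise non-overlapping infixes nested inside the entries of its ancestors. Finally, for $\lograph_w \models \phi_\initgrammar$ I would take $Y_A$ to be the set of all boundary vertices $u_{t',i}^b, u_{t',i}^e$ of subtrees $t'$ rooted at $A$: then $\psi_1,\psi_2$ are immediate, $\psi_4$ holds because $t$ is $\StartNonterminal$-rooted, and $\psi_3$ holds because at each $A$-rooted subtree $t'$ we read off the production $\pi$ applied there and use the children of $t'$ to instantiate the quantified RHS boundary variables --- conjunct~3(a) matches because $\sedge_{t'}$ is built precisely by the ``$\scan$ along $s_i$, with each $x_{j,j'}$ replaced by $x_{j,j'}^b \matching x_{j,j'}^e$'' pattern, and 3(b) holds since each child subpath is again maximal.

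For the converse, $\wlangof{\phi} \subseteq \langof{\initgrammar}$, let $\lograph = \tuple{\vertices,\sedge,\matching,\vlabel}$ be a $\LOG$ over $(\Terminals,\{\matching\})$ with $\lograph \models \phi$. Using $\phiwellnesting$ I would first organize the maximal $\matching$-paths of $\lograph$ into a finite containment tree $T$: property~(6) guarantees that whenever one path has a vertex strictly inside an entry of another, all of its entries lie inside entries of the other, and $\psi_4$ provides a unique outermost $\StartNonterminal$-path spanning $[x_{\min},x_{\max}]$; since there are only finitely many maximal paths, this gives a tree. Moreover $\psi_3$ attaches to each path $P$ labelled $A$ a production $\pi$ with head $A$ and witnesses that the maximal paths playing the roles of $B_1,\dots,B_m$ are exactly the children of $P$ --- here information-losslessness is essential, as it forces every RHS boundary vertex to appear inside some entry of $P$, so that property~(6) applies and the $B_j$-paths are genuinely nested below $P$.

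The heart of the converse is then an induction over $T$: for each path $P$ labelled $A$ with boundary vertices $y_1^b,y_1^e,\dots,y_n^b,y_n^e$, let $r_i(P)$ be the $\Terminals$-word spelled along the $\sedge$-path from $y_i^b$ to $y_i^e$ that jumps across every child nesting edge; I claim $(r_1(P),\dots,r_n(P)) \in \rel(A)$. In the base case the production $\pi$ attached to $P$ has empty body and conjunct~3(a) forces $r_i(P) = s_i \in \Terminals^*$, so \eqref{eqn:base-case-deduction-rule} applies. In the inductive step, the hypothesis gives $B_j(\vec u_j) \in \rel(B_j)$ for the child paths, and unwinding the $\scan$/$\nextscan$ macros in conjunct~3(a) shows $r_i(P) = \eta(s_i)$ for the homomorphism $\eta(x_{j,j'}) = (\vec u_j)_{j'}$, so \eqref{eqn: inductive-step-deduction-rule} applies. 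Applying this to the outermost $\StartNonterminal$-path $P_0$: its single ``entry'' spans all of $\lograph$, and the recursive expansion of $r_1(P_0)$ enumerates the $\Terminals$-labelled vertices in $\sedge$-order (the $\epsilon$-vertices contributing nothing), so $r_1(P_0) = \word(\lograph) \in \rel(\StartNonterminal) = \langof{\initgrammar}$, as desired.

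I expect the main obstacle to be the converse inclusion --- specifically, ruling out ``cheating'' models: one must show the maximal $\matching$-paths really do form a well-founded tree and that $\psi_3$ together with the $\scan$ macro genuinely forces each $r_i(P)$ to be the correct decomposition of $s_i$ into non-overlapping infixes. This is exactly what properties (1)--(6) of well-nesting are designed to deliver --- strict containment of children plus finiteness gives termination, property~(3) makes the $\matching$-shortcut used by $\scan$ unique, and property~(6) together with information-losslessness forces the witness paths of $\psi_3$ to be the children of $P$ --- so the proof largely reduces to extracting these consequences from $\phiwellnesting$ and matching them line by line against the quantifier structure of $\psi_3$ (\cref{app:MSOwellnesting,app:MSOparsetree}). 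The translation in conjunct~3(a) between ``the $\sedge$-path taking nesting shortcuts'' and ``$s_i$ with its variables expanded'' is routine but is where the bulk of the verification lies.
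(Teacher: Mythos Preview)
Your proposal is correct and follows essentially the same approach as the paper's own proof: for $\langof{\initgrammar} \subseteq \wlangof{\phi}$ you build the nested word $\lograph_t$ from a parse tree and verify it satisfies $\phi$, and for the converse you use the $Y_A$-labelling together with $\psi_3$ and $\psi_4$ to extract a parse tree top-down (or, equivalently, by induction over the containment tree of maximal $\matching$-paths). The paper's proof is a short sketch of exactly this argument, while you have fleshed out the details—in particular the role of the information-lossless assumption and the well-nesting properties (1)--(6) in making the extraction of the parse tree well-defined—so your write-up is strictly more explicit than what the paper provides.
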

\begin{proof}
	\myparagraph{$\subseteq$} Suppose $w \in \wlangof{\phi}$. By definition, there is an \LOG\ $\lograph \in \glangof{\phi}$ such that $w = \word(\lograph)$. We want to argue that there is a parse tree $t$ such that 
	\begin{enumerate}
		\item $\lograph$ is the nested word of $t$, and
		\item  $\yield(t) = w$.  
	\end{enumerate} 
	Since $\lograph$ satisfies $\phi$, we look at the satisfying assignment of $Y_A$ to the $\matching$ paths, and extract a parsetree as follows. The root will be labelled by the nonterminal of the vertices $v_\text{min}$ and $  v_\text{max}$, which must be $\StartNonterminal$ thanks to $\psi_4$. Following $\psi_3$ there is a production rule that can be applied to every node built in the top down fashion this way. Thanks to $\psi_3$ this procedure gives a valid parse tree $t$, which satisfies the above two conditions. 
	
	\myparagraph{$\supseteq$} Suppose $w \in  \langof{\initgrammar}$. By definition there is a parsetree $t$ with $w = \yield(t)$. Following the construction in \cref{app:NestedWord} we get a nested word $\lograph_t$ corresponding to $t$. Notice that it satisfies all the conditions of $\phiwellnesting$ and $\phi_\initgrammar$, and hence $\lograph_t \in \glangof{\phi}$. Further $\word(\lograph_t) = w$.
		\end{proof}

\subsubsection{Special treewidth of the nested words}
\begin{claim}\label{claim:nestedwordstw}
	Nested words of a $d$-\mcfg($r$) $\initgrammar$ have special treewidth at most $6dr$. 
\end{claim}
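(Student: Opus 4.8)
## Proof plan for Claim~\ref{claim:nestedwordstw}

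The plan is to exhibit, for every nested word $\lograph_t$ arising from a parse tree $t$ of a $d$-\mcfg($r$) $\initgrammar$, an explicit special tree decomposition whose bags have size at most $6dr$. The decomposition will be guided by the structure of the parse tree $t$ itself: each node $x$ of $t$ will contribute a contiguous part of the special tree decomposition, and the bags appearing there will record only the $\matching$-boundary vertices of $x$ and of the children of $x$ that are currently being ``processed''. The key observation is that the recursive construction of $\lograph_t$ in \refApp{\cref{app:NestedWord}} introduces, for a node $x$ labelled by a nonterminal of arity $n\le d$, exactly $2n$ boundary vertices $u_{t,1}^b,u_{t,1}^e,\dots,u_{t,n}^b,u_{t,n}^e$, and the only vertices that ever need to be ``remembered'' when we descend into a child $t_j$ are (i)~the $2n\le 2d$ boundary vertices of $x$ and (ii)~the boundary vertices of $t_j$; since a child has arity $n_j\le d$, that is another $2d$ vertices. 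Processing the up-to-$r$ children one at a time, and keeping the parent's boundary fixed throughout, we keep each bag within $2d + 2d\cdot(\text{something})$, and a careful accounting (see below) gives the bound $6dr$.

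Concretely, I would proceed as follows. First, fix a node $x$ of $t$ with applied production of the form~(\ref{eqn:production-rule-form}), so $x$ has $m\le r$ children $t_1,\dots,t_m$. The vertices of $\lograph_t$ that are ``owned'' locally by $x$ are the $2n$ boundary vertices of $x$ together with the at most $|s_1|+\dots+|s_n|$ terminal-labelled vertices $v_{t,i,j}$ appearing in the right-hand sides $s_i$. Build a path of bags that ``reads off'' the $\sedge$-edges incident to these local vertices exactly as they are defined in \refApp{\cref{app:NestedWord}}: at each step the bag contains the $2n\le 2d$ boundary vertices of $x$, at most two consecutive local vertices, and the (at most $2d$) boundary vertices of whichever child $t_j$ is being spliced in at that point. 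This handles conditions (1)--(3) of a tree decomposition for all $\sedge$- and $\matching$-edges local to $x$: every local $\sedge$-edge (parent-boundary to terminal, terminal to terminal, terminal to child-boundary, child-boundary to child-boundary, etc.) is covered, and the $\matching$-edges $u_{t,i}^b\matching u_{t,i}^e$ and $u_{t,i}^e\matching u_{t,i+1}^b$ are covered because all $2n$ parent-boundary vertices sit in every local bag. Then, hanging below the appropriate bag in this path, attach recursively the special tree decomposition of each $\lograph_{t_j}$, arranged so that the $2n_j$ boundary vertices of $t_j$ occurring in the child's decomposition appear along the topmost branch of that sub-decomposition — this is exactly where ``specialness'' matters, and it works because those boundary vertices are precisely the root bag of the child's piece in the inductive construction.

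The second step is the size bookkeeping. A local bag contains: the $2n$ boundary vertices of $x$ (at most $2d$); at most $2$ terminal local vertices of $x$; and the $2n_j$ boundary vertices of the active child $t_j$ (at most $2d$). When we then transition into the subtree $t_j$, the top bag of $t_j$'s own decomposition is its $2n_j$ boundary vertices — and this bag must also remember the $2n$ parent-boundary vertices so that conditions 2/$2'$ (the path-connectedness of bags containing a fixed vertex) hold for the parent's boundary vertices across the junction; but note those parent-boundary vertices do \emph{not} need to be propagated further down than this one interface bag. So the worst bag has size $2d$ (parent boundary) $+\,2d$ (child boundary, used as an interface) $+\,2d$ (from the recursive level below, which is one level of child boundaries plus the $\le 2$ terminals, re-accounted). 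Summing the three contributions of size $2d$ gives $6d$; and the factor $r$ enters because at a single junction we may need to keep boundary data of the parent together with that of a child \emph{and} the parent still has up to $r$ children to come — more carefully, the only place $r$ can blow things up is if one tried to keep all children's boundaries simultaneously, which we avoid by processing children one at a time, so in fact the honest bound from this scheme is already $O(d)$; the stated $6dr$ is then a safe over-estimate, and I would simply verify the inequality $6d \le 6dr$ (true since $r\ge 1$) rather than optimise. I expect the main obstacle to be the careful verification that the interface bags between a parent node and a child subtree can be inserted so as to preserve the \emph{special} (path, not just subtree) condition for every shared vertex simultaneously — in particular that the child-boundary vertices form a single branch in the child's sub-decomposition and that gluing this branch under a single parent bag does not create a forbidden sibling-sharing. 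This should follow from the remark in \cref{sec:ta2mcfg} that the term $\psi$ building a \cpclog\ is itself a special tree decomposition, applied to the construction in \refApp{\cref{app:NestedWord}}, but it needs to be spelled out explicitly.
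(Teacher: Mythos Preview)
Your proposal has a genuine gap: the claim that each bag needs only ``the (at most $2d$) boundary vertices of whichever child $t_j$ is being spliced in at that point'' --- and hence that the honest bound is $O(d)$ independent of $r$ --- is false. The problem is interleaving of children's variables. Take a production
\[
A(x_{1,1}\,x_{2,1}\cdots x_{r,1}\,x_{1,2}\,x_{2,2}\cdots x_{r,2}) \leftarrow \{B_1(x_{1,1},x_{1,2}),\ldots,B_r(x_{r,1},x_{r,2})\},
\]
so $d=2$. In your left-to-right pass, by the time you reach $x_{r,1}$ you have introduced the first-entry boundaries $u_{t_j,1}^b,u_{t_j,1}^e$ of every child $j$ and added their $\sedge$-edges. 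None of them can be forgotten yet: the branch to child $j$'s sub-decomposition --- whose root bag, by your own inductive hypothesis, contains \emph{all} of $t_j$'s boundary vertices --- can only be attached once $u_{t_j,2}^b,u_{t_j,2}^e$ are also present, i.e.\ after $x_{j,2}$. For the bags containing $u_{t_j,1}^b$ to form a single path (the \emph{special} condition $2'$), that vertex must sit in every bag from $x_{j,1}$ to the branch point after $x_{j,2}$. All $r$ of these intervals overlap at $x_{r,1}$, so that bag carries $\Theta(r)$ child-boundary vertices, not $O(d)$. Your ``safe over-estimate $6d\le 6dr$'' therefore does not establish anything, because the premise fails.

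This is precisely why the paper's proof does \emph{not} process children one at a time. It instead places the parent's $2n$ boundary vertices together with the boundary vertices of \emph{all} $m\le r$ children into a single common bag of size at most $2d(r+1)$, hangs every child sub-decomposition below it, and then runs the left-to-right pass (adding at most $2$ local terminals at a time) as a path above that common bag, finally capping with a root bag containing only the parent boundaries. The resulting bound is $2d(r+1)+2\le 6dr$, and the dependence on $r$ is essential.
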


The idea is that the well-nesting structure of the nested word would give the hint for a special tree decomposition, and more or less has the same shape as the parsetree (though a node in the parsetree is replaced by a path of bags). We define the tree-decomposition inductively. 

Let $p= (v_1, \dots ,v_{2n})$ be a maximal $\matching$ path. The partially-complete nested word covered by such a path is the induced subgraph on $V_p = \{v_1, \dots ,v_{2n}\} \cup \{u \mid v_{2i-1} < u < v_{2i}, i \in \set n \}$, call it $\lograph_p$. Claim~\ref{claim:nestedwordstw} would follow from the following claim: 
\begin{claim}\label{claim:nestedwordstwStrong}
There is a special tree decomposition for $\lograph_p$ of width at most $6dr$, and having only $\{v_1, \dots ,v_{2n}\}$ in the root of this special tree decomposition. 
\end{claim}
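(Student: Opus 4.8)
The plan is to prove \cref{claim:nestedwordstwStrong} by induction on the parse subtree $t_\nu$ of $\initgrammar$ whose root $\nu$ is the parse-tree node that the maximal $\matching$-path $p=(v_1,\dots,v_{2n})$ stands for. Recall from \cref{app:NestedWord} that $\lograph_p$ is exactly the nested word $\lograph_{t_\nu}$ built from $t_\nu$, that $n=\arity(A)$ where $A$ heads the production $\pi$ applied at $\nu$, and (by the information-lossless assumption used throughout this section) that every variable on the right of $\pi$ also occurs on the left, so every child entry really appears as a top-level infix of $\lograph_p$. Write $\pi=A(s_1,\dots,s_n)\leftarrow\{B_1(\vec x_1),\dots,B_m(\vec x_m)\}$, so $m\le r$ and each $\arity(B_j)=n_j\le d$; let $t_1,\dots,t_m$ be the children of $\nu$, let $q_1,\dots,q_m$ be their maximal $\matching$-paths, and let $R_j$ be the set of $2n_j\le 2d$ boundary vertices of $q_j$. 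By well-nesting, $\lograph_{q_j}$ is the subgraph of $\lograph_p$ induced on $V_{q_j}$, the sets $V_{q_j}$ are pairwise disjoint, and $R_j\cap R_{j'}=\emptyset$ for $j\ne j'$; by the induction hypothesis, each $\lograph_{q_j}$ has a special tree decomposition $T_j$ of width at most $6dr$ whose root bag is exactly $R_j$.

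From these I would build a special tree decomposition of $\lograph_p$ as a single path of bags --- the ``spine'' --- with the $T_j$'s glued at its bottom. Set $M:=\{v_1,\dots,v_{2n}\}\cup\bigcup_{j=1}^m R_j$, so $|M|\le 2d+2rd$, and list the $\Terminals$-labelled vertices of $\lograph_p$ as $w_1,\dots,w_T$ along $\sedge$. The root bag is $B_0:=\{v_1,\dots,v_{2n}\}$; its unique child is $B_1:=M$; for $t=1,\dots,T$ the bag $C_t:=M\cup\{w_{t-1},w_t\}$ ($w_0$ omitted) is the unique child of the previous spine bag; and $T_1,\dots,T_m$ are attached as the children of the last spine bag. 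The base case, $\pi$ with empty body, is the same construction with no $T_j$'s.

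The verification consists of three routine checks. Width: $|B_0|\le 2d$, $|C_t|\le|M|+2\le 2d+2rd+2\le 6dr$ since $d,r\ge 1$ (tight already at $d=r=1$), and each $T_j$ has width $\le 6dr$. Edge coverage: the $\matching$-edges $v_i\matching v_{i+1}$ of $p$ lie in $B_0$; any edge of $\lograph_p$ with both endpoints in some $V_{q_j}$ is covered inside $T_j$; any edge with both endpoints in $M$ is covered by $B_1$; and every remaining edge has an endpoint $w_t$ whose $\sedge$-neighbours inside $\lograph_p$ lie in $M\cup\{w_{t-1},w_{t+1}\}$, so it sits in $C_t$ or $C_{t+1}$. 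Path condition: the $v_i$ occur exactly along the spine; each $w_t$ occurs exactly in the consecutive bags $C_t,C_{t+1}$; each vertex of $V_{q_j}\setminus R_j$ occurs only inside $T_j$; and each vertex of $R_j$ occurs along the spine below $B_0$ and then inside $T_j$, which is hung below a spine bag containing $M\supseteq R_j$ and whose own root bag is exactly $R_j$ --- so these occurrences form a contiguous stretch of one root-to-leaf branch. Sibling subtrees share no vertex because the $V_{q_j}$ are disjoint.

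The step I expect to be the real obstacle is getting the special (path, not subtree) condition right for the children's boundary vertices. A child's $n_j$ entries are scattered among the top-level infixes of $\lograph_p$, possibly interleaved with other children's entries and in an order permuted relative to $\pi$, so one cannot recurse ``child by child'' down the spine; the fix is to keep all of $M$ alive along the whole spine and attach every $T_j$ at the bottom, and the bound survives only because there are at most $r$ children of arity at most $d$ each, giving $|M|\le 2d+2rd$, exactly what $6dr$ accommodates. The remaining delicate point is checking that in $\lograph_p$ the only edges not internal to some $\lograph_{q_j}$ are the ``top-level'' ones among $M$ and the $w_t$'s; this follows by inspecting the construction in \cref{app:NestedWord}, the key observation being that a right-boundary vertex of an entry of $\nu$ gets its $\sedge$-successor only from $\nu$'s parent, so within $V_p$ any such successor is itself a $p$-boundary vertex and already lies in $B_0$.
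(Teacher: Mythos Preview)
Your proof is correct and follows essentially the same approach as the paper's: induct on the parse subtree, keep the set $M$ of all boundary vertices (of $p$ and of its at most $r$ children, each of arity $\le d$) alive along a spine of bags that sweeps through the top-level terminal vertices two at a time, hang the child decompositions $T_j$ below, and observe $|M|+2\le 2d+2dr+2\le 6dr$. One small slip: when you ``list the $\Terminals$-labelled vertices of $\lograph_p$'' you must mean only the \emph{top-level} ones, i.e.\ those in $V_p\setminus\bigcup_j V_{q_j}$ introduced by the production at $\nu$ itself; otherwise a terminal inside some $V_{q_j}$ would appear both in $C_t,C_{t+1}$ and inside $T_j$, violating the path condition you later check --- but your subsequent verification makes clear this is what you intend.
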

We prove Claim~\ref{claim:nestedwordstwStrong} by induction. 

\begin{itemize}
	\item The base case is for a maximal $\matching$ path  $p= (v_1, \dots ,v_{2n})$, $n \leq d$, which does not cover another maximal $\matching$ path. In this case, we have a special tree decomposition of width $2n+1$ for $\lograph_p$.  Indeed, $\lograph_p$  can be generated in a left to right manner, keeping all boundary vertices in the bag going forward, and discarding the inner vertices whose neighbours were all added. This special tree decomposition is indeed a special path decomposition, since the tree is just a path. For instance, consider Figure~\ref{fig:nestedword} and consider a maximal $\matching$ path  $p_1= (6,8,11,14)$.    Then $\lograph_{p_1}$ is the induced subgraph of Figure~\ref{fig:nestedword} on the vertices $\{6,7,8,11,12,13,14\}$. This is can be generated by the special path decomposition $\{6,7,8\}$, $\{6,8,11\}$, $\{6,8,11,12\}$, $\{6,8,11,12, 13\}$, $\{6,8,11,13, 14\}$, $\{6,8,11,14\}$. 
	
	\item  Consider a maximal $\matching$ path  $p= (v_1, \dots ,v_{2n})$, $n \leq d$, which contains other $\matching$ paths within. A  maximal $\matching$ path $p_1$ is said to be a child of a  maximal $\matching$ path $p_2$ if 1) $p_1$ is contained in $\lograph_{p_2}$ and 2) there is no other  maximal $\matching$ path  $p_3$ such that $p_1$ is contained in $p_3$ which is in turn contained in $p_2$. 
	For the inductive case, suppose a maximal $\matching$ path  $p= (v_1, \dots ,v_{2n})$ has children $p_1, \dots ,p_m$.  We assume that we have the special tree decompositions $t_i$ for each of these $\lograph_{p_i}$. We add a common parent  node for all the $t_i$s 	and have  in this all the vertices in the path $p$ and those in the paths $p_i$. This node has width at most $2d \times (r+1)$ where $r$ is the number of children and $d$ is the maximum arity. Now we need to add the other nodes that are not covered by any $\lograph_{p_i}$. We follow again a left-to-right pass similar to the base case, and this would increase the bag-size by 2. Finally $\lograph_p$ is fully covered, we add one more parent which contains only the vertices of $p$ as required.
\end{itemize}

Note that the maximum bag size is at most $2d(r+1) +2$ where $2d$ is the an upper bound on the length of a maximal $\matching$ path, and $r$ is an upper bound on the number of children a path can have. Notice that nested words of $d$-\mcfg($r$) satisfy these bounds. Hence their special tree width is at most $2d(r+1) +2$ which is clearly below $6dr$ if $d,r, \ge 1$.

\section{Downward closures of MCFL}
\label{appendix:downward-closures}

\subsection{Concise Overapproximations}
\label[appsec]{app:conciseness-step}
In the main body of the text, we have described how to construct an accelerated grammar $\grammar_\Sigma$ from a base grammar $\grammar$. Let us briefly show that the size of this construction is small and its construction does not take too much time:

\begin{lemma}\label{lem:accel-mcfg-size}
	Given an \mcfg\ $\cG$ we can construct the accelerated \mcfg\ $\cG_\Sigma$ in time $\sizeof{\cG}^{\poly(k)}$, where $k$ is dimension of $\cG$.
\end{lemma}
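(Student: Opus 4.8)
The plan is to track the growth of the grammar carefully through each layer of the construction and observe that, for fixed dimension $k$, every layer only multiplies the size polynomially. First I would recall from the main text that the construction of $\cG_\Sigma$ proceeds by (i) introducing the decorated nonterminals (projective decorations $A^I$ and searching decorations $A^{B,f}$), together with their productions, iterating this at most $k$ times since each round of decoration strictly decreases arity; and (ii) adding, for each idempotent $f$, each production $\pi$, each nonterminal $C_i$ in the body with its induced embedding $f_i$, and each $g \in f/f_i$, the accelerated production $\pi^{f,i,g}$. The key numeric observation, already stated in the ``Termination'' paragraph, is that one round of decoration introduces $\cO(n^2 \cdot k^k)$ new nonterminals and $\cO(n^2 \cdot k^k)$ productions per nonterminal, hence blows up the grammar size from $n$ to $\cO(n^4)$ for fixed $k$; iterating $k$ times yields a size of $\cO(n^{4^k})$, i.e.\ $n^{\poly(k)}$ (in fact $n^{\cO(4^k)}$, which is $\poly(n)$ for fixed $k$).

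Next I would account for the accelerated productions. The watershed alphabets $\Sigma_{i,\mathrm{left}}$ and $\Sigma_{i,\mathrm{right}}$ are subsets of $\Sigma$, so each occupies at most $\sizeof{\cG}$ space and contributes nothing asymptotically. For each nonterminal $A$ (of which there are $n^{\poly(k)}$ after decoration), the number of idempotent maps $f\colon [1,n]\to[1,n]$ is bounded by $k^k = \cO(1)$; the number of productions $\pi$ is at most the grammar size; the number of choices of $C_i$ in a body is bounded by the rank $r \le \sizeof{\cG}$; and the number of $g \in f/f_i$ is again bounded by $k^k = \cO(1)$. So the accelerated productions add at most $n^{\poly(k)}$ further productions, each of size $\poly(k)$ (a constant number of symbols plus two watershed alphabets). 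The crucial point here is that $f/f_i$ and the set of idempotents are bounded by a constant depending only on $k$, so the only superpolynomial-looking factor, the iteration of decoration, contributes the $4^k$ in the exponent, which is a constant for fixed $k$.

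Finally I would argue the time bound. Computing a single watershed alphabet amounts to a graph search in the ``connectivity graph'' of the (already decorated) grammar, as noted after the definition of watershed alphabets; this connectivity graph has size polynomial in the decorated grammar, i.e.\ $n^{\poly(k)}$, and a graph search runs in time polynomial in that, so each watershed alphabet is computed in time $n^{\poly(k)}$, and there are only $n^{\poly(k)}$ of them. Likewise, enumerating all decorated nonterminals and their productions, and all accelerated productions, is a matter of iterating over the combinatorial choices above, each of which is checkable in time polynomial in the current grammar size. Summing over the $k$ rounds of decoration and the final accelerating pass, the total running time is $n^{\poly(k)} = \sizeof{\cG}^{\poly(k)}$, as claimed.

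I expect the main obstacle to be bookkeeping rather than any deep idea: one must be careful that the ``$k$ rounds of decoration'' really do terminate (each decoration strictly lowers arity, and idempotents have a fixpoint so searching decorations lose at least one entry), and that computing the watershed alphabets — which refers forward to conditions like ``$\subtree{t}{q}$ is not a child of $\subtree{t}{p}$'' — can genuinely be phrased as a reachability query in a polynomial-size graph, so that no hidden exponential blowup creeps in through that clause. Once those two points are nailed down, the size and time bounds follow by the routine counting sketched above.
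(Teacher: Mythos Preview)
Your proposal is correct and follows essentially the same approach as the paper: invoke the $\cO(n^{4^k})$ bound from the Termination paragraph for the iterated decoration, then count the accelerated productions (the paper phrases this as $\cO(n k^{2k})$ per production), and finally observe that each watershed alphabet is computable by a graph search in the connectivity graph in time polynomial in $n$. Your version is more explicit about the combinatorial counting and flags the termination and reachability subtleties, but the argument is the same.
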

\begin{proof}
	Let $n$ be the size of $\cG$.
	We already argued above that adding all our decorations leads to a grammar of size $\cO(n^{4^k})$,
	and this construction does not take longer to perform than its size suggests.
	To obtain $\cG_\Sigma$ we now add $\cO(n k^{2k})$ more productions per Production of $\cG$,
	for each of which we need to search for watershed alphabets.
	As mentioned before, the latter process is a simple search in the connectivity graph of $\cG_\Sigma$,
	which takes polynomial time in $n$.
	
	In total, we stay polynomial in $n$ and doubly exponential in $k$ to construct $\cG_\Sigma$.
\end{proof}

Let us now turn towards shortening individual trees:

Assume we have a tree $t$ of an accelerated \mcfg~$\cG_\Sigma$ and a path $p$ in $t$ such that along $p$, we have four $A$-rooted subtrees $t_0$, $t_1$, $t_2$, and $t_3$ that all embed into each other with the same idempotent embedding $f$ (that is, $t_i \embed_f t_{i-1}$). In particular, we can assume that $t_0 = t$. We construct a tree $\tilde t$ such that $\tilde t$ accepts more words than $t$ but is smaller in size.

As the root production for $\tilde t$, we choose the accelerated rule $\pi_{f,i,g} = A(s'_1, \ldots s'_n) \leftarrow S$ obtained from the root production $\pi$ of $t$, where $i$ is the first letter of $p$ and $g$ is the embedding of $t_1 \embed_g \subtree{t}{i}$.

Recall that $\pi_{f,i,g}$ has $m + 1$ children to the $m$ children $\pi$. For $j \neq i$ and $j \neq m + 1$, we obtain $\subtree{\tilde t}{j}$ from $\subtree{t}{j}$ as by projection to $J$, the set of its entries that do not embed into one of the fixpoints.

The $i$-th child is obtained from $\subtree{t}{i}$ by using the search decoration to $t_1$ using the embedding $g$, that is $\subtree{\tilde t}{i} = (\subtree{t}{i})^{q,g}$ where $q$ is chosen such that $\subtree{t}{(i\cdot q)} = t_1$. Note that in particular this means that the subtree in $\tilde t$ corresponding to $t_1$ in $t$ is empty, which we will leverage to show that $\tilde t$ is smaller than $t$.

Finally, we need to construct the $(m+1)$-st child, which will produce the fixpoint entries for the new root production. Here we chose $(t_2)^I$, where $I$ is the set of fixpoints of $f$.

\begin{lemma}
	\label{lem:alphabet-grammar-rule-enlarges}
	$L(t) \subseteq L(\tilde t)$.
\end{lemma}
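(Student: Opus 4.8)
The plan is to show that every tuple produced by $t$ is already produced by the contracted tree $\tilde t$, by unfolding the root production $\pi_{f,i,g}$ of $\tilde t$ and feeding its children with yields extracted from the subtrees of $t$. Concretely, fix a tuple $(w_1,\dots,w_n)\in L(t)$, where $A$ is the common root nonterminal of $t$ and $\tilde t$ and $n$ is its arity. Recall that $\pi_{f,i,g} = A(s'_1,\dots,s'_n)\leftarrow S'$ has $s'_k = s_k$ for $k\in\bar I$, $s'_k = \Sigma_{k,\mathrm{left}}^*\,z_k\,\Sigma_{k,\mathrm{right}}^*$ for $k\in I=\Img(f)$, and that $S'$ consists of the projected children $C_j^{f_j^{-1}(\bar I)}$ for $j\ne i$, the projected searching child $(C_i^{A,g})^{f_i^{-1}(\bar I)}$, and a lifted child $A^I$. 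First I would pin down the children of $\tilde t$: for $j\ne i$ take $(\subtree{t}{j})^{f_j^{-1}(\bar I)}$; for $j=i$ take $((\subtree{t}{i})^{q,g})^{f_i^{-1}(\bar I)}$, where $q$ is chosen so that $\subtree{t}{i\cdot q}=t_1$; and for the $A^I$-child take $(t_2)^I$. That these are legal trees of $\cG_\Sigma$ with the required decorated root nonterminals is immediate from the definitions of projective and searching decorations together with \Cref{lem:projected-nt-yield}.

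Next I would compute the yields of these children and check the construction rules. For the unaccelerated entries $k\in\bar I$ this is essentially bookkeeping: every variable occurring in $s_k$ that stems from a child $C_\ell$ refers, via $\tgt_\pi$ and the embedding $f_\ell$, to an entry in $f_\ell^{-1}(\bar I)$, which the projected child $(\subtree{t}{\ell})^{f_\ell^{-1}(\bar I)}$ retains with the same value as in $t$ (by \Cref{lem:projected-nt-yield} and the following corollary on projected yields; for $\ell=i$ one additionally checks $f_i^{-1}(\bar I)\cap\Img(g)=\emptyset$, using $f=f_i\circ g$ and $\Img(f)=I$, so that the searching decoration forgets none of the needed entries and — witnessed by $t_1$ — reproduces them faithfully). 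Hence $s'_k = s_k$ evaluates to $w_k$. For the accelerated (watershed) entries $k\in I$ I would use idempotency of $f$: from $t_2\embed_f t_1\embed_f t_0 = t$ and $f(k)=k$ it follows that the entry‑$k$ yield $o_k$ of $t_2$ is a single contiguous factor of $w_k$, and that this $o_k$ is exactly the value supplied for $z_k$ by the lifted child $(t_2)^I$; so I write $w_k = \alpha_k\,o_k\,\beta_k$.

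The technical core is then to argue $\alpha_k\in\Sigma_{k,\mathrm{left}}^*$ and $\beta_k\in\Sigma_{k,\mathrm{right}}^*$. Each letter of $\alpha_k$ (resp.\ $\beta_k$) occurs in $t$ either strictly between $t_2$ and $t_0$, or inside a non‑fixpoint entry of $t_1$ or of $t_2$ (such entries are exactly what $(t_2)^I$ projects away), since anything occurring inside $t_3$ is already contained in $o_k$. Tracing the upward flow of such a letter through the composed embeddings and then choosing the appropriate member of $\{t_1,t_2,t_3\}$ as the inner $A$‑subtree — so that the letter's $L_a$‑node lies strictly above it, hence is not a child of it, while its single entry joins the $k$‑th entry of that subtree from the left (resp.\ right) — produces precisely a witness for membership in $\Sigma_{k,\mathrm{left}}$ (resp.\ $\Sigma_{k,\mathrm{right}}$) in the sense of the watershed‑alphabet definition. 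Putting the entries together, $\pi_{f,i,g}$ with the chosen children yields $(w_1,\dots,w_n)$, so $(w_1,\dots,w_n)\in L(\tilde t)$, establishing $L(t)\subseteq L(\tilde t)$.

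I expect the main obstacle to be exactly this last step: keeping precise track of which entry a padding letter lands in at every node along the nested subtrees $t_0,t_1,t_2,t_3$, verifying that its flow really does join the relevant fixpoint entry (the deepest common ancestor at which it coincides with that entry in a single entry may sit anywhere above the chosen inner subtree), and matching the ``left/right'' side consistently across these merges. This is where idempotency of $f$ and the availability of all four nested subtrees are genuinely used — in particular $t_3$ plays the role of the ``floor'' below which all content is absorbed into $o_k$, and each of $t_1,t_2,t_3$ is needed so that a padding letter which happens to be a child of $t_0$, $t_1$, or $t_2$ can still be witnessed against an inner subtree of which it is not a child. Everything else is routine unfolding of \Cref{lem:projected-nt-yield}, its corollary, and the definitions of decorated nonterminals, accelerated productions, and watershed alphabets.
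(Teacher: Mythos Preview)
Your proposal is correct and follows essentially the same route as the paper's proof: verify that the non-accelerated entries $k\in\bar I$ are reproduced verbatim by the projected and searching children, and for each fixpoint entry $k\in I$ factor $w_k=\alpha_k\,o_k\,\beta_k$ with $o_k$ the $k$-th entry of $t_2$, then argue that every letter of $\alpha_k$ (resp.\ $\beta_k$) lies in $\Sigma_{k,\mathrm{left}}$ (resp.\ $\Sigma_{k,\mathrm{right}}$).

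The one simplification you are missing is in that last step. You anticipate having to choose the inner $A$-subtree among $\{t_1,t_2,t_3\}$ depending on where a given padding letter sits, but the paper observes that $t_3$ works uniformly: since every entry of $t_3$ embeds via the idempotent $f$ into a fixpoint entry of $t_2$, every leaf contributing to $\alpha_k$ or $\beta_k$ is automatically \emph{not} rooted below $t_3$. Hence the single pair $(t_0,t_3)$ already witnesses membership of each such letter in the appropriate watershed alphabet, and the case analysis you flag as the main obstacle dissolves. Your more elaborate argument would also go through, but it is not needed.
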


\begin{proof}
	By the construction of $\pi_{f,i,g}$, for all $i \not\in \Img(f)$ we have $\yield(\tilde t)[i] = \yield(t)[i]$. 
	
	Let therefore $i \in \Img(f)$. We can write $w \in L(\yield(t)[i])$ as $w = u w_i v$, where $w_i \in$ \linebreak $L(\yield(t_2)[i])$. All entries of $t_3$ are fully embedded in one of the fixpoint entries in $t_2$. Therefore the set of leaf nodes that produce $u$ and $v$ can not be rooted below $t_3$. This means that $\Sigma(u) \subseteq \Sigma_{i,\mathrm{left}}$ and $\Sigma(v) \subseteq \Sigma_{i,\mathrm{right}}$. We therefore have $L(\yield(t))[i]) \subseteq L(\yield(\tilde t)[i])$.
\end{proof}

\begin{lemma}
	$\card{\tilde t}  < \card{t}$.
\end{lemma}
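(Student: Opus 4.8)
The plan is to establish the inequality by a direct node count, comparing $t$ and $\tilde t$ child by child according to the construction of $\tilde t$. Two bookkeeping observations carry most of the argument. First, projection never changes the shape of a derivation tree: for any tree $s$ and any index set $I$, the projected tree $s^I$ has exactly one node per node of $s$ (only the labels change), so $\card{s^I} = \card{s}$. Consequently each of the $m-1$ children of $\tilde t$ with index $j \neq i, m+1$ — being a projection $(\subtree{t}{j})^{J}$ of the corresponding child of $t$ — contributes exactly $\card{\subtree{t}{j}}$ nodes, and the newly lifted child $(t_2)^I$ contributes exactly $\card{t_2}$ nodes. Second, unwinding the recursive definition of the searching decoration shows that the $i$-th child $(\subtree{t}{i})^{q,g}$ is obtained from $\subtree{t}{i}$ by leaving every node off or on the path $q$ in place (up to projection) and replacing the single target subtree $t_1 = \subtree{t}{i\cdot q}$ by the one-node tree $A^\emptyset()\leftarrow\emptyset$; hence $\card{(\subtree{t}{i})^{q,g}} = \card{\subtree{t}{i}} - \card{t_1} + 1$.

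Substituting these into the definition of $\tilde t$, and noting that the roots of $t$ and $\tilde t$ each count as one node, the two occurrences of $\card{\subtree{t}{i}}$ cancel and I expect to obtain
\[
  \card{t} - \card{\tilde t} \;=\; \card{t_1} - \card{t_2} - 1 .
\]
It then remains to show $\card{t_1} \ge \card{t_2} + 2$. Here I would invoke that $t_0 = t$, $t_1$, $t_2$, $t_3$ are four \emph{distinct} $A$-rooted subtrees occurring along the single path $p$, hence strictly nested as subtrees, $t_0 \supsetneq t_1 \supsetneq t_2 \supsetneq t_3$. From $t_1 \supsetneq t_2$ we already get $\card{t_1} \ge \card{t_2}+1$, and the extra unit of slack needed to cover the constant overhead (the collapsed $A^\emptyset$ leaf and the re-inserted copy of $t_2$) comes from the further nesting down to $t_3$ — this is precisely why the conciseness step of \cref{lem:sigma-grammar-conciseness} extracts a monochromatic $4$-clique rather than a $3$-clique. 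Putting the pieces together yields $\card{t} - \card{\tilde t} \ge 1$, i.e.\ $\card{\tilde t} < \card{t}$.

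I expect the main obstacle to be the second bookkeeping observation: one must unwind the recursive definitions of the accelerated production $\pi_{f,i,g}$ and of the projective and searching decorations carefully enough to be certain that (a) the only subtree of $t$ deleted in the passage to $\tilde t$ is $t_1$, (b) the only node added inside the $i$-th branch is the single $A^\emptyset$ leaf, and (c) the $(m+1)$-st child of $\tilde t$ is literally a projection of $t_2$ — so that the constant in $\card{t}-\card{\tilde t} = \card{t_1}-\card{t_2}-\mathcal{O}(1)$ is pinned down precisely enough for the strict four-fold nesting to close the inequality strictly. Well-formedness of $\tilde t$ and the containment $L(t)\subseteq L(\tilde t)$ are not at issue here, being exactly the content of the preceding \cref{lem:alphabet-grammar-rule-enlarges}; this lemma is purely the size comparison.
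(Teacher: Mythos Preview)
Your overall strategy---a child-by-child node count using that projections preserve tree shape and that the searching decoration excises~$t_1$---is exactly the paper's approach. The paper writes only $\leq$'s where you write equalities, but the skeleton is identical:
\[
  \card{\tilde t} \;\leq\; \underbrace{1 + \textstyle\sum_{j\neq i}\card{\subtree{t}{j}}}_{\text{root + unchanged children}} \;+\; \underbrace{\bigl(\card{\subtree{t}{i}} - \card{t_1}\bigr)}_{\text{search-decorated child}} \;+\; \underbrace{\card{t_2}}_{\text{lifted child}} \;=\; \card{t} - \card{t_1} + \card{t_2} \;<\; \card{t}.
\]

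There is, however, a genuine gap in your final step. You account for a~$+1$ from the collapsed leaf $t_1^{\varepsilon,\id}$ and then try to recover the missing unit from the nesting $t_2 \supsetneq t_3$. That inference does not work: $t_3 \subsetneq t_2$ tells you $\card{t_2} \geq \card{t_3}+1$, but says nothing about $\card{t_1} - \card{t_2}$. It is entirely possible that $t_2$ is an immediate child of $t_1$'s root along~$p$, in which case $\card{t_1} - \card{t_2}$ may equal~$1$. The paper avoids the issue by treating the arity-$0$ terminal tree $t_1^{\varepsilon,\id}$ as \emph{empty} (``Having zero entries, $t_1^{\varepsilon,\id}$ is empty''), so no~$+1$ arises and the strict inequality $\card{t_2} < \card{t_1}$ from $t_2 \subsetneq t_1$ suffices.

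Relatedly, your explanation of why a monochromatic $4$-clique (rather than a $3$-clique) is extracted is mistaken. The fourth subtree~$t_3$ is not there to buy slack in the size count; it is used in the \emph{preceding} lemma to guarantee that the letters surrounding the fixpoint entry of~$t_2$ lie in the watershed alphabets $\Sigma_{i,\mathrm{left}}$ and $\Sigma_{i,\mathrm{right}}$ (one needs a further $A$-rooted subtree below~$t_2$ so that the contributing leaves are not rooted beneath it). The size lemma itself only needs $t_0,t_1,t_2$.
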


\begin{proof}
	First, observe that $\card{\subtree{\tilde t}{j}} \leq \card{\subtree{t_0}{j}}$ for $j \neq i$. Additionally $\card{\subtree{\tilde t}{(m+1)}} \leq \card{t_2}$. Finally, because $t_1 \embed_g \subtree{t}{i}$, the annotation of $\subtree{\tilde{t}}{i}$ will arrive at $t_1$ with a target embedding of $\id$. Having zero entries, $t_1^{\varepsilon,\id}$ is empty. Therefore $\card{\subtree{\tilde{t}}{i}} \leq \card{\subtree{t_0}{i}} - \card{t_1}$. Observing that $\card{t_2} < \card{t_1}$ we get $\card{\tilde{t}} \leq (\card{t} - \card{t_1}) + \card{t_2} < \card{t_0}$.
\end{proof}

\subsection{Upper Bound for Downward Closures without Fixed Dimension}
If we now assume that the dimension $k$ of the grammar is part of the input, then a naïve analysis yields the following:

\begin{enumerate}
	\item There are $k^k$ possible embedding maps.
	\item A single step of the decoration procedure intoduces $\cO(n^2 \times k^k)$ new nonterminals and $\cO(n^2 \times k^k)$ new productions per nonterminal, increasing the grammar size to $\cO(n^4 \times k^{2k})$.
	\item This step may need to be iterated $k$ times until it stabilizes, yielding a grammar size of $(nk)^{2^{\cO(k)}} = n^{2^{\cO(k)}}$.
	\item Adding accelerating productions adds a further $\poly(\sizeof{\cG'}) \times k^{\cO(k)}$ productions, so the grammar size remains $n^{2^{\cO(k)}}$.
	\item By \cite{DBLP:conf/stacs/Jecker21}, the path length to guarantee a monochromatic 4-clique is bounded above by $(k^{4k})^{\cO(k)} = 2^{poly(k)} \times n^{2^{\cO(k)}}$, which is still $n^{2^{\cO(k)}}$.
	\item Trees of this height have size $N^N$ where $N = n^{2^{\cO(k)}}$ and there are $N^{N^N}$ many of them, yielding a downward closure size of $2^{2^{n^{2^{\cO(k)}}}}$
\end{enumerate}

Therefore a trivial upper bound for the space complexity of downward closures of {\mcfg}s without fixed dimension is $\FIVEEXPSPACE$.

\section{Lower bound for MCFL}\label[appsec]{app:MCFL-lowerbound}
	In this section we prove \cref{claim:LBmcfl}. 	Consider the language 
\[ L_{ww,n}=\big\{ww \;\big|\; |w| = 2^n, w \in \{a,b\}^\ast \big\}. \]
 We will give a $\twoicfg$ that is linear in $n$ for it, and show the downward closure of the above language needs a double exponential sized NFA. 

\begin{example}
	Consider the  $\twoicfg$ $\initgrammar_{ww,n} = (\{a,b\}, \hat\Nonterminals, \Variables, \Productions, \StartNonterminal)$ where $\hat\Nonterminals = \{ \nt{A_i}{2} \mid 0 \le i \le n\} \cup \{ 	\nt{\StartNonterminal}{1} \} $, $\Variables = \{u_i, v_i, x_i, y_i \mid 1 \le i < n\} \cup \{u_n, v_n\}$ and  $\Productions$ contains the following rules, where $i \in \{0, 1, \dots n-1\}$ 
	\begin{align}
		\StartNonterminal( u_n v_n ) &\leftarrow \{A_n(u_n,v_n)\} \tag{$\prodrule_n$}\\
		A_{i+1}( u_ix_i,v_iy_i) &\leftarrow \{A_{i}(u_{i},v_{i}), A_{i}(x_i, y_i)\} \label{rule:rule2} \tag{$\prodrule_i$}\\
		A_0(a, a) & \leftarrow \emptyset \tag{$\prodrule_{a}$}\\
		A_0(b, b) & \leftarrow \emptyset \tag{$\prodrule_{b}$}
	\end{align}

	Here, $\Relof{A_i}{} =\big\{(w,w) \;\big|\; |w| = 2^i, w \in \{a,b\}^\ast\big\}$, and $\langof{\initgrammar_{ww,n}} =\Relof{\StartNonterminal}{} = \big\{ww \;\big|\; |w| = 2^n, w \in \{a,b\}^\ast \big\} $. 
	\qed
\end{example}

A fooling-set argument proves the following claim which in turn proves \cref{claim:LBmcfl}.
\begin{lemma}\label{claim:minNFAsizeww}
	Any NFA for $\downc{L_{ww,n}}$ has at least $2^{2^n}$ states. 
\end{lemma}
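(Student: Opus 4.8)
The plan is to apply the standard NFA fooling-set lower bound. Recall the lemma (Glaister--Shallit): if $L \subseteq \Sigma^*$ and there is a family $\{(x_i,y_i) \mid i \in I\}$ such that $x_i y_i \in L$ for every $i \in I$ and, for all $i \neq j$, $x_i y_j \notin L$ or $x_j y_i \notin L$, then every NFA recognizing $L$ has at least $\card I$ states. I would instantiate this with $L = \downc{L_{ww,n}}$ and index set $I = \{a,b\}^{2^n}$, which has $2^{2^n}$ elements, taking $x_w = y_w = w$ for each $w \in I$. The first condition is immediate: $x_w y_w = ww \in L_{ww,n} \subseteq \downc{L_{ww,n}}$.

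The only real content is verifying the separation condition, i.e.\ that for distinct $w,w' \in \{a,b\}^{2^n}$ the word $ww'$ is \emph{not} in $\downc{L_{ww,n}}$. This is a pure length argument: every word in $\downc{L_{ww,n}}$ is a subword of some $vv$ with $v \in \{a,b\}^{2^n}$, hence has length at most $2^{n+1}$; and a subword of $vv$ of length exactly $2^{n+1} = |vv|$ must equal $vv$. So if $ww' \subword vv$ then $ww' = vv$, which forces $w = v$ (the first $2^n$ letters) and $w' = v$ (the last $2^n$ letters), contradicting $w \neq w'$. Thus both $x_w y_{w'} = ww'$ and $x_{w'} y_w = w'w$ lie outside $\downc{L_{ww,n}}$, so the fooling-set hypothesis holds (indeed in the strong form where both cross-concatenations fail).

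Concluding, any NFA for $\downc{L_{ww,n}}$ has at least $\card I = 2^{2^n}$ states, which is exactly \cref{claim:minNFAsizeww}; together with the $\twoicfg(2)$ $\initgrammar_{ww,n}$ of size $\mathcal{O}(n)$ for $L_{ww,n}$ constructed above, this proves \cref{claim:LBmcfl}. I do not expect any genuine obstacle here: the two points to be careful about are (i) citing the fooling-set lemma in the form that yields $\ge \card I$ states (no extended/$+1$ subtleties are needed, since our cross-concatenations are \emph{both} non-members), and (ii) the small observation that although a downward closure is downward closed and typically very large, it still cannot contain any $ww'$ with $w \neq w'$, precisely because such a word saturates the maximal possible length $2^{n+1}$ and a length-maximal subword of $vv$ must be $vv$ itself.
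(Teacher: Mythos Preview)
Your proof is correct and takes essentially the same approach as the paper: the same fooling set $\{(w,w) \mid w \in \{a,b\}^{2^n}\}$ and the same key observation that $ww' \notin \downc{L_{ww,n}}$ for $w \neq w'$ by a length-saturation argument. The paper phrases the conclusion directly (for each $u$ pick a state $q_u$ on an accepting run of $uu$ after the first $u$; show $u \mapsto q_u$ is injective) rather than invoking the Glaister--Shallit lemma by name, but the content is identical.
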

\begin{proof}
	Let $W=\big\{w \;\big|\; |w| = 2^n, w \in \{a,b\}^\ast \big\}$ and let $u \in W$. Since $uu \in \downc{L_{ww,n}}$ but $vu \notin \downc{L_{ww,n}}$  for any  $v \in W \setminus\{u\}$,  there is a state $q_u$ reached upon reading $u$ which (i)~leads to accepting state on reading $u$ and (ii) is not reached by any other $v \in W$. Hence the map $W\to Q$, $u\mapsto q_u$, is injective, implying $|Q| \geq |W| = 2^{2^n}$, proving the claim. 
\end{proof}

\section{Lower bound for bounded-\streewidth\  MPDA}\label[appsec]{app:mpds-lowerbound} \label[appsec]{app:mpda-lower-bound}

\newcommand{\rev}{\mathsf{rev}}
In this section, we prove \cref{claim:LBMPDS}. For a word $w\in\Sigma^*$, let $w^\rev$ denote the reverse of $w$.
Consider the language 
\[ L_{ww^\rev,n}=\big\{ww^\rev \;\big|\; |w| = 2^n, w \in \{a,b\}^\ast\big\}. \]
 We will give a 2-stack MPDA of size linear in $n$ for this language. By the same argument as in \cref{app:MCFL-lowerbound}, any NFA for the downward closure of $L_{ww^\rev,n}$ needs at least $2^{2^{n}}$ states.

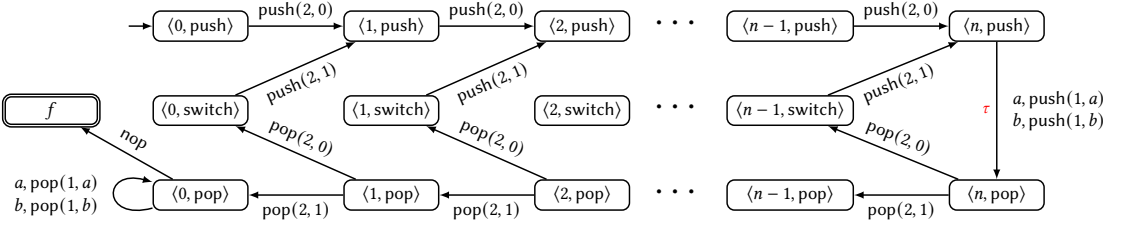
\begin{figure}
	\scalebox{0.7}{
		\begin{tikzpicture}[node distance=1.8cm, initial text=,>=latex, thick, minimum width=1.8cm]
			\node[rectangle, draw, rounded corners, initial ] (t0) {$\statepush{0}$};
			\node[rectangle, draw, rounded corners, right=of t0 ] (t1) {$\statepush{1}$};
			\node[rectangle, draw, rounded corners, right=of t1 ] (t2) {$\statepush{2}$};	
			\node[rectangle, rounded corners, node distance = 0cm, right=of t2,  ] (t3) {\scalebox{2}{$\cdots$}};
			\node[rectangle, draw, rounded corners,node distance = 0cm,  right=of t3 , minimum width=2.4cm ] (t4) {$\statepush{n-1}$};	
			\node[rectangle, draw, rounded corners, right=of t4 ] (t5) {$\statepush{n}$};	
			\node[rectangle, draw, rounded corners, node distance = 1cm,  below=of t0 ] (m0) {$\stateswitch{0}$};
			\node[accepting, rectangle, draw, rounded corners, node distance = 1cm,  left=of m0 ] (f) {$f$};
			\node[rectangle, draw, rounded corners, right=of m0 ] (m1) {$\stateswitch{1}$};
			\node[rectangle, draw, rounded corners, right=of m1 ] (m2) {$\stateswitch{2}$};	
			\node[rectangle, rounded corners, node distance = 0cm, right=of m2, ] (m3) {\scalebox{2}{$\cdots$}};
			\node[rectangle, draw, rounded corners, node distance = 0cm, right=of m3, minimum width=2.4cm ] (m4) {$\stateswitch{n-1}$};
			\node[rectangle, draw, rounded corners, node distance = 1cm,  below=of m0,  ] (b0) {$\statepop{0}$};
			\node[rectangle, draw, rounded corners, right=of b0 ] (b1) {$\statepop{1}$};
			\node[rectangle, draw, rounded corners, right=of b1 ] (b2) {$\statepop{2}$};	
			\node[rectangle, rounded corners, node distance = 0cm, right=of b2, ] (b3) {\scalebox{2}{$\cdots$}};
			\node[rectangle, draw, rounded corners, node distance = 0cm, right=of b3, minimum width=2.4cm  ] (b4) {$\statepop{n-1}$};	
			\node[rectangle, draw, rounded corners, right=of b4 ] (b5) {$\statepop{n}$};	
			\draw[->] 
			(t0) edge node[above] {$\mpdspush{2}{0}$} (t1)
			(t1) edge node[above] {$\mpdspush{2}{0}$} (t2)
			(t4) edge node[above] {$\mpdspush{2}{0}$} (t5)
			(m0) edge[] node[below, sloped] {$\mpdspush{2}{1}$} (t1)
			(m4) edge node[below, sloped] {$\mpdspush{2}{1}$} (t5)
			(m1) edge node[below, sloped] {$\mpdspush{2}{1}$} (t2)
			(b2) edge node[above, sloped] {$\mpdspop{2}{0}$} (m1)
			(b1) edge node[above, sloped] {$\mpdspop{2}{0}$} (m0)
			(b0) edge node[above , sloped] {$\noop$} (f)
			(b5) edge node[below] {$\mpdspop{2}{1}$} (b4)
			(b2) edge node[below] {$\mpdspop{2}{1}$} (b1)
			(b1) edge node[below] {$\mpdspop{2}{1}$} (b0)
			(b5) edge node[above, sloped] {$\mpdspop{2}{0}$} (m4)
			(t5) edge[]  node[ minimum width = 0cm, right]{$\begin{array}{l}
					a, \mpdspush{1}{a}\\
					b, \mpdspush{1}{b}\\
				\end{array}$}  node[left,minimum width = 0cm] {\tname{ }} (b5)
			(b0) edge[loop left, min distance = 10mm, in = 165, out = 195]  node[left] {$\begin{array}{l}
					a, \mpdspop{1}{a}\\
					b, \mpdspop{1}{b}\\
				\end{array}$} (b0)
			;
		\end{tikzpicture}
	}
	\caption{A 2-stack MPDA with $3n+3$ states that accepts  $\{ww^\rev \mid w \in (a+b)^{2^n} \}$. We omit $\epsilon$ from a transition label for readability. Here, all the transitions with operations on stack $2$ are on $\epsilon$. }\label{fig:wwrk}
\end{figure}

	Consider the MPDA given in Figure~\ref{fig:wwrk}. The first time it reaches the state $\statepush{n}$ the stack 2 contains $0^n$. In each subsequent visit to the state $\statepush{n}$, the MPDA will increment the $n$-bit number in stack $2$. For each of these $n$-bit numbers it will take the transition $\tau{ }$ reading a letter and pushing it to stack 1.  Once the $n$-bit number becomes $1^n$,  it takes the transition $\tau{ }$ for the last time and  then goes through $n$ transitions, each labelled $\mpdspop{2}{1}$, and reaches state $\statepop{0}$. Now it would have read a word $w$ of length $2^n$ which is now stored in Stack 1 in reverse. Furthermore, stack 2 is now empty. Finally it reads $w^\rev$ and  empties stack 1, then moves to accepting state $f$.

It remains to argue that all runs of this MPDA have bounded special treewidth.
Since in the rungraph of an MPDA, every vertex has degree at most $3$, and
graphs of degree $d$ and treewidth $t$ have special treewidth at most
$20dt$~\cite{Cou10}, it suffices to show that the treewidth of these MPDA is at
most $4$. Observe that every run of the MPDA can be decomposed into two
\emph{phases}: A phase consists of a sequence of transitions where only one
stack can be popped, but all stacks can be pushed to. Indeed, during the
counting from $0^n$ to $1^n$, the MPDA never pops stack 1. Then, after emptying
stack 2, it switches into a second phase, where it pops the content of stack 1
(so as to compare the second part of the input word to $w^\rev$). It was shown
by Madhusudan and Parlato~\cite{TreeAuxMadhu} that MPDA with at most $k$ phases
have in fact rungraphs with treewidth at most $2^k$. Thus, since our MPDA has
at most $2$ phases in each run, it has treewidth at most $4$.

For \cref{claim:LBMPDS}, it remains to show the following. This follows by the
same argument as in \cref{claim:minNFAsizeww}.
\begin{lemma}\label{lemma:minNFAsizewwr}
	Any NFA for $\downc{L_{ww^\rev,n}}$ has at least $2^{2^n}$ states. 
\end{lemma}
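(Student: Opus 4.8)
\textbf{Proof plan for \cref{lemma:minNFAsizewwr}.}
The plan is to reuse the fooling-set argument of \cref{claim:minNFAsizeww} almost verbatim, with $u$ replaced by $u^\rev$ in the ``second half''. Set $W=\big\{w\in\{a,b\}^\ast \;\big|\; |w|=2^n\big\}$, so that $|W|=2^{2^n}$. For every $u\in W$ we have $uu^\rev\in L_{ww^\rev,n}\subseteq\downc{L_{ww^\rev,n}}$. The one small combinatorial fact to establish is the analogue of ``$vu\notin\downc{L_{ww,n}}$'': namely, for distinct $u,v\in W$, the word $vu^\rev$ does \emph{not} belong to $\downc{L_{ww^\rev,n}}$. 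Indeed, if $vu^\rev\subword w'w'^\rev$ for some $w'$ of length $2^n$, then since $|vu^\rev|=2^{n+1}=|w'w'^\rev|$, a subword of equal length must be equal, so $vu^\rev=w'w'^\rev$; comparing the first and second halves gives $v=w'$ and $u^\rev=w'^\rev$, hence $u=v$, a contradiction.

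With this in hand, I would run the standard NFA lower-bound argument. Let $\mathcal{A}$ be any NFA with $L(\mathcal{A})=\downc{L_{ww^\rev,n}}$ and state set $Q$. For each $u\in W$ fix an accepting run of $\mathcal{A}$ on $uu^\rev$, and let $q_u\in Q$ be the state reached immediately after reading the prefix $u$. Suppose $q_u=q_v$ for some $u\neq v$ in $W$. Splicing the prefix of the run for $u$ (which ends in $q_u=q_v$) with the suffix of the run for $v$ (which reads $v^\rev$ from $q_v$ to an accepting state) yields an accepting run of $\mathcal{A}$ on $uv^\rev$. But $u\neq v$, so by the fact established above $uv^\rev\notin\downc{L_{ww^\rev,n}}=L(\mathcal{A})$ — contradiction. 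Hence $u\mapsto q_u$ is injective, so $|Q|\ge|W|=2^{2^n}$.

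There is no genuine obstacle here: the MPDA of \cref{fig:wwrk} already witnesses $L_{ww^\rev,n}$ and the special-treewidth bound has been argued separately, so this lemma is the last ingredient. The only point requiring a moment's care is the equal-length-subword-implies-equality step used to rule out $vu^\rev\in\downc{L_{ww^\rev,n}}$ for $v\neq u$, and making sure the splicing is set up with the correct pair (prefix from $u$, suffix from $v$) so that the spliced word $uv^\rev$ is exactly one of the words we have shown to lie outside the downward closure. Everything else is identical to the proof of \cref{claim:minNFAsizeww}, and \cref{claim:LBMPDS} then follows immediately by combining this lemma with the construction and treewidth discussion above.
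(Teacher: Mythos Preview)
Your proposal is correct and takes essentially the same approach as the paper, which simply states that the lemma ``follows by the same argument as in \cref{claim:minNFAsizeww}.'' You have spelled out the one adaptation needed—the equal-length observation that $vu^\rev\subword w'w'^\rev$ with $|vu^\rev|=|w'w'^\rev|$ forces $v=u$—which the paper leaves implicit.
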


\section{Concurrent Programs and Bounded Context Switching}\label[appsec]{app:programs}

\newcommand{\cT}{\mathcal{T}}

\subsection{Formal Definitions} \label[appsec]{app:programs-formal-def}

\myparagraph{More on Multisets}
Let us first recall some additional definitions related to multisets.
We will need this for the formal definition of concurrent programs.

First we need a way to denote a multiset with just a few elements.
As an example, we write
$\mmap=\multi{a,a,c\,}$ for the multiset
$\mmap\in\multiset{\{a,b,c,d\}}$ such that
$\mmap(a)=2$,
$\mmap(b)=\mmap(d)=0$, and $\mmap(c)=1$.
Next we would like some multiset operations akin to addition and subtraction.
Given two multisets $\mmap,\mmap'\in\multiset{X}$ let
$\mmap\oplus\mmap'\in\multiset{X}$ denote the multiset with
$(\mmap\oplus \mmap')(a)=\mmap(a)+\mmap'(a)$ for all $a\in X$.
If $\mmap(a) \geq \mmap'(a)$ for all $a\in X$, we also define
$\mmap' \ominus \mmap \in\multiset{X}$:
for all $a\in X$, we have $(\mmap\ominus \mmap')(a)=\mmap(a)-\mmap'(a)$.
For $X \subseteq Y$ we identify $\mmap\in\multiset{X}$ with a
multiset in $\multiset{Y}$ where undefined values are mapped to $0$.

\myparagraph{Concurrent Programs}
We recall the definition of concurrent programs over an arbitrary class of languages $\cC$, as given in~\cite{DBLP:conf/icalp/0001GMTZ23}.
To be more precise, $\cC$ is required to be a language class exhibiting the properties of a so-called
``full trio'', but it is known that the MCFLs fall under this category \cite{SekiMFK91}.
As such, in the sequel we will only consider the MCFLs for the class $\cC$.

Let $\cC$ be a full trio. 
A \emph{concurrent program} ($\CP$) over $\cC$ is a 
tuple $\ap=(D,\Sigma,(L_a)_{a\in\Sigma},d_0,\mmap_0)$, where 
$D$ is a finite set of \emph{global states},  $\Sigma$ is an alphabet 
of \emph{process names},
the \emph{process languages}
$(L_a)_{a\in\Sigma}$ are a family of languages from $\cC$, each
over the alphabet $\Sigma_D=D \cup \Sigma \cup (D \times D)$,
$d_0\in D$ is an \emph{initial state}, and $\mmap_0\in \multiset{\Sigma}$ is a multiset 
of \emph{initial process instances}. 
We assume that each  $L_a$, $a\in \Sigma$, satisfies 
the condition $L_a \subseteq aD \big(\Sigma \cup (D \times D)\big)^* (D \cup D \times D)$.
This is slightly different from the definition in~\cite{DBLP:conf/icalp/0001GMTZ23},
but as we see below, this is not a meaningful change, and it helps for our proofs
in \Cref{sec:applications}.

A \emph{configuration} $c=(d,\mmap) \in D \times \multiset{\Sigma_D^*}$ consists of a global state $d\in D$ and a multiset $\mmap$ of words, each representing a process instance. 
Given a configuration $c=(d,\mmap)$, we write $c.d$ and $c.\mmap$ to denote the elements $d$ and $\mmap$, respectively. 
The size of a configuration $c$ is $|c.\mmap|$, i.e.\ the number of current process instances. 
We distinguish between processes that have been spawned but not executed (\emph{pending processes}) and processes that have already been partially executed (and may be resumed later).
The pending process instances are represented by single letters $a \in \Sigma$ (which corresponds to the name of the process),
while the partially executed processes corresponding to a name $a\in \Sigma$
are represented by words in the prefix language $\pref(L_a)$ which end in a letter from $D \times D$.

Formally, the semantics of a $\CP$ $\ap$ are defined as a labelled transition system over the set of configurations with the transition relation 
$\Rightarrow \subseteq (D \times \multiset{\Sigma_D^*}) \times (D \times \multiset{\Sigma_D^*})$.
The initial configuration is $c_0=(d_0,\mmap_0)$. 

The transition relation is defined using four different types of rules, as shown below. 
All four types are of the general form $d \xrightarrow{\multi{w},\nmap'}d'$.
A rule of this form allows the program to transition
from a configuration $(d,\mmap)$ to a configuration $(d',\mmap')$, 
written as $(d,\mmap) \Rightarrow (d',\mmap')$, iff 
$d\xrightarrow{\multi{w}.\nmap'} d'$ matches a rule and $(\mmap \ominus \multi{w}) \oplus \nmap'=\mmap'$.
Note that by the definition of $\ominus$, $\mmap$ has to contain $w$ for such a rule to be applicable.
We also write $\xRightarrow{w}$ to specify the particular $w$ used in the transition. 
As usual, $\Rightarrow^*$ denotes the reflexive transitive closure of $\Rightarrow$. 
A configuration $c$ is said to be \emph{reachable} if $c_0 \Rightarrow^* c$.

\begin{description}
  \item[(R1)] $d \xrightarrow{\multi{a},\parikh(w) \oplus \multi{adw(d',d'')}} d' \quad \text{ if }\quad
  \exists w \in \Sigma^* \colon ad w (d',d'') \in \pref(L_a)$.
\end{description}
Rule (R1) allows us to pick some pending process $a$ from $\mmap$ and atomically execute it until its next context switch $(d',d'')$. 
Note that the final letter $(d',d'')$ indicates that this process transitions the global state to $d'$ and can be resumed later when the global state is $d''$ (unless the prefix $ad w (d',d'')$ cannot be extended further). 
\begin{description}
  \item[(R2)] $d \xrightarrow{\multi{a},\parikh(w)} d'
  \quad \text{ if }\quad
  \exists w \in \Sigma^* \colon ad w d' \in L_a$.
\end{description}
Rule (R2) allows us to pick a pending process $a$ from $\mmap$ and atomically execute it to completion.
\begin{description}
  \item[(R3)]$d \xrightarrow{\multi{aw'(d'',d)},\parikh(w)\oplus \multi{aw'(d'',d)w(d',d''')}} d'  	\quad \text{ if }\quad
  \begin{aligned} &\exists w \in \Sigma^* \colon\\ &\quad aw'(d'',d) w (d',d''') \in \pref(L_a)\end{aligned}$
\end{description}
Rule (R3) allows us to pick a partially executed process and execute it atomically until its next context switch.
\begin{description}
  \item[(R4)]$d \xrightarrow{\multi{aw'(d'',d)},\parikh(w)} d' \quad \text{ if }\quad
  \exists  w \in \Sigma^* \colon aw'(d'',d) w d' \in L_a$
\end{description}
Rule (R4) allows us to pick some partially executed process and execute it to completion.

Note that our change from \cite{DBLP:conf/icalp/0001GMTZ23}, allowing words from $L_a$ to end in
$D \times D$ in addition to $D$, does not meaningfully change the semantics here.
The only thing that might happen is a process may stick around in the multiset, even though its execution cannot be continued.
This is because the only ways to remove processes are rules (R2) and (R4), which require words
ending in $D$.
However, a process whose prefix cannot be extended does not contribute to further transitions,
so in particular it does not change global state reachability in any way.

\myparagraph{Bounded context switching}
Most decision problems are undecidable for concurrent programs,
if we impose no restrictions on their behavior \cite{QadeerR05,ramalingam2000context}.
As such we consider the popular restriction of bounded context switching.
To properly introduce this restriction, we need to define a few notions
related to runs of concurrent programs.

A \textit{prerun} of a concurrent program $\ap=(D,\Sigma,(L_a)_
{a \in \Sigma},d_0,\mmap_0)$ is a finite sequence
$\rho = (e_0,\nmap_0),$ $w_1,$ $(e_1,\nmap_1),$ $w_2,$ $\ldots,$ $(e_\ell,\nmap_\ell)$
of alternating elements of configurations
$(e_i,\nmap'_i) \in D \times \multiset{\Sigma_D^*}$
and words $w_i \in \Sigma^*$. 

We use $\Preruns{\ap}$ to denote The set of preruns of $\ap$.
Note that if two concurrent programs $\ap$ and $\ap'$ have
the same global states $D$ and process names $\Sigma$, then $\Preruns{\ap}=\Preruns{\ap'}$.
The length $|\rho|$ of a prerun $\rho$ is the number of configurations in $\rho$.

A \textit{run} of a $\CP$ $\ap=(D,\Sigma,(L_a)_{a\in\Sigma},d_0,\mmap_0)$ is a prerun
$\rho = (d_0,\mmap_0), w_1, (d_1,\mmap_1), w_2, \ldots, (d_\ell,\mmap_\ell)$
starting with the initial configuration $(d_0,\mmap_0)$, such that for each $i \geq 0$, we have 
$(d_i,\mmap_i) \xRightarrow{w_{i+1}} (d_{i+1},\mmap_{i+1})$. 
We use $\Runs{\ap}$ to denote The set of runs of $\ap$.

For a number $k$, the run $\rho$ is said to be $k$-context-bounded ($k$-CB for short) 
if for each transition $c \xRightarrow{w} c'$ in $\rho$ we have $|w|_{D\times D} \leq k$.
This definition of is slightly changed from \cite{DBLP:conf/icalp/0001GMTZ23} as not to count
symbols from $D \times D$ at the end of a process' execution, if the process is not continued.
The set of $k$-context-bounded runs of $\ap$ is denoted by $\Runsk{\ap}$. 
We say a configuration $c$ is \emph{$k$-reachable} if there is a finite
$k$-context-bounded run $\rho$ ending in $c$.

\subsection{Closure Properties of \mcfg{}s}\label[appsec]{app:mcfg-closure}

\newcommand{\vecq}{\vec{\mathbf{q}}}
\newcommand{\vecp}{\vec{\mathbf{p}}}

For an \mcfg, let its \emph{dimension} $d$ be the maximum arity among its nonterminals,
and its \emph{rank} $r$ be the maximum number of nonterminals appearing on the right-hand side
of any of its productions.

\myparagraph{Intersecting \mcfg{}s and NFAs}
We prove the following:

\begin{theorem}\label{thm:mcfg-nfa-intersection}
  Given an \mcfg\ $\grammar$ of dimension $d$ and rank $r$, as well as an NFA $\cA$,
  one can compute a \mcfg\ $\grammar_\cap$ for the language
  $\langof{\grammar} \cap \langof{\cA}$ in time
  $(\sizeof{\grammar} \cdot \sizeof{\cA})^{\poly(d \cdot r)}$.
\end{theorem}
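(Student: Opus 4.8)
The plan is to adapt the classical triple construction for intersecting a context-free grammar with a finite automaton to the multiple-context-free setting. Let $\cA = (P, \Sigma, \Delta, p_0, F)$ be the NFA and $\grammar = (\Terminals, \hat\Nonterminals, \Variables, \Productions, \StartNonterminal)$ the given $\mcfg$ of dimension $d$ and rank $r$. In the CFG case, one annotates each nonterminal $A$ with a pair of states $(p,q)$, indicating that $A$ derives words taking $\cA$ from $p$ to $q$. Since a nonterminal of an $\mcfg$ derives an $n$-tuple of words rather than a single word, the natural generalization is to annotate an arity-$n$ nonterminal $A$ with an $n$-tuple of state pairs $\big((p_1,q_1),\ldots,(p_n,q_n)\big) \in (P \times P)^n$, meaning that whenever $A$ derives $(u_1,\ldots,u_n)$, the word $u_i$ labels a run of $\cA$ from $p_i$ to $q_i$. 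The new nonterminal is $A^{(p_1,q_1),\ldots,(p_n,q_n)}$, of the same arity $n$, and the start nonterminal of $\grammar_\cap$ is a fresh arity-$1$ symbol $\StartNonterminal'$ with productions $\StartNonterminal'(x) \leftarrow \{\StartNonterminal^{(p_0,q)}(x)\}$ for each $q \in F$.

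The core step is translating each production. Take a production $\pi = A(s_1,\ldots,s_n) \leftarrow \{B_1(\vec x_1),\ldots,B_m(\vec x_m)\}$ of $\grammar$. Each $s_j$ is a word over terminals and variables; reading it left to right, it splits into maximal blocks that are either terminal letters or variable occurrences. To build a valid instance of $\pi$ in $\grammar_\cap$ we must guess: (i) a target tuple of state pairs for $A$, and (ii) for each $B_i$ a tuple of state pairs, such that these are mutually consistent along every $s_j$ — i.e., the run of $\cA$ on $s_j$ from $p_j$ to $q_j$ decomposes according to the block structure of $s_j$: consecutive blocks must be ``state-composable'', terminal-letter blocks must be actual transitions of $\cA$, and the $\ell$-th variable occurrence $x_{i,\ell}$ must be assigned exactly the state pair that its source nonterminal $B_i$ records in its $\ell$-th coordinate. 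Concretely, for each choice of intermediate states interleaving the blocks of each $s_j$ (there are at most $d$ coordinates, and each $s_j$ has at most $|\pi|$ blocks, so at most $O(|\pi|)$ intermediate states per production, each ranging over $P$), and each assignment of state pairs to the at most $d \cdot m \le d \cdot r$ right-hand-side variables, we emit one production of $\grammar_\cap$ with head $A$ decorated by the induced tuple and body consisting of each $B_i$ decorated by the state pairs assigned to its variables — provided every terminal block is a genuine $\cA$-transition. The key point for the complexity bound is that since $d$ and $r$ are part of the statement but the theorem only claims time $(\sizeof{\grammar}\cdot\sizeof{\cA})^{\poly(d\cdot r)}$, the number of choices per production is $|P|^{O(|\pi| + d\cdot r)}$, which is $(\sizeof{\grammar}\cdot\sizeof{\cA})^{\poly(d\cdot r)}$, and summing over all productions stays within this bound; correspondingly, the number of new nonterminals is $|\hat\Nonterminals| \cdot |P|^{2d}$.

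Correctness is then proved by a routine induction on derivation-tree height in both directions: by induction on the structure of a derivation of $A(u_1,\ldots,u_n)$ in $\grammar$, one shows that for any state pairs $(p_i,q_i)$ such that $u_i$ carries $\cA$ from $p_i$ to $q_i$, the tuple $(u_1,\ldots,u_n)$ is derivable by $A^{(p_1,q_1),\ldots,(p_n,q_n)}$ in $\grammar_\cap$; conversely, forgetting the decorations turns any $\grammar_\cap$-derivation into a $\grammar$-derivation, and the decorations certify acceptance by $\cA$. The copyless/permuting discipline is preserved since we only decorate variables and never duplicate them. I expect the main obstacle — though it is conceptual bookkeeping rather than a deep difficulty — to be handling the block decomposition of each $s_j$ cleanly: one must carefully manage that a single $s_j$ may mix terminals and several variables from different body nonterminals in an arbitrary order, so the ``guessed'' intermediate $\cA$-states must be threaded consistently through exactly that order, and one must verify that the number of such guesses really does stay within the claimed $\poly(d\cdot r)$ exponent rather than blowing up with $|\pi|$ in the exponent (it does not, because $|\pi|$ contributes to the base, not the exponent). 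This construction also underlies \Cref{lem:mcfg-closure-poly}\ref{item:mcfg-nfa-intersect}, where fixing $d$ and $r$ as constants collapses the bound to genuine polynomial time.
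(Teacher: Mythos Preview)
Your proposal is correct and follows essentially the same approach as the paper's proof: a generalized triple construction that annotates each arity-$n$ nonterminal with an $n$-tuple of state pairs, emits a production for each consistent choice of annotations on head and body, and proves correctness by induction on derivation height. One expository wrinkle worth tightening: your intermediate count $|P|^{O(|\pi|+d\cdot r)}$ is misleadingly loose---the number of block boundaries across $s_1,\ldots,s_n$ is in fact $O(dr)$ (there are at most $dr$ variable occurrences, hence at most $O(dr)$ maximal terminal blocks interleaved with them), so the state-guessing exponent is $O(dr)$ outright, and $|\pi|$ enters only through the polynomial-time NFA-path checks on the terminal infixes, which is precisely the point you make in your final parenthetical and matches the paper's analysis.
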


If $d$ and $r$ are fixed, we therefore obtain:

\begin{corollary}\label{cor:mfcg-nfa-intersection}
  Given an \mcfg\ $\grammar$ of constant dimension and rank, as well as an NFA $\cA$,
  one can compute a \mcfg\ for the language $\langof{\grammar} \cap \langof{\cA}$
  in polynomial time.
\end{corollary}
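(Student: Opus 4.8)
The plan is to adapt the classical triple construction (the Bar--Hillel construction) for intersecting a context-free grammar with a finite automaton to the setting of \mcfg{}s. Write $\cA = (Q, \Sigma, \Delta, q_0, F)$ for the given NFA and let $\hat\Delta(p, w)$ denote the set of states reachable from $p$ by reading $w \in \Sigma^*$; a transitive-closure table for $\hat\Delta$ restricted to the finitely many terminal infixes that occur in $\grammar$ can be precomputed in polynomial time. Since a nonterminal $A$ of arity $n$ generates $n$-tuples of words, the key idea is to annotate it by an $n$-tuple of state pairs: for each $\vec\sigma = ((p_1, q_1), \ldots, (p_n, q_n)) \in (Q \times Q)^n$ I would introduce a nonterminal $A^{\vec\sigma}$ of arity $n$, with the intended invariant that $A^{\vec\sigma}$ generates exactly those $(u_1, \ldots, u_n) \in \langof{A}$ for which $q_i \in \hat\Delta(p_i, u_i)$ for all $i \in \set{n}$. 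As $\arity(A) \le d$, each nonterminal yields at most $\sizeof{\cA}^{O(d)}$ annotated copies.

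For the productions, consider $\pi = A(s_1, \ldots, s_n) \leftarrow \{B_1(\vec x_1), \ldots, B_m(\vec x_m)\}$ with $m \le r$, and write each $s_i$ as an alternation $s_i = w_{i,0}\, y_{i,1}\, w_{i,1}\, y_{i,2}\, \cdots\, y_{i,t_i}\, w_{i,t_i}$ of maximal (possibly empty) terminal stretches $w_{i,j} \in \Terminals^*$ and variable occurrences $y_{i,j}$. For every choice of an annotation $\vec\sigma$ for $A$ and annotations $\vec\tau_1, \ldots, \vec\tau_m$ for $B_1, \ldots, B_m$, I would keep the annotated production $A^{\vec\sigma}(s_1, \ldots, s_n) \leftarrow \{B_1^{\vec\tau_1}(\vec x_1), \ldots, B_m^{\vec\tau_m}(\vec x_m)\}$ precisely when the annotations are \emph{consistent along every $s_i$}: writing $(p_i, q_i)$ for the $i$-th pair of $\vec\sigma$ and $(a_{i,j}, b_{i,j})$ for the pair that the relevant $\vec\tau_k$ assigns to the entry referred to by $y_{i,j}$, I require $a_{i,1} \in \hat\Delta(p_i, w_{i,0})$, then $a_{i,j+1} \in \hat\Delta(b_{i,j}, w_{i,j})$ for $1 \le j < t_i$, and finally $q_i \in \hat\Delta(b_{i,t_i}, w_{i,t_i})$ (if $t_i = 0$, just $q_i \in \hat\Delta(p_i, s_i)$). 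Base productions $A(s_1, \ldots, s_n) \leftarrow \emptyset$ are handled the same way with an empty body, keeping $A^{\vec\sigma}(\ldots) \leftarrow \emptyset$ exactly when $q_i \in \hat\Delta(p_i, s_i)$ for all $i$. I would then add a fresh start nonterminal $S'$ of arity $1$ with a production $S'(x) \leftarrow \{\StartNonterminal^{((q_0, q_f))}(x)\}$ for each $q_f \in F$, and take $S'$ as the start nonterminal of $\grammar_\cap$.

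For the size bound: the annotations that need to be guessed are only the endpoint pairs of the head and of the at most $\sum_i t_i \le d\cdot r$ variable occurrences --- the terminal stretches $w_{i,j}$ contribute only precomputable $\hat\Delta$-reachability constraints and never require independently guessed intermediate states. Hence each production $\pi$ of $\grammar$ spawns at most $\sizeof{\cA}^{2d} \cdot \sizeof{\cA}^{2dr} \le \sizeof{\cA}^{O(dr)}$ annotated productions, each testable in polynomial time, so $\grammar_\cap$ has size $\sizeof{\grammar}\cdot \sizeof{\cA}^{O(dr)} \le (\sizeof{\grammar}\cdot \sizeof{\cA})^{\poly(d\cdot r)}$ and is computable within that bound. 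Since the word shapes $s_i$ are untouched, $\grammar_\cap$ inherits the copyless/repetition-free conditions and retains dimension $\le d$ and rank $\le \max(r,1)$, which also delivers \cref{cor:mfcg-nfa-intersection}. Correctness is then an induction on derivation trees proving the stated invariant: in the inductive step an annotated tree for $A^{\vec\sigma}$ rooted at the annotated version of $\pi$ corresponds to a tree for $A$ rooted at $\pi$ together with runs of $\cA$ on the entries of the $B_k$ that glue along each $s_i$ exactly as the consistency conditions demand (and conversely, any such runs can be cut into child annotations); evaluated at $S'$ this gives $\langof{\grammar_\cap} = \langof{\grammar} \cap \langof{\cA}$.

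\textbf{Main obstacle.} The subtle point is the complexity accounting: one must argue carefully that it is enough to guess state pairs only at the $O(d\cdot r)$ variable boundaries rather than at every position of the (possibly long) words $s_i$, so that the blow-up is $\sizeof{\cA}^{O(dr)}$ and not $\sizeof{\cA}^{O(\sizeof{\grammar})}$. This is precisely where the copyless restriction --- bounding the number of variable occurrences in a production by the product of dimension and rank --- and the precomputation of $\hat\Delta$-reachability over fixed terminal infixes are essential.
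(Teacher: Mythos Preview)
Your proposal is correct and follows essentially the same approach as the paper: both adapt the classical triple (Bar--Hillel) construction by annotating each arity-$n$ nonterminal with an $n$-tuple of state pairs, checking consistency of these annotations along each $s_i$ by decomposing it into terminal stretches and variable occurrences, and adding a fresh start nonterminal; the complexity analysis and the inductive correctness argument match as well. Your explicit emphasis on precomputing reachability over terminal infixes and guessing states only at the $O(d\cdot r)$ variable boundaries is exactly the observation the paper uses to obtain the $\sizeof{\cA}^{O(dr)}$ bound.
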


This is part~\ref{item:mcfg-nfa-intersect} of \Cref{lem:mcfg-closure-poly}.

The proof of \Cref{thm:mcfg-nfa-intersection}
is an adaptation of the well-known \emph{triple construction} for intersecting CFGs and NFAs.
Therein for every nonterminal $A$ of the grammar and every pair of states $p,q$ of the automaton,
one introduces a new nonterminal $A_{p,q}$ that derives only those words that are derivable by $A$, but also induce a transition sequence from $p$ to $q$.

To adapt this to the setting of \mcfg{}s, we need to consider a state pair for every variable
of a nonterminal $A$, i.e.\ arity of $A$ many state pairs.
Let $\grammar = (\Sigma, \hat\Nonterminals, \Variables, \Productions, A_\text{start})$ be
an \mcfg, 
and let $\cA$ be an NFA over $\Sigma^*$ with set of states $Q$, initial state $q_0$,
and final states $F$.

We construct the
\mcfg\ 
$\grammar_\cap = (\Sigma, \hat\Nonterminals_\cap, \Variables,
\Productions_\cap, A_{\text{start}\cap})$ as follows:
\begin{itemize}
  \item $A_{\vecq}^{(\ell)} \in \hat\Nonterminals_\cap$ iff $A^{(\ell)} \in \hat\Nonterminals$,
  and $\vecq = \big((q_1,q'_1),(q_2,q'_2),\ldots(q_\ell,q'_\ell)\big) \in (Q \times Q)^\ell$.
  \item $\pi_\cap = A_{\vecq}(s_1, \ldots, s_\ell) \leftarrow S_\cap \in \Productions_\cap$
  with $S_\cap = \setof{B_{1,\vecp_1}(x_{1,1}, \ldots, x_{1,\ell_1}), \ldots}$ iff
  the following all hold:
  \begin{itemize}
    \item $\pi = A(s_1, \ldots, s_\ell) \leftarrow S \in \Productions$
    with $S = \setof{B_1(x_{1,1}, \ldots, x_{1,\ell_1}), \ldots}$.
    \item $\vecq = \big((q_1,q'_1),(q_2,q'_2),\ldots(q_\ell,q'_\ell)\big)$,
    $\vecp_i = \big((p_{i,1},p'_{i,1}),(p_{i,2},p'_{i,2}),\ldots(p_{i,\ell_i},p'_{i,\ell_i})\big)$
    for all $i$, and we say $(p_{i,j},p'_{i,j})$ is the \emph{state pair} for $x_{i,j}$.
    
    Further, for all $i'$, $s_{i'} = w_{0,i'} y_{1,i'} w_{1,i'} \cdots y_{n_{i'},i'} w_{n_{i'},i'}$,
    where each $w_{j',i'}$ is a (possibly empty) word
    from $\Sigma^*$,
    and each $y_{j',i'}$ is a variable from $\Variables$.
    
    Then $q_{i'} \xrightarrow{w_{0,i'}} \tilde{q}_{1,i'}$,
    $\tilde{q}'_{j'-1,i'} \xrightarrow{w_{j'-1,i'}} \tilde{q}_{j',i'}$,
    and $\tilde{q}'_{n_{i'},i'} \xrightarrow{w_{n_{i'},i'}} q'_{i'}$ in $\cA$ for each $i',j'$,
    where $(\tilde{q}_{j',i'},\tilde{q}'_{j',i'})$ is the state pair for $y_{j',i'}$.
  \end{itemize}
  \item $A_{\text{start}\cap}^{(1)}(x) \leftarrow \{A_{\text{start},(q,q')}(x)\}
  \in \Productions_\cap$ iff $q = q_0$ and $q' \in F$.
\end{itemize}
By the notation $p \xrightarrow{w} q$ for a (possibly empty) word
$w \in \Sigma^*$
we mean that
there is a transition sequence from $p$ to $q$ over
$w$
in $\cA$.
If $w$ is empty, then we just consider $\varepsilon$-transitions.

Intuitively, an index $\vecq$ for a nonterminal $A_{\vecq}$ gives a state pair for
each member of a tuple derived by $A_{\vecq}$, that needs to be realized as a transition sequence
in $\cA$ by that member.
The productions of $\grammar_\cap$ are constructed in such a way that state pairs on the right-hand
side concatenate properly to ones on the left-hand side.

Let us now prove the theorem.

\begin{proof}[Proof of \cref{thm:mcfg-nfa-intersection}]
  We construct $\grammar_\cap$ as above.
  Towards the theorem, we prove the following statement by induction:
  A tuple $(v_1,\ldots,v_\ell) \in (\Sigma^*)^\ell$ can be derived from
  $A^{(\ell)}_{\vecq}$ in $\grammar_\cap$ iff
  $(v_1,\ldots,v_\ell)$ can be derived from $A^{(\ell)}$ in $\grammar$, and
  for $\vecq = \big((q_1,q'_1),(q_2,q'_2),\ldots(q_\ell,q'_\ell)\big)$ we have
  $q_i \xrightarrow{v_i} q_i'$ in $\cA$ for all $i$.
  Then by definition of the productions of $A_{\text{start}\cap}$, a terminal word is derivable
  by $\grammar_\cap$ iff it is derivable by $\grammar$, and induces a run on $\cA$ from the initial
  state to some final state, thus proving the construction correct.
  \begin{itemize}
    \item Base case, $A_{\vecq}(s_1, \ldots, s_\ell) \leftarrow \emptyset \in \Productions_\cap$.
    Then each $s_i$ is a word over atoms $a$ and $\Gamma^*$, meaning we require
    $q_i \xrightarrow{s_i} q_i'$ by definition.
    Furthermore, $A(s_1, \ldots, s_\ell) \leftarrow \emptyset \in \Productions$ is also
    guaranteed by definition.
    
    For the other direction, we assume $A(s_1, \ldots, s_\ell) \leftarrow \emptyset \in \Productions$
    and $q_i \xrightarrow{s_i} q_i'$ for each $i$.
    Then, again, $A_{\vecq}(s_1, \ldots, s_\ell) \leftarrow \emptyset \in \Productions_\cap$
    follows by definition.
    
    \item Inductive case,
    $\pi_\cap = A_{\vecq}(s_1, \ldots, s_\ell) \leftarrow S_\cap \in \Productions_\cap$
    with $S_\cap = \setof{B_{1,\vecp_1}(x_{1,1}, \ldots, x_{1,\ell_1}), \ldots}$,
    where for the words derivable from each $B_{i,\vecp_i}$ we assume the statement already holds,
    as long as the derivation is shorter.
    By definition we have a production $\pi = A(s_1, \ldots, s_\ell) \leftarrow S \in \Productions$
    with $S = \setof{B_1(x_{1,1}, \ldots, x_{1,\ell_1}), \ldots}$,
    and it is easy to see how an analogous derivation is now possible in $\grammar$.
    Regarding the validity of $\vecq$ and $\vecp_i$, we also refer to the definition
    of $\Productions_\cap$:
    For $s_{i'} = w_{0,i'} y_{1,i'} w_{1,i'} \cdots y_{n_{i'},i'} w_{n_{i'},i'}$, we have
    $q_{i'} \xrightarrow{w_{0,i'}} \tilde{q}_{1,i'}$,
    $\tilde{q}'_{j'-1,i'} \xrightarrow{w_{j'-1,i'}} \tilde{q}_{j',i'}$,
    and $\tilde{q}'_{n_{i'},i'} \xrightarrow{w_{n_{i'},i'}} q'_{i'}$
    by definition.
    Furthermore, for each $y_{j',i'}$ by induction we have a word $v_{j',i'}$
    witnessing $\tilde{q}_{j',i'} \xrightarrow{v_{j',i'}} \tilde{q}'_{j',i'}$.
    Stitching all these transition sequences together yields a run from $q_{i'}$ to $q'_{i'}$
    for $s_{i'}$ and each $i'$ as required.
    
    The other direction is similar.
  \end{itemize}
  
  It remains to argue about the time it takes to construct $\grammar_\cap$.
  For each nonterminal $A^{(\ell)}$ of $\grammar$, we need to add
  $|Q|^{2\ell}$ many nonterminals to $\grammar_\cap$, which takes time exponential
  in $d$, since $d$ is the upper bound for $\ell$, but otherwise requires time polynomial
  in $\sizeof{\grammar}$ and $\sizeof{\cA}$.
  
  Similarly for each Production $\pi = A(s_1, \ldots, s_\ell) \leftarrow S$ of $\grammar$,
  we may need to add up to $|Q|^{2\ell} \cdot |Q|^{2d \cdot |S|}$ many productions to
  $\grammar_\cap$, depending on whether such a production could lead to a valid path in $\cA$
  or not.
  Basically, in the worst case, we need to go over all possibilities of indices $\vecq$ and $\vecp_i$
  on both sides of the production, of which there are $|Q|^{d \cdot (r+1)}$ many in the worst case,
  since $r$ is the bound for $|S|$.
  This is of course exponential in $d$ and $r$, but otherwise polynomial in
  $\sizeof{\grammar}$ and $\sizeof{\cA}$.
  
  Finally, for each production we add, it may be necessary multiple times to check existence
  of a path $p \xrightarrow{w} q$ in $\cA$ for some terminal word $w \in \Sigma^*$ appearing
  as an infix in a production.
  However, this check is clearly polynomial in $|w|$ and $|\cA|$, and $|w|$
  is clearly dependant on the length of said production.
  So this just adds a polynomial factor to the already described runtime.
\end{proof}

\myparagraph{Prefix language of an \mcfg\ using rational transductions}
We prove the following statement:

\begin{theorem}\label{thm:mcfg-transduction}
  Given an \mcfg\ $\grammar$ of constant dimension and rank, as well as a transducer $\cT$,
  one can compute an \mcfg\ for the language $\cT(\langof{\grammar})$ in polynomial time.
\end{theorem}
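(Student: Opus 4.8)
The plan is to express the transducer image through a grammar-level version of Nivat's theorem. Write the input alphabet of $\grammar$ as $\Sigma=\Terminals$ and the output alphabet of $\cT$ as $\Delta$. First I would normalize $\cT$ so that every transition reads at most one input letter and carries a fixed (possibly empty) output word over $\Delta$; splitting long input labels of transitions costs only polynomial time. Let $\Xi$ be the (polynomially many) transitions of the normalized transducer and set $\Gamma=\Sigma\uplus\Xi$. The decomposition I aim for is $\cT(\langof{\grammar})=\varphi\bigl(\langof{\grammar'}\cap R\bigr)$, where $\grammar'$ is an \mcfg{} over $\Gamma$ whose language is the set of all ``interleavings'' of words of $\langof{\grammar}$ with words from $\Xi^*$, $R\subseteq\Gamma^*$ is a regular ``run'' language, and $\varphi\colon\Gamma^*\to\Delta^*$ is a homomorphism that emits the transducer outputs.

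For $\grammar'$, the language should be $\{\xi_0 a_1\xi_1\cdots a_n\xi_n \mid a_1\cdots a_n\in\langof{\grammar},\ \xi_0,\dots,\xi_n\in\Xi^*\}$, i.e.\ the image of $\langof{\grammar}$ under the regular substitution $a\mapsto\Xi^* a\,\Xi^*$ plus a leading and trailing $\Xi^*$. In principle this is obtained by inserting, at every gap of every production, a fresh arity-$1$ nonterminal $N_\Xi$ generating $\Xi^*$; the danger is that a production with many terminal letters then acquires a huge rank. To prevent this I would first rewrite $\grammar$ so that terminal letters occur only inside linear sub-grammars generating constant words, grouping each maximal run of consecutive terminals, and thread the $N_\Xi$'s between individual letters inside the short productions of these chains. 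Since an \mcfg{} production has at most $d\cdot r$ variable occurrences, after this rewriting every production has only $O(d\cdot r)$ positions at which an $N_\Xi$ must be inserted, so $\grammar'$ has dimension $O(d)$ and rank $O(d\cdot r)$ — both constant by hypothesis — and polynomial size. A routine induction on derivation trees shows that $\langof{\grammar'}$ is exactly the interleaving language (soundness since we only add $\Xi$-letters; completeness since $N_\Xi$ generates all of $\Xi^*$ at every gap and $\Xi^*\Xi^*=\Xi^*$).

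For $R$, let it consist of the words $v\in\Gamma^*$ whose projection to $\Xi$ is a valid run of $\cT$ (from the initial to a final state) and in which every $\Sigma$-letter occurs immediately after an input transition reading it, these input transitions being exactly those of the run, in order. An NFA $\cA$ for $R$ of size polynomial in $|\cT|$ simply simulates the run on the $\Xi$-symbols while matching the $\Sigma$-letters. I would then intersect $\grammar'$ with $\cA$ using \Cref{thm:mcfg-nfa-intersection} — in polynomial time, because $\grammar'$ has constant dimension and rank — obtaining an \mcfg{} for $\langof{\grammar'}\cap R$, again of constant dimension and rank. Finally I would apply $\varphi$, which deletes every $\Sigma$-letter and replaces every $t\in\Xi$ by its output word; on the grammar this is plain textual substitution of terminal occurrences, preserving dimension and rank and keeping the size polynomial. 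Correctness of $\cT(\langof{\grammar})=\varphi(\langof{\grammar'}\cap R)$ is then immediate: for $v\in\langof{\grammar'}\cap R$ its $\Xi$-projection is a run of $\cT$ on input $v|_\Sigma\in\langof{\grammar}$ with output $\varphi(v)$; conversely any run of $\cT$ on a word of $\langof{\grammar}$ spells out — interleaving each input transition with the letter it reads, and inserting the $\varepsilon$-input transitions in between — a witness in $\langof{\grammar'}\cap R$ mapping to its output.

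The step I expect to require the most care is keeping every intermediate grammar at dimension $O(d)$ and rank $O(d\cdot r)$: the $\Xi^*$-insertion is exactly what threatens the rank, and the remedy is the ``group consecutive terminals into linear chains'' preprocessing, which bounds the number of insertion gaps per production by $O(d\cdot r)$; everything else (the Nivat-style decomposition, the NFA for $R$, the homomorphism $\varphi$) is routine and stays polynomial. As an immediate consequence, applying this to the obvious two-state transducer that copies a prefix of its input and then consumes the rest outputting $\varepsilon$ gives \Cref{cor:mcfg-prefix-lang}, and hence \Cref{lem:mcfg-closure-poly}\ref{item:mcfg-prefix-lang}.
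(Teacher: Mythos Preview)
Your proposal is correct and follows the same Nivat-style decomposition as the paper: shuffle $\Xi^*$ (the transition alphabet) into the grammar, intersect with a regular run-checking language via \Cref{thm:mcfg-nfa-intersection}, then apply the output homomorphism. You are in fact more careful than the paper on one point: the paper inserts a fresh arity-$1$ nonterminal $C$ at every gap of every production---including between consecutive terminal letters---without addressing why the resulting rank stays bounded, whereas you explicitly preprocess maximal terminal runs into linear chains so that each production has only $O(d\cdot r)$ insertion gaps and the shuffled grammar provably retains constant rank before the intersection step.
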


Let us first explain what a transducer is and what the notation $\cT(-)$ means.
For an alphabet $\Sigma$ we define $\Sigma_\varepsilon = \Sigma \cup \{\varepsilon\}$.
A \emph{transducer} is basically an NFA, where rather than one label from $\Sigma_\varepsilon$,
each transition carries two such labels, one for the input and one for the output.
An run of a transducer reaching a final state thus accepts a pair of words
$(u,v) \in \Sigma^* \times \Sigma^*$, where $u$ is the concatenation of all input labels,
and $v$ is the concatenation of all the output labels.
Thus, a transducer computes a relation $\subseteq \Sigma^* \times \Sigma^*$,
called a \emph{rational transduction}.

The notation $\cT(L)$ for a transducer $\cT$ and a language $L$ refers to the image of $L$
under the rational transduction computed by $\cT$.
More formally, $v \in \cT(L)$ iff there is a $u \in L$ such that $(u,v)$ is accepted by $\cT$.
Note that the notion of ``full trio'' alluded to in \cref{app:programs-formal-def}
refers to exactly the language classes that are closed under rational transductions
(see e.g.\ \cite{DBLP:conf/icalp/0001GMTZ23}).
It is known that this closure property holds for MCFL \cite{SekiMFK91},
but we also need a specific time complexity.

With transducers explained, it is clear that \Cref{thm:mcfg-transduction} implies the following:

\begin{corollary}\label{cor:mcfg-prefix-lang}
  Given an \mcfg\ $\grammar$ of constant dimension and rank,
  we can compute an \mcfg\ for its prefix language $\pref{\langof{\grammar}}$ in polynomial time.
\end{corollary}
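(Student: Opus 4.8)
The plan is to obtain \Cref{cor:mcfg-prefix-lang} as a direct instance of \Cref{thm:mcfg-transduction}. A fixed two‑state transducer $\cT_{\pref}$ realizes the ``prefix'' transduction: a copying state $p$ (initial) with transitions $a/a$ for every $a\in\Sigma$ and, for every $a$, a transition $a/\varepsilon$ into a dropping state $q$ looping on all $a/\varepsilon$; both $p$ and $q$ are accepting. Then $(u,v)$ is accepted by $\cT_{\pref}$ iff $v$ is a prefix of $u$, so $\cT_{\pref}(\langof{\grammar})=\pref(\langof{\grammar})$, and the corollary follows by feeding $\cT_{\pref}$ into \Cref{thm:mcfg-transduction}. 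Hence the real work is \Cref{thm:mcfg-transduction}, which I would prove by splitting the transduction into three polynomial‑time stages in the spirit of Nivat's theorem \cite{berstel2013transductions}: interleave the transitions of $\cT$ into the words of $\grammar$, intersect with a regular ``valid run'' constraint, and project transitions to their output letters.

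First I would normalize $\grammar$ so that every terminal letter is produced by a dedicated unary/leaf nonterminal (as already assumed elsewhere in the paper). The point to verify is that this keeps the dimension and rank \emph{constant}: by copylessness a production of $\grammar$ has at most $d\cdot r$ variable occurrences, hence at most $O(d^2 r)$ maximal terminal blocks across its components, and replacing each block by a fresh arity‑$1$ nonterminal that emits its letters one at a time down a chain adds only $O(d^2 r)$ children per production, which stays polynomial and leaves $d$ unchanged. Now let $\Gamma$ be the transition set of $\cT$, with input/output labels $\mathrm{in}(\gamma)\in\Sigma_\varepsilon$, $\mathrm{out}(\gamma)\in\Delta_\varepsilon$ (where $\Sigma_\varepsilon=\Sigma\cup\{\varepsilon\}$). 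Stage 1: build an MCFG $\grammar'$ over $\Sigma\cup\Gamma$ whose language is exactly the words of $\grammar$ with $\Gamma$‑letters inserted, under the convention that a transition consuming $a\neq\varepsilon$ sits immediately before that $a$, while $\varepsilon$‑input transitions are unconstrained. Concretely, add one arity‑$1$ nonterminal $H$ with $H(\gamma z)\leftarrow\{H(z)\}$ for each $\gamma$ with $\mathrm{in}(\gamma)=\varepsilon$ and $H(\varepsilon)\leftarrow\emptyset$; then in each normalized production place an $H$‑child at the start of every component, and wherever the (at most one) terminal $a$ occurs, replace $a$ by the union over $\gamma$ with $\mathrm{in}(\gamma)=a$ of the string $\gamma\,h$ for a fresh $H$‑child bound to $h$. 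The dimension of $\grammar'$ equals that of $\grammar$ and its rank grows by at most $d+1$, so $\grammar'$ still has constant dimension and rank and is built in polynomial time.

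Stage 2: construct an NFA $\cA$ of size $O(\sizeof{\cT})$ that reads a word over $\Sigma\cup\Gamma$ left to right, carries the current $\cT$‑state, updates it on every $\Gamma$‑letter, insists that each $\Sigma$‑letter $a$ be immediately preceded (modulo $\varepsilon$‑transitions) by a transition consuming $a$ so that no input letter is left unread, and accepts iff the run starts in $\cT$'s initial state and ends in a final one; then $\langof{\grammar'}\cap\langof{\cA}$ is exactly the set of accepting runs of $\cT$ interleaved with their input words. Since $\grammar'$ has constant dimension and rank, \Cref{lem:mcfg-closure-poly}\ref{item:mcfg-nfa-intersect} produces an MCFG $\grammar''$ for this intersection in polynomial time. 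Stage 3: apply the homomorphism $h$ with $h(a)=\varepsilon$ for $a\in\Sigma$ and $h(\gamma)=\mathrm{out}(\gamma)$ for $\gamma\in\Gamma$; MCFLs are effectively closed under homomorphism in polynomial time (just substitute in the terminal strings of the productions, with no size blow‑up and no need for bounded dimension/rank here), and one checks $h(\langof{\grammar''})=\cT(\langof{\grammar})$, completing \Cref{thm:mcfg-transduction} and hence \Cref{cor:mcfg-prefix-lang}.

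I expect the main obstacle to be the careful accounting that forces $\grammar'$ to retain constant dimension and rank — this is precisely what makes the invocation of \Cref{lem:mcfg-closure-poly}\ref{item:mcfg-nfa-intersect} polynomial rather than exponential — together with fixing the interleaving convention tightly enough that the ``valid run'' NFA really captures accepting computations of $\cT$ (handling $\varepsilon$‑transitions and guaranteeing that every input symbol is consumed exactly once). The normalization of $\grammar$, the homomorphic projection of Stage 3, and the reduction of the corollary to the theorem are all routine.
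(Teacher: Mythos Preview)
Your proposal is correct and follows essentially the same route as the paper: both reduce prefix-closure to a fixed two-state transducer and then prove \Cref{thm:mcfg-transduction} via a Nivat-style decomposition (interleave the transducer's transitions into the grammar, intersect with a regular ``valid run'' language using \Cref{lem:mcfg-closure-poly}\ref{item:mcfg-nfa-intersect}, and project to output letters). Your normalization step---pushing terminals into arity-$1$ chains so that the shuffle grammar keeps constant rank---is actually a point the paper's proof glosses over (it inserts a fresh $C$-child around \emph{every} symbol and then invokes the constant-rank intersection corollary without arguing rank stays bounded); the only wrinkle in your write-up is that Stage~1 already replaces each terminal $a$ by $\gamma h$, so the words of $\grammar'$ contain no $\Sigma$-letters and the ``$\Sigma$-letter preceded by its consuming transition'' check in Stage~2 is vacuous rather than needed.
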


This is part \ref{item:mcfg-prefix-lang} of \Cref{lem:mcfg-closure-poly}.

\begin{proof}
  The following $2$-state transducer maps a language $L \subseteq \Sigma^*$
  to the set of all prefixes of its words.
  For each $a \in \Sigma$ there is loop labelled by $(a,a)$ on the initial state $q_0$,
  and a loop labelled by $(a,\varepsilon)$ on the only final state $q_f$.
  Then there is one more transition $q_0 \xrightarrow{(\varepsilon,\varepsilon)} q_f$.
  The transducer outputs each letter in the input word, until a nondeterministically chosen
  point, after which all letters are dropped.
  This is clearly the same as computing any prefix.
  The statement then follows from \Cref{thm:mcfg-transduction}.
\end{proof}

Let us now describe how to prove \Cref{thm:mcfg-transduction}.
We write $\shuffle$ to denote the shuffle operator.
Formally, given two languages $L_1,L_2 \subseteq \Sigma^*$,
we have $w = u_0 v_1 u_1\cdots v_n u_n \in L_1 \shuffle L_2$ iff
$u = u_0 \cdots u_n \in L_1$ and $v = v_1 \cdots v_n \in L_2$,
where $u_0,\ldots,u_n, v_1,\ldots,v_n$ are (possibly empty) infixes in $\Sigma^*$.
Now, given a transducer $\cT$ and a language $L$, we argue
$\cT(L) = h\big((L \shuffle \Delta^*) \cap R \big)$,
where $\Delta$ is the set of transitions of $\cT$, $h$ is a homomorphism
mapping each $\delta \in \Delta$ to its output and removing all letters $\notin \Delta$,
and $R$ is a regular language that checks that each $\delta$ is preceded by its input and
that the letters from $\Delta$ form an accepting run of $\cT$.
Then we compute the intersection with $R$ using \Cref{cor:mfcg-nfa-intersection},
whereas applying $h$ and shuffling with $\Delta^*$ are simple constructions.

Note that this way of writing $\cT(L)$ is an adaptation of Nivat's theorem
(see e.g. \cite[Chapter III, Theorem~3.2]{berstel2013transductions}).

\begin{proof}[Proof of \Cref{thm:mcfg-transduction}]
  Let $\grammar = (\Sigma, \hat\Nonterminals, \Variables, \Productions, A_\text{start})$ be
  an \mcfg\ of constant dimension and rank,
  and let $\cT$ be a transducer over $\Sigma$ with set of transitions $\Delta$.
  
  We define the homomorphism $h\colon \Sigma \cup \Delta \to \Sigma$
  as $h\colon a \mapsto \varepsilon$ for each $a \in \Sigma$
  and $h\colon \delta \mapsto \mathsf{output}(\delta)$ for each $\delta \in \Delta$.
  Furthermore, we construct the NFA $\cA$ from $\cT$ as follows:
  For each transition $\delta$ with input $\sigma \in \Sigma_\varepsilon$ in $\cT$
  we replace its label by the concatenation $\sigma \delta$, splitting it into two
  transitions and adding a fresh state in-between, if necessary.
  The resulting NFA clearly accepts all words over $\Sigma \cup \Delta$,
  where each transition in $\Delta$ is preceded by its input, and the maximal subword
  in $\Delta^*$ forms an accepting run of $\cT$.
  
  Let $L = \langof{\grammar}$.
  Now we argue how to compute an \mcfg\ for $h\big((L \shuffle \Delta^*) \cap R \big)$
  in polynomial time.
  For $L \shuffle \Delta^*$, we transform $\grammar$ as follows.
  First, we introduce a new nonterminal $C^{(1)}$, and add each $\delta \in \Delta$ as a terminal.
  Then, for every production $\pi$, we insert a fresh variable before and after every variable and terminal symbol appearing in the left side of $\pi$, resulting in say $\ell$ new variables.
  Then we add $\ell$ instances of $C$ to the right side of $\pi$, each pertaining to a different one
  of the new variables.
  Finally we add for each $\delta \in \Delta$
  the production $C(x\delta) \leftarrow \{C(x)\}$ for some $x \in \Variables$,
  and we also add the production $C(\varepsilon) \leftarrow \emptyset$.
  Clearly the resulting \mcfg\ derives words from $L$ with arbitrary many elements of $\Delta$
  shuffled between their letters.
  We at most triple the length of each production, and add $|\Delta| + 1$ new productions of fixed length, so the whole process takes polynomial time.
  
  We continue by intersecting the resulting \mcfg\ with the NFA $\cA$.
  This takes polynomial time via \Cref{cor:mfcg-nfa-intersection}, and results in another \mcfg.
  Finally, applying $h(-)$ to the latter grammar is easy.
  We simply replace every terminal symbol in every production of the grammar according to $h$,
  which clearly takes polynomial time.
  
  It remains to argue that the final resulting \mcfg\ indeed derives the language $\cT(L)$.
  To see this, consider what our constructions do on the side of formal languages.
  We start with $L$ and shuffle an arbitrary transition sequence of $\cT$ into each of its words.
  Then we check that each transition sequence is a valid accepting run, each transition is preceeded by its input, and each symbol in $\Sigma$ is succeeded by a transition that it inputs into.
  Finally, we remove all input symbols and replace all transitions by their output symbols.
  In sequence this is exactly applying $\cT(-)$.
\end{proof}

\subsection{Partially Permuting Downward Closure}\label[appsec]{app:parikh-dcl}

Let us recall the definition of our partially permuting downward closure, and its accompanying lemma.
We will then prove the latter.

For a language $L \subseteq \Gamma^*$, a subalphabet $\Theta \subseteq \Gamma$, and a number $k$
let $\alphdownck{L}{\Theta}$ denote its alphabet-preserving downward closure,
where subwords are not allowed to drop letters from $\Theta$,
and have to contain exactly $k$ of them.
This is slightly different from the definition in \cite{DBLP:journals/pacmpl/BaumannMTZ22},
where words with fewer than $k$ letters are also included,
but our version fits our purpose slightly better.
Formally, $\alphdownck{L}{\Theta} :=
\{u \in \Gamma* \mid \exists v \in L \colon u \subword v \wedge |u|_\Theta = |v|_\Theta = k\}$.
Here, $|w|_\Theta$ denotes the number of occurrences of letters from the alphabet $\Theta$
in the word $w$.

Furthermore, we define the partially permuting downward closure:
Let $\parikhalphdownck{L}{\Theta}$ denote the variant of $\alphdownck{L}{\Theta}$,
where every maximal infix in $(\Gamma\setminus\Theta)^*$
only needs to be preserved up to its Parikh-image~$\parikh(-)$.
Formally $u = u_0 \theta_1 u_1 \cdots \theta_k u_k \in \parikhalphdownck{L}{\Theta}$
if and only if $u_0,\ldots,u_k \in (\Gamma\setminus\Theta)^*$,
$\theta_1,\ldots,\theta_k \in \Theta$, and
there is $v = v_0 \theta_1 v_1 \cdots \theta_k v_k \in \alphdownck{L}{\Theta}$
such that $\parikh(u_i) = \parikh(v_i)$ for every $0 \leq i \leq k$.

\parikhdcl*

\begin{proof}%
  Let $\grammar= (\Gamma, \hat\Nonterminals, \Variables, \Productions, \StartNonterminal)$
  be an accelerated \mcfg, let $\Theta \subseteq \Gamma$ be a subalphabet,
  and let $\Sigma = \Gamma \setminus \Theta$.
  We construct an NFA $\cA$ for $\parikhalphdownck{\langof{\grammar}}{\Theta}$
  in the following manner.
  Each state of $\cA$ encodes a path in a derivation tree of $\cG$, storing at each step the production used at that node and the $\yield$ of all left siblings.
  Starting with the empty branch,
  $\cA$ guesses a production for $\StartNonterminal$, then builds a branch
  according to left-first depth-first search, until a leaf is encountered.
  At a leaf, or whenever a nonterminal $A$ is assigned a full tuple containing no more variables, $\cA$ moves onto its parent node and replaces the variables of $A$ in the production at said parent
  according to this assignment, subject to the following modifications:
  Instead of a letter $a \in \Sigma$, its Parikh image $\parikh(a)$ is stored,
  and whenever several Parikh images appear concatenated, we instead add them together.
  So if we identify $\multiset{\Sigma}$ and $\Nat^{|\Sigma|}$,
  then for each nonterminal of arity $\ell$ in a production,
  we store a tuple consisting of $\ell$ words over
  $\Variables \cup \Nat^{|\Sigma|} \cup \Theta$.
  Variables from $X$ remain in an instance of a production until all its children
  in the derivation tree have been explored.
  
  Since $\grammar$ is not a normal \mcfg, $\cA$ also needs to handle terminals of the form
  $\tilde{\Gamma}^*$ for some $\tilde{\Gamma} \subseteq \Gamma$.
  To this end we extend the values in vectors for Parikh images to
  $\Natomega = \Nat \cup \{\omega\}$, where $\omega$ denotes arbitrarily many occurrences.
  For adding vectors together we use the obvious semantics
  $\omega + n = n + \omega = \omega + \omega = \omega$ for any $n \in \Nat$.
  Since $\tilde\Gamma$ may contain letters from $\Theta$, which we do not want to drop, whenever $\cA$ encounters a terminal $\tilde{\Gamma}^*$, it first guesses an arbitrary word
  $w$ of length $\leq k$ with $w \in \tilde{\Gamma}^* \cap \Theta^*$. 
  Then before and after each letter of $w$, $\cA$ inserts a vector containing $\omega$ for
  each $a \in \Gamma \cap \Sigma$. The resulting word is then used in place of $\tilde{\Gamma}^*$.
  
  During the traversal of the tree, the NFA checks that the computed tuple of words contains
  at most $k$ letters of $\Theta$.
  If any assignment contains more than $k$ such letters, $\cA$ rejects.
  Finally, once every variable for $\StartNonterminal$ is assigned, $\cA$ checks that the
  resulting word contains exactly $k$ letters in $\Theta$.
  Then $\cA$ starts outputting letters according to the stored assignment.
  For vectors in $\Natomega^{|\Sigma|}$, letters are output in a nondeterministically chosen permutation, and for each value of $\omega$, a nondeterministic amount is output on a loop, before moving on.
  Once the whole word for $\StartNonterminal$ has been output, $\cA$ moves to an accepting state.
  From the construction, it is clear that $\cA$ computes the language
  $\parikhalphdownck{\langof{\grammar}}{\Theta}$.
  
  It remains to analyze the size of $\cA$.
  Let $d$ be dimension of $\grammar$ and let $r$ be its rank.
  Any assignment to a single nonterminal contains at most
  $2 + k + d \cdot r$ many non-vector symbols.
  Since two vectors will always get added together if there is no other symbol between them,
  we have at most $4 + 2k + 2d \cdot r$ vectors in $\Natomega^{|\Sigma|}$ stored per production.
  By our requirements, we only need to consider derivation trees of height $c\sizeof{\grammar}$.
  Let $m$ be the maximum number of terminal symbols in a production in $\Productions$.
  Then the non-$\omega$ values in the vectors across the entire tree are bounded by
  $m \cdot \ell^{c\sizeof{\grammar}}$,
  which for vectors of dimension $|\Sigma|$ require only polynomially many bits to store.
  Since each path is at most $c\sizeof{\grammar}$ long, each state of $\cA$ therefore has
  polynomial description size, meaning there are exponentially many possibilities for the states.
\end{proof}

}

\label{afterbibliography}
\newoutputstream{pagestotal}
\openoutputfile{main.pagestotal.ctr}{pagestotal}
\addtostream{pagestotal}{\getpagerefnumber{afterbibliography}}
\closeoutputstream{pagestotal}

\newoutputstream{todos}
\openoutputfile{main.todos.ctr}{todos}
\addtostream{todos}{\arabic{@todonotes@numberoftodonotes}}
\closeoutputstream{todos}

\end{document}